\def\draft{0}
\documentclass[11pt]{article}

\usepackage{amsmath}
\usepackage{amssymb}
\usepackage{amsthm}
\usepackage{bbm}
\usepackage{fullpage}
\usepackage{tikz}
\usepackage[most]{tcolorbox}
\usepackage{xcolor}


\usepackage{lineno}

\newcommand*{\myfont}{\fontfamily{bch}\selectfont}
\DeclareTextFontCommand{\textmyfont}{\myfont}

\usepackage[linesnumbered,ruled,vlined]{algorithm2e} 
\SetKwComment{Comment}{/* }{ */}

\SetCommentSty{mycommfont}

\usepackage[hidelinks]{hyperref}
\usepackage[nameinlink]{cleveref}
\usepackage{url}            

\newtheorem{theorem}{Theorem}[section]
\newtheorem{corollary}[theorem]{Corollary}
\newtheorem{lemma}[theorem]{Lemma}
\newtheorem{observation}[theorem]{Observation}

\newtheorem{definition}{Definition}
\newtheorem{claim}[theorem]{Claim}
\newtheorem{fact}[theorem]{Fact}
\newtheorem{remark}[theorem]{Remark}

\newcommand{\prob}[2]{\mathop{\mathrm{Pr}}_{#1}[#2]}

\newcommand{\poly}{\mathop{\mathrm{poly}}}

\newcommand{\F}{\mathbb{F}}
\newcommand{\R}{\mathbb{R}}
\newcommand{\mc}[1]{\mathcal{#1}}

\newcommand{\Boo}{\{0,1 \}}

\newcommand{\bigO}{\mathcal{O}}

\newcommand{\paren}[1]{\left( #1 \right)}
\newcommand{\brac}[1]{\left[ #1 \right]}
\newcommand{\set}[1]{\left\{ #1 \right\}}

\newcommand{\setcond}[2]{\left\{ #1 \;\middle\vert\; #2 \right\}}

\newcommand{\innerprod}[1]{\left\langle#1\right\rangle}

\newcommand{\abs}[1]{\left\lvert#1\right\rvert}
\newcommand{\Real}{\textnormal{Re}}



\DeclareMathOperator*{\E}{\mathbb{E}}

\newcommand{\x}{\mathbf{x}}

\newcommand{\cF}{\mathcal{F}}
\newcommand{\cP}{\mathcal{P}}
\newcommand{\Z}{\mathbb{Z}}

\newtcolorbox{algobox}{colback=lightgray!5!white,colframe=lightgray!75!black}

\usepackage{setspace}
\usepackage{enumerate}
\usepackage[parfill]{parskip}
\usepackage{changepage}
\usepackage{framed}

\allowdisplaybreaks
\hypersetup{
	colorlinks,
	linkcolor={blue},
	citecolor={blue},
	urlcolor={blue}
}

\ifnum\draft=1
\newcommand{\anote}[1]{{\color{brown} [Amik: #1]}}
\newcommand{\mnote}[1]{{\color{red} [Madhu: #1]}}
\newcommand{\mpnote}[1]{{\color{pink} [Manaswi: #1]}}
\newcommand{\pnote}[1]{{\color{blue} [Prashanth: #1]}}
\newcommand{\snote}[1]{{\color{green} [Srikanth: #1]}}
\else  
\newcommand{\anote}[1]{}
\newcommand{\mnote}[1]{}
\newcommand{\mpnote}[1]{}
\newcommand{\pnote}[1]{}
\newcommand{\snote}[1]{}
\fi

\date{\today}

\begin{document} 
	\title{Local Correction of Linear Functions over the Boolean Cube}

    \author{Prashanth Amireddy\thanks{School of Engineering and Applied Sciences, Harvard University, Cambridge, Massachusetts, USA. Supported in part by a Simons Investigator Award and NSF Award CCF 2152413 to Madhu Sudan and a Simons Investigator Award to Salil Vadhan. Email: \texttt{pamireddy@g.harvard.edu}} \and
    Amik Raj Behera\thanks{Department of Computer Science, Aarhus University, Aarhus, Denmark. Supported by Srikanth Srinivasan's start-up grant from Aarhus University. Email: \texttt{bamikraj@cs.au.dk} } \and
    Manaswi Paraashar\thanks{Department of Computer Science, University of Copenhagen, Denmark. Supported by Srikanth Srinivasan's start-up grants from Aarhus University and the University of Copenhagen. Email: \texttt{manaswi.isi@gmail.com} } \and
     Srikanth Srinivasan \thanks{Department of Computer Science, University of Copenhagen, Denmark. Also partially employed by Aarhus University, Denmark. This work was supported by start-up grants from Aarhus University and the University of Copenhagen. Email: \texttt{srsr@di.ku.dk} } \and 
     Madhu Sudan\thanks{School of Engineering and Applied Sciences, Harvard University, Cambridge, Massachusetts, USA. Supported in part by a Simons Investigator Award and NSF Award CCF 2152413. Email: \texttt{madhu@cs.harvard.edu}}}

	\maketitle
        \pagenumbering{arabic}

\begin{abstract}
We consider the task of locally correcting, and locally list-correcting, multivariate linear functions over the domain $\{0,1\}^n$ over arbitrary fields and more generally Abelian groups. Such functions form error-correcting codes of relative distance $1/2$ and we give local-correction algorithms correcting up to nearly $1/4$-fraction errors making $\widetilde{\bigO}(\log n)$ queries. This query complexity is optimal up to $\poly(\log\log n)$ factors.  We also give local list-correcting algorithms correcting $(1/2 - \varepsilon)$-fraction errors with $\widetilde{\bigO}_\varepsilon(\log n)$ queries. 

These results may be viewed as natural generalizations of the classical work of Goldreich and Levin whose work addresses the special case where the underlying group is $\mathbb{Z}_2$. By extending to the case where the underlying group is, say, the reals, we give the first non-trivial locally correctable codes (LCCs) over the reals (with query complexity being sublinear in the dimension (also known as message length)). 

Previous works in the area mostly focused on the case where the domain is a vector space or a group and this lends to tools that exploit symmetry. Since our domains lack such symmetries, we encounter new challenges whose resolution may be of independent interest. 
The central challenge in constructing the local corrector is constructing ``nearly balanced vectors'' over $\{-1,1\}^n$ that span $1^n$ --- we show how to construct $\bigO(\log n)$ vectors that do so, with entries in each vector summing to $\pm1$. The challenge to the local-list-correction algorithms, given the local corrector, is principally combinatorial, i.e., in proving that the number of linear functions within any Hamming ball of radius $(1/2-\varepsilon)$ is $\bigO_\varepsilon(1)$. Getting this general result covering every Abelian group requires integrating a variety of known methods with some new combinatorial ingredients analyzing the structural properties of codewords that lie within small Hamming balls.
\end{abstract} 

        \newpage 
        
\tableofcontents

        \newpage

\section{Introduction}\label{sec:intro}

In this paper we consider the class of ``linear'' functions mapping $\{0,1\}^n$ to an Abelian group and give ``local correction'' and ``local list-correction'' algorithms for this family (of codes). We describe our problems and results in detail below. We start with some basic notation.

We denote the space of functions mapping $\{0,1\}^n$ to an Abelian group $G$ by $\mc{F}(\{0,1\}^n, G)$. Given two functions $f,g$ from this set, we denote by $\delta(f,g)$ the fractional Hamming distance between them, i.e. the fraction of points in $\{0,1\}^n$ on which $f$ and $g$ disagree. In other words,
\[
\delta(f,g) = \prob{x\sim \{0,1\}^n}{f(x)\neq g(x)}.
\]
We say that $f,g$ are $\delta$-close if $\delta(f,g) \leq \delta$ and that $f,g$ are $\delta$-far otherwise. Given a set of functions $\mc{F}\subseteq \mc{F}(\{0,1\}^n, G)$, we denote by $\delta(f,\mc{F})$ the minimum distance between $f$ and a function $P\in \mc{F}.$  The function $f$ is said to be $\delta$-close if $\delta(f,\mathcal{F}) \leq \delta$ and otherwise $\delta$-far from $\mc{F}.$ We denote by $\delta(\mc{F})$ the minimum distance between two distinct functions in $\mc{F}.$ 

The main thrust of this paper is getting efficient local correcting algorithms for some basic classes of functions $\cF$ that correct close to $\delta(\cF)/2$ fraction of errors uniquely (i.e. given $f:\Boo^n\to G$ determine $P \in \cF$ such that $\delta(f,P) < \delta(\cF)/2$), and to list-correct close to $\delta(\cF)$ fraction of errors with small sized lists (i.e., given $f$ output a small list $P_1,\ldots,P_L \in \cF$ containing all functions $P$ such that $\delta(f,P)< \delta(\cF) -\varepsilon$). We start by describing our class of functions. 

\paragraph{Group valued polynomials.} The function spaces we are interested in are defined by polynomials of low-degree over the Boolean cube $\{0,1\}^n$ with coefficients from an Abelian group $G$, where we view $\{0,1\} \subseteq \mathbb{Z}$. (Thus a monomial function is given by a group element $g \in G$ and subset $S \subseteq [n]$ and takes the value $g$ at points $a \in \{0,1\}^n$ such that $a_i = 1$ for all $i \in S$. The degree of a monomial is $|S|$ and a degree $d$ polynomial is the sum of monomials of degree at most $d$.)  We let  $\mc{P}_d(\{0,1\}^n,G)$ denote the space of polynomials of degree at most $d$ in this setting. (If $n$ and $G$ are clear from context we refer to this class as simply $\mc{P}_d$.) The standard proof of the ``Schwartz-Zippel Lemma'' \cite{ore1922hohere,DL78,Zippel79,Schwartz80} extends to this setting (see \Cref{thm:basic}) and shows that two distinct degree $d$ polynomials differ on at least a $2^{-d}$ fraction of $\Boo^{n}$. Therefore $\delta(\mc{P}_d) = 2^{-d}$ and our goal is to correct half this fraction of errors uniquely and close to this fraction of errors with small-sized lists. (We do so for $d = 1$, though many results apply to general values of $d$.) Next, we describe our notion of efficiency for correction. 
\paragraph{Local Correction.} In this work we are interested in a particularly strong notion of decoding, namely ``local correction''.
 Informally, $\mathcal{F}$ is locally correctable if there exists a probabilistic algorithm $C$ that, given a point $\mathbf{x} \in \{0,1\}^n$ and oracle access to a function $f$ that is close to $\mathcal{F}$, computes $P(\mathbf{x})$ with high probability while making few queries to the oracle $f$, where $P$ is function in $\mathcal{F}$ closest to $f$. In contrast to the usual notion of decoding which would require explicitly outputting a description of $P$ (say the coefficients of $P$ when $\cF = \cP_d$) this notion thus only requires us to compute $P(\mathbf{x})$ for a given $\mathbf{x}$. The main parameters associated with a local correction algorithm are the fraction of errors it corrects and the number of queries it makes to the oracle $f$. Formally, we say $\mathcal{F}$ is $(\delta, q)$-locally correctable if there exists a probabilistic algorithm that given $\mathbf{a} \in\Boo^n$ and oracle access to the input polynomial $f$ that is $\delta$-close to $P \in \mathcal{F}$, outputs $P(\mathbf{a})$ correctly with probability at least $3/4$ by making at most $q$ queries to $f$.\newline

The question of local correction of low-degree polynomials has been widely studied \cite{BeaverF, GLRSW, GemmellS, STV-list-decoding}. These works have focused on the setting when the domain has an algebraic structure such as being a vector space over a finite field. In contrast the ``Schwartz-Zippel'' lemma only requires the domain to be a product space. Kim and Kopparty \cite{kim-kopparty-prod-decoding} were the first to study the decoding of low-degree multivariate polynomials when the domain is a product set, though they do not study local correctibility. Bafna, Srinivasan, and Sudan \cite{bafna2017local} were the first to study the problem of local correctibility of linear polynomials, though their result was mainly a negative result. 
They showed that if the underlying field $\F$ is large, for example, $\F = \R$, then any $(\Omega(1),q)$-local correction algorithm for $\mathcal{P}_{1}$ with constant $\delta$ requires at least $\Tilde{\Omega}(\log n)$ queries. In this work, we consider the task of designing local correction algorithms with nearly matching performance. 

\paragraph{Local List Correction.} When considering a fraction of errors larger than $\delta(\cF)/2$, the notion of correction that one usually appeals to is called ``list-decoding'' or ``list-correction'' as we will refer to it, to maintain a consistent distinction between the notion of recovering the message (decoding) and recovering its encoding (correction). Here the problem comes in two phases: First a combinatorial challenge of proving that for some parameter $\rho \in [\delta(\cF)/2,\delta(\cF)]$ we have an a priori bound $L = L(\rho)$ such that for every function $f:\Boo^n\to G$ there are at most $t \leq L$ functions
$P_1,\ldots,P_t \in \cF$ satisfying $\delta(f,P_i) \leq \rho$. We define $\cF$ to be $(\rho,L)$-list-decodable if it satisfies this property. Next comes the algorithmic challenge of finding the list of size of most $L$ that includes all such $P_i$'s. In the non-local setting, this is referred to as the list correction task. In the local setting, the task is subtler to define and was formalized by Sudan, Trevisan, and Vadhan \cite{STV-list-decoding} as follows:\newline
A $(\rho,L,q)$-local-list-corrector for $\cF$ is a probabilistic algorithm $C$ that takes as input an index $i \in [L]$ and $\mathbf{x} \in \Boo^n$ along with oracle access to $f:\Boo^n\to G$ such that $C^f(i,\mathbf{x})$ is computed with at most $q$ oracle calls to $f$ and $C$ satisfies the following property: For every polynomial $P \in \cF$ such that $\delta(f,P) \leq \rho$ there exists an index $i \in [L]$ such that for every $\mathbf{x} \in \Boo^n$, $\Pr[C^f(i,\mathbf{x}) = P(\mathbf{x})] \geq 3/4$. (In other words $C^f(i,\cdot)$ provides oracle access to $P$ and ranging over $i \in [L]$ gives oracle access to every $P_1,\ldots,P_t$ that are $\rho$-close to $f$, while some $i$ may output spurious functions.) \newline

The notion of list-decoding dates back to the work of Elias \cite{Elias}. The seminal work of Goldreich and Levin \cite{GoldreichL} produced the first non-trivial list-decoders for any non-trivial class of functions. (Their work happens to consider the class $\cF = \cP_1(\{0,1\}^n,\Z_2)$ and presents local list-decoders, though the argument also yields local list-correctors.) List-decoding of Reed-Solomon codes was studied by Sudan \cite{Sudan} and Guruswami and Sudan \cite{gs-list-rs}. Local list-correction algorithms for functions mapping $\F_q^n$ to $\F_q$ for polynomials of degree $d \ll q$ were given in \cite{GRS00, STV-list-decoding}. The setting of $d > q$ has been considered by Gopalan, Klivans and Zuckerman \cite{gkz-list-decoding} and Bhowmick and Lovett \cite{BhowmickL}. In a somewhat different direction Dinur, Grigorescu, Kopparty, and Sudan \cite{GrigorescuKS,DinurGKS-ECCC} consider the class of homomorphisms from $G$ to $H$ for finite Abelian groups $G$ and $H$, and give local list-correction algorithms for such classes of functions. 

In this work, we give local list-correction algorithms for the class $\cP_1(\Boo^n,G)$ for every Abelian group $G$. We explain the significance of this work in the broader context below.

\subsection{Motivation for Our Work}

The problem of decoding linear polynomials over $\{0,1\}^n$ over an arbitrary Abelian group is a natural generalization of the work of Goldreich and Levin, who consider this problem over $\mathbb{Z}_2$. However, the error-correcting properties of the underlying code (rate and distance) remain the same over any Abelian group $G$. Further, standard non-local algorithms~\cite{Reed} over $\mathbb{Z}_2$ work almost without change over any $G$ (see \Cref{app:non-local}). The local correction problem is, therefore, a natural next step.

There are also other technical motivations for our work from the limitations of known techniques. Perhaps the clearest way to highlight these limitations is that to date we have no $(\Omega(1),O(1))$-locally correctable codes over the reals with growing dimension. This glaring omission is highlighted in the results of~\cite{BDWY, DSW1,DSW2}. The work of Barak, Dvir, Wigderson and Yehudayoff \cite{BDWY} and the followup of Dvir, Saraf, and Wigderson~\cite{DSW1} show that there are no $(\Omega(1),2)$-locally correctable codes of super-constant dimension, while another result of Dvir, Saraf and Wigderson~\cite{DSW2} shows that any $(\Omega(1),3)$-locally correctable codes in $\mathbb{R}^n$ must have dimension at most $n^{1/2-\Omega(1)}.$  Indeed, till our work, there has been very little exploration of code constructions over the reals. While our work does not give an $(\Omega(1),O(1))$-locally correctable code either, ours is the first to give $n$-dimensional codes that are $(\Omega(1),\widetilde{O}(\log n))$-locally correctable. (These are obtained by setting $G = \R$ in our results.)   

A technical reason why existing local correction results do not cover any codes over the reals is that almost all such results rely on the underlying symmetries of the domain. Typical results in the area (including all the results cited above) work over a domain that is either a vector space or at least a group. Automorphisms of the domain effectively play a central role in the design and analysis of the local correction algorithms; but they also force the range of the function to be related to the domain and this restricts their scope. In our setting, while the domain $\Boo^n$ does have some symmetries, they are much less rich and unrelated to the structure of the range. Thus new techniques are needed to design and analyze local-correction algorithms in this setting. Indeed we identify a concrete new challenge --- the design of ``balanced'' vectors in $\{-1,1\}^n$ (i.e., with sum of entries being in $\{-1,1\}$) that span the vector $1^n$ --- that enables local correction, and address this challenge. We remark that correcting $\cP_d$ for $d>1$ leads to even more interesting challenges that remain open.

Another motivation is just the combinatorics of the list-decodability of this space. For instance for the class $\cF = \cP_1(\Boo^n,G)$ for any $G$, it is straightforward to show $\delta(\cF) = 1/2$ and so the unique decoding radius is $1/4$, but the list-decoding radius was not understood prior to this work. The general bound in this setting would be the Johnson bound which only promises a list-decoding radius of $1 - 1/\sqrt{2} \approx 0.29$.  Indeed a substantial portion of this work is to establish that the list-decoding radius of all these codes approaches $\delta(\cF) = 1/2$. 

Finally, we note that the complementary problem of \emph{local testing} has been quite successfully studied in grids such as $\{0,1\}^n.$ Here, we are given oracle access to a function $f$ and the problem is to determine if $f$ is close to an element of $\mathcal{{F}}.$ The problem of testing closeness to linearity (e.g. $\cF = \cP_1(\{0,1\}^n,\Z_2)$) goes back to the work of Blum, Luby and Rubinfeld~\cite{BLR} and has a long history of its own. More recently, David, Dinur, Goldenberg, Kindler and Shinkar~\cite{DDGKS17} showed how to test linearity (over $\mathbb{Z}_2$)  when the domain is a Hamming slice of $\{0,1\}^n.$ The  works~\cite{bafna2017local,dinur2019direct,bogdanov2021direct,AmireddySS} show how to test for closeness to higher-degree polynomials over groups in the setting of $\{0,1\}^n$ or other grids.

We describe our results below before turning to the proof techniques.


\subsection{Our Main Results}

Our first result is an almost optimal local correction algorithm for degree 1 polynomials up to an error slightly less than $1/4$, which is the unique decoding radius. 
\begin{theorem}[Local correction algorithms for $\mc{P}_1$ up to the unique decoding radius]
    \label{thm:uniquedeg1}
    The space $\mc{P}_1$ has a $(\delta,q)$-local correction algorithm where $\delta = \frac{1}{4}-\varepsilon$ for any constant $\varepsilon > 0$ and $q = \tilde{O}(\log n).$
\end{theorem}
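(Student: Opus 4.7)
The plan is to express $P(\mathbf{a})$ as a short integer linear combination $\sum_{j=1}^k \beta_j P(y^{(j)})$ where $k = \widetilde{O}(\log n)$ and each $y^{(j)} \in \{0,1\}^n$ is marginally uniform. Since $P(x) = c_0 + \sum_i c_i x_i$ is affine, the algebraic identity $\sum_j \beta_j P(y^{(j)}) = P(\sum_j \beta_j y^{(j)})$ holds whenever $\sum_j \beta_j = 1$ (interpreting the right-hand side as an integer combination of $\{0,1\}^n$-vectors and extending $P$ affinely). So the task reduces to finding $y^{(1)}, \ldots, y^{(k)} \in \{0,1\}^n$ and integer coefficients $\beta_j$ summing to $1$ with $\sum_j \beta_j y^{(j)} = \mathbf{a}$ over $\mathbb{Z}^n$, such that each $y^{(j)}$ is marginally uniform. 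Querying $f$ at each $y^{(j)}$ and combining with the same $\beta_j$ then yields $P(\mathbf{a})$ whenever all $k$ queries are noise-free; an independent-repetition-plus-plurality step in $G$ boosts success to $\geq 3/4$.

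A naive ``BLR-style'' identity $P(\mathbf{a}) = P(\mathbf{a} \vee r) + P(\mathbf{a} \wedge r) - P(r)$ with $r \in \{0,1\}^n$ uniform would give a three-query decomposition, but $\mathbf{a} \vee r$ and $\mathbf{a} \wedge r$ are not marginally uniform (their biases depend on $\mathbf{a}$), letting an adversary concentrate errors on biased Hamming slices. This obstruction is quantitatively sharp, being responsible for the $\widetilde{\Omega}(\log n)$ lower bound of Bafna--Srinivasan--Sudan, and motivates a more elaborate decomposition. Following the abstract, the central technical ingredient is a family of \emph{balanced vectors}: $v^{(1)}, \ldots, v^{(k)} \in \{-1,+1\}^n$ with $k = O(\log n)$ and integer coefficients $\gamma_j$ such that (i) $\sum_j \gamma_j v^{(j)} = \mathbf{1}^n$, and (ii) each $v^{(j)}$ has entry sum $\pm 1$. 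Writing $v^{(j)} = u^{+}_j - u^{-}_j$ with $u^{+}_j, u^{-}_j \in \{0,1\}^n$ turns (i) into $L(\mathbf{1}^n) = \sum_j \gamma_j (P(u^{+}_j) - P(u^{-}_j))$, where $L(x) = \sum_i c_i x_i$, and (ii) forces each $u^{\pm}_j$ to lie on a Hamming slice of weight $(n \pm 1)/2$. A uniformly random permutation of $[n]$ applied to all of the $v^{(j)}$'s preserves the identity (since $\mathbf{1}^n$ is permutation-invariant) and makes each $u^{\pm}_j$ uniform on its slice; composing with an additional randomization that mixes Hamming slices (for instance, a conditional random shift) finally yields queries that are uniform over $\{0,1\}^n$. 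The family itself can be assembled recursively, doubling $n$ at each step and using a parity adjustment to preserve the $\pm 1$ entry sum.

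To compute $P(\mathbf{a})$ for arbitrary $\mathbf{a}$, one extends this scheme in two ways: separately estimate $c_0 = P(\mathbf{0})$ by a standard self-correcting argument, and apply the balanced decomposition to the sub-cube indexed by $\mathrm{supp}(\mathbf{a})$ to obtain $L(\mathbf{a}) = \sum_{i \in \mathrm{supp}(\mathbf{a})} c_i$ using $\widetilde{O}(\log n)$ marginally-uniform queries; then $P(\mathbf{a}) = c_0 + L(\mathbf{a})$. The main obstacles are twofold. First, constructing the balanced family so that both $k$ and the integer coefficients $|\gamma_j|$ are small, since large coefficients would amplify even a single erroneous query into a catastrophic error in the final combination. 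Second, the amplification step is delicate: a naive union bound over the $\widetilde{O}(\log n)$ queries within one trial only tolerates $\delta = O(1/\log n)$, so reaching $\delta$ close to $1/4$ requires a more refined analysis, e.g.\ a coupling/repetition scheme whose per-trial success probability exceeds $1/2$ using only a logarithmic query overhead.
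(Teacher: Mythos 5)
Your Step-1 plan --- a family of nearly balanced Boolean vectors whose integer combination (with coefficients summing to $1$) equals $1^n$, randomized so that each query is marginally near-uniform, followed by a union bound --- is essentially the paper's \Cref{thm:uniquedeg1-smallerror}, realized via the correction-gadget construction of \Cref{lemma:matrix-construction} (which builds the balanced matrix from binary representations of consecutive integers; your recursive-doubling alternative is not written out but is plausible). Two small remarks. First, the magnitude of the integer coefficients is not actually an obstacle: the output of the linear combination is exactly correct iff every query is noise-free, so large $|\gamma_j|$ do no harm, and indeed the paper's coefficients are exponential in $q$. Second, for general $\mathbf{a}$ the paper simply XOR-shifts the $1^n$-gadget by $1^n\oplus\mathbf{a}$, which preserves both the linear identity and marginal uniformity; your restrict-to-$\mathrm{supp}(\mathbf{a})$ route either fixes the off-support coordinates (destroying uniformity and letting the adversary concentrate errors there) or randomizes them (leaving an uncancelled $\sum_{i\notin\mathrm{supp}(\mathbf{a})} c_i r_i$ term), and as written would need repair.

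The genuine gap is the amplification from $\delta=O(1/\log n)$ to $\delta=\tfrac14-\varepsilon$. You flag this as the bottleneck, but the sketch (``a coupling/repetition scheme whose per-trial success probability exceeds $1/2$'') cannot be realized inside your framework: a trial that makes $q=\Theta(\log n)$ near-uniform queries and must see no errors has a failure bound of $q(\delta+\varepsilon')$ from the union bound, which is vacuous for constant $\delta$, and no alternative analysis is offered showing the success probability exceeds $1/2$; plurality over repetitions of a trial with small success probability does not help. The paper's resolution is qualitatively different and constitutes most of the proof. In \Cref{subsec:error-reduction-small-constant} and \Cref{subsec:error-reduction-close-radius} it builds an error-reduction pipeline that simulates a \emph{virtual oracle}: restrict $f$ to a random constant-dimensional subcube $C_{\mathbf{a},h}$ through the target point (constructed via a random hash $h:[n]\to[k]$, \Cref{defn:random-embedding}), brute-force decode the restriction, and return its value at the target. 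Pairs of subcube points $(x(\mathbf{y}),x(\mathbf{y}'))$ are $\rho$-noise-correlated, so hypercontractivity (\Cref{thm:hypercontractivity}, used in \Cref{lemma:sampling-subcube}) controls the variance of the empirical error rate on the subcube, showing the virtual oracle agrees with $P$ on most inputs. One such step with $k=\mathrm{poly}(1/\varepsilon)$ reduces $\tfrac14-\varepsilon$ to a small constant with $O_\varepsilon(1)$ queries; a second, iterated $O(\log\log\log n)$ times with a plurality-of-three analysis (also via hypercontractivity), brings it to $O(1/\log n)$ with $\mathrm{poly}(\log\log n)$ query overhead. Only then is your Step-1 gadget applied, giving the $\widetilde{O}(\log n)$ bound. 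Without this subcube-plus-hypercontractivity error-reduction machinery, your proposal establishes only the small-error \Cref{thm:uniquedeg1-smallerror}, not \Cref{thm:uniquedeg1}.
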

We remark that \Cref{thm:uniquedeg1} is tight up to $\poly(\log \log n)$ factors due to a lower bound of $\Omega(\log n/\log \log n)$ by earlier work of Bafna, Srinivasan, and Sudan \cite{bafna2017local}.
Using further ideas, we also extend the algorithm from Theorem~\ref{thm:uniquedeg1} to the list decoding regime. For this, we need first a combinatorial list decoding bound.

\begin{theorem}[Combinatorial list decoding bound for $\mc{P}_1$]
    \label{thm:comblistdeg1}
    For any constant $\varepsilon > 0$, the space $\mc{P}_1$ over any Abelian group $G$ is $(1/2 - \varepsilon, \poly(1/\varepsilon))$-list correctable.
\end{theorem}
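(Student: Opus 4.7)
The plan is to bound $L := \abs{S}$, where $S = \set{P \in \cP_1 : \delta(f, P) \leq 1/2 - \varepsilon}$, by $\poly(1/\varepsilon)$. I would split the argument into a structural part (characterizing nonzero degree-$1$ polynomials that are zero too often) and a counting part that leverages this structure.

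First, for any $P, Q \in S$ with $P \neq Q$ the triangle inequality gives $\delta(P, Q) \leq 1 - 2\varepsilon$, so the nonzero polynomial $g := P - Q \in \cP_1$ satisfies $\pr_x[g(x) = 0] \geq 2\varepsilon$. In the other direction, writing $g = a_0 + \sum_j a_j x_j$ and fixing all coordinates except some $i$ with $a_i \neq 0$ reduces $g$ to a nonzero affine function of $x_i$, which vanishes at most once; hence $\pr_x[g(x) = 0] \leq 1/2$ for every nonzero $g \in \cP_1$, reconfirming $\delta(\cP_1) = 1/2$. Since we are bounding a finite list, we may also assume $G$ is finitely generated, so $G \cong \Z^r \oplus T$ for some $r \geq 0$ and finite torsion group $T$ by the structure theorem for finitely generated Abelian groups.

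The key technical step is a structural anti-concentration lemma: any nonzero $g = a_0 + \sum_{i \in N} a_i x_i \in \cP_1$ with $\pr_x[g(x) = 0] \geq 2\varepsilon$ is essentially supported on a core of $k = O_\varepsilon(1)$ coordinates, and on that core its coefficients come from a bounded set. For the $\Z^r$-component I would use the Erd\H{o}s--Littlewood--Offord theorem: if the projection of $(a_i)_{i \in N}$ to $\Z^r$ has many nonzero entries, then $\sum_i a_i x_i$ is anti-concentrated (no value has probability exceeding $O(1/\sqrt{|N|})$), forcing the torsion-free support to be $O(1/\varepsilon^2)$. For the torsion component $T$, use a Fourier argument: each character $\psi$ of $T$ converts the condition $\pr_x[g(x) = 0] \geq 2\varepsilon$ into a statement about $\E_x[\psi(a_0) \prod_i \psi(a_i)^{x_i}]$, whose product form constrains how many coordinates can have $\psi(a_i) \neq 1$.

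Given the structural lemma, $L$ would be bounded by a cover/counting argument. Fix $P_0 \in S$; all nonzero differences in $\set{P - P_0 : P \in S \setminus \{P_0\}}$ have core of size $\leq k$. For each candidate core $U \subseteq [n]$ of size $\leq k$, the number of compatible differences within that core is $\poly(1/\varepsilon)$ via a second-moment/Parseval-type step on $\cP_1(\Boo^U, G)$ (which has only $O_\varepsilon(1)$ free parameters once coefficients are restricted to a bounded set). A double-counting argument removes the $\binom{n}{k}$ factor by exploiting that each element of $S$ participates in many differences sharing a common core, yielding $L \leq \poly(1/\varepsilon)$. The main obstacle I anticipate is when $T$ contains many $2$-torsion elements (e.g.\ $T = (\Z/2\Z)^r$): anti-concentration is weakest there, because degenerate differences like $g = a(x_i + x_j)$ with $2a = 0$ are zero on exactly half the inputs, and making the bound uniform in $G$ likely requires a refined Littlewood--Offord-type estimate adapted to Abelian groups, possibly combined with an induction on the invariant factors of $T$.
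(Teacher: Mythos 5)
Your outline for the large-torsion case matches the paper's: they too prove an anti-concentration lemma via characters (for products of $p$-groups with $p \ge 5$), conclude that most list elements differ from a fixed one in $O(1)$ coefficients, and then count by a case split on whether the union of supports is small or large. Using Erd\H{o}s--Littlewood--Offord for the free part $\mathbb{Z}^r$ is a plausible alternative to the paper's reduction to finite groups.

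However, the structural lemma you lean on --- that any nonzero $g \in \cP_1$ with $\Pr_x[g(x)=0]\ge 2\varepsilon$ has a core of $O_\varepsilon(1)$ coordinates --- is simply \emph{false} when $G$ has $2$- or $3$-torsion, and this is the crux, not a corner case. Over $\mathbb{Z}_2$, \emph{every} nonzero affine polynomial, e.g.\ $g = x_1 + \cdots + x_n$, has $\Pr_x[g(x)=0] = 1/2$ exactly, independent of the number of nonzero coefficients; the hypothesis gives no bound on the support at all. The character sum you propose, $\E_x\bigl[\psi(a_0)\prod_i \psi(a_i)^{x_i}\bigr]$, is already $0$ for any nonconstant $g$ over $\mathbb{Z}_2$ (each factor $(1+\psi(a_i))/2$ with $a_i\neq 0$ vanishes), so it yields no constraint either; the same degeneracy persists, more mildly, over $\mathbb{Z}_3$ and $\mathbb{Z}_4$. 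Note also that your illustrative ``bad'' example $g = a(x_i + x_j)$ with $2a = 0$ has support of size $2$, so it does not exhibit the failure; the real obstruction is high-weight polynomials like $a\sum_i x_i$ with $2a = 0$.

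The paper does not try to rescue anti-concentration in this regime. For $2$-groups and $3$-groups it switches to a different mechanism: an ``extended Johnson bound'' over $\mathbb{Z}_2$ and $\mathbb{Z}_3$ (an analytic inequality $\sum_i \alpha_i^C \le 1$, proved by embedding $\mathbb{Z}_q$-valued functions as complex unit vectors and showing near-orthogonality of codewords differing in many coordinates), which is then lifted from $\mathbb{Z}_q$ to arbitrary finite $q$-groups by induction on $|G|$ via the special-intersecting-set framework of Dinur--Grigorescu--Kopparty--Sudan: quotient by an order-$q$ subgroup, and bound the number of extensions of each list element using the set-system potential. No small-support structure is required. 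As written, your pointer to a ``refined Littlewood--Offord estimate adapted to Abelian groups'' and an ``induction on invariant factors'' does not supply such a mechanism, so the small-torsion case of your argument is a genuine gap.
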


Using the combinatorial list decoding bound, we also give a local list correction algorithm for degree $1$ polynomials. We state the result below. For a formal definition of local list correction, refer to \Cref{defn:local-list-algo}.

\begin{theorem}[Local List Correction for degree-$1$]
\label{thm:listdecoding}
There exists a fixed polynomial $L$ such that for all Abelian groups $G$ and for every $\varepsilon>0$ and $n \in \Z^+$ we have that $\cP_1(\Boo^n,G)$ is $(1/2-\varepsilon, L(\varepsilon), \widetilde{O}(\log n))$-locally list correctable. 

Specifically, 
there is a randomized algorithm $\mathcal{A}$ that, when given oracle access to a polynomial $f$ and a parameter $\varepsilon > 0$, outputs with probability at least $3/4$ a list of randomized algorithms $\phi_1,\ldots, \phi_L$ ($L\leq \poly(1/\varepsilon)$) such that the following holds:

For each 
$P \in \mathcal{P}_1$ that is $(1/2 - \varepsilon)$-close to $f$, there is at least one algorithm $\phi_i$ that, when given oracle access to $f$, computes $P$ correctly on every input with probability at least $3/4.$

The algorithm $\mathcal{A}$ makes $O_{\varepsilon}(1)$ queries to $f$, while each $\phi_i$ makes $\tilde{O}_\varepsilon(\log n)$ queries to $f.$ 
\end{theorem}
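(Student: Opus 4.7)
The plan is to apply the Sudan--Trevisan--Vadhan framework for local list decoding, with Theorem~\ref{thm:uniquedeg1} supplying the unique local corrector and Theorem~\ref{thm:comblistdeg1} controlling the output list size. The algorithm $\mathcal{A}$ would sample $m = O(\varepsilon^{-2})$ uniformly random points $y_1, \ldots, y_m \in \{0,1\}^n$ and query $f$ at each. For every $P \in \mathcal{P}_1$ with $\delta(f,P) \le 1/2-\varepsilon$, a Chernoff bound guarantees that, with high probability, $P$ and $f$ agree on a subset $S_P \subseteq [m]$ of size at least $(1/2+\varepsilon/2)m$. Each candidate ``advice'' is a subset $S \subseteq [m]$ of at least this density together with the tuple $(f(y_j))_{j \in S}$, and $\mathcal{A}$ outputs one candidate machine $\phi_S$ per such advice.

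\emph{Reduction to unique correction.} On input $x$, the machine $\phi_S$ would build a virtual oracle $\hat f$ that is $(1/4-\varepsilon')$-close to the polynomial $P$ corresponding to the advice (when the advice is correct), and then invoke the unique local corrector of Theorem~\ref{thm:uniquedeg1} on $\hat f$ at $x$. Each query $\hat f(w)$ is in turn computed from $O_\varepsilon(1)$ queries to $f$ combined with the advice. A natural starting point is the linearity identity $P(w) + P(y) = P(w \vee y) + P(w \wedge y)$, valid for every $P \in \mathcal{P}_1$ on $\{0,1\}^n$: sample $y$ from the advice set (where the value $P(y)$ is known), query $f(w \vee y)$ and $f(w \wedge y)$, and take a majority vote over $O_\varepsilon(1)$ independent samples.

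\emph{Pruning and query accounting.} The enumeration in the first step produces too many candidate advices for the list size to be $\poly(1/\varepsilon)$ out of the box, so a pruning phase is required. I would test each $\phi_S$ on a few fresh random inputs, compare its output against $f$, and retain only those $\phi_S$ whose empirical agreement with $f$ exceeds $1/2+\varepsilon/4$. By Theorem~\ref{thm:comblistdeg1}, at most $\poly(1/\varepsilon)$ candidates survive with high probability. The total query complexity is then $O_\varepsilon(1)$ for $\mathcal{A}$ and $\tilde{O}_\varepsilon(\log n)$ for each surviving $\phi_S$, as required.

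\emph{Main obstacle.} The crux is the construction of the virtual oracle $\hat f$: it must be globally $(1/4-\varepsilon')$-close to $P$ yet evaluable with only $O_\varepsilon(1)$ queries to $f$ per point. The naive identity above yields queries $w \vee y$ and $w \wedge y$ whose marginal distribution over $\{0,1\}^n$ is biased (each coordinate is $1$ with probability $3/4$ rather than $1/2$), so the global $(1/2-\varepsilon)$-agreement of $f$ with $P$ does not immediately transfer to these queries. Overcoming this will likely require expressing $P(w)$ as a signed integer combination of $f$-values at near-uniformly-distributed points together with the advised values, in the spirit of the ``balanced vectors'' construction on $\{-1,1\}^n$ that underlies the unique local corrector of Theorem~\ref{thm:uniquedeg1}.
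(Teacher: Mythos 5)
Your high-level plan follows the STV blueprint and correctly identifies the crux, but the virtual oracle construction is a genuine gap, not a detail to be filled in. The lattice identity $P(w) + P(y) = P(w\vee y) + P(w\wedge y)$ is valid, but as you observe, for a fixed advice point $y$ and uniform $w$ the queries $w\vee y$ and $w\wedge y$ are confined to the subcube that fixes the coordinates where $y=1$ (resp.\ $y=0$); an adversarial $f$ can place all of its errors on that subcube, making every such query wrong. Your closing suggestion that a ``balanced vectors'' trick à la \Cref{lemma:decode-1n} will rescue this is misdirected: that construction produces query points that are \emph{globally near-uniform}, which is exactly what the gadget needs when $\delta(f,P)=O(1/\log n)$, but it gives no mechanism for singling out one polynomial $P$ from among the $\poly(1/\varepsilon)$ polynomials in $\mathsf{List}^f_\varepsilon$ at distance $1/2-\varepsilon$ — the disambiguation problem is precisely what the advice must solve, and the balanced-vectors gadget is oblivious to the advice. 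There is also a secondary issue: a random sample of $O(1/\varepsilon^2)$ points together with a high-agreement subset $S$ need not uniquely determine which $P\in\mathsf{List}^f_\varepsilon$ the advice refers to, whereas your virtual-oracle construction implicitly assumes it does.

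The paper resolves both problems simultaneously by changing the advice format. Instead of a sparse set of sample points, the advice is the full restriction $P|_{\mathsf{C}}$ of $P$ to a random $k$-dimensional subcube $\mathsf{C}$ ($k = O_\varepsilon(1)$), obtained by brute-force list-decoding $f|_{\mathsf{C}}$ via \Cref{thm:non-local-list}. To evaluate a point $\mathbf{b}$, the approximating oracle $\psi$ forms the random $2k$-dimensional subcube $\mathsf{C}'$ spanned by $\mathsf{C}$ and $\mathbf{b}$ (\Cref{def:bigger-subcube}), brute-force list-decodes $f|_{\mathsf{C}'}$, and filters the resulting candidates by testing which one restricts to the advice on $\mathsf{C}$. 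The hypercontractivity-based sampling lemma for random subcubes (\Cref{lemma:sampling-subcube}) guarantees that $\delta(P|_{\mathsf{C}'}, f|_{\mathsf{C}'}) \approx \delta(P,f)$ with high probability — this is what replaces your biased marginals with a controlled one — and \Cref{claim:E3P} shows that with high probability the random pairing of $\mathsf{C}'$-variables into $\mathsf{C}$-variables does not accidentally identify two distinct candidates that disagree at $\mathbf{b}$ (with a careful parity argument handling order-$2$ elements). Only then is \Cref{thm:uniquedeg1} applied on top of $\psi$, exactly as you envisioned. So your step 3 (unique-corrector composition) and your pruning heuristic are fine, but the bridge from advice to a close oracle is the part of the argument your proposal leaves open, and it requires the random-subcube machinery rather than a fix to the OR/AND identity.
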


Using known local testing results~\cite{bafna2017local, AmireddySS}, one can show that the local list-correction \Cref{thm:listdecoding} actually implies \Cref{thm:uniquedeg1}. Nevertheless, we present the proof of \Cref{thm:uniquedeg1} in its entirety, since it is a simpler self-contained proof than the one that goes through \Cref{thm:listdecoding}, and introduces some of the same ideas in an easier setting. A weak version of \Cref{thm:uniquedeg1} is also required for \Cref{thm:listdecoding}.

\subsection{Proof Overview}

\subsubsection{Local Correction - \Cref{thm:uniquedeg1}}
We prove \Cref{thm:uniquedeg1} in three steps. The first step, which is specific to the space of linear polynomials, shows how to locally correct from any oracle $f$ that $\bigO(1/\log n)$-close to a degree-$1$ polynomial in $n$ variables using $O(\log n)$ queries. In the second and third steps, we show how to increase the radius of decoding from $\bigO(1/\log n)$ to an absolute constant and then to (nearly) half the unique decoding radius. The latter two steps also work for polynomials of degree greater than $1.$

\paragraph{\underline{First step}} To motivate the proof of the first step, it is worth recalling the idea behind the lower bound result of \cite{bafna2017local} mentioned above. For simplicity, let us assume that we are working with \emph{homogeneous} linear polynomials over $\mathbb{R}$. In this situation, we can equivalently replace the domain with $\{-1,1\}^n.$ Now, assume the given oracle $f:\{-1,1\}^n\rightarrow\mathbb{R}$ agrees with some homogeneous linear polynomial $P$ at all points of Hamming weight in the range $[\frac{n}{2}-n^{0.51},\frac{n}{2}+n^{0.51}]$, and takes adversarially chosen values outside this set. It is easy to check that $f$ is $\exp\paren{-n^{\Omega(1)}}$-close to $P$. Further, assume that we only want to correct the polynomial $P$ at the point $1^n.$

Over $\mathbb{R}$, the space of homogeneous linear polynomials forms a vector space. Hence, given access to an oracle $f$ that is close to a codeword $P$, it is natural to correct $P$ using a `linear algorithm' in the following sense. To correct $P$ at $1^n$, we choose points $\mathbf{x}^{(1)},\ldots, \mathbf{x}^{(q)}\in \{-1,1\}^n$ and output 
\[
c_1 f(\mathbf{x}^{(1)}) + \cdots + c_q f(\mathbf{x}^{(q)})
\]
for some coefficients $c_1,\ldots, c_q\in \mathbb{R}.$ (Indeed, it is not hard to argue that if any strategy works, then there must be a linear algorithm that does \cite{bafna2017local}.) 

Since this strategy must work when given $P$ itself as an oracle, it must be the case that
\[
P(1^n) = c_1 P(\mathbf{x}^{(1)}) + \cdots + c_q P(\mathbf{x}^{(q)})
\]
for any linear polynomial $P$. Further, as the space of homogeneous linear polynomials is a vector space spanned by the coordinate (i.e. dictator) functions, it is necessary and sufficient to have
\begin{equation}
\label{eq:proofoverview-step1}
1^{n} = c_1 \cdot 
    \mathbf{x}^{(1)} + \cdots + c_q\cdot 
    \mathbf{x}^{(q)}.
\end{equation}
Finally, for such a correction algorithm to work for $f$ as given above, it must be the case that each of the `query points' $\mathbf{x}^{(1)},\ldots, \mathbf{x}^{(q)}$ has Hamming weight in the range $[n/2 - n^{0.51}, n/2 + n^{0.51}]$. 

Note that \Cref{eq:proofoverview-step1} cannot hold for \emph{perfectly balanced} (i.e. Hamming weight exactly $n/2$) query points, no matter what $q$ we choose: this is because the query points lie in a subspace not containing $1^{n}.$ The work of~\cite{bafna2017local} showed a robust version of this: for any set of `nearly-balanced' vectors with Hamming weight in the range $[n/2 - n^{0.51}, n/2 + n^{0.51}]$ that satisfy \Cref{eq:proofoverview-step1}, it must hold that $q = \Omega(\log n/\log \log n).$ At a high level, this lower bound holds because if \Cref{eq:proofoverview-step1} is true, then the coefficients can be taken to be at most $q^{\bigO(q)}$ in magnitude (via a suitable application of Cramer's rule). The lower bound then follows by adding up the entries of the vectors on both sides of \Cref{eq:proofoverview-step1}.

The first step of the algorithm is based on showing that this lower bound is essentially tight. More formally, we show that we can find $q = \bigO(\log n)$ nearly-balanced\footnote{In fact, the vectors we construct have Hamming weight $n/2 \pm 1$.} vectors $\mathbf{x}^{(1)},\ldots,\mathbf{x}^{(q)}\in \{0,1\}^n$ such that the following (more general) equation holds.
\begin{equation}
\label{eq:proofoverview-step1-2}
1^{n+1} = c_1 \cdot 
\left(\begin{array}{c}1\\ \mathbf{x}^{(1)}\end{array}\right) + \cdots + c_q\cdot \left(\begin{array}{c}1\\ \mathbf{x}^{(q)}\end{array}\right).
\end{equation}
This identity allows us to correct any linear (not just homogeneous) polynomial. Moreover, we show that we can take the coefficients to be \emph{integers}, which allows us to apply this algorithm over any Abelian group.\footnote{It makes sense to multiply a group element $g$ with an integer $k$, since it amounts to adding either the element $g$ or its inverse $-g$, $|k|$ times.}

Finally, we show that this construction also implies a similar algorithm to compute $P(1^n)$ from \emph{any} $f$ that is $\bigO(1/\log n)$-close to $P$ (and not just the special $f$ given above). This is done by constructing random query points $\mathbf{y}^{(1)},\ldots, \mathbf{y}^{(q)}$ where the $i$th bit of these vectors is picked by choosing a \emph{random} bit of $\mathbf{x}^{(1)},\ldots, \mathbf{x}^{(q)}$. The fact that each $\mathbf{x}^{(j)}$ is nearly balanced implies that each $\mathbf{y}^{(j)}$ is nearly uniform over $\{0,1\}^n$ and hence likely not an error point of $f.$ Intuitively, the distance requirement is because we make $q$ (nearly) random queries to $f$ and the algorithm succeeds if none of the query points is in error. So, the algorithm correctly computes $P(1^n)$ when $\delta(f,P)$ is sufficiently smaller than $1/q$. By a suitable `shift', we can also correct at points other than $1^n.$

This construction of the points $\mathbf{x}^{(1)},\ldots,\mathbf{x}^{(q)}\in \{0,1\}^n$ is based on ensuring that the coefficients $c_1,\ldots, c_q$ must be exponentially large in $q$ (to ensure that the argument of~\cite{bafna2017local} is tight). This leads to the natural problem of finding a hyperplane whose Boolean solutions cannot be described by an equation with small coefficients. This is a topic that has received much interest in the study of threshold circuits and combinatorics~\cite{GHR,Hastad,AlonVu,Podolskii,BabaiHansenPodolskii}.

For the result stated above, we require only a simple construction. Consider the following equation over $\{0,1\}^{q}$ where $q = 2k$. The first $k$ bits describe an integer $i\in \{0,\ldots, 2^k-1\}$ and the last $k$ bits describe an integer $j.$ The hyperplane expresses the constraint that $j=i-1.$ This hyperplane can easily be described using coefficients that are exponentially large in $k$ and one can easily show that this is in fact necessary. After some modification to ensure that the coefficients sum to $1,$ we get \Cref{eq:proofoverview-step1-2}. See \Cref{lemma:decode-1n} for more details.

Using a more involved construction due to H\r{a}stad~\cite{Hastad} and its extension due to Alon and Vu~\cite{AlonVu}, it is possible to show that we can achieve $q= O(\log n/\log \log n)$, showing that the lower bound of~\cite{bafna2017local} is in fact tight up to constant factors (see \Cref{app:improved-uniquedeg1-smallerror}). However, in this case, we don't know how to ensure that the coefficients $c_1,\ldots, c_q$ are integers, meaning that the algorithm does not extend to general Abelian groups.\footnote{Moreover, we also lose $\poly(\log\log n)$ factors in query complexity in the subsequent steps, and so the final algorithm is only tight up to $\poly(\log\log n)$ factors, no matter which construction we use in the first step.}

\paragraph{\underline{Second and third steps}} To obtain an algorithm resilient to a larger fraction of errors, we use a process of error reduction. Specifically, we show that, given an oracle $f:\{0,1\}^n\rightarrow G$ that is $\delta$-close to a polynomial $P\in \mathcal{P}_1$, we can obtain (with high probability) an oracle $g:\{0,1\}^n\rightarrow G$ that is $\bigO(1/\log n)$-close to $P$; we can then apply the above described local correction algorithm to $g$ to correct $P$ at any given point. The oracle $g$ makes $\poly(\log \log n)$ queries to $f$ and hence the overall number of queries to $f$ is $\tilde{\bigO}(\log n).$ 

Interestingly, the error-reduction step is not limited to linear polynomials. We show that this also works for the space of degree-$d$ polynomials, where the number of queries now also depends on the degree parameter $d.$ In general, $\delta$ can be arbitrarily close to the unique decoding radius of $\mathcal{P}_d$, which is $2^{-(d+1)}.$

We use two slightly different error-reduction algorithms to handle the case when $\delta$ is a small constant, and the case when $\delta$ is close to $2^{-(d+1)}$ respectively. We reduce the latter case to the former case and the former to the case of error $\bigO(1/\log n).$ It is simpler to describe the error reduction algorithm when the error is large, i.e. close to the unique decoding radius, so we start there.

The process of error-reduction may be viewed as an \emph{average-case} version of the correction problem, where we are only required to compute $P$ on \emph{most} points in $\{0,1\}^n$ with high probability. Assume, therefore, that we are given a \emph{random} point $\mathbf{a}\in \{0,1\}^n$ and we are required to output $P(\mathbf{a})$ (with high probability).

In the setting where the domain is not $\{0,1\}^n$ but rather a vector space like $\F_q^n$, a natural strategy going back to the works of Beaver and Feigenbaum~\cite{BeaverF} and Lipton~\cite{Lipton} is to choose a random subspace $V$ of appropriate constant\footnote{Here, we think of all parameters except $n$ as constants.} dimension $k$ containing $\mathbf{a}$ and then find the closest $k$-variate degree-$d$ polynomial to the restriction $f|_V$ of $f$ to this subspace. The reason this works is that the points in a random subspace come from a pairwise independent distribution and hence standard second-moment methods show that $\delta(f|_V,P|_V)\approx \delta$ with high probability, in which case $\delta(f|_V,P|_V)$ is also less than the unique-decoding radius of $\mathcal{P}_d.$ A brute-force algorithm (or better ones, such as the Welch-Berlekamp algorithm (see e.g. \cite[Chapter 15]{GRS-codingbook})) can now be used to find $P|_V$, which also determines $P(\mathbf{a}).$

To adapt this idea to the setting of $\{0,1\}^n$, we note that random subspaces are not available to us since most constant-dimensional subspaces don't have points in $\{0,1\}^n$. However, we observe that we can apply the above idea to a \emph{random subcube} in $\{0,1\}^n.$ More specifically, we identify variables randomly into $k$ buckets via a random hash function $h:[n]\rightarrow [k]$, reducing the original set of $n$ variables $x_1,\ldots, x_n$ to a set of $k$ variables $y_1,\ldots, y_k.$ Further, to ensure that the given point $\mathbf{a}$ is in the chosen subcube, we start by replacing $x_i$ by $x_i \oplus a_i$ before the identification process.\footnote{The process of XORing a variable $x$ by a Boolean value $b$ is equivalent to either leaving the variable as is when $b=0$, or replacing $x$ by $1-x$ when $b=1$. This does not affect the degree of the polynomial $P.$} This gives rise to a random subcube $\mathsf{C}$ containing $\mathbf{a}$ (obtained by setting $y_1 = \cdots = y_k= 0$). We define a random subcube formally in \Cref{defn:random-embedding}. We can now apply the above idea by restricting the given $f$ to this subcube. 

Having defined a subcube $\mathsf{C}$ as above, the non-trivial part of the argument is to show that $\delta(f|_\mathsf{C}, P|_\mathsf{C})\approx \delta$. This is not obvious as the points of the subcube $\mathsf{C}$ are not pairwise independent. Nevertheless, for random $\mathbf{a}$, the points of $\mathsf{C}$ are `noisy' copies of one another (\Cref{def:noisedistribution}). Using this fact and standard hypercontractivity estimates, we can show that most pairs of points of $C$ are `approximately' pairwise independent (see \Cref{thm:hypercontractivity} below) as long as $k$ is a large enough constant. This allows us to use the second-moment method to recover $P(\mathbf{a})$ as before, for all but a  small fraction $\delta'$ of possible inputs $\mathbf{a}$ (with high probability). The parameter $k$ is $\poly(1/\delta')$ and hence the query complexity is constant as long as the required error $\delta'$ is constant. This step is proved in \Cref{subsec:error-reduction-small-constant}.

To reduce the error further down to $\bigO(1/\log n)$, we modify the above idea. We repeat the above process\footnote{Actually, we need to slightly modify the process to ensure that we only query `balanced' points on the subcube $\mathsf{C}$. We postpone this detail to the proof.} on three randomly chosen subcubes of dimension $k'$ each containing $\mathbf{a}$ and take a plurality vote of their outputs. The probability of error in this algorithm is bounded by the probability that at least two of the iterations query a point of error, which would be $\leq \bigO_{k'}((\delta')^2)$ if the repetitions were entirely independent. However, the iterations here have some dependency - each iteration uses the \emph{same} random input $\mathbf{a}.$ Nevertheless, using hypercontractivity, we can again argue that if $k'$ is a large enough constant \emph{depending only on $d$}, the probability of error is at most $\bigO_{k'}((\delta')^{1.5})\leq (\delta')^{1.1}$ for small enough $\delta'.$ Repeating this process $t$ times, gives an error that is \emph{double-exponentially} small in $t$, at the expense of $\bigO_{k'}(1)^t$ many queries. Choosing $t$ to be $\bigO(\log \log \log n)$ gives us an oracle that is $\bigO(1/\log n)$-close to $P.$ This step is proved in \Cref{subsec:error-reduction-close-radius}.

\subsubsection{Combinatorial List Decoding Bound - \Cref{thm:comblistdeg1}}

We first note that the list size can indeed be as large as $\poly(1/\varepsilon)$, no matter the underlying group $G.$ This is shown by the following example. Fix an integer parameter $t$ and any non-zero elment $g\in G$. Let $f=\mathrm{Maj}_G^t(x_1,\ldots, x_t)$ denote the function of the first $t$ input variables that takes the value $g$ when it's input has Hamming weight greater than $t/2$ and the value $0$ otherwise. A standard calculation (see e.g.~\cite[Theorem 5.19]{odonnellbook}) shows that $\mathrm{Maj}_G^t$ agrees with the linear functions $g\cdot x_i$ ($i\in [t]$) on a $(\frac{1}{2}+\frac{O(1)}{\sqrt{t}}$)-fraction of inputs. Setting $t = \Theta(1/\varepsilon^2),$ we see that this agreement can be made $\frac{1}{2} + \varepsilon.$ This implies that for $f$ as defined above, the list size at distance $\frac{1}{2}-\varepsilon$ can be as large as $\Omega((1/\varepsilon)^2)).$

To motivate the proof of \Cref{thm:comblistdeg1}, it is helpful to start with the case when $G$ is a group $\mathbb{Z}_p$ of prime order. There are two extremes in this case: $p=2$ and large $p$ (say a large constant or even growing with $n$).

\paragraph{\underline{Case 1: $p=2$}} The case $p=2$ is the classical setting that has been intensively investigated in the literature, starting with the foundational work of Goldreich and Levin~\cite{GoldreichL} (see also the work of Kushilevitz and Mansour~\cite{KM}). In this setting, it is well-known that the bound of $1/\varepsilon^2$ is tight. This follows from the standard Parseval identity from basic Fourier analysis of Boolean functions (see e.g. \cite{odonnellbook}) or as a special case of the binary Johnson bound (see e.g. the appendix of \cite{DinurGKS-ECCC}). At a high level, this is because the Boolean Fourier transform identifies each $f:\{0,1\}^n\rightarrow \mathbb{Z}_2$ with a real unit vector $\mathbf{v}_f$ such that distinct linear polynomials are mapped to an orthonormal basis. Moreover, if $f$ is $(\frac{1}{2}-\varepsilon)$-close to a linear polynomial $P$, then the length of projection of the vector $\mathbf{v}_f$ on $\mathbf{v}_P$ is at least $\varepsilon$. Pythagoras' theorem now implies the list bound. 

\paragraph{\underline{Case 2: Large $p$}} For $p > 2$, it is unclear if we can map distinct linear polynomials to orthogonal real or complex vectors in the above way. Nevertheless, we do expect the list-size bound to hold, as the distance $\delta(\mathcal{P}_1)$ is the same as over $\mathbb{Z}_2$, i.e. $1/2$. Moreover, a \emph{random} pair of linear polynomials have a distance much larger than $1/2$ for large $p.$ This latter fact is a consequence of \emph{anti-concentration} of linear polynomials, which informally means the following. Let $P(\mathbf{x})$ be a non-zero polynomial with many (say, at least $100$) non-zero coefficients. Then, on a random input $\mathbf{a}$, the random variable $P(\mathbf{a})$ does not take any given value $b\in \mathbb{Z}_p$ with good probability (say, greater than $1/5$).

In the case of large $p$, we crucially use anti-concentration to argue the upper bound on the list size. At a high level, in this case, we can show that if a function $f:\{0,1\}^n\rightarrow \mathbb{Z}_p$ is $(\frac{1}{2}-\varepsilon)$-close to many (say $L$) linear polynomials, then there is a large subset (size $L' = L^{\Omega(1)}$) that `look' somewhat like the example of the $\mathrm{Maj}_G^t$ example mentioned above. More precisely, the coefficient vectors of the linear polynomials in this subset are at most a \emph{constant} (independent of $p$, order of the underlying group) Hamming distance from one another. By shifting the polynomials by one of the linear functions in the subset, we can assume without loss of generality that all the linear functions in fact have a constant number of \emph{non-zero coefficients}, as in the case of the list of polynomials corresponding to $\mathrm{Maj}_G^t.$ It now suffices to bound the size $L'$ of this subset by $\poly(1/\varepsilon).$

The bound now reduces to a case analysis based on the number of variables that appear in the coefficients of the $L'$ polynomials in the subset. The case analysis is based on carefully interpolating between the following two extreme cases.
\begin{itemize}
    \item The first is that all the $L'$ polynomials and also the function $f$ itself depend on the \emph{same} set of variables. Assume this set is $S = \{x_1,\ldots, x_\ell\}$ where $\ell$ must be a constant (based on the previous paragraph). In this case, we note that each polynomial $P$ in the list is specified completely by the subset of $A\subseteq \{0,1\}^\ell$ where $P$ agrees with $f$ (by the fact that $\delta(\mathcal{P}_1) = 1/2$). Since the number of such $A$ is at most $2^{2^\ell}$, this bounds $L'$ by the same quantity.

    \item The other extreme case is that the polynomials in the list all depend on \emph{disjoint} sets of variables. In this case, on a random input $\mathbf{a}\in \{0,1\}^n$, the polynomials in the list all output \emph{independent} random values in $\mathbb{Z}_p$. By a Chernoff bound, it is easy to argue that the chance that significantly more than $\frac{L'}{2} + \sqrt{L'}$ of these polynomials agree with $f(\mathbf{a})$ is very small. By an averaging argument, this implies that $L'$ is $\tilde{\bigO}(1/\varepsilon^2)$, and we are done.
\end{itemize}

\paragraph{\underline{Putting it together}} We sketch here how to handle general finite Abelian groups. In the proof, we show that this also implies the same bound for infinite groups such as $\mathbb{R}$.

Recall that any finite group $G$ is a direct product of cyclic groups, each of which has a size that is a prime power. We collect the terms in this product to write $G = G_1 \times G_2 \times G_3$ where $G_1$ is the product of the factors of sizes that are powers of $2,$ $G_2$ is the same with powers of $3$, and $G_3$ is the product of the rest.\footnote{There is nothing very special about this decomposition. Essentially, we have one argument that works for `small' $p$ and another that works for `large' $p$. To combine them, we need some formalization of these notions. Here, `large' could be defined to be larger than any constant $C \geq 5.$} A simple observation shows that it suffices to bound the size of the list in each of these cases by $\poly(1/\varepsilon).$

For $G_3,$ the argument of large $p$ sketched above works without any change (with some care to ensure that we can handle all the primes greater than or equal to $5$). The only part of the argument that is sensitive to the choice of the group is the initial use of anti-concentration, and this works over $G_3$ since the order of any non-zero element is large (i.e. at least $5$).

The argument for $G_1$ needs more work. While the use of Parseval's identity works over $\mathbb{Z}_2$, it is not clear how to extend it to groups of size powers of $2$, such as $\mathbb{Z}_4.$ For inspiration, we turn to a different extension of the $\mathbb{Z}_2$-case proved by Dinur, Grigorescu, Kopparty, and Sudan \cite{DinurGKS-ECCC}. They deal with the list-decodability of the space of \emph{group homomorphisms} from a group $H$ to a group $G$. Setting the group $H$ to be $\{0,1\}^n$ (with addition defined by the XOR operation) and $G$ to be $\mathbb{Z}_2,$ we recover again the setting of (homogeneous) linear polynomials over $\mathbb{Z}_2.$ The work of \cite{DinurGKS-ECCC} show how to extend this result to larger groups $G$ that have order a power of $2$. Note that it is not immediately clear that this should carry over to the setting of linear polynomials: for groups of order greater than $2$, the space of polynomials is different from the space of homomorphisms. However, we show that the technique of \cite{DinurGKS-ECCC} does work in our setting as well. 

Finally, the proof for $G_2$ is a combination of the ideas of the two proofs above. We omit the details here and refer the reader to the actual proof.

\subsubsection{Local List Correction - \Cref{thm:listdecoding}}

Like the proof of the second and third steps of \Cref{thm:uniquedeg1} described above, at the heart of our local list correction algorithm lies an error-reduction algorithm. More precisely, we design an algorithm $\mathcal{A}_1^{f}$ which, using oracle access to $f$, produces a list of algorithms $\psi_1,\ldots, \psi_L$ such that, with high probability, for each linear polynomial $P$ that is $(\frac{1}{2}-\varepsilon)$-close to $f$, there is at least one algorithm $\psi_{j}$ in the list that agrees with $P$ on \emph{most} inputs, i.e. $\psi_{j}$ ``approximates'' $P$. Here, $L \leq L(\varepsilon)$ denotes the list-size bound proved in \Cref{thm:comblistdeg1}. Further, each $\psi_i$ makes at most $O_L(1) = O_\varepsilon(1)$ queries to $f$. 

We can now apply the algorithm from the unique correction setting with oracle access to the various $\psi_{j}$ to produce the desired list $\phi_1,\ldots, \phi_L$ as required.


The proof is motivated by a local list-decoding algorithm for low-degree polynomials over $\F_q^n$ due to Sudan, Trevisan, and Vadhan~\cite{STV-list-decoding}. In that setting, we are given oracle access to a function $f:\F_q^n \rightarrow \F_q$ and we are required to produce a list as above that approximates the set $S = \{P_1,\ldots, P_L\}$ of degree-$d$ polynomials (say $d= o(q)$) that have significant (say $\Omega(1)$) agreement with $f$. It follows from the Johnson bound that $L= \bigO(1)$ in this case (see, e.g. \cite[Chapter 7]{GRS-codingbook}). The corresponding algorithm $\mathcal{A}_{\mathrm{STV}}$ chooses a \emph{random} point $\mathbf{a}$ and gets as advice the values of $\alpha_i = P_i(\mathbf{a})$ for each $i\in [L]$. (We can easily get rid of this advice assumption, but let us assume for now that we have it.) 

Now, we want to produce an algorithm that approximates $P_i$. Given a random point $\mathbf{b}\in \F_q^n$, the algorithm constructs the random line $\ell$ passing through $\mathbf{a}$ and $\mathbf{b}$ and produces the list of univariate polynomials that have significant agreement with the restriction $f|_\ell$ of $f$ to the line. This can be done via brute force with $O(d)$ queries (if one only cares about query complexity) or in $\poly(d,\log q)$ time using Sudan's list decoding algorithm for univariate polynomials \cite{Sudan}. By pairwise independence and standard second-moment estimates, it is easy to argue that for each $j\in [L]$, $P_j|_\ell$ is in this list of univariate polynomials. However, to single out $P_i|_\ell$ in this list, we use advice $\alpha_i = P_i(\mathbf{a})$. Since $\mathbf{a}$ is a \emph{random} point on $\ell$ (even given $\ell$), it follows that, with high probability, $\alpha_i$ uniquely disambiguates $P_i|_\ell$ from the ($\bigO(1)$ many) other polynomials in the list. In particular this also determines $P_i(\mathbf{b})$, since $\mathbf{b}$ lies on $\ell$.

Let us now turn to our local list correction algorithm. We use similar ideas to \cite{STV-list-decoding} but, as in the proof of \Cref{thm:uniquedeg1}, with \emph{subcubes} instead of lines. More precisely, the algorithm $\mathcal{A}_1^{f}$ produces a random $\mathbf{a}$ and a random hash function $h:[n]\rightarrow [k]$ ($k = \bigO_{\varepsilon}(1)$ suitably large), and uses them to produce a random subcube $\mathsf{C}$ as in the proof sketch of \Cref{thm:uniquedeg1}. The advice in this case is the restriction $P|_{\mathsf{C}}$ for each polynomial $P$ in the set $S=\{P_1,\ldots, P_L\}$ of degree-$1$ polynomials that are $(\frac{1}{2}-\varepsilon)$-close to $f$. 

Now, given a random point $\mathbf{b}\in \{0,1\}^n$, we correct $P_i(\mathbf{b})$ as follows. We first construct the smallest subcube $\mathsf{C'}$ that contains both $\mathsf{C}$ and the point $\mathbf{b}$. With high probability, this is a subcube of dimension $2k$. Using a simple brute-force algorithm that uses $2^{2k}$ queries to $f$, we can find the set $S'$ of all $2k$-variate linear polynomials that are $(\frac{1}{2}-\frac{\varepsilon}{2})$-close to $f|_{\mathsf{C'}}$. Note that $|S'|\leq L(\varepsilon).$ By a hypercontractivity-based argument (as we did in the error reduction algorithms), we can show that, with high probability, each $P_j|_{\mathsf{C'}}$ is in this list $S'$. To single out $P_i|_{\mathsf{C'}}$, we use advice $P_i|_\mathsf{C}$. The proof that this works needs an understanding of the distribution of $\mathsf{C}$ given $\mathsf{C'}$: it turns out that the $k$-dimension subcube $\mathsf{C}$ is obtained by randomly pairing up variables in $\mathsf{C'}$ and identifying them with a single variable. We show that, if $k$ is large enough in comparison to the list bound $L$, then with high probability, this process does not identify any two distinct elements in the list.\footnote{There is a small subtlety in the argument that is being hidden here for simplicity.} Thus, we are able to single out $P_i|_{\mathsf{C'}}$ and this allows us to compute $P_i(\mathbf{b})$ correctly, with high probability over the choice of $\mathbf{b}$ and the randomness of the algorithm (which includes $\mathbf{a}$ and the hash function $h$).

Finally, to get rid of the advice, we note that a similar hypercontractivity-based argument also shows that each $P_i|_\mathsf{C}$ is $(\frac{1}{2}-\frac{\varepsilon}{2})$-close to $f|_\mathsf{C}$. So by applying a similar brute-force algorithm on $\mathsf{C}$, we find, with high probability, a set $\tilde{S}$ of polynomials containing $P_i|_\mathsf{C}$ for each $i\in [L]$. This is good enough for the argument above. The algorithm $\mathcal{A}_1^{f}$ first computes $\tilde{S}$ and then outputs the descriptions of the algorithm in the previous paragraph for each $P\in \tilde{S}$ (treating it as a restriction of one of the $P_i$).

\section{Preliminaries}
\label{sec:prelims}
\subsection{Notation}
Let $(G, +)$ denote an Abelian group $G$ with addition as the binary operation. For any $g \in G$, let $-g$ denote the inverse of $g \in G$. For any $g \in G$ and integer $a \geq 0$, $a \cdot g$ (or simply $ag$) is the shorthand notation of $\underbrace{g + \ldots + g}_{a \; \mathrm{times}}$ and $-ag$ denotes $a\cdot (-g).$

For a natural number $n$, we consider functions $f :\Boo^{n} \to G$. We denote the set of functions that can be expressed as a multilinear polynomial of degree $d$, with the coefficients being in $G$ by $\mathcal{P}_{d}(n,G)$. We will simply write $P_{d}$ when $n$ and $G$ are clear from the context. For $\mathbf{x}, \mathbf{y} \in \Boo^{n}$, let $\delta(\mathbf{x},\mathbf{y})$ denote the relative Hamming distance between $\mathbf{x}$ and $\mathbf{y}$, i.e. $\delta(\mathbf{x}, \mathbf{y}) = |\setcond{i \in [n]}{x_{i} \neq y_{i}}|/n$.

For any $\mathbf{x} \in \Boo^{n}$, $|\mathbf{x}|$ denotes the Hamming weight of $\mathbf{x}$. $\tilde{O}(\cdot)$ notation hides factors that are poly-logarithmic in its argument.
For a polynomial $P(\mathbf{x})$, let $\mathrm{vars}(P)$ denote the variables on which $P$ depends, i.e. the variables that appear in a monomial with non-zero coefficient in $P$.

For any natural number $n$, $U_{n}$ denotes the uniform distribution on $\Boo^{n}$. 

\subsection{Basic Definitions and Tools}

\paragraph{Probabilistic notions.} For any distribution $X$ on $\Boo^{n}$, let $\mathrm{supp}(X)$ denote the subset of $\Boo^{n}$ on which $X$ takes non-zero probability. For two distributions $X$ and $Y$ on $\Boo^{n}$, the statistical distance between $X$ and $Y$, denoted by $\mathrm{SD}(X,Y)$ is defined as 
\begin{align*}
    \mathrm{SD}(X,Y) = \max_{T\subseteq \{0,1\}^n} |\Pr[X \in T] - \Pr[Y \in T]|
\end{align*}
We say $X$ and $Y$ are $\varepsilon$-close if the statistical distance between $X$ and $Y$ is at most $\varepsilon$.

\paragraph{Coding theory notions.} Fix an Abelian group $G$. We use $\mathcal{P}_d$ to denote the space of multilinear polynomials from $\{0,1\}^n$ to $G$ of degree at most $d.$ More precisely, any element $P\in \mathcal {P}_d$ can be described as
\[
P(x_1,\ldots, x_n) = \sum_{I\subseteq [n] \, : \, |I|\leq d} \alpha_I \prod_{i\in I}x_i
\]
where $\alpha_I\in G$ for each $I.$ On an input $\mathbf{a}\in \{0,1\}^n$, each monomial evaluates to a group element in $G$ and the polynomial evaluates to the sum of these group elements. 

The following is a summary of standard facts about multilinear polynomials, which also hold true in the setting when the range is an arbitrary Abelian group $G.$ The proofs are standard and omitted.
\begin{theorem}
    \label{thm:basic}
    \begin{enumerate}
        \item (M\"{o}bius Inversion) Any function $f:\{0,1\}^n\rightarrow G$ has a unique representation as a multilinear polynomial 
 in $\mathcal{P}_n$. Moreover, we have $f= \sum_{I\subseteq [n]} c_I \prod_{i\in I}x_i$ where for any $I\subseteq [n],$ we have 
        \[
        c_I  = \sum_{J\subseteq I} (-1)^{|I\setminus J|} f(1_J) 
        \]
        where $1_J$ is the indicator vector of the set $J.$
        \item (DeMillo-Lipton-Schwartz-Zippel) Any non-zero polynomial $P\in \mathcal{P}_d$ is non-zero with probability at least $2^{-d}$ at a uniformly random input from $\{0,1\}^n.$ Equivalently, $\delta(\mathcal{P}_d)\geq 2^{-d}.$
    \end{enumerate}
\end{theorem}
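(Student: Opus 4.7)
\textbf{Plan for Theorem~\ref{thm:basic}.} The two parts are independent, so I would prove them separately.

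For part (1), the Möbius inversion, I would argue existence and uniqueness together by checking that the given explicit formula inverts evaluation. Writing $f = \sum_{K \subseteq [n]} c_K \prod_{i \in K} x_i$ and evaluating at the indicator $1_J$ gives $f(1_J) = \sum_{K \subseteq J} c_K$, since a monomial $\prod_{i \in K} x_i$ vanishes on $1_J$ unless $K \subseteq J$. Substituting this into the proposed formula yields
\[
\sum_{J \subseteq I} (-1)^{|I \setminus J|} f(1_J) = \sum_{K \subseteq [n]} c_K \sum_{\substack{J \subseteq I \\ J \supseteq K}} (-1)^{|I \setminus J|}.
\]
The inner sum is zero unless $K \subseteq I$, in which case reparametrizing $S = J \setminus K$ rewrites it as $(-1)^{|I \setminus K|}\sum_{S \subseteq I \setminus K} (-1)^{|S|}$, which equals $0$ unless $I = K$ and equals $1$ when $I = K$. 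Hence the formula recovers $c_I$, proving both that the coefficients are forced (uniqueness) and that defining $c_I$ this way reproduces $f$ pointwise (existence) by matching values on every $1_J$, then extending to all of $\Boo^n$ by the same pointwise identification.

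For part (2), the generalized Schwartz--Zippel bound, I would induct on the number of variables $n$, keeping the degree bound $d$ fixed. The base case $n = 0$ is immediate: a nonzero constant is nonzero everywhere, so the probability is $1 \geq 2^{-d}$. For the induction step, I split off the last variable and write $P(x_1, \ldots, x_n) = P_0(x_1, \ldots, x_{n-1}) + x_n \cdot P_1(x_1, \ldots, x_{n-1})$, where $\deg(P_0) \leq d$ and $\deg(P_1) \leq d - 1$. If $P_1 \equiv 0$ (so $P_0 \neq 0$), then $P = P_0$ does not depend on $x_n$, and the inductive hypothesis on $n-1$ variables with the same degree bound gives the required $2^{-d}$ fraction directly.

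The main case is when $P_1 \not\equiv 0$. Here the inductive hypothesis applied to $P_1$ (with degree budget $d-1$) yields a set $T \subseteq \Boo^{n-1}$ of density at least $2^{-(d-1)}$ on which $P_1$ is nonzero. For any $(x_1, \ldots, x_{n-1}) \in T$, the two values $P(x_1, \ldots, x_{n-1}, 0) = P_0(x_1, \ldots, x_{n-1})$ and $P(x_1, \ldots, x_{n-1}, 1) = P_0(x_1, \ldots, x_{n-1}) + P_1(x_1, \ldots, x_{n-1})$ differ by the nonzero group element $P_1(x_1, \ldots, x_{n-1})$, so at least one of them is nonzero. Therefore, on a uniformly random $(x_1, \ldots, x_n)$, the event $P \neq 0$ occurs with probability at least $\tfrac{1}{2} \cdot \Pr[(x_1,\ldots,x_{n-1}) \in T] \geq \tfrac{1}{2} \cdot 2^{-(d-1)} = 2^{-d}$, closing the induction. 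The only subtlety beyond the standard field-valued proof is that ``nonzero difference'' here means a nonzero group element, which is exactly what we need since the conclusion is stated as $P(\mathbf{x}) \neq 0_G$; no invertibility or integral domain property of $G$ is used, so the argument goes through verbatim for arbitrary Abelian $G$.
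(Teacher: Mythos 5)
The paper states immediately after this theorem that ``The proofs are standard and omitted,'' so there is no paper argument to compare against; your proof supplies exactly the standard arguments, correctly adapted to the setting where the range is an arbitrary Abelian group. Part~2 is correct: the induction on $n$ with the split $P = P_0 + x_n P_1$, treating the cases $P_1 \equiv 0$ and $P_1 \not\equiv 0$ separately, is the usual proof, and you rightly observe that the only property used is that adding a nonzero group element changes the value, so the argument works verbatim over any Abelian $G$.

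For Part~1, the inclusion--exclusion computation you display verifies that \emph{if} $f = \sum_K c_K \prod_{i\in K} x_i$, \emph{then} the alternating sum recovers $c_I$ --- that is the uniqueness direction (a left-inverse check). You then assert existence ``by matching values on every $1_J$,'' but that is the dual computation
\[
\sum_{I \subseteq J} c_I \;=\; \sum_{K \subseteq J} f(1_K) \sum_{K \subseteq I \subseteq J} (-1)^{|I\setminus K|} \;=\; f(1_J),
\]
which you don't actually write out. It is entirely symmetric and easy, so this is a cosmetic omission rather than a gap; alternatively one can note that the $2^n\times 2^n$ evaluation matrix $M_{J,I} = \mathbbm{1}[I\subseteq J]$ is unipotent upper-triangular over $\mathbb{Z}$, hence invertible, so the one-sided inverse you computed is automatically two-sided. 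With that small clarification, the proof is complete and correct.
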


We now turn to the kinds of algorithms we will consider. Below, let $\mathcal{F}$ be any space of functions mapping $\{0,1\}^n$ to $G.$

\begin{definition}[Local Correction Algorithm]
We say that $\mc{F}$ has a $(\delta,q)$-local correction algorithm if there is a probabilistic algorithm that, when given oracle access to a function $f$ that is $\delta$-close to some $P\in \mc{F}$, and given as input some $\mathbf{a} \in \{0,1\}^n$, returns $P(\mathbf{a})$ with probability at least $3/4.$ Moreover, the algorithm makes at most $q$ queries to its oracle.
\end{definition}

\begin{definition}[Local List-Correction Algorithm]\label{defn:local-list-algo}
We say that $\mc{F}$ has a $(\delta,q_1,q_2, L)$-local list correction algorithm if there is a randomized algorithm $\mc{A}$ that, when given oracle access to a function $f$, produces a list of randomized algorithms $\phi_1,\ldots, \phi_L$, where each $\phi_{i}$ has oracle access to $f$  and have the following property: with probability at least $3/4$, for each codeword $P$ that is $\delta$-close to $f$, there exists some $i\in [L]$ such that the algorithm $\phi_i$ computes $P$ with error at most $1/4$, i.e. on any input $\mathbf{a}$, the algorithm $\phi_i$ outputs $P(\mathbf{a})$ with probability at least $3/4$. \newline
Moreover, the algorithm $\mc{A}$ makes at most $q_1$ queries to $f$, while the algorithms $\phi_1,\ldots, \phi_L$ each make at most $q_2$ queries to $f$.
\end{definition}

\begin{remark}
    \label{rem:algos}
    Our algorithms can all be implemented as standard Boolean circuits with the added ability to manipulate elements of the underlying group $G$. Specifically, we assume that we can store group elements, perform group operations (addition, inverse) and compare two group elements to check if they are equal.
\end{remark}

\begin{definition}[Combinatorial List Decodability]
We say that $\mc{F}$ is $(\delta,L)$-list decodable if for any function $f$, the number of elements of $\mc{F}$ that are $\delta$-close to $f$ is at most $L.$
\end{definition}

The questions of decoding polynomial-based codes over groups become much more amenable to known techniques if we drop the locality constraint. In \Cref{app:non-local}, we show how to modify the standard Majority-logic decoding algorithm to obtain non-local unique and list-decoding algorithms for $\mathcal{P}_d.$

\paragraph{Hypercontractivity.} Next we are going to state a consequence of the standard Hypercontracitivity theorem (Refer to \cite[Chapter 9]{odonnellbook}).
\begin{definition}[Noise distribution]
\label{def:noisedistribution}
Let $\rho \in [-1,1]$. For a fixed $\mathbf{x} \in \{0,1\}^{n}$, $\mathbf{y} \sim \mathcal{N}_{\rho}(\mathbf{x})$ denotes a random variable defined as follows: For each $i \in [n]$ independently,
\begin{align*}
    y_{i} := \begin{cases}
        x_{i}, & \text{with prob. } \, (1+\rho)/2 \\
        \neg x_{i}, & \text{with prob. } \, (1-\rho)/2
    \end{cases}
\end{align*}
In other words, to sample from the distribution $\mathcal{N}_{\rho}(\mathbf{x}))$, we flip each bit of $\mathbf{x}$ independently with probability $(1-\rho)/2$, and keeping it unchanged with probability $(1+\rho)/2$.
\end{definition}

\begin{theorem}[{\protect \cite[Section 9.5]{odonnellbook}}]\label{thm:hypercontractivity}
Let $E \subseteq \Boo^{n}$ be a subset of density $\delta$, i.e. $|E|/2^{n} = \delta$. Then for any $\rho \in [-1,1]$,
\begin{align*}
    \Pr_{\substack{\mathbf{x} \sim \Boo^{n} \\ \mathbf{y} \sim \mathcal{N}_{\rho}(\mathbf{x})}}[\mathbf{x} \in E \; \text{and} \; \mathbf{y} \in E] \leq \delta^{2/(1+|\rho|)}
\end{align*}
\end{theorem}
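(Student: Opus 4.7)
The plan is to derive this as a corollary of the standard two-function hypercontractive inequality of Bonami--Beckner applied to the indicator function $\mathbf{1}_E$.

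First, I would rewrite the probability as an inner product on the Boolean cube. By definition of $\mathcal{N}_\rho$,
\begin{align*}
\Pr_{\mathbf{x} \sim \Boo^n,\, \mathbf{y}\sim \mathcal{N}_\rho(\mathbf{x})}[\mathbf{x}\in E\text{ and }\mathbf{y}\in E]
&= \E_{\mathbf{x}}\brac{\mathbf{1}_E(\mathbf{x})\cdot \E_{\mathbf{y}\sim\mathcal{N}_\rho(\mathbf{x})}\brac{\mathbf{1}_E(\mathbf{y})}} \\
&= \innerprod{\mathbf{1}_E,\, T_\rho \mathbf{1}_E},
\end{align*}
where $T_\rho$ denotes the usual noise operator on the cube and the inner product is taken with respect to the uniform measure. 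Since reflecting all bits turns $\mathcal{N}_\rho(\mathbf{x})$ into $\mathcal{N}_{-\rho}(\mathbf{x})$ and only changes $\mathbf{1}_E$ into the indicator of a set of the same density, there is no loss in assuming $\rho \geq 0$, so I would work with $\rho = |\rho| \in [0,1]$.

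Next I would invoke the two-function Bonami--Beckner inequality: for all $f,g:\Boo^n\to\R$ and all $p,q\geq 1$ with $(p-1)(q-1)\geq \rho^2$,
\[
\innerprod{f,\, T_\rho g} \;\leq\; \|f\|_p\, \|g\|_q.
\]
The natural symmetric choice is $p=q=1+\rho$, for which $(p-1)(q-1)=\rho^2$ with equality. Plugging in $f=g=\mathbf{1}_E$, and using $\|\mathbf{1}_E\|_p^p = \E[\mathbf{1}_E] = \delta$, hence $\|\mathbf{1}_E\|_p = \delta^{1/(1+\rho)}$, we obtain
\[
\innerprod{\mathbf{1}_E, T_\rho \mathbf{1}_E} \;\leq\; \delta^{1/(1+\rho)}\cdot \delta^{1/(1+\rho)} \;=\; \delta^{2/(1+\rho)},
\]
which is exactly the claim.

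The only substantive step is the two-function hypercontractive inequality, which in turn is a standard consequence of the one-function Bonami--Beckner inequality $\|T_{\sqrt{(p-1)/(q-1)}} g\|_q \leq \|g\|_p$ (proved by tensorization from the two-point inequality) combined with H\"older's inequality: writing $\sigma = \sqrt{(p-1)/(q-1)} \geq \rho$ and using self-adjointness of $T_\sigma$, we get $\innerprod{f, T_\rho g} \leq \innerprod{f, T_\sigma g} \leq \|f\|_{q'}\,\|T_\sigma g\|_q \leq \|f\|_{q'}\,\|g\|_p$ after a symmetric choice equating $q'$ with $p$. Since this is the principal input and is well-documented (e.g.\ \cite[Chapter 9]{odonnellbook}), I would simply cite it rather than reprove tensorization from the two-point base case; the rest of the argument is then a clean one-line Hölder-plus-norm calculation as above.
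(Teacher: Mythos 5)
The paper states this result without proof, citing it directly to O'Donnell's book; your derivation via $\Pr[\mathbf{x}\in E,\ \mathbf{y}\in E] = \innerprod{\mathbf{1}_E,\,T_\rho \mathbf{1}_E}$ followed by the two-function hypercontractive inequality with $p=q=1+|\rho|$ is precisely the standard argument for the small-set expansion theorem in that reference, and your main computation is correct. Two small points. First, when you reduce to $\rho\geq 0$ by negating $\mathbf{y}$, the inner product becomes $\innerprod{\mathbf{1}_E,\,T_{|\rho|}\mathbf{1}_{\bar E}}$ with $\bar E$ the coordinatewise complement of $E$; you should therefore apply the two-function inequality with $f=\mathbf{1}_E$ and $g=\mathbf{1}_{\bar E}$ (both of density $\delta$) rather than with $f=g=\mathbf{1}_E$ --- the bound $\delta^{2/(1+|\rho|)}$ is unchanged, but the wording as written only covers $\rho\geq 0$. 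Second, the closing sketch of how the two-function inequality follows from the one-function inequality is not quite right: with $\sigma=\sqrt{(p-1)/(q-1)}$ the step $\innerprod{f,T_\rho g}\le\innerprod{f,T_\sigma g}$ does not follow from self-adjointness and is false in general (e.g.\ $n=1$, $f=\mathbf{1}_{\{1\}}$, $g=\mathbf{1}_{\{-1\}}$ gives a quantity \emph{decreasing} in the correlation); the clean route is H\"older with the conjugate exponent $p'$ of $p$, namely $\innerprod{f,T_\rho g}\le\|f\|_p\,\|T_\rho g\|_{p'}\le\|f\|_p\,\|g\|_q$, where the second step is the one-function hypercontractive inequality and requires exactly $(p-1)(q-1)\ge\rho^2$ (since $p'-1=1/(p-1)$). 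You explicitly say you would cite this lemma rather than reprove it, so neither slip affects the validity of your main argument.
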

In particular, if $\rho$ is close to $0$, then \Cref{thm:hypercontractivity} tells us that the probability that $\mathbf{x}$ and $\mathbf{y}$ are in $E$ is close to the probability in the case that  $\mathbf{x}$ and $\mathbf{y}$ are sampled independently and uniformly from $U_{n}$.

\paragraph{Subcubes of $\{0,1\}^n$.} It will be very useful in our algorithms to be able to restrict the given function to a small-dimensional subcube and analyze this restriction. We construct such subcubes by first negating a subset of the variables and then identifying them into a smaller set of variables. More precisely, we have the following definition.

\begin{definition}[Embedding a smaller cube into $\{0,1\}^n$]\label{defn:random-embedding}
Fix any $k \in \mathbb{N}$ and $k \leq n$.  Fix a point $\mathbf{a} \in \Boo^{n}$ and a function $h: [n] \to [k]$. For every $\mathbf{y} \in \Boo^{k}$, $x(\mathbf{y})$ is defined with respect to $\mathbf{a}$ and $h$ as follows:
\begin{align*}
    x(\mathbf{y})_{i} = y_{h(i)} \oplus a_{i} = \begin{cases}
        a_{i}, & \text{if } \, y_{h(i)} = 0 \\
        1 \oplus a_{i}, & \text{if } \, y_{h(i)} = 1
     \end{cases}
\end{align*}
$C_{\mathbf{a},h}$ is the subset in $\Boo^{n}$ consisting of $x(\mathbf{y})$ for every $\mathbf{y} \in \Boo^{k}$, i.e. $C_{\mathbf{a},h} := \setcond{x(\mathbf{y})}{\mathbf{y} \in \Boo^{k}}$.

Given any polynomial $P(x_1,\ldots, x_n)$ and any subcube $C_{\mathbf{a},h}$  as above, $P$ restricts naturally to a degree-$d$ polynomial $Q(y_1,\ldots, y_k)$ on $C_{\mathbf{a},h}$ obtained by replacing each $x_i$ by $y_{h(i)}\oplus a_i$. We use $P|_{C_{\mathbf{a},h}}$ to denote the polynomial $Q$.
\end{definition}

\paragraph{Random subcubes.} Now assume that we choose a subcube $C_{\mathbf{a},h}$ by sampling  $\mathbf{a} \sim \{0,1\}^n$ and sampling a random hash function $h:[n]\rightarrow [k]$. For any $\mathbf{y} \in \Boo^{k}$, $x(\mathbf{y})$ is the image of $\mathbf{y}$ in $\Boo^{n}$ under $\mathbf{a}$ and $h$ and $C_{\mathbf{a}, h}$ is the subcube consisting of all $2^{k}$ such images. From the \Cref{defn:random-embedding}, we can derive following two observations:
\begin{enumerate}
    \item For any $\mathbf{y} \in \Boo^{k}$, distribution of $x(\mathbf{y})$ is the uniform distribution over $\Boo^{n}$. This is because $\mathbf{a}$ is uniformly distributed over $\Boo^{n}$.
    \item Fix $\mathbf{y}, \mathbf{y}' \in \Boo^{k}$. Recall that $\delta(\mathbf{y}, \mathbf{y}')$ denotes the fractional Hamming distance between $\mathbf{y}$ and $\mathbf{y}'$. A simple calculation shows the following: For all $i \in [n]$,
    \begin{align*}
        x(\mathbf{y})_{i} \oplus x(\mathbf{y}')_i = \begin{cases}
            0, & \text{with probability } \, 1 - \delta(\mathbf{y}, \mathbf{y}') \\
            1, & \text{with probability } \, \delta(\mathbf{y}, \mathbf{y}').
        \end{cases}
    \end{align*}
    Since this is true for any choice of $x(\mathbf{y}),$ this means that the distribution of the random variable $x(\mathbf{y}) \oplus x(\mathbf{y'})$ is independent of $x(\mathbf{y})$. 
    In particular, using also our observation in the previous item, we see that the pair $(x(\mathbf{y}), x(\mathbf{y}'))$ has the same distribution as $(\mathbf{z}, \mathbf{z}')$ where $\mathbf{z}$ is chosen uniformly at random from $\{0,1\}^n$ and $\mathbf{z}'$ is sampled from the distribution $\mathcal{N}_\rho(\mathbf{z}),$ where $\rho = 1-2\delta(\mathbf{y},\mathbf{y}').$
\end{enumerate}

Building on the above observation, we have the following sampling lemma for subcubes that will be useful at multiple points in the paper.

\begin{lemma}[Sampling lemma for random subcubes]
\label{lemma:sampling-subcube}
Sample $\mathbf{a}$ and $h$ uniformly at random, and let $\mathsf{C} = C_{\mathbf{a}, h}$ be the subcube of dimension $k$ as described in \Cref{defn:random-embedding}. Fix any $T\subseteq \{0,1\}^n$ and let $\mu:= |T|/2^n.$ Then, for any $\varepsilon, \eta > 0$
\begin{align*}
    \Pr_{\mathbf{a},h}\left[\left|\frac{|T\cap \mathsf{C}|}{2^k} - \mu\right| \geq \varepsilon \right] < \eta
\end{align*}
as long as $k\geq \frac{A}{\varepsilon^4\eta^2}\cdot \log\left(\frac{1}{\varepsilon\eta}\right)$ for a large enough absolute constant $A > 0.$
\end{lemma}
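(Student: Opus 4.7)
The plan is a second moment argument on the random variable $X := |T\cap \mathsf{C}|/2^k = 2^{-k}\sum_{\mathbf{y}\in\{0,1\}^k} \mathbbm{1}[x(\mathbf{y})\in T]$, with pairwise correlations controlled by the hypercontractivity estimate \Cref{thm:hypercontractivity} applied to the noise-correlated pairs $(x(\mathbf{y}), x(\mathbf{y}'))$ identified in the second observation following \Cref{defn:random-embedding}. Since each $x(\mathbf{y})$ is marginally uniform on $\{0,1\}^n$, we get $\E[X]=\mu$ immediately, and Chebyshev will give the lemma once we show $\mathrm{Var}(X) \leq \varepsilon^2\eta$.

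First I would dispose of the low-density regime $\mu \leq \varepsilon\eta$ by Markov's inequality: $\Pr[X \geq \varepsilon] \leq \mu/\varepsilon \leq \eta$, and the lower tail is vacuous since $X\ge 0 > \mu - \varepsilon$. So assume $\mu \geq \varepsilon\eta$, which bounds $\ln(1/\mu) \leq \ln(1/(\varepsilon\eta))$. Expanding the variance and using the coupling from the paper's observation yields
\[
\mathrm{Var}(X) \;=\; \E_{\mathbf{y},\mathbf{y}'\sim \{0,1\}^k}\!\Bigl[\Pr_{x(\mathbf{y}),x(\mathbf{y}')}[x(\mathbf{y}),x(\mathbf{y}')\in T] - \mu^2\Bigr] \;\leq\; \E\bigl[\mu^{2/(1+Z)} - \mu^2\bigr],
\]
where $Z := |1 - 2\delta(\mathbf{y},\mathbf{y}')|$ is the absolute correlation of the pair. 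Here I used \Cref{thm:hypercontractivity} with $E = T$.

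Next I would split on the event $\mathcal{E} := \{Z \leq \tau_0\}$ with $\tau_0 := 1/(2\ln(1/\mu))$. On $\mathcal{E}$, writing $\mu^{2/(1+Z)} - \mu^2 = \mu^2(e^{2Z\ln(1/\mu)/(1+Z)} - 1)$ and applying the elementary inequality $e^{x}-1\leq 2x$ for $x\leq 1$ (valid since $2Z\ln(1/\mu) \leq 1$ on $\mathcal{E}$) gives
\[
\E\bigl[(\mu^{2/(1+Z)}-\mu^2)\mathbbm{1}_{\mathcal{E}}\bigr] \;\leq\; 4\mu^2\ln(1/\mu)\cdot \E[Z] \;=\; O(1/\sqrt{k}),
\]
since $\mu^2\ln(1/\mu)$ is bounded by $1/(2e)$ for $\mu\in[0,1]$ and $\E[Z] = O(1/\sqrt{k})$ for two uniformly random vectors in $\{0,1\}^k$ (a standard Chernoff / binomial mean-deviation computation). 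On $\mathcal{E}^c$ I would use the crude bound $\mu^{2/(1+Z)}-\mu^2 \leq \mu$ (since $Z\leq 1$ forces $2/(1+Z)\geq 1$) together with the Chernoff tail $\Pr[\mathcal{E}^c] \leq 2\exp(-k\tau_0^2/2) = 2\exp\!\bigl(-k/(8\ln^2(1/\mu))\bigr)$, giving a contribution that is exponentially small in $k/\log^2(1/(\varepsilon\eta))$.

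Combining, $\mathrm{Var}(X) \leq C/\sqrt{k} + 2\mu\exp(-k/(8\ln^2(1/(\varepsilon\eta))))$, and Chebyshev yields $\Pr[|X-\mu|\geq\varepsilon]\leq \mathrm{Var}(X)/\varepsilon^2$. Choosing $k \geq A\varepsilon^{-4}\eta^{-2}\log(1/(\varepsilon\eta))$ with $A$ a sufficiently large absolute constant makes the first term dominate and bounds the whole expression by $\eta$, as required. The main obstacle I anticipate is the coupling between the two parameters $\mu$ and $Z$ in the hypercontractive bound: when $\mu$ is very small the exponent $2/(1+Z)$ is delicate, and it is exactly to tame this that one needs both the Markov preprocessing for $\mu < \varepsilon\eta$ and the $\ln(1/\mu)$-dependent threshold $\tau_0$ separating the Taylor regime from the tail regime. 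Everything else is bookkeeping.
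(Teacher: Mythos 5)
Your proof is correct and follows the same overall architecture as the paper's: second moment on $X = |T\cap\mathsf{C}|/2^k$, hypercontractivity to bound pairwise covariances, Chernoff to control the far-from-balanced pairs, and Chebyshev to close. The technical details differ in a few pleasant ways, though. The paper cuts at a \emph{dimension}-dependent threshold $\gamma = C_1\sqrt{\log k/k}$ on $|1-2\delta(\mathbf{y},\mathbf{y}')|$ and then inside the covariance estimate splits further into $\mu \le 1/k$ (where it uses $\mathrm{Cov} \le \mu^{1.5}$) versus $\mu > 1/k$ (where it Taylor-expands). You instead dispose of small $\mu$ upfront by Markov (at scale $\mu \le \varepsilon\eta$) so that $\ln(1/\mu) \le \ln(1/(\varepsilon\eta))$ is controlled from the start, and then cut at a \emph{density}-dependent threshold $\tau_0 = 1/(2\ln(1/\mu))$, which is exactly calibrated so that the Taylor estimate $e^x-1\le 2x$ is valid on the whole $\mathcal{E}$ event. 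The payoff is that the universal bound $\mu^2\ln(1/\mu)\le 1/(2e)$ together with $\E[Z] = O(1/\sqrt k)$ handles the near-balanced contribution in one shot, without the paper's case split inside the variance bound, and your resulting variance estimate $O(1/\sqrt k)$ is in fact slightly sharper than the paper's $O(\sqrt{\log k/k})$. The trade-offs are minor and both routes comfortably fit inside the stated lower bound on $k$; your tail term $2\mu\exp(-k/(8\ln^2(1/(\varepsilon\eta))))$ only requires $k \gtrsim \log^3(1/(\varepsilon\eta))$, which is absorbed by the $\varepsilon^{-4}\eta^{-2}\log(1/(\varepsilon\eta))$ bound.
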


\begin{proof}
    The proof is an application of the second moment method with a consequence of the Hypercontractivity theorem (\Cref{thm:hypercontractivity}) being used to bound the variance.

    More formally, for each $\mathbf{y}\in \{0,1\}^k$, let $Z_{\mathbf{y}}\in \{0,1\}$ be the indicator random variable that is $1$ exactly when $x(\mathbf{y})\in T.$ Let $Z$ denote the sum of all $Z_{\mathbf{y}}$ ($\mathbf{y}\in \{0,1\}^n)$. The statement of the lemma may be equivalently stated as 
    \begin{equation}
        \label{eq:sampling-subcube}
        \Pr_{\mathbf{a},h}\left[\left|Z - \mu\cdot 2^k\right| \geq \varepsilon\cdot 2^k \right] < \eta
    \end{equation}
    for $k$ as specified above.

    Since each $x(\mathbf{y})$ is uniformly distributed over $\{0,1\}^n$, it follows that each $Z_{\mathbf{y}}$ is a Bernoulli random variable that is $1$ with probability $\mu$. In particular, the mean of $Z$ is $\mu\cdot 2^k.$

    We now bound the variance of $Z.$ We have, for any $\gamma \in [0,1]$,
    \begin{align*}
        \mathrm{Var}(Z) &= \sum_{\mathbf{y}, \mathbf{y}'} \mathrm{Cov}(Z_{\mathbf{y}}, Z_{\mathbf{y}'})\\
        &= \sum_{\substack{\mathbf{y}, \mathbf{y}':\\ \delta(\mathbf{y}, \mathbf{y}') \in (\frac{1}{2}-\frac{\gamma}{2}, \frac{1}{2}+\frac{\gamma}{2})}} \mathrm{Cov}(Z_{\mathbf{y}}, Z_{\mathbf{y}'}) + \sum_{\substack{\mathbf{y}, \mathbf{y}':\\ \delta(\mathbf{y}, \mathbf{y}') \not\in (\frac{1}{2}-\frac{\gamma}{2}, \frac{1}{2}+\frac{\gamma}{2})}} \mathrm{Cov}(Z_{\mathbf{y}}, Z_{\mathbf{y}'})\\
        &\leq \sum_{\substack{\mathbf{y}, \mathbf{y}':\\ \delta(\mathbf{y}, \mathbf{y}') \in (\frac{1}{2}-\frac{\gamma}{2}, \frac{1}{2}+\frac{\gamma}{2})}} \mathrm{Cov}(Z_{\mathbf{y}}, Z_{\mathbf{y}'}) + \sum_{\substack{\mathbf{y}, \mathbf{y}':\\ \delta(\mathbf{y}, \mathbf{y}') \not\in (\frac{1}{2}-\frac{\gamma}{2}, \frac{1}{2}+\frac{\gamma}{2})}} 1\\
        &\leq \sum_{\substack{\mathbf{y}, \mathbf{y}':\\ \delta(\mathbf{y}, \mathbf{y}') \in (\frac{1}{2}-\frac{\gamma}{2}, \frac{1}{2}+\frac{\gamma}{2})}} \mathrm{Cov}(Z_{\mathbf{y}}, Z_{\mathbf{y}'}) + 2^{2k}\cdot \exp(-\Omega(\gamma^2 k))
    \end{align*}
    where the final inequality is an application of the Chernoff bound. On the other hand, for any $\mathbf{y}, \mathbf{y}'$ such that $\delta(\mathbf{y}, \mathbf{y}') \in (\frac{1}{2}-\frac{\gamma}{2}, \frac{1}{2}+\frac{\gamma}{2})$, we have seen above that the pair $(x(\mathbf{y}),x(\mathbf{y}'))$ have the same distribution as a pair of random variables $(\mathbf{z},\mathbf{z}')$ where $\mathbf{z}$ is chosen uniformly at random from $\{0,1\}^n$ and $\mathbf{z}'$ is sampled from the distribution $\mathcal{N}_\rho(\mathbf{z}),$ where $\rho = 1-2\delta(\mathbf{y},\mathbf{y}')\in [-\gamma,\gamma].$

    By \Cref{thm:hypercontractivity}, we see that for such a pair $(\mathbf{y},\mathbf{y}')$ and for any $\gamma \leq 1/4,$ we have
    \begin{align*}
        \mathrm{Cov}(Z_{\mathbf{y}}, Z_{\mathbf{y}'}) &= 
        \Pr_{\mathbf{a},h}[x(\mathbf{y})\in T \text{ and } x(\mathbf{y}')\in T] -\mu^2\\
        &\leq \mu^{2/1+\gamma} - \mu^2 \leq \mu^{2(1-\gamma)}-\mu^2\\
        &\leq \min\{\mu^{1.5}, \mu^2\cdot (\exp(O(\gamma\cdot \log (1/\mu))-1)\}.
    \end{align*}
    Plugging this into the computation above and setting $\gamma = C_1\cdot \sqrt{\frac{\log k}{k}}$ for a large enough absolute constant $C_1$, we get the following inequalities:
    \begin{align*}
    \mathrm{Var}(Z) &\leq  2^{2k}\cdot \mu^{1.5} + 2^{2k} \cdot \frac{1}{k} \leq 2^{2k}\cdot O\left(\frac{1}{k}\right)\ \ \ \left(\text{if $\mu\leq \frac{1}{k}$}\right)\\
    \mathrm{Var}(Z) &\leq  2^{2k}\cdot \mu^2 \cdot O\left(\sqrt{\frac{\log k}{k}}\cdot \log(1/\mu)\right) + 2^{2k} \cdot \frac{1}{k} \leq 2^{2k}\cdot O\left(\sqrt{\frac{\log k}{k}}\right)\ \ \ \left(\text{if $\mu > \frac{1}{k}$}\right)\\
    \end{align*}
    where we used the fact that $e^x \leq 1+2x$ for $|x|\leq 1/2$ for the first inequality and the fact that $\mu\leq 1$ for the second. 

    Finally, using Chebyshev's inequality, we get
    \begin{align*}
       \Pr_{\mathbf{a},h}\left[\left|Z - \mu\cdot 2^k\right| \geq \varepsilon\cdot 2^k \right] 
       &= \Pr_{\mathbf{a},h}\left[\left|Z - \E[Z]\right| \geq \varepsilon\cdot 2^k \right] \\
       &\leq \frac{\mathrm{Var}(Z)}{\varepsilon^2 2^{2k}}
        \leq \frac{1}{\varepsilon^2}\cdot O\left(\sqrt{\frac{\log k}{k}} \right)< \eta
    \end{align*}
    using the lower bound on $k$ in the statement of the lemma.
\end{proof}

\section{Local Correction in the Unique Decoding Regime}\label{sec:deg-1-decoding}
In this section, we will prove \Cref{thm:uniquedeg1}, i.e. we will give a local correction algorithm for degree $1$ polynomials with only $\Tilde{\bigO}(\log n)$ queries.

We will do this in three steps. 
\begin{itemize}
    \item Step 1: We start by proving a slightly weaker statement: we will give a local correction algorithm that can correct $\mathcal{P}_{1}$ with the error-parameter $\delta \leq 1/\bigO(\log n)$ (see \Cref{thm:uniquedeg1-smallerror}).
    \item Step 2: We will show how to handle some $\delta = \Omega(1)$ by reducing to the small error case (see \Cref{subsec:error-reduction-small-constant} and \Cref{lem:error-reduction-small-constant-main}).
    \item Step 3: Using a similar argument to the second step, we prove \Cref{thm:uniquedeg1}, which is a local correction algorithm with $\delta$ arbitrarily close to the unique decoding radius (see \Cref{subsec:error-reduction-close-radius} and \Cref{lem:error-reduction-main}).
\end{itemize}
The first step works only for linear polynomials, while the latter two reductions also work for higher-degree polynomials.

\subsection{Sub-Constant Error}\label{subsec:sub-constant-error}
In this subsection, we describe the first step towards proving \Cref{thm:uniquedeg1}. We give a local correction algorithm for $\mathcal{P}_{1}$ that can correct for $\delta < 1/\bigO(\log n)$. The main result of this section is the following.\\

\begin{theorem}[Local correction algorithms for $\mathcal{P}_{1}$ up to error $1/\bigO(\log n)$]\label{thm:uniquedeg1-smallerror}
Let $\mathcal{P}_{1}$ be the set of degree $1$ polynomials from $\Boo^{n}$ to $G$. Then $\mathcal{P}_{1}$ has a $(\delta, q)$-local correction algorithm for any $\delta < \bigO(1/\log n)$ and $q =\bigO(\log n)$.\\
\end{theorem}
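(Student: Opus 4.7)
The plan is to reduce local correction at an arbitrary point $\mathbf{a}$ to local correction at $1^n$ via the shift $g(\mathbf{y}) := f(\mathbf{y} \oplus \mathbf{a})$, $P'(\mathbf{y}) := P(\mathbf{y} \oplus \mathbf{a})$. Since the shift preserves $\mathcal{P}_1$-membership and relative distance ($\delta(g,P')=\delta(f,P)$), and $P'(1^n)=P(\mathbf{a})$, it suffices to design an algorithm that outputs $P(1^n)$ from oracle access to a function $\delta$-close to $P$.

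The heart of the algorithm is a combinatorial identity: I claim there exist $q=O(\log n)$ integer coefficients $c_1,\dots,c_q$ with $\sum_{j=1}^{q} c_j = 1$, together with vectors $\mathbf{x}^{(1)},\dots,\mathbf{x}^{(q)}\in\{0,1\}^n$ each of Hamming weight $\lfloor n/2\rfloor\pm O(1)$, satisfying $\sum_{j=1}^{q} c_j\, \mathbf{x}^{(j)} = 1^n$ as an identity in $\mathbb{Z}^n$. The construction follows the outline in the proof overview: take the hyperplane $\sum_{\ell=1}^{k} 2^{\ell-1}(x_\ell - x_{k+\ell}) = 1$ in $\{0,1\}^{2k}$; its roughly $2^k$ Boolean solutions form the rows of a matrix whose $2k$ columns are the desired near-balanced vectors, and whose coefficient vector $(2^{\ell-1}, -2^{\ell-1})_{\ell}$ sums to $0$. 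A small adjustment (adding two balanced columns paired with opposite coefficients) then makes the coefficients sum to $1$, giving $q = 2k+O(1) = O(\log n)$.

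Given such a decomposition, algebra supplies a noiseless identity for any $P = g_0 + \sum_i g_i y_i \in \mathcal{P}_1(\{0,1\}^n,G)$:
\[
P(1^n) = \sum_{j=1}^{q} c_j\, P(\mathbf{x}^{(j)}),
\]
where $\sum c_j = 1$ absorbs the constant $g_0$ and $\sum c_j\mathbf{x}^{(j)} = 1^n$ handles the linear part. (The scalar $c_j\cdot g$ makes sense in any abelian group.) To make this robust against noise, I would rerandomize the query points by sampling $r_1,\dots,r_n\in[n]$ i.i.d.\ uniform and setting $\mathbf{y}^{(j)}_i := \mathbf{x}^{(j)}_{r_i}$. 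Two properties suffice: (i) the identity is preserved pointwise, since $\bigl(\sum_j c_j\mathbf{y}^{(j)}\bigr)_i = \bigl(\sum_j c_j\mathbf{x}^{(j)}\bigr)_{r_i} = 1$, so that $P(1^n) = \sum_j c_j P(\mathbf{y}^{(j)})$; (ii) for each $j$, the coordinates of $\mathbf{y}^{(j)}$ are i.i.d.\ Bernoulli with parameter $|\mathbf{x}^{(j)}|/n = 1/2 \pm O(1/n)$, so $\mathbf{y}^{(j)}$ is $O(1/\sqrt{n})$-close in statistical distance to $U_n$ by a standard $\chi^2$ estimate for product distributions.

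The algorithm then queries $f$ at $\mathbf{y}^{(j)}\oplus\mathbf{a}$ for $j=1,\dots,q$ and outputs $\sum_j c_j f(\mathbf{y}^{(j)}\oplus\mathbf{a})$. A union bound over the $q$ queries shows the failure probability is at most $q\cdot(\delta + O(1/\sqrt n)) \leq 1/4$ when $\delta \leq c/\log n$ for a small enough constant $c$, giving query complexity $O(\log n)$. The main obstacle is the combinatorial lemma: one must simultaneously enforce (a) near-balanced columns, (b) integer coefficients, and (c) the tight coefficient-sum constraint, using only $O(\log n)$ vectors. That this is possible at all is nontrivial — indeed, it rests on a hyperplane whose normal vector has exponentially varying weights, which is exactly where the lower bound of Bafna--Srinivasan--Sudan is tight.
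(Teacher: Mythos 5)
Your overall strategy tracks the paper's proof closely: reduce to correcting at $1^n$ via a shift, build an integer-coefficient ``correction gadget'' over nearly-balanced rows using the hyperplane $\sum_\ell 2^{\ell-1}(x_\ell - x_{k+\ell}) = 1$, resample rows i.i.d.\ to make each query nearly uniform, and union-bound over the $q$ queries; the shift, rerandomization, and error analysis are all right. The gap is the step that is supposed to make the coefficients sum to $1$. Appending two columns with opposite coefficients $(c,-c)$ changes $\sum_j c_j$ by exactly zero, not one, so it cannot fix the sum. In fact no ``append a pair of columns'' adjustment can work: for nonzero Boolean columns $\mathbf{x},\mathbf{y}$ and integers $c,d$, the constraints $c+d=1$ and $c\mathbf{x}+d\mathbf{y}=\mathbf{0}$ entrywise (the latter needed to preserve the spanning identity) are jointly infeasible, since $c\mathbf{x}+d\mathbf{y}=\mathbf{0}$ forces $c=0$, or $d=0$, or $\mathbf{x}=\mathbf{y}$ with $c=-d$, all of which contradict $c+d=1$.

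The paper's actual fix is to exploit structure already inside the hyperplane matrix $M$ rather than append to it. Columns $k$ and $2k$ of $M$ (the least-significant bits of $i$ and of $i-1$) are complementary: $M^{(k)} + M^{(2k)} = 1^m$. One flips every column \emph{except} column $k$, setting $\tilde M^{(j)} = 1-M^{(j)}$ for $j\ne k$, and negates the single weight $\tilde w_k = -w_k$. This has two consequences: the row for $i=1$ becomes the all-ones row, which by taking its inner product with the new coefficient vector $-\tilde{\mathbf w}/2$ forces that vector to sum to $1$; and columns $k$ and $2k$ of $\tilde M$ become equal. The entries of $-\tilde{\mathbf w}/2$ are integers at every position except $k$ and $2k$, where each equals $1/2$, so merging the two identical columns yields one column with coefficient $1$, giving $q=2k-1$ integer coefficients summing to $1$. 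Without this step your identity $P(1^n)=\sum_j c_j P(\mathbf{y}^{(j)})$ already fails on the degree-$0$ basis element $P_0\equiv 1$, since $1 \ne 0 = \sum_j c_j$, so the gadget cannot correct any linear polynomial with a nonzero constant term. This is the crux of the construction, not a small adjustment.
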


\begin{remark} In \cite[Theorem 5.3]{bafna2017local}, a lower bound of $q\ge \Omega(\log n/\log log n)$ was shown on the number of queries required to locally correct in the setting where $G$ is the additive group of a field of large characteristic (the lower bound even holds in the regime $\delta < \exp(-n^{\Omega(1)})$). Our theorem above implies that this lower bound is tight up to a $\log \log n$ factor in the setting when $\delta < \bigO(1/\log n)$. In fact, over the reals, we can obtain an upper bound of $q=\bigO(\log n/\log\log n)$, thus matching the lower bound of~\cite{bafna2017local} up to a constant factor. We refer the reader to~\Cref{app:improved-uniquedeg1-smallerror} for this improvement.
\end{remark}

We first describe the general framework of the algorithm, which is applicable more generally. In the following subsection, we will use this framework for linear polynomials and construct a local corrector for $\mathcal{P}_{1}$.

\subsubsection{Framework of Local Correction Algorithm}

We will now give a formal definition of how we construct a local correction algorithm, namely, via a \emph{correction gadget}. This will be useful in the regime where the distance of the input function to the codeword (in our case, a linear polynomial) is small.

Let $\mathcal{F}$ be a class of functions from $\Boo^{n}$ to an Abelian group $G$. Let $P_{1}, \ldots, P_{D}$ be functions from $\Boo^{n}$ to $\mathbb{Z}$ satisfying the following property: for any $P \in \mathcal{F}$, there exist coefficients $\alpha_{1},\ldots,\alpha_{D} \in G$ such that for any $\mathbf{a}\in \{0,1\}^n$
\begin{align*}
    P(\mathbf{a}) = \alpha_{1} P_{1}(\mathbf{a}) + \ldots + \alpha_{D} P_{D}(\mathbf{a}).
\end{align*}
In the case when $G$ is a finite field $\F_{p}$ for a prime $p$ and $\mathcal{F}$ is a vector space of functions, $\set{P_{1},\ldots,P_{D}}$ is a standard spanning set for $\mathcal{F}$ in the linear algebraic sense. We extend this definition to the case when $\mathcal{F}$ is defined over Abelian groups and say that $\{P_1,\ldots,P_D\}$ is a spanning set for $\mathcal{F}.$

\begin{definition}[Local Correction Gadget]\label{defn:decoding-gadget}
Let $\mathcal{F}$ be a set of functions from $\Boo^{n}$ to an Abelian group $G$ with spanning set $\{P_1,\ldots, P_D\}$. For any $\mathbf{a} \in \Boo^{n}$, an $(\varepsilon, q)$-correction gadget for $\mathbf{a}$ is a distribution $\mathcal{D}$ over $(\Boo^{n})^{q}$ satisfying the following two properties:
\begin{enumerate}
    \item There exists $c_{1}, \ldots, c_{q}\in \mathbb{Z}$\footnote{We require that the coefficients are $\mathbb{Z}$ because we are working with Abelian group, and a rational number times a group element is not well defined.} such that for any $(\mathbf{y}^{(1)}, \ldots, \mathbf{y}^{(q)}) \in \mathrm{supp}(\mathcal{D})$, the following holds: for each element of the spanning set $P_j$ ($j\in [D]$).
    \begin{equation}\label{eqn:step1-vectors-condition}
    P_j(\mathbf{a}) = c_{1}P_j(\mathbf{y}^{(1)}) + \ldots + c_{q}P_j(\mathbf{y}^{(q)})
\end{equation}

    \item For any $i \in [q]$, the distribution of $\mathbf{y}^{(i)}$ is $\varepsilon$-close to $U_{n}$.
\end{enumerate}
\end{definition}

The next claim shows that if we have an $(\varepsilon, q)$-correction gadget for sufficiently small $\varepsilon$, that immediately gives us a $(\delta, q)$-local correction algorithm for small enough $\delta$. We will use the same notation in \Cref{defn:decoding-gadget}.

\begin{claim}[Correction gadget gives local correction algorithm]\label{claim:gadget-to-algo}
 If there is an $(\varepsilon, q)$-correction gadget for any $\mathbf{a}\in \{0,1\}^n$ where $q (\delta + \varepsilon) < 1/4$, then there is a $(\delta, q)$-local correction algorithm for $\mathcal{F}$.
\end{claim}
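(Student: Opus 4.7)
The plan is to turn the gadget into an algorithm in the most direct way possible: on input $\mathbf{a}$, sample $(\mathbf{y}^{(1)},\ldots,\mathbf{y}^{(q)}) \sim \mathcal{D}$ together with the associated integer coefficients $c_1,\ldots,c_q$ promised by \Cref{defn:decoding-gadget}, query the oracle to obtain $f(\mathbf{y}^{(1)}),\ldots,f(\mathbf{y}^{(q)})$, and output
\[
c_1 f(\mathbf{y}^{(1)}) + \cdots + c_q f(\mathbf{y}^{(q)}) \;\in\; G.
\]
This clearly makes at most $q$ queries to $f$, so all that remains is to argue correctness.

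The analysis has two parts. First, I would show that \emph{if} none of the query points lands in the error set $E = \{\mathbf{x} : f(\mathbf{x}) \neq P(\mathbf{x})\}$, then the output equals $P(\mathbf{a})$. Since $P \in \mathcal{F}$ lies in the span of $\{P_1,\ldots,P_D\}$, there exist $\alpha_1,\ldots,\alpha_D \in G$ with $P(\mathbf{x}) = \sum_j \alpha_j P_j(\mathbf{x})$ for all $\mathbf{x} \in \Boo^n$. Using the integer identity \eqref{eqn:step1-vectors-condition} satisfied by every tuple in $\mathrm{supp}(\mathcal{D})$, I would swap the order of summation:
\[
\sum_{i=1}^q c_i P(\mathbf{y}^{(i)}) \;=\; \sum_{i=1}^q c_i \sum_{j=1}^D \alpha_j P_j(\mathbf{y}^{(i)}) \;=\; \sum_{j=1}^D \alpha_j \Bigl( \sum_{i=1}^q c_i P_j(\mathbf{y}^{(i)}) \Bigr) \;=\; \sum_{j=1}^D \alpha_j P_j(\mathbf{a}) \;=\; P(\mathbf{a}).
\]
Note that the swap is valid because each $c_i P_j(\mathbf{y}^{(i)})$ is an integer, so $\alpha_j \cdot c_i P_j(\mathbf{y}^{(i)}) \in G$ is well-defined, and group addition in $G$ is commutative and associative. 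Replacing each $P(\mathbf{y}^{(i)})$ by $f(\mathbf{y}^{(i)})$ (which we may do when $\mathbf{y}^{(i)} \notin E$) gives the claim.

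Second, I would bound the probability that some query lands in $E$. By hypothesis $|E|/2^n = \delta(f,P) \leq \delta$, so $\Pr_{\mathbf{z}\sim U_n}[\mathbf{z} \in E] \leq \delta$. Property~2 of the correction gadget says each marginal $\mathbf{y}^{(i)}$ is $\varepsilon$-close to $U_n$ in statistical distance, and therefore $\Pr[\mathbf{y}^{(i)} \in E] \leq \delta + \varepsilon$. A union bound over $i \in [q]$ yields
\[
\Pr_{\mathcal{D}}[\exists\, i \in [q]\; :\; \mathbf{y}^{(i)} \in E] \;\leq\; q(\delta + \varepsilon) \;<\; \tfrac{1}{4}.
\]
Combining the two parts, the algorithm outputs $P(\mathbf{a})$ with probability at least $3/4$, which is the definition of a $(\delta,q)$-local correction algorithm for $\mathcal{F}$. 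There is no real obstacle here; the argument is essentially a union bound once the integer-combination identity is observed, and it is exactly this clean reduction that motivates the definition of the gadget in the first place.
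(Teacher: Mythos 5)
Your proof is correct and takes essentially the same approach as the paper: sample from the gadget, output the integer combination of oracle values, argue exact correctness when no query hits the error set, and finish with a union bound using the $\varepsilon$-closeness of each marginal. The only difference is that you spell out the summation swap (which the paper dispatches with the word "linearity"); this is just a more explicit rendering of the same step.
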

\begin{proof}[Proof of \Cref{claim:gadget-to-algo}]
The existence of a correction gadget for each $\mathbf{a}\in \{0,1\}^n$ gives rise to a natural local correction algorithm for $\mathcal{F}.$ Given access to a function $f:\{0,1\}^n\rightarrow G$ that is promised to be $\delta$-close to some $P\in \mathcal{F}$ and an input $\mathbf{a}\in \{0,1\}^n$, we sample $(\mathbf{y}^{(1)}, \ldots, \mathbf{y}^{(q)})$ from the correction gadget $\mathcal{D}$ for $\mathbf{a}$ and return 
\[
c_1 f(\mathbf{y}^{(1)}) + \cdots + c_q f(\mathbf{y}^{(q)})
\]
where $c_1,\ldots, c_q$ are the coefficients corresponding to the correction gadget.

Since $P_1,\ldots, P_D$ form a spanning set for $\mathcal{F}$, it follows from \Cref{eqn:step1-vectors-condition} and linearity that for any $P\in \mathcal{F}$ and any $\mathbf{a}\in \{0,1\}^n$
\[
P(\mathbf{a}) = c_{1}P(\mathbf{y}^{(1)}) + \ldots + c_{q}P(\mathbf{y}^{(q)}).
\]
In particular, the correction algorithm outputs the correct answer $P(\mathbf{a})$ as long as $f$ agrees with $P$ on \textit{each} of $\mathbf{y}^{(1)}, \ldots, \mathbf{y}^{(q)}$. We now upper bound the probability of the correction algorithm outputting an incorrect value.

For any $i \in [q]$, the distribution of $\mathbf{y}^{(i)}$ is $\varepsilon$-close to $U_{n}$. In other words, for any set $T \subseteq \Boo^{n}$,
\begin{align*}
    \left|\Pr_{(\mathbf{y}^{(1)}, \ldots, \mathbf{y}^{(q)}) \sim \mathcal{D}}[\mathbf{y}^{(i)} \in T] - \Pr_{\mathbf{y}^{(i)} \sim U_{n}}[\mathbf{y}^{(i)} \in T]\right| \leq \varepsilon
\end{align*}
If $T$ is the set of points where $f$ and $P$ disagree, i.e. $|T|/2^{n} \leq \delta$, then we have,
\begin{align*}
    \Pr_{(\mathbf{y}^{(1)}, \ldots, \mathbf{y}^{(q)}) \sim \mathcal{D}}[f(\mathbf{y}^{(i)}) \; \text{is incorrect}] \leq \delta + \varepsilon
\end{align*}
Thus the probability that $f$ is incorrect on at least one of $\mathbf{y}^{(1)}, \ldots, \mathbf{y}^{(q)}$, when $(\mathbf{y}^{(1)}, \ldots, \mathbf{y}^{(q)}) \sim \mathcal{D}$ is at most $q (\delta + \varepsilon) < 1/4$.
\end{proof}

In this subsection, we defined a local correction gadget, which is a distribution with suitable properties. In \Cref{claim:gadget-to-algo}, we showed that to construct a local correction algorithm, it suffices to construct a local correction gadget. In the following subsections, we will focus on constructing local correction gadgets with, and then \Cref{claim:gadget-to-algo} would imply that we get a local correction algorithm too.

\subsubsection{Local Correction Algorithm for Linear Polynomials}

We now prove \Cref{thm:uniquedeg1-smallerror}. The main technical step in the proof is the following lemma, which is to construct a local correction gadget for $1^{n}$.\\

\begin{lemma}[Correction gadget for $1^n$]\label{lemma:decode-1n}
Fix any odd positive integer $q$. For any $n$, there is a choice of $c_1,\ldots, c_q \in \mathbb{Z}$ and a distribution $\mc{D}$ over $(\{0,1\}^n)^q$ such that the following properties hold for $c_1,\ldots, c_q$ and  any sample $(\mathbf{y}^{(1)},\ldots, \mathbf{y}^{(q)})$ from $\mathcal{D}$.
\begin{itemize}
    \item $c_{1} + \ldots + c_{q} = 1$ and for all $i \in [n]$,
    \begin{align*}
        c_{1}y^{(1)}_{i} + \ldots + c_{q}y^{(q)}_{i} = 1
    \end{align*}
    \item For each $j \in [q]$, $\mathbf{y}^{(j)}$ is $(1/2^{\Omega(q)}\cdot \sqrt{n})$-close to the $U_{n}$.\\
\end{itemize}
\end{lemma}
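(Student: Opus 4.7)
I will build the gadget in two steps: first a small deterministic ``template'' of Boolean vectors that already encodes the required integer identity, and then a randomization step that spreads the template across $\{0,1\}^n$ while smoothing the marginals to be close to uniform. The template will consist of $m \approx 2^{q/2}$ Boolean vectors in $\{0,1\}^q$, all lying on a single affine hyperplane whose integer coefficients sum to $1$, and each of whose columns (viewed as an $m \times q$ matrix) is within one of being balanced.

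For the template I use the ``predecessor'' construction from the proof overview. Writing $q = 2k+1$, index the coordinates of $\{0,1\}^q$ as $(v_0, a_1,\ldots,a_k, b_1,\ldots,b_k)$ and let $(a_s)$ and $(b_s)$ be the binary encodings of integers $a, b \in \{0,\ldots,2^k-1\}$. Consider the affine equation
\[
    v_0 + \sum_{s=1}^k 2^{s-1} a_s - \sum_{s=1}^k 2^{s-1} b_s = 1,
    \quad \text{i.e.,} \quad v_0 + a - b = 1.
\]
Its Boolean solutions are the $2^k-1$ tuples with $v_0=0$, $b=a-1$ together with the $2^k$ tuples with $v_0=1$, $b=a$, giving exactly $m := 2^{k+1}-1$ solutions. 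The coefficient vector $c = (1, 1, 2,\ldots, 2^{k-1}, -1, -2,\ldots, -2^{k-1}) \in \mathbb{Z}^q$ satisfies $\sum_j c_j = 1$. A short case analysis on the bits of $a$ and $b$ shows that every column of the $m\times q$ template matrix has Hamming weight either $2^k$ or $2^k-1$; in particular, if $p_j$ denotes the column density, then $|p_j - 1/2| \leq 1/(2m)$.

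To define $\mathcal{D}$, let $\mathbf{v}^{(1)},\ldots,\mathbf{v}^{(m)}$ enumerate the template solutions, sample $\pi:[n]\to[m]$ by drawing each $\pi(i)$ independently and uniformly from $[m]$, and set $y_i^{(j)} := v^{(\pi(i))}_j$. For every realization of $\pi$ and every $i \in [n]$ the row $(y_i^{(1)},\ldots,y_i^{(q)}) = \mathbf{v}^{(\pi(i))}$ is a template solution, hence $\sum_j c_j y_i^{(j)} = 1$; combined with $\sum_j c_j = 1$, this yields the first bullet for every sample. For the second bullet, fix $j$: the entries of $\mathbf{y}^{(j)}$ are i.i.d.\ Bernoulli$(p_j)$ with $|p_j - 1/2| \leq 1/(2m)$, so Pinsker's inequality together with the standard estimate $D_{KL}(\mathrm{Bern}(1/2 + \varepsilon)\,\|\,\mathrm{Bern}(1/2)) = O(\varepsilon^2)$ gives $d_{TV}(\mathbf{y}^{(j)}, U_n) = O(\sqrt{n}/m) = O(\sqrt{n}/2^{(q+1)/2}) = \sqrt{n}/2^{\Omega(q)}$, as required.

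\textbf{Main obstacle.} The real content is the template: one must realize a hyperplane in $\{0,1\}^q$ with \emph{integer} coefficients summing to $1$ that carries $2^{\Omega(q)}$ Boolean solutions whose per-column densities are within $O(1/2^q)$ of $1/2$. The predecessor hyperplane above does this in an elementary way but is wasteful by roughly a factor of two in $q$ relative to the $\Omega(\log n / \log\log n)$ lower bound of \cite{bafna2017local}; the H{\aa}stad and Alon--Vu constructions referenced in the overview push $q$ down to $O(\log n/\log \log n)$ but it becomes nontrivial to keep the coefficients integral, so this direct construction is the natural fit for the lemma as stated.
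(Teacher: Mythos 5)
Your proof is correct. The overall strategy is the same as the paper's: build a small $m \times q$ Boolean template matrix whose rows all satisfy a single integer-coefficient affine identity with coefficients summing to $1$, arrange that all column densities are within $O(1/m)$ of $1/2$, then define $\mathcal{D}$ by sampling each of the $n$ rows i.i.d.\ uniformly from the template, and invoke a Pinsker-type bound (the paper's \Cref{fact:close-to-uniform}) to get statistical distance $O(\sqrt{n}/m) = \sqrt{n}/2^{\Omega(q)}$.

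Where you genuinely differ is in how the template and its coefficient vector are produced. The paper starts from the pairs $(\mathrm{bin}(i),\mathrm{bin}(i-1))$, $i\in[2^k-1]$, with $q=2k$ columns and weight vector $\mathbf{w}=(2^{k-1},\dots,1,-2^{k-1},\dots,-1)$; since $\sum_j w_j=0$ rather than $1$, and since the coefficients need to be made integral, the paper then complements a subset of the columns (producing an all-ones row and making $\sum_j c_j=1$) and exploits the identity $\mathrm{col}^{(k)}=\mathbf{1}-\mathrm{col}^{(2k)}$ to merge two columns and drop down to $q=2k-1$ with integer coefficients. Your variant sidesteps all of that: by adding an explicit ``offset bit'' $v_0$ to the hyperplane $v_0 + a - b = 1$ with $q=2k+1$ columns, the coefficients $(1,1,2,\dots,2^{k-1},-1,\dots,-2^{k-1})$ are integral and sum to $1$ by construction, and the all-ones vector (take $a=b=2^k-1$, $v_0=1$) is automatically a solution. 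The tradeoff is two extra query coordinates for the same $m=2^{(q+1)/2}-1$, which is absorbed into the $\Omega(\cdot)$ and costs nothing asymptotically. In short, your template is a cleaner realization of the same ``predecessor hyperplane'' idea, avoiding the paper's column-flipping and column-merging steps while achieving matching parameters.
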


We first show how to prove \Cref{thm:uniquedeg1-smallerror} assuming this lemma. The idea is that since the space of linear polynomials $\mathcal{P}_{1}$ is closed under affine-shift, we can shift the query points to correct any point $\mathbf{a}$. \Cref{lemma:decode-1n} is proved subsequently.

\begin{proof}[Proof of \Cref{thm:uniquedeg1-smallerror}]
The space $\mathcal{P}_1$ of linear polynomials over $G$ has as a spanning set the constant function $P_0(\mathbf{x}) = 1$ and the co-ordinate functions $P_j(\mathbf{x}) = x_j$ for each $j\in [n].$

From \Cref{claim:gadget-to-algo}, it suffices to give a $(\varepsilon, q)$-correction gadget for any $\mathbf{a} \in \Boo^{n}$, where $\varepsilon = 1/n$. Note that \Cref{lemma:decode-1n} directly yields a correction gadget $\mathcal{D}$ at the point $1^n$ for $q = \bigO(\log n).$

To get a correction gadget at a point $\mathbf{a}\neq 1^n$, we simply shift this correction gadget by $\mathbf{b} = 1^n \oplus \mathbf{a}$ and use the fact that the space of linear polynomials is preserved by such shifts.

More precisely, consider the distribution $\mc{D}_\mathbf{b}$ obtained by sampling $(\mathbf{y}^{(1)},\ldots, \mathbf{y}^{(q)})$ from $\mathcal{D}$ and shifting each element by $\mathbf{b}$ to get 
\[
(\mathbf{z}^{(1)},\ldots, \mathbf{z}^{(q)}) = (\mathbf{y}^{(1)}\oplus \mathbf{b},\ldots, \mathbf{y}^{(q)}\oplus\mathbf{b}).
\]
We retain the same coefficients $c_1,\ldots, c_q$ as in \Cref{lemma:decode-1n}.

To prove that $(\mathbf{z}^{(1)},\ldots, \mathbf{z}^{(q)})$ is an $(\varepsilon,q)$-correction gadget for $\mathbf{a}$, it remains to verify
\begin{align}
    c_1+\cdots + c_q &= 1\notag\\
    c_1 z^{(1)}_i + \cdots + c_q z^{(q)}_i &= a_i \text{ for each $i\in [n]$}\label{eq:shift-gadget}
\end{align}
The first of the above follows from \Cref{lemma:decode-1n}. The second equality \Cref{eq:shift-gadget} is also easily verified for $i$ such that $a_i = 1$ since $z^{(j)}_i = y^{(j)}_i$ in this case. For $i$ such that $a_i = 0$, we see that $z^{(j)}_i = 1-y^{(j)}_i$ for each $j\in [q]$ and hence
\[
\sum_{j\in [q]} c_j z^{(j)}_i = \sum_{j\in [q]}c_j - \sum_{j\in [q]} c_j y^{(j)}_i = 1-1 = 0 = a_i.
\]
We have thus shown that \Cref{eq:shift-gadget} holds for all $i\in [n]$. Further, since $\mathbf{y}^{(j)}$ is $1/n$-close to uniform for each $j\in [q]$, so is $\mathbf{z}^{(j)}.$ Overall, this implies that $\mathcal{D}_\mathbf{b}$ is a correction gadget for $\mathbf{a}.$

\Cref{claim:gadget-to-algo} then gives us the desired local correction algorithm.
\end{proof}
So now we have shown that constructing a local correction gadget for $1^{n}$ is sufficient to get a local correction gadget for any $\mathbf{a}$. In the next subsection, we give a local correction gadget for $1^{n}$.

\subsubsection{Correction Gadget for all 1s Vector}
In this subsection, we prove \Cref{lemma:decode-1n}. We first construct a Boolean matrix with some interesting combinatorial and algebraic properties. The distribution $\mathcal{D}$ in \Cref{lemma:decode-1n} is obtained later by sampling $n$ rows of this matrix independently and uniformly at random.

The heart of our construction of a local correction gadget is the following technical lemma. It shows that we can find a small number of nearly balanced Boolean vectors, whose integer span contains the all 1s vector.

\begin{lemma}[Construction of a matrix]\label{lemma:matrix-construction}
For any natural number $k$, there exists an integer matrix $A_{k}$ of dimension $(2^{k}-1) \times (2k-1)$ with entries in $\Boo$ and a vector $\mathbf{c} \in \mathbb{Z}^{2k-1}$ such that $A_{k} \mathbf{c} = 1^{2^{k}-1}$ and there is exactly one row in $A_{k}$ that is $(1,\ldots,1)$. Additionally, for any column of $A_{k}$, the Hamming weight of the column is in $[2^{k-1} - 1, 2^{k-1} + 1]$.
\end{lemma}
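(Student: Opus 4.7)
The plan is to give an explicit, coordinate-by-coordinate construction of $A_k$ and $\mathbf{c}$ built from binary expansions, directly implementing the ``$j = i-1$'' hyperplane idea from the proof overview. For each integer $i \in \{1, 2, \ldots, 2^k - 1\}$, let $(b_1, \ldots, b_k)$ be the binary expansion of $i$ (with $b_1$ the least significant bit) and $(b'_1, \ldots, b'_k)$ be the binary expansion of $i-1$. I would define the $i$-th row of $A_k$ to be the $(2k-1)$-bit vector
\[
\mathrm{row}_i \;:=\; (b_1,\; b_2,\, b'_2,\; b_3,\, b'_3,\; \ldots,\; b_k,\, b'_k),
\]
and set
\[
\mathbf{c} \;:=\; (1,\; 1,\,-1,\; 2,\,-2,\; 4,\,-4,\; \ldots,\; 2^{k-2},\,-2^{k-2})^{\!\top} \in \mathbb{Z}^{2k-1}.
\]
The motivating observation is that the ``natural'' length-$2k$ encoding which uses both $b_1$ and $b'_1$ as separate columns gives coefficients summing to $0$, but since $i$ and $i-1$ always have opposite parities we have $b'_1 = 1 - b_1$, so the two LSB columns are redundant and can be folded into a single column. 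This folding simultaneously drops one column (giving $2k-1$) and shifts the coefficient sum from $0$ to $1$, which is exactly what the statement demands.

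For the identity $A_k \mathbf{c} = 1^{2^k-1}$, I would start from the trivial identity $i - (i-1) = 1$, rewritten in binary as $\sum_{j=1}^{k} 2^{j-1}(b_j - b'_j) = 1$. Substituting $b_1 - b'_1 = 2 b_1 - 1$ (valid because $b_1 \neq b'_1$) and dividing by $2$ yields
\[
b_1 \;+\; \sum_{j=2}^{k} 2^{j-2}(b_j - b'_j) \;=\; 1,
\]
which is precisely $\langle \mathrm{row}_i, \mathbf{c}\rangle$. The coefficient sum $c_1 + \cdots + c_{2k-1} = 1$ is then immediate since the pairs $\pm 2^{j-2}$ cancel. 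For uniqueness of the all-ones row, note that $\mathrm{row}_i = 1^{2k-1}$ forces $b_1 = 1$ and $b_j = b'_j = 1$ for all $j \geq 2$; the equality $b_j = b'_j$ for $j \geq 2$ says that subtracting $1$ from $i$ triggers no carry past the LSB, i.e.\ $i$ is odd, and together with $b_2 = \cdots = b_k = 1$ and $b_1 = 1$ this forces the unique choice $i = 2^k - 1$.

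Finally, the column Hamming weights reduce to elementary counts. The column for $b_1$ tallies the odd numbers in $\{1, \ldots, 2^k-1\}$, giving weight $2^{k-1}$. For $j \geq 2$ the column $b_j$ counts $i \in \{1, \ldots, 2^k-1\}$ with bit $j$ set, which equals $2^{k-1}$ since the excluded element $0$ contributes nothing. The column $b'_j$ counts $i' \in \{0, \ldots, 2^k-2\}$ with bit $j$ set, which equals $2^{k-1} - 1$ because removing the single excluded value $2^k - 1$ drops the count by exactly one. Every weight therefore lies in $[2^{k-1}-1,\, 2^{k-1}+1]$, as required.

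The main conceptual obstacle is spotting the LSB-folding trick: a naive $2k$-column matrix does give integer coefficients realizing the row-span condition, but its coefficient sum is $0$, not $1$, and one must argue that exactly one column can be removed without losing integrality of $\mathbf{c}$. The parity observation $b'_1 = 1 - b_1$ is the key unlock. Once that is in hand the remainder is straightforward verification, with no further analytic or combinatorial difficulty.
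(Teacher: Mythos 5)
Your proposal is correct and rests on the same underlying idea as the paper's proof, namely encoding the hyperplane $j = i-1$ via binary expansions of $i$ and $i-1$ and exploiting the fact that the least-significant bits are complementary ($b'_1 = 1 - b_1$) to fold two columns into one. The execution, however, is noticeably different and cleaner. The paper starts from a $(2^k-1)\times 2k$ matrix $M$ with rows $(\mathrm{bin}(i), \mathrm{bin}(i-1))$ and a rational coefficient vector $-\tilde{\mathbf{w}}/2$; it then complements all columns except the LSB column of $i$ so that the first row becomes $1^{2k}$, observes that the two LSB columns coincide after complementation, drops one of them, and separately argues that the resulting coefficient vector $\mathbf{c}$ is integral. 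You skip the complementation step entirely: you write down the $(2k-1)$-column matrix directly, derive the integer coefficient vector by a short manipulation of $i - (i-1) = 1$ (substituting $b_1 - b'_1 = 2b_1 - 1$ and dividing by $2$), and verify uniqueness of the all-ones row by noting it must come from an odd $i$ with all higher bits set, forcing $i = 2^k - 1$. As a bonus your column weights all lie in $\{2^{k-1}-1, 2^{k-1}\}$, which is slightly tighter than the stated $[2^{k-1}-1, 2^{k-1}+1]$. What the paper's approach buys is that the all-ones row sits conveniently in position $1$ and the sign-flipping makes the column-weight calculation symmetric; what your approach buys is that integrality of $\mathbf{c}$ is immediate rather than a separate observation, and no column complementation is needed.
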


\begin{remark}
\label{rem:matrix-construction}
 The statement of this lemma is, in some sense, the best that can we hope for as the lemma does not hold if each column is required to be \emph{perfectly balanced}. In fact, the above lemma does not hold even in the setting where each column is required to have weight \emph{exactly} $w$ for some $w<2^{k}-1$: in this case, $1^{2^k-1}$ would not even be in the $\mathbb{Q}$-linear span of the columns of $A_k$.\footnote{Consider a vector $\mathbf{v}\in \mathbb{Q}^{2^k-1}$ the entries of which are $1-1/w$ or $-1/w$, depending on whether the corresponding row in $A_k$ is the all $1$s row or not. The vector $\mathbf{v}$ is orthogonal to the columns of $A_k$ but not the vector $1^{2^k-1}$.} 

Quantitatively, this lemma exhibits a near-tight converse to a lemma of Bafna, Srinivasan, and Sudan~\cite{bafna2017local} who showed that for any $n\times k$ Boolean matrix with an all-$1$s row, and columns that have Hamming weights in the range $[n/2-\sqrt{n}, n/2 + \sqrt{n}]$ and also span the all $1$s column, we must have $k= \tilde{\Omega}(\log n).$
\end{remark}

\begin{proof}
Fix a $k \in \mathbb{N}$. Given a non-negative integer $i < 2^k$, we denote by $\mathrm{bin}(i)$ the Boolean vector that denotes the $k$-bit binary expansion of $i$ (with the first entry being the most significant bit).

\paragraph{\underline{Defining the base matrix}}Let $M$ be a $(2^{k}-1) \times 2k$ matrix with entries in $\Boo$. For all $i \in [2^{k} - 1]$ and $j \in [2k]$, let the $i^{th}$ row and the $j^{th}$ column of $M$ be denoted by $\mathsf{row}^{(i)}$ and $\mathsf{col}^{(j)}$, respectively. The $i^{th}$ row of $M$ is $\mathsf{row}^{(i)} := (\mathrm{bin}(i) \, \mathrm{bin}(i-1))$, i.e. in $\mathsf{row}^{(i)}$, the first $k$ coordinates are $\mathrm{bin}(i)$ and the next $k$ entries are $\mathrm{bin}(i-1)$, where for an integer $i$, $\mathrm{bin}(i)$ denotes the binary representation of $i$.
\begin{align*}
    M = \begin{bmatrix}
        \vdots & \vdots \\
        \mathrm{bin}(i) & \mathrm{bin}(i-1) \\
        \vdots & \vdots
    \end{bmatrix}_{(2^{k}-1) \times 2k}
\end{align*}
Let $\mathbf{w} \in \R^{2k}$ be the following vector:
\begin{align*}
    \mathbf{w} = \paren{2^{k-1},\ldots,2^{1},2^{0}, \; -2^{k-1},\ldots,-2^{1},-2^{0}}
\end{align*}
It is easy to see that for any row $\mathsf{row}^{(i)}$ of $M$, $\langle \mathsf{row}^{(i)}, \mathbf{w} \rangle = i - (i-1) = 1$. Thus, $M \mathbf{w} = 1^{2^{k}-1}$.

\paragraph{\underline{A useful observation}} For any row $\mathsf{row}^{(i)}$, the $k^{th}$ and the $2k^{th}$ entry are distinct, i.e. $\mathsf{row}^{(i)}_{k} \oplus \mathsf{row}^{(i)}_{2k} = 1$, i.e. $\mathsf{col}^{(k)} = 1^{2^{k}-1} - \mathsf{col}^{(2k)}$.

\paragraph{\underline{Modifying the base matrix}}Let $\Tilde{M}$ be a $(2^{k}-1) \times 2k$ matrix and $\Tilde{\mathbf{w}}$ be a column vector of dimension $2k$. Let the $i^{th}$ row and the $j^{th}$ column of $\Tilde{M}$ be denoted by $\widetilde{\mathsf{row}}^{(i)}$ and $\widetilde{\mathsf{col}}^{(j)}$, respectively. $\Tilde{M}$ and $\Tilde{\mathbf{w}}$ are defined as follows:
\begin{align*}
    \widetilde{\mathsf{col}}^{(j)} = \begin{cases}
        1 - \mathsf{col}^{(j)}, & \text{if} \, j \neq k \\
        \mathsf{col}^{(j)}, & \text{if} \, j = k
    \end{cases} \quad \quad
        \Tilde{w}_{j} = \begin{cases}
        w_{j}, & \text{if} \, j \neq k \\
        -w_{j}, & \text{if} \, j = k
    \end{cases}
\end{align*}
It is easy to verify the following: for any $i \in [2^{k}-1]$, $\langle \widetilde{\mathsf{row}}^{(i)} , \Tilde{\mathbf{w}} \rangle = -2$. Thus $\Tilde{M} (-\Tilde{\mathbf{w}}/2) = 1^{2^{k}-1}$.\newline
Note that $\widetilde{\mathsf{col}}^{(k)} = \widetilde{\mathsf{col}}^{(2k)}$. 
The first row of $M$, i.e. $\mathsf{row}^{(1)} = (\mathrm{bin}(1)\mathrm{bin}(0)) = (0,\ldots,0,1,0,\ldots,0)$. The first row of $\Tilde{M}$. i.e. $\widetilde{row}^{(1)} = (1,\ldots,1)$. Since $\Tilde{M} (-\Tilde{\mathbf{w}}/2) = 1^{2^{k}-1}$, this implies that $\sum_{j = 1}^{2k} (-\Tilde{w}_{j}/2) = 1$.

It's also easy to verify that no row other than the first row of $\tilde{M}$ is $(1,1,\ldots,1).$

\paragraph{\underline{Integral coefficients}}We have $-\Tilde{w}_{k}/2 = -\Tilde{w}_{2k}/2 = 1/2$. Consider any row $\widetilde{\mathsf{row}}^{(i)}$ of $\Tilde{M}$. Since $\widetilde{\mathsf{row}}^{(i)}_{k} = \widetilde{\mathsf{row}}^{(i)}_{2k}$, the following equality holds:
\begin{equation}\label{eqn:unique-deg-1-obs}
    \widetilde{\mathsf{row}}^{(i)}_{k} (-\Tilde{w}_{k}/2) \, + \, \widetilde{\mathsf{row}}^{(i)}_{2k} (-\Tilde{w}_{2k}/2) \; = \; \widetilde{\mathsf{row}}^{(i)}_{k} \cdot 1 \, + \, \widetilde{\mathsf{row}}^{(i)}_{2k} \cdot 0
\end{equation}
Let $\mathbf{c} \in \mathbb{Z}^{2k-1}$ be the following vector: $c_{j} = (-\Tilde{w}_{j}/2)$ if $j \neq k$, otherwise $c_{j} = 1$. For any row $\widetilde{\mathsf{row}}^{(i)}$, from \Cref{eqn:unique-deg-1-obs}, $\langle \widetilde{\mathsf{row}}^{(i)}, \mathbf{c} \rangle = \langle \widetilde{\mathsf{row}}^{(i)} , (-\Tilde{\mathbf{w}}/2) \rangle = 1$. Let $A_{k}$ denote the matrix $\Tilde{M}$ after removing the $2k^{th}$ column.
Then $A_k \mathbf{c} = 1^{2^{k}-1}$.\newline
Since $\sum_{j = 1}^{2k} (-\Tilde{w}_{j}/2) = 1$, using \Cref{eqn:unique-deg-1-obs}, we get that $\sum_{j=1}^{2k-1} c_{j} = 1$.\\

\paragraph{\underline{Columns are nearly balanced}}Finally, we will prove that for each column $\widetilde{\mathsf{col}}^{(j)}$ of $A$, the Hamming weight of $\widetilde{\mathsf{col}}^{(j)} \in [2^{k-1} - 1, 2^{k-1} + 1]$. For any $j \in [2k-1]$, the Hamming weight of $\mathsf{col}^{(j)}$ is in $\set{2^{k-1} - 1, 2^{k-1} + 1}$. This is because if $M$ had an additional row $[\mathrm{bin}(0)\mathrm{bin}(2^{k}-1)]$, then each column of $M$ would be exactly balanced, i.e. have Hamming weight of $2^{k-1}$. Then by definition of $\widetilde{\mathsf{col}}^{(j)}$, it follows that Hamming weight of each column of $A_k$ is also in $[2^{k-1} - 1, 2^{k-1} + 1]$.
\end{proof}
Next, we are going to describe a distribution $\mathcal{D}$ on $(\Boo^{m})^{q}$, where $m = 2^{k}-1$ and $q = 2k-1$. We will do this by randomly sampling rows of the matrix $A_{k}$ given by \Cref{lemma:matrix-construction}. This will give us a local correction gadget and finish the proof of \Cref{lemma:decode-1n}.

\begin{proof}[Proof of \Cref{lemma:decode-1n}]
Assume that $q=2k-1.$ To sample $(\mathbf{y}^{(1)},\ldots, \mathbf{y}^{(q)})\sim \mathcal{D}$ over $(\{0,1\}^n)^q$, we sample $n$ rows independenly and uniformly at random from the rows of $A_k$ as constructed in \Cref{lemma:matrix-construction} and define $(y^{(1)}_i,\ldots, y^{(q)}_i)$ to be the $i$th sample for each $i\in [n].$

We now show that $\mathcal{D}$ has the required properties from the statement of \Cref{lemma:decode-1n}.

Let $(c_1,\ldots, c_q) = \mathbf{c}$ be as guaranteed by \Cref{lemma:matrix-construction}.

The first property holds from the properties of $A_k$ and $\mathbf{c}$. For each $i\in [n]$, the vector $(y^{(1)}_i,\ldots, y^{(q)}_i)$ is a row of $A_{k}$, and from \Cref{lemma:matrix-construction}, we know that the inner product of any row of $A_{k}$ and $\mathbf{c}$ is $1$. 
Further, since $1^q$ is also a row of $A_k$, it follows that the entries of $\mathbf{c}$ sum to $1$.

The second property follows from the fact that each column of $A_{k}$ has relative Hamming weight in the range $[\frac{1}{2} - 2^{-k}, \frac{1}{2}+2^{-k}]$. Thus, for any fixed $j\in [q]$ and each $i\in [n]$, we have
\[
\prob{}{y^{(j)}_i = 1} \in \left[\frac{1}{2}-\frac{1}{2^k}, \frac{1}{2}+\frac{1}{2^k}\right].
\]
Since for a fixed $j\in [q]$ the bits $\{y^{(j)}_i\ |\ i\in [n]\}$ are mutually independent, we are now done by the following standard fact (which can easily be proved by, say, following the proof of~\cite[Theorem 5.5, Claim 5.6]{MansourLec}).

\begin{fact}\label{fact:close-to-uniform}
Let $\eta > 0$. Let $\mathcal{D'}$ be a distribution on $\Boo^{n}$ such that for any $\mathbf{y} \sim \mathcal{D'}$, the co-ordinates of $\mathbf{y}$ are independent and for all $i \in [n]$,
\begin{align*}
    1/2 - \eta \leq \Pr[y_{i} = 1] \leq 1/2 + \eta.
\end{align*}
Then $\mathcal{D'}$ is $\bigO(\eta \sqrt{n})$-close to $U_{n}$.
\end{fact}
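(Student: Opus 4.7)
The plan is to prove this standard fact via the $\chi^2$-divergence approach, which tensorizes cleanly for product distributions and converts to statistical distance through Cauchy-Schwarz. First, I dispose of the trivial regime $\eta\sqrt{n}\ge 1$, where the bound $\mathrm{SD}(\mathcal{D}',U_n)\le 1$ is automatic, so henceforth I assume $n\eta^2=O(1)$. Writing $p_i:=\Pr_{\mathbf{y}\sim\mathcal{D}'}[y_i=1]$, so that $|p_i-1/2|\le\eta$, the per-coordinate $\chi^2$-divergence of the $i$-th marginal $\mathcal{D}'_i$ from $U_1$ works out to
\[
\chi^2(\mathcal{D}'_i\,\|\,U_1)\;=\;\frac{(p_i-1/2)^2}{1/2}+\frac{(1/2-p_i)^2}{1/2}\;=\;4(p_i-1/2)^2\;\le\;4\eta^2.
\]

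Next I would invoke the tensorization identity for $\chi^2$-divergence between product distributions:
\[
1+\chi^2(\mathcal{D}'\,\|\,U_n)\;=\;\prod_{i=1}^n\bigl(1+\chi^2(\mathcal{D}'_i\,\|\,U_1)\bigr),
\]
which is a one-line calculation from the definition of $\chi^2$ using independence of the coordinates of $\mathcal{D}'$. Combined with the per-coordinate bound above, this yields
\[
1+\chi^2(\mathcal{D}'\,\|\,U_n)\;\le\;(1+4\eta^2)^n\;\le\;\exp(4n\eta^2)\;\le\;1+O(n\eta^2),
\]
where the final inequality uses the assumption $n\eta^2=O(1)$ together with $e^t\le 1+2t$ for $t\le 1$. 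Hence $\chi^2(\mathcal{D}'\,\|\,U_n)=O(n\eta^2)$.

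Finally, I convert from $\chi^2$ to statistical distance by Cauchy-Schwarz: for any distributions $p,q$ on a finite set,
\[
2\,\mathrm{SD}(p,q)\;=\;\sum_x\sqrt{q(x)}\cdot\frac{|p(x)-q(x)|}{\sqrt{q(x)}}\;\le\;\sqrt{\textstyle\sum_x q(x)}\cdot\sqrt{\chi^2(p\,\|\,q)}\;=\;\sqrt{\chi^2(p\,\|\,q)}.
\]
Applying this with $p=\mathcal{D}'$ and $q=U_n$ gives $\mathrm{SD}(\mathcal{D}',U_n)\le\tfrac12\sqrt{O(n\eta^2)}=O(\eta\sqrt{n})$, as claimed.

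There is no real obstacle; the argument is essentially bookkeeping once one identifies $\chi^2$ as the right divergence (it both tensorizes multiplicatively and controls statistical distance via Cauchy-Schwarz). An equally good alternative route goes through KL divergence: a Taylor expansion around $p_i=1/2$ gives $\mathrm{KL}(\mathcal{D}'_i\,\|\,U_1)=O(\eta^2)$, additivity of KL across independent coordinates yields $\mathrm{KL}(\mathcal{D}'\,\|\,U_n)=O(n\eta^2)$, and Pinsker's inequality then delivers $\mathrm{SD}(\mathcal{D}',U_n)\le\sqrt{\mathrm{KL}/2}=O(\eta\sqrt{n})$. Either path reaches the same bound.
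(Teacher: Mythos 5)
Your proof is correct. Note that the paper does not actually supply a proof of \Cref{fact:close-to-uniform}; it only points to a reference (Mansour's lecture notes), so you have filled in a gap rather than recreated a printed argument. Your route --- per-coordinate $\chi^2$ divergence, multiplicative tensorization for product measures, and Cauchy--Schwarz (or equivalently Pinsker via KL) to drop down to statistical distance --- is clean and standard, and it correctly produces the $\sqrt{n}$ scaling that a naive hybrid/union argument would miss (that would only give $O(\eta n)$). One small point worth tightening: when you pass from $(1+4\eta^2)^n \le e^{4n\eta^2}$ to $1+O(n\eta^2)$, you should fix the trivial-regime cutoff so that $4n\eta^2 \le 1$ holds (e.g.\ dispose of the case $\eta\sqrt{n} \ge 1/2$ up front rather than $\eta\sqrt{n}\ge 1$), since the elementary inequality $e^t \le 1+2t$ you invoke is only valid for $t\in[0,1]$. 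This is purely cosmetic and does not affect the conclusion.
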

This concludes the proof of \Cref{lemma:decode-1n}.
\end{proof}
Summarising the proof of \Cref{thm:uniquedeg1-smallerror} - We first showed that it is enough to focus on constructing local correction gadgets (see \Cref{claim:gadget-to-algo}) and we can assume without loss of generality that we want to decode at $1^{n}$. Then we constructed a matrix with nice properties (see \Cref{lemma:matrix-construction}) and defined a distribution for local correction gadget using this matrix.

\subsection{Constant Error Algorithm via Error-Reduction}\label{subsec:error-reduction-small-constant}
In this subsection, we explain the second step towards proving \Cref{thm:uniquedeg1}. We show how to locally correct degree-$1$ polynomials in the regime of \emph{constant} error (one can think of this error to be around $1/1000$). We will do this by reducing the problem to the case of low error (sub-constant error). The results of this section also work for higher-degree polynomials. 

We will show that there is a randomized algorithm $\mathcal{A}^{f}$ that given oracle access to any function $f$ that is $\delta$-close to a degree-$d$ polynomial $P$ (think of $\delta$ as being a small enough constant depending on $d$), has the following property: with high probability over the internal randomness of $\mathcal{A}^{f}$, the function computed by  $\mathcal{A}^{f}$ is $\eta$-close to $P$, where $\eta < \delta$. We state it formally below.\\

\begin{lemma}[Error reduction for constant error]\label{lem:error-reduction-small-constant-main}
Fix any Abelian group $G$ and a positive integer $d$. The following holds for $\delta < 1/2^{\bigO(d)}$ and $K = 2^{\bigO(d)}$ where the $\bigO(\cdot)$ hides a large enough absolute constant.\newline
For any $\eta, \delta$, where $\eta < \delta$, there exists a randomized algorithm $\mathcal{A}$ with the following properties: Let $f: \Boo^{n} \to G$ be a function and let $P: \Boo^{n} \to G$ be a degree-$d$ polynomial such that $\delta(f,P) \leq \delta$, and let $\mathcal{A}^{f}$ denotes that $\mathcal{A}$ has oracle access to $f$, then
\begin{align*}
    \Pr[\delta(\mathcal{A}^{f}, P) > \eta] < 1/10,
\end{align*}
where the above probability is over the internal randomness of $\mathcal{A}^{f}$. Further, for every $\mathbf{x} \in \Boo^{n}$, $\mathcal{A}^{f}$ makes $K^{T}$ queries to $f$ and $T = \bigO\paren{  \log\paren{ \dfrac{\log(1/\eta)}{\log(1/\delta)} }  }  $.\\
\end{lemma}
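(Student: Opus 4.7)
The plan is to prove the lemma by iterating a single-step error-reduction procedure $T$ times: each step maps an oracle at error rate $\delta'$ to one at error rate at most $(\delta')^{1.1}$, so solving $\delta^{1.1^T}\le \eta$ yields $T = O(\log(\log(1/\eta)/\log(1/\delta)))$, and the per-step query overhead is a constant $K=2^{O(d)}$, giving total $K^T$ queries. The single-step procedure on input $\mathbf{x}\in\{0,1\}^n$ samples three independent random subcubes $\mathsf{C}_1,\mathsf{C}_2,\mathsf{C}_3$ of dimension $k' = O(d)$ through $\mathbf{x}$, each via \Cref{defn:random-embedding} with $\mathbf{a}=\mathbf{x}$ and an independent random hash $h_i:[n]\to[k']$. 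On each $\mathsf{C}_i$ the algorithm queries $f$ at all $2^{k'}$ points and brute-forces the closest polynomial $Q_i\in\mathcal{P}_d(\{0,1\}^{k'},G)$; it then outputs the plurality of $Q_1(0^{k'}),Q_2(0^{k'}),Q_3(0^{k'})$, each of which is a candidate value for $P(\mathbf{x})$.

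\textbf{Per-trial correctness.} The first step in the analysis is to show that each individual $Q_i$ equals $P|_{\mathsf{C}_i}$ with high probability. By \Cref{lemma:sampling-subcube} applied to the error set $E = \{\mathbf{y} : f(\mathbf{y})\neq P(\mathbf{y})\}$ of density $\delta'$, for a large enough constant $k' = O(d)$, the quantity $\delta(f|_{\mathsf{C}_i},P|_{\mathsf{C}_i})$ concentrates below the unique-decoding radius $2^{-(d+1)}$ of $\mathcal{P}_d$ (which is far above $\delta'<1/2^{O(d)}$). Hence the brute-force step recovers $Q_i = P|_{\mathsf{C}_i}$, and so $Q_i(0^{k'}) = P(\mathbf{x})$, except with some probability $p = O(\delta')$ over the randomness of the trial.

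\textbf{Hypercontractive plurality boost (main obstacle), and iteration.} The central difficulty is that the three trials share the point $\mathbf{x}$ and hence are not independent, so the naive squaring $p^2$ from a plurality of three independent trials is not immediate. To address this, consider the set $B\subseteq\{0,1\}^n$ of ``bad'' inputs at which a fresh trial fails with conditional probability at least $(\delta')^{0.1}$; by Markov, $B$ has density at most $p/(\delta')^{0.1}$. For $\mathbf{x}\notin B$, the conditional probability that two fresh trials both err is at most $((\delta')^{0.1})^2$. For $\mathbf{x}\in B$, I would use the observation (implicit in the proof of \Cref{lemma:sampling-subcube}) that two random subcubes through a common $\mathbf{x}$ behave like a noise-correlated pair whose correlation parameter shrinks as $k'$ grows; an application of \Cref{thm:hypercontractivity} to $B$ then yields that the contribution of bad $\mathbf{x}$ to the joint-failure probability is at most $(\delta')^{1.1}$, provided $k'$ is chosen large enough in terms of $d$. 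This establishes the $(\delta')^{1.1}$ bound per step. Finally, defining $\mathcal{A}_{t+1}^f$ recursively by running the one-step procedure but replacing each oracle call to $f$ with a fresh invocation of $\mathcal{A}_t^f$, induction on $t$ gives $\mathbb{E}[\delta(\mathcal{A}_t^f,P)]\le \delta^{1.1^t}$; choosing $T=O(\log(\log(1/\eta)/\log(1/\delta)))$ drives this below $\eta/10$, and Markov's inequality completes the proof.
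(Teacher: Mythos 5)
Your high-level architecture matches the paper's: a single-step error-reduction procedure built from three trials and a plurality vote, iterated $T = O(\log(\log(1/\eta)/\log(1/\delta)))$ times, with the base error $\delta < 2^{-O(d)}$ driving a double-exponential decay. The recursion and query count bookkeeping are fine. However, your single-step procedure and its analysis differ from the paper's in a way that opens a genuine gap.

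The paper's base algorithm (\Cref{lem:error-reduction-subroutine}) does \emph{not} brute-force a subcube; it uses the error-reduction gadget of \Cref{lem:error-reduction-gadget}: a M\"obius-inversion identity supported on a Hamming ball of radius $d$ around a \emph{balanced} point $\mathbf{c}$ of weight $k/2$ in the subcube. The point is that every queried shift $\mathbf{b}_h$ then has each coordinate $\rho$-close to uniform for $\rho = 1/10$, so for any two queries from two independent trials, the correlation parameter in \Cref{thm:hypercontractivity} is at most $\rho^2 = 1/100$, uniformly over all query pairs. This gives $\Pr[\mathcal{E}_i\wedge\mathcal{E}_j] \le q^2\gamma^{2/(1+1/100)} \le q^2\gamma^{1.5}$ directly, for \emph{all} $\gamma$, which is exactly what the recursion needs.

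Your brute-force-on-subcube variant queries the whole cube $\mathsf{C}_i$, and in particular queries $x(0^{k'}) = \mathbf{x}$ itself, which appears in all three trials. This breaks the claim that ``two random subcubes through a common $\mathbf{x}$ behave like a noise-correlated pair whose correlation parameter shrinks as $k'$ grows'': for the pair $(\mathbf{y},\mathbf{y}')=(0^{k'},0^{k'})$ the correlation is exactly $1$, and for $\mathbf{y}$ of Hamming weight $w$ the correlation of $x_1(\mathbf{y})$ and $x_2(\mathbf{y}')$ is $(1-2w/k')(1-2w'/k')$, which is close to $1$ for all the low-weight pairs in the cube. A second-moment bound $\Pr[Z_1\ge M\wedge Z_2\ge M]\le \E[Z_1Z_2]/M^2$ with $M=2^{k'-d-1}$ then carries a term $\gamma/M^2 = 2^{-2(k'-d-1)}\gamma$ from the $(0,0)$ pair alone; with $k'=O(d)$, this exceeds $\gamma^{1.25}$ as soon as $\gamma$ drops below $2^{-O(d)}$, which happens after a constant number of recursion steps. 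A fix is possible (split the cube into balanced and unbalanced $\mathbf{y}$ and note the unbalanced ones are deterministically fewer than $0.01M$), but it is a different argument than what you sketch, and it is essentially a way of rediscovering why the paper restricts to the Hamming ball around a balanced center in the first place.

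Separately, the good/bad-$\mathbf{x}$ Markov split you outline does not close. With $B = \{\mathbf{x} : \Pr_h[\text{trial fails}\mid\mathbf{x}] \ge (\delta')^{0.1}\}$ and per-trial failure $p = O(2^{O(d)}\delta')$, you get $\mu(B)\le p/(\delta')^{0.1} = O((\delta')^{0.9})$, and the contribution of good $\mathbf{x}$ is at most $((\delta')^{0.1})^2 = (\delta')^{0.2}$. The total is thus $O((\delta')^{0.2})$, which is \emph{larger} than the target $(\delta')^{1.1}$, not smaller; the ``application of hypercontractivity to $B$'' that is supposed to reduce the bad contribution below $(\delta')^{1.1}$ is not spelled out and would not help anyway since the good contribution already dominates. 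No choice of the exponent in the split repairs this: if you threshold at $(\delta')^\alpha$, the two contributions are $O((\delta')^{1-\alpha})$ and $(\delta')^{2\alpha}$, which equalize near $\alpha = 1/3$ at $(\delta')^{2/3}$, still short of $(\delta')^{1.1}$. The paper avoids this entirely by bounding $\Pr[\mathcal{E}_i\wedge\mathcal{E}_j]$ directly with hypercontractivity applied to each query pair, not by conditioning on a thresholded set of inputs.
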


\noindent
Putting this together with \Cref{thm:uniquedeg1-smallerror}, we immediately get the following algorithmic result, which is the main result of this subsection.\\

\begin{theorem}[Unique local correction algorithm for constant error]
    \label{thm:uniquedecoding-const-error} 
    Fix any Abelian group $G$. The space $\mathcal{P}_1$ of degree-$1$ polynomials has a $(\delta,q)$-local correction algorithm where $\delta > 0$ is a small enough absolute constant and $q = \bigO(\log n\cdot \poly(\log \log n)).$
\end{theorem}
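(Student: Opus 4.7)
The plan is to simply compose the error-reduction algorithm from \Cref{lem:error-reduction-small-constant-main} (instantiated at $d=1$) with the sub-constant error local correction algorithm from \Cref{thm:uniquedeg1-smallerror}. Concretely, let $\delta_0 > 0$ be a small enough absolute constant so that the hypotheses of \Cref{lem:error-reduction-small-constant-main} hold for $d=1$, let $c$ be the constant such that \Cref{thm:uniquedeg1-smallerror} gives a local corrector whenever the oracle is $\eta$-close to a codeword for $\eta < c/\log n$, and set $\eta := c/(2\log n)$. Given oracle access to $f$ which is $\delta_0$-close to some $P \in \mathcal{P}_1$, I would first invoke \Cref{lem:error-reduction-small-constant-main} with parameter $\eta$, which yields a randomized algorithm $\mathcal{A}$ such that, after fixing its internal random tape, the resulting deterministic function $g = \mathcal{A}^f$ satisfies $\delta(g,P) \le \eta$ with probability at least $9/10$.

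The second step is to run the local correction algorithm of \Cref{thm:uniquedeg1-smallerror} on input $\mathbf{a}$, but using $g$ in place of the given oracle: each of its $\bigO(\log n)$ queries to $g$ is simulated by invoking $\mathcal{A}^f$ on the corresponding point, which in turn makes $K^T$ queries to $f$, where $K = \bigO(1)$ and $T = \bigO(\log(\log(1/\eta)/\log(1/\delta_0))) = \bigO(\log\log\log n)$. Hence $K^T = \poly(\log\log n)$, and the total query complexity to $f$ is $\bigO(\log n)\cdot \poly(\log\log n) = \bigO(\log n\cdot \poly(\log\log n))$, as claimed. Conditioned on the high-probability event that $\delta(g,P)\le \eta < c/\log n$, the inner algorithm outputs $P(\mathbf{a})$ with probability at least $3/4$; a union bound gives overall success probability at least $9/10 \cdot 3/4 = 27/40$.

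To push the success probability from $27/40$ up to $3/4$ (or any desired constant), I would run the entire two-stage procedure a constant number of times independently and return the plurality vote of the outputs. Since the incorrect outputs form an arbitrary distribution over $G\setminus\{P(\mathbf{a})\}$, the standard Chernoff-style argument showing concentration of the number of correct runs above $1/2$ suffices; this costs only a constant multiplicative factor and therefore preserves the $\bigO(\log n\cdot \poly(\log\log n))$ bound on the query complexity.

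The only mildly subtle point is the already-standard move of ``freezing'' the randomness of $\mathcal{A}$ so that $g = \mathcal{A}^f$ is a well-defined function to which the sub-constant corrector can be applied; once this is made explicit the argument is purely a composition, with no further combinatorial obstacles. The substantive content of the theorem is entirely carried by \Cref{lem:error-reduction-small-constant-main} and \Cref{thm:uniquedeg1-smallerror}, so no new technical ingredients are needed here.
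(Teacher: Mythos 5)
Your proposal is correct and follows essentially the same composition as the paper's proof of \Cref{thm:uniquedecoding-const-error}: apply \Cref{lem:error-reduction-small-constant-main} with $\eta = \Theta(1/\log n)$ to obtain (after freezing the random tape) an oracle $g$ that is $\eta$-close to $P$ with high probability, then feed $g$ to the corrector of \Cref{thm:uniquedeg1-smallerror}, giving $\bigO(\log n)\cdot\poly(\log\log n)$ total queries. The only (cosmetic) difference is where the constant-factor amplification lives: the paper repeats the \emph{inner} corrector a constant number of times to drive its failure probability below $1/10$ and then applies a union bound $1/10 + 1/10 < 1/4$, whereas you run the entire two-stage procedure a constant number of times and take a plurality vote; both are standard and cost the same up to constants.
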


\begin{proof}
Given oracle access to a function $f$ that is $\delta$-close to a degree-$1$ polynomial $P$, \Cref{lem:error-reduction-small-constant-main} (with $\eta  = o(1/\log n)$) shows how to get access to a randomized oracle $\mathcal{A}^f$ that makes $\poly(\log\log n)$ queries to $f$ is $\eta$-close to $P$ except with small probability. We apply the local correction algorithm from \Cref{thm:uniquedeg1-smallerror} with oracle access to $\mathcal{A}^f$, repeating a constant number of times to reduce the error down to $1/10.$ The latter algorithm works for every choice of the internal randomness of $\mathcal{A}^f$ such that $\mathcal{A}^f$ and $P$ are $\eta$-close. This gives us an overall error probability of
\[
\Pr[\delta(\mathcal{A}^f,P) > \eta] + 1/10 \leq 1/10 + 1/10 < 1/4,
\]
as desired. The query complexity of this algorithm is the product of the query complexities of $\mathcal{A}^f$ and the algorithm from \Cref{thm:uniquedeg1-smallerror}.
\end{proof}

In the rest of this subsection, we will prove \Cref{lem:error-reduction-small-constant-main}. The algorithm $\mathcal{A}^{f}$ in \Cref{lem:error-reduction-small-constant-main} will be a recursive algorithm. Each recursive iteration of the algorithm $\mathcal{A}^{f}$ uses the same `base algorithm' $\mathcal{B}$, which will be the core of our error reduction algorithm from small constant error. In the next lemma, we formally state the properties of the base algorithm.\\ 

\begin{lemma}[Base Error Reduction Algorithm]\label{lem:error-reduction-subroutine}
Fix any Abelian group $G$ and a positive integer $d$. The following holds for $K = 2^{O(d)}$.  For any $0 < \gamma < 1$, there exists a randomized algorithm $\mathcal{B}$ with the following properties: Let $g: \Boo^{n} \to G$ be a function and let $P: \Boo^{n} \to G$ be a degree-$d$ polynomial such that $\delta(g,P) \leq \gamma$, and let $\mathcal{B}^{g}$ denotes that $\mathcal{B}$ has oracle access to $g$, then
\begin{align*}
    \mathbb{E}[\delta(\mathcal{B}^{g}, P)] < O(K^2)\cdot \gamma^{1.5}
\end{align*}
where the above expectation is over the internal randomness of $\mathcal{B}$. Further, for every $\mathbf{x} \in \Boo^{n}$, $\mathcal{A}^{g}$ makes $K$ queries to $g$.\\  
\end{lemma}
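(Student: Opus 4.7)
The plan is to implement $\mathcal{B}$ by restricting $g$ to a uniformly random low-dimensional subcube containing the query point and brute-force decoding on that subcube. On input $\mathbf{a}\in \Boo^n$, the algorithm samples a uniformly random hash $h:[n]\to [k]$ for a sufficiently large constant $k = \Theta(d)$ (so that $K := 2^k = 2^{O(d)}$ queries suffice), forms the subcube $\mathsf{C} := C_{\mathbf{a},h}$ from \Cref{defn:random-embedding} (which contains $\mathbf{a}$ at the origin $\mathbf{y}=0^k$), queries $g$ at all $K$ points of $\mathsf{C}$, brute-force searches for a $k$-variate degree-$d$ polynomial $Q$ minimizing $\delta(Q, g|_{\mathsf{C}})$, and outputs $Q(0^k)$.

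For correctness, by the distance bound $\delta(\mathcal{P}_d)\geq 2^{-d}$ on $\Boo^k$ from \Cref{thm:basic}, whenever $|T\cap \mathsf{C}|/2^k < 2^{-(d+1)}$, where $T:=\set{\mathbf{x}: g(\mathbf{x})\neq P(\mathbf{x})}$, the polynomial $P|_{\mathsf{C}}$ is the unique degree-$d$ polynomial within relative distance $2^{-(d+1)}$ of $g|_{\mathsf{C}}$; brute-force therefore recovers $Q=P|_{\mathsf{C}}$ and outputs $P(\mathbf{a})$. Thus the error event is contained in $\set{|T\cap C_{\mathbf{a},h}|\geq 2^{k-d-1}}$ and it suffices to prove
\[
\prob{\mathbf{a}, h}{|T\cap C_{\mathbf{a},h}| \geq 2^{k-d-1}} \leq O(K^2)\cdot \gamma^{1.5}.
\]

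To this end, write $Z := |T\cap C_{\mathbf{a},h}| = \sum_{\mathbf{y}\in \Boo^k} Z_\mathbf{y}$ with $Z_\mathbf{y} := \mathbbm{1}[x(\mathbf{y})\in T]$ and $\mu := |T|/2^n\leq \gamma$. Each $Z_\mathbf{y}$ is marginally Bernoulli$(\mu)$, so $\E[Z] = \mu \cdot 2^k$. By the pair-distribution observation preceding \Cref{lemma:sampling-subcube}, for $\mathbf{y}\neq \mathbf{y}'$ the pair $(x(\mathbf{y}),x(\mathbf{y}'))$ has the same law as $(\mathbf{z}, \mathcal{N}_\rho(\mathbf{z}))$ with $\rho = 1-2\delta(\mathbf{y},\mathbf{y}')$, and \Cref{thm:hypercontractivity} yields $\Pr[Z_\mathbf{y}=Z_{\mathbf{y}'}=1]\leq \mu^{2/(1+|\rho|)}$. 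I split the pair sum by a small constant threshold $\eta<1/3$: the near-balanced pairs ($|\rho|\leq \eta$) each contribute at most $\mu^{2/(1+\eta)}\leq \mu^{3/2}$, totalling $O(2^{2k})\mu^{1.5}$; the biased pairs ($|\rho|>\eta$) form at most an $\exp(-\Omega(\eta^2 k))$ fraction of all pairs by Chernoff, and each contributes the trivial covariance bound $\mu$, totalling $O(2^{2k})\exp(-\Omega(\eta^2 k))\mu$. Taking $k=\Theta(d)$ large enough (depending on $\eta$) to absorb the second term into the first in the non-trivial regime, $\var(Z) \leq O(2^{2k})\mu^{1.5}$. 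Chebyshev then gives, when $\gamma \leq 2^{-(d+2)}$ so that $2^{k-d-1}\geq 2\E[Z]$,
\[
\Pr\brac{Z\geq 2^{k-d-1}} \leq \frac{4\var(Z)}{2^{2(k-d-1)}} \leq O(K^2)\gamma^{1.5},
\]
while for $\gamma > 2^{-(d+2)}$ the claimed bound $O(K^2)\gamma^{1.5}$ is at least $\Omega(1)$ and holds trivially.

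The main obstacle is controlling the off-diagonal contribution to $\var(Z)$: hypercontractivity degrades to the trivial covariance bound $\mu$ precisely for pairs $\mathbf{y},\mathbf{y}'$ whose Hamming distance in $\Boo^k$ is close to $0$ or $k$ (so $|\rho|$ is close to $1$), and one must lean on Chernoff concentration in $\Boo^k$ to argue that their aggregate mass is small enough to be absorbed at the cost of a constant (depending on $d$) in $K$. In the most delicate small-$\gamma$ regime one can additionally invoke the modification flagged in the proof overview, namely restricting the brute-force to balanced query points $\mathbf{y}\in \Boo^k$ with $|\mathbf{y}|\approx k/2$, which suppresses the close-pair contributions entirely while still determining any degree-$d$ polynomial from its balanced evaluations via linear algebra.
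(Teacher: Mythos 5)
Your approach differs substantially from the paper's, and it has a genuine gap in the small-$\gamma$ regime, which is precisely where the lemma must hold for the recursive error-reduction in \Cref{lem:error-reduction-small-constant-main} to go through.

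The issue is your variance bound. After splitting into near-balanced and biased pairs you obtain (including the diagonal $\mathbf{y}=\mathbf{y}'$ terms, which also fall in the biased category)
\[
\var(Z)\leq O(2^{2k})\mu^{1.5} + O(2^{2k})\exp(-\Omega(\eta^2 k))\mu,
\]
and the second term is \emph{linear} in $\mu$. When $\gamma$ (and hence $\mu$) is very small, $\mu\gg\mu^{1.5}$, and no choice of constants in $k=O(d)$ and $\eta$ can make the linear term dominated by the $\mu^{1.5}$ term: pushing through Chebyshev this piece gives a bound of order $2^{O(d)}\exp(-\Omega(\eta^2 k))\gamma$, and $\gamma\leq O(K^2)\gamma^{1.5}$ fails once $\gamma< 1/\mathrm{poly}(K)$. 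This matters for the statement as written (``for any $0<\gamma<1$'') and for the application, where $\gamma=\delta_{t-1}=\delta^{(1.1)^{t-1}}$ becomes arbitrarily small over the recursion and one needs the exponent gain $\gamma\mapsto\gamma^{1.25}$ at every level. Your closing remark that restricting to balanced query points suppresses the close-pair contributions is not correct: two points of Hamming weight $k/2$ can be at distance $2$ (or any small even distance), and in fact the paper's gadget queries a Hamming ball of radius $d$ around a balanced center, so within a single gadget invocation all query pairs have $|\rho|\geq 1-4d/k$, i.e.\ close to $1$ for $k\gg d$.

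The paper's algorithm sidesteps the problem by not forming a single subcube and brute-force decoding on it. Instead it runs a fixed shift-invariant interpolation gadget (built from M\"{o}bius inversion on a ball of radius $d$ around a balanced center of $\{0,1\}^k$, giving $q=2^{O(d)}$ queries at $\mathbf{a}\oplus\mathbf{y}^{(i)}$ with the bits of each $\mathbf{y}^{(i)}$ i.i.d.\ Bernoulli and $1/10$-close to uniform), repeats this \emph{three times independently}, and outputs the plurality of the three interpolated values. The error event is two of the three repetitions failing, and the hypercontractivity estimate is applied only to \emph{cross-repetition} pairs $\big(\mathbf{a}\oplus\mathbf{y}^{(i)},\ \mathbf{a}\oplus\mathbf{z}^{(j)}\big)$. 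Because $\mathbf{y}^{(i)}$ and $\mathbf{z}^{(j)}$ come from independent gadget samples, $\mathbf{y}^{(i)}\oplus\mathbf{z}^{(j)}$ has bits with bias within $1/100$ of $1/2$, so \emph{every} relevant pair has $|\rho|\leq 1/100$ and contributes $\leq\gamma^{1.5}$; there are $O(q^2)=O(K^2)$ such pairs, yielding $O(K^2)\gamma^{1.5}$ uniformly over all $\gamma\in(0,1)$. Within a single repetition the gadget queries are all close to each other (so second-moment methods would give nothing useful there), but that is irrelevant since a single repetition fails iff any one query hits the error set, which is a plain union bound. If you want to salvage your subcube-plus-brute-force route, note that the paper itself uses that route only in Step~3 (Algorithm~\ref{algo:error-reduction} together with \Cref{lemma:sampling-subcube}), where the variance bound is the weaker $O(2^{2k}\sqrt{\log k/k})$ and the conclusion is correspondingly weaker than $\gamma^{1.5}$.
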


We defer the construction of the base algorithm and proof of \Cref{lem:error-reduction-subroutine} to the next subsection, \Cref{subsubsec:base-error-reduction}. For now, we assume \Cref{lem:error-reduction-subroutine} and proceed to describe the recursive construction of $\mathcal{A}^{f}$ and prove \Cref{lem:error-reduction-small-constant-main}.

\begin{proof}[Proof of \Cref{lem:error-reduction-small-constant-main}]
Let $\mathcal{B}$ be the algorithm given by \Cref{lem:error-reduction-subroutine}. We define a sequence of algorithms $\mathcal{A}_0^f, \mathcal{A}_1^f, \ldots,$ as follows. 

\begin{algobox}
The algorithm $\mathcal{A}^f_t$ computes a function mapping inputs in $\{0,1\}^n$ along with a uniformly random string from $\{0,1\}^{r_t}$ to a random group element in $G$ ($t$ denotes the number of recursive calls).\\

\noindent
\begin{itemize}
    \item $\mathcal{A}_0^f$ just computes the function $f.$ (In particular, $r_0 = 0$.)
    \item For each $t > 0,$ we inductively define $r_t = r_{t-1}+ r$, where $r$ is the amount of randomness required by the base error reduction algorithm $\mathcal{B}$.\newline
    On input $\mathbf{x}$ and random string $\sigma_t \sim U_{r_t}$, the algorithm $\mathcal{A}_t^f$ algorithm runs the algorithm $\mc{B}$ on $\mathbf{x}$ using the first $r$ bits of $\sigma_t$ as its source of randomness, and with oracle access to $\mathcal{A}_{t-1}^f$ using the remaining $r_{t-1}$ bits of $\sigma_t$ as randomness.\\
\end{itemize}

\noindent
The algorithm $\mathcal{A}^f$ will be $\mathcal{A}_T^f$ for $T = C\cdot \log\paren{ \dfrac{\log(1/\eta)}{\log(1/\delta)} } $ where $C$ is a large enough absolute constant chosen below. 
\end{algobox}

\textbf{Query complexity:} An easy inductive argument shows that $\mathcal{A}^f$ makes at most $K^T$ queries to $f.$

\textbf{Error probability:} We now analyze the error made by the above algorithms. We will argue inductively that for each $t\leq T$ and  $\delta_t := \delta^{(1.1)^t}$, we have 
\begin{equation}
    \label{eq:algoAindn}
    \Pr_{\sigma_t}[\, \underbrace{\delta(\mathcal{A}^{f}_t(\cdot, \sigma_t), P) > \delta_t \, }_{:= \, \mathcal{E}_t}] \; \leq \; \sum_{j=1}^t \frac{1}{100^j} \; < \; \frac{1}{10}.
\end{equation}
In the inductive proof, we will need that $\delta_0 = \delta <2^{-C_1\cdot d}$ for a large enough absolute constant $C_1.$

We now proceed with the induction. The base case ($t = 0$) is trivial as $\delta(\mathcal{A}^{f}_t, P) = \delta_0$ by definition.

Now assume that $t > 1$. We decompose the random string $\sigma_t$ into its first $r$ bits, denoted $\sigma$, and its last $r_{t-1}$ bits, denoted $\sigma_{t-1}.$ We bound the probability in \Cref{eq:algoAindn} as follows. (Note that the event $\mathcal{E}_{t-1}$ below only depends on $\sigma_{t-1}.$)
\begin{equation}
\label{eq:algoAindn-t-1}
\Pr_{\sigma_t}[\mathcal{E}_t] \, \leq \, \Pr_{\sigma_{t-1}}[\mathcal{E}_{t-1}] + \Pr_{\sigma_t}[\mathcal{E}_t\ |\ \neg\mathcal{E}_{t-1}] \, \leq \, \sum_{j=1}^{t-1} \frac{1}{100^j} + \Pr_{\sigma_t}[\mathcal{E}_t\ |\ \neg\mathcal{E}_{t-1}]
\end{equation}
where we used the induction hypothesis for the second inequality. To bound $\Pr_{\sigma_t}[\mathcal{E}_t\ |\ \neg\mathcal{E}_{t-1}]$, fix any choice of $\sigma_{t-1}$ so that $\neg\mathcal{E}_{t-1}$ holds, i.e. so that $\delta(\mathcal{A}^{f}_{t-1}, P) \leq \delta_{t-1}$. By the guarantee on $\mathcal{B}$, i.e. \Cref{lem:error-reduction-subroutine}, we know that
\[
\mathbb{E}_{\sigma}[\delta(\mathcal{A}^{f}_t(\cdot, \sigma_t), P)] < \bigO(K^2)\cdot \gamma^{1.5},
\]
where $\gamma = \delta(\mathcal{A}^f_{t-1}(\cdot, \sigma_{t-1}),P)$. Substituting it above, we get,
\begin{align*}
    \mathbb{E}_{\sigma}[\delta(\mathcal{A}^{f}_t(\cdot, \sigma_t), P)] \, \leq \, \bigO(K^2) \cdot \delta_{t-1}^{1.5} \, \leq \, \delta_{t-1}^{1.25}
\end{align*}
where for the final inequality, we use the fact that
\[
\bigO(K^2)\cdot \delta_{t-1}^{0.25} \leq \bigO(K^2)\cdot \delta_0^{0.25} \leq 1
\]
as long  as $\delta_0 = \delta \leq 2^{-C_1 d}$ for a large enough constant $C_1.$ Continuing the above computation, we see that by Markov's inequality
\[
\Pr_{\sigma}[\mathcal{E}_t] \, \leq \, \frac{\delta_{t-1}^{1.25}}{\delta_t} \; = \; \delta^{\Omega((1.1)^t)} \, \leq \, \frac{1}{100^t}
\]
where the final inequality holds for all $t$ as long as $\delta \leq 2^{-C_1 d}$ for a large enough constant $C_1.$ Since this inequality holds for any choice of $\sigma_{t-1}$ so that $\neg\mathcal{E}_{t-1}$ holds, we can plug this bound into \Cref{eq:algoAindn-t-1} to finish the inductive case of \Cref{eq:algoAindn}.

Setting $T = C\cdot \log\paren{ \dfrac{\log(1/\eta)}{\log(1/\delta)} } $ for a large enough constant $C$, we see that $\delta_T < \eta.$ In this case, \Cref{eq:algoAindn} implies the required bound on the error probability of $\mathcal{A}^f.$
\end{proof}
Thus we have shown so far that given the base algorithm $\mathcal{B}$, we do get an error reduction algorithm from small constant error to error $\bigO(1/\log n)$. Now it remains to describe the base error reduction algorithm.
In the next subsection, we describe the base algorithm $\mathcal{B}$ and prove \Cref{lem:error-reduction-subroutine}.

\subsubsection{The Base Algorithm and its Analysis}\label{subsubsec:base-error-reduction}
In this section, we prove \Cref{lem:error-reduction-subroutine}, which will then complete the proof of the error reduction algorithm from small constant to sub-constant error (see \Cref{lem:error-reduction-small-constant-main}). Before we describe $\mathcal{B}$, we will define an \textit{error reduction gadget}, which is a variant of the local correction gadget defined previously (\Cref{defn:decoding-gadget}). 

\begin{definition}[Error-reduction Gadget for $\mathcal{P}_d$]\label{defn:error-reduction-gadget}
 For $\rho\in (0,1)$, an $(\rho,q)$-error reduction gadget for $\mathcal{P}_d$ is a distribution $\mathcal{D}$ over $(\Boo^{n})^{q}$ satisfying the following two properties:
\begin{enumerate}
    \item There exists $c_{1}, \ldots, c_{q}\in \mathbb{Z}$ such that for any $(\mathbf{y}^{(1)}, \ldots, \mathbf{y}^{(q)}) \in \mathrm{supp}(\mathcal{D})$, the following holds true for each $P\in \mathcal{P}_d$ and each $\mathbf{a}\in \{0,1\}^n$
    \begin{equation}
    \label{eq:error-reduction-gadget}
    P(\mathbf{a}) = c_{1}P(\mathbf{a}\oplus \mathbf{y}^{(1)}) + \ldots + c_{q}P(\mathbf{a}\oplus\mathbf{y}^{(q)}).
\end{equation}

    \item For any $i \in [q]$, the bits of $\mathbf{y}^{(i)}$ are i.i.d. Bernoulli random variables that are $\rho$-close to uniform. Equivalently, each co-ordinate is $1$ with probability $p_i \in [\frac{1-\rho}{2}, \frac{1+\rho}{2}].$
\end{enumerate}
\end{definition}

To describe the base algorithm $\mathcal{B}$ and prove \Cref{lem:error-reduction-subroutine}, we need an error-reduction gadget for $\mathcal{P}_d$, the space of degree-$d$ polynomials over a group $G.$ The next lemma says that there exists an error-reduction gadget for $\mathcal{P}_{d}$, with small number of queries for constant $d$ and $\rho$.

\begin{lemma}[Constructing an error-reduction gadget for $\mathcal{P}_d$]
\label{lem:error-reduction-gadget}
    Fix any Abelian group $G$ and any $\rho >0.$ Then $\mathcal{P}_d$ has a $(\rho,q)$-error-reduction gadget where $q = 2^{O(d/\rho)}.$
\end{lemma}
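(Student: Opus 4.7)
The plan is to prove the lemma in two steps: first reduce to constructing a ``universal'' gadget on a small cube $\Boo^k$ with $k = O(d/\rho)$ via random hashing, and then construct the universal gadget by a combination of Lagrange interpolation on slice averages and an integrality argument.

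\textbf{Hashing reduction.} I will show it suffices to find integer coefficients $c_1, \ldots, c_q$ with $\sum_j c_j = 1$ and Boolean vectors $\mathbf{z}^{(1)}, \ldots, \mathbf{z}^{(q)} \in \Boo^k$, each $\rho$-balanced (meaning $|\mathbf{z}^{(j)}|/k \in [(1-\rho)/2, (1+\rho)/2]$), such that $\sum_j c_j \chi_{T'}(\mathbf{z}^{(j)}) = 0$ for every non-empty $T' \subseteq [k]$ with $|T'| \leq d$. From such a ``universal gadget'' I define the distribution $\mathcal{D}$ by sampling a uniformly random hash $h : [n] \to [k]$ and setting $\mathbf{y}^{(j)}_i := z^{(j)}_{h(i)}$. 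The bits of each $\mathbf{y}^{(j)}$ are then i.i.d.\ Bern$(|\mathbf{z}^{(j)}|/k)$---independence because the $h(i)$ are i.i.d.\ uniform on $[k]$, and $\rho$-close to $1/2$ by balancedness. For the algebraic condition, the identity $P(\mathbf{a}) = \sum_j c_j P(\mathbf{a} \oplus \mathbf{y}^{(j)})$ (for all $P \in \mathcal{P}_d$ and $\mathbf{a} \in \Boo^n$) reduces, after noting that $Q_\mathbf{a}(\mathbf{y}) := P(\mathbf{a} \oplus \mathbf{y})$ still lies in $\mathcal{P}_d$, to $\sum_j c_j \chi_T(\mathbf{y}^{(j)}) = [T = \emptyset]$ on all monomials of degree at most $d$; substituting $\chi_T(\mathbf{y}^{(j)}) = \chi_{h(T)}(\mathbf{z}^{(j)})$ (and noting $h(T) = \emptyset$ iff $T = \emptyset$), this follows from the universal gadget identity for every outcome of $h$.

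\textbf{Constructing the universal gadget.} The key structural observation is that for every $P \in \mathcal{P}_d(\Boo^k, G)$, the slice-average $\bar{R}(w) := \binom{k}{w}^{-1}\sum_{|\mathbf{z}|=w} P(\mathbf{z})$ is a polynomial in $w$ of degree at most $d$: verified on a single monomial $\chi_T$, where one computes $\bar R(w) = w(w-1)\cdots(w-|T|+1)/(k(k-1)\cdots(k-|T|+1))$, of degree exactly $|T| \leq d$ in $w$. Lagrange interpolation at $d+1$ distinct integer weights $w_1, \ldots, w_{d+1}$ lying in the balanced range $[(1-\rho)k/2, (1+\rho)k/2]$ (which fit whenever $k \geq 2d/\rho$) then yields
\[
P(\mathbf{0}) = \bar R(0) = \sum_{j=1}^{d+1} L_j(0)\,\bar R(w_j) = \sum_{j=1}^{d+1}\frac{L_j(0)}{\binom{k}{w_j}}\sum_{|\mathbf{z}|=w_j} P(\mathbf{z}),
\]
a rational-coefficient gadget supported on balanced vectors with coefficients summing to $1$.

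\textbf{The main obstacle---integrality.} The principal difficulty is converting the rational gadget above into an integer-coefficient one, as required by \Cref{defn:error-reduction-gadget}: since $G$ is only assumed to be Abelian, denominators cannot be inverted when multiplying group elements. My plan here is to break the $S_k$-symmetry of the slice-averages and use individual balanced vectors with integer weights, exploiting the large gap between the number of available balanced vectors (roughly $2^k$) and the number of monomial constraints ($\sum_{i=0}^d \binom{k}{i} \leq k^d$) for $k \gg d$. Concretely, one searches within the affine space of $\mathbb{Q}$-solutions for an integer lattice point by adding explicit integer ``kernel elements''---signed integer combinations of balanced $\mathbf{z}$'s that annihilate every monomial $\chi_{T'}$ of degree $\leq d$---to clear the Lagrange denominators while preserving $\sum_j c_j = 1$. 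For $k = O(d/\rho)$ large enough, sufficiently many such kernel elements exist: for example, any $(d+1)$ balanced vectors that agree outside a small block of coordinates furnish an integer annihilating relation, and together these generate a sub-lattice rich enough to cancel the denominators $\binom{k}{w_j}$ arising from the Lagrange formula. The final query complexity is bounded by the number of balanced vectors used, which is at most $2^k = 2^{O(d/\rho)}$, as claimed.
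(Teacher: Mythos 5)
Your hashing reduction — pushing forward a ``universal'' gadget on $\Boo^k$ via a uniformly random $h:[n]\to[k]$, and reducing the degree-$d$ identity to the monomial constraints $\sum_j c_j\chi_{T'}(\mathbf{z}^{(j)})=\mathbf{1}[T'=\emptyset]$ for $|T'|\le d$ — is sound and essentially identical to the paper's use of the random subcube $C_{\mathbf{a},h}$. The divergence is in how the universal gadget on $\Boo^k$ is built. The paper applies M\"obius inversion (\Cref{lem:mobius}) to a Hamming ball of radius $d$ centred at a point $\mathbf{c}$ of weight exactly $k/2$: since M\"obius inversion over $\{0,1\}^m$ intrinsically has $\pm 1$ integer coefficients, integrality comes for free, and the balancedness of the center gives the $\rho$-closeness. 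You instead interpolate the slice-average polynomial $\bar R(w)$ at $d+1$ balanced integer weights, which is a clean and valid derivation of a \emph{rational} gadget — but then you must repair integrality by hand.

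That repair step has a concrete gap. The claimed source of integer kernel elements — ``any $(d+1)$ balanced vectors that agree outside a small block $B$ furnish an integer annihilating relation'' — is false. If $\mathbf{z}^{(1)}|_B,\dots,\mathbf{z}^{(d+1)}|_B$ are distinct points of $\{0,1\}^B$, their evaluation vectors against the monomials $\{\chi_{T''}: T''\subseteq B,\ |T''|\le d\}$ are linearly independent: the only linear dependence among point-evaluation functionals on $\{0,1\}^B$ modulo degree-$\le d$ polynomials is the single functional $\sum_{\mathbf{z}\in\{0,1\}^B}(-1)^{|\mathbf{z}|}\,[\mathrm{eval~at~}\mathbf{z}]$, which is supported on all $2^{|B|}$ points, not on just $d+1$ of them. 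So such a subfamily gives no nontrivial integer kernel vector (unless two vectors coincide, which does not help clear the Lagrange denominators $\binom{k}{w_j}$). More fundamentally, even granting some supply of kernel elements, an integer linear system with a rational solution need not have an integer solution, and the proposal offers no argument (Smith normal form, explicit integral basis of the kernel lattice, etc.) that the rational Lagrange solution can be shifted into $\mathbb{Z}^q$. The M\"obius-on-a-balanced-ball construction in the paper avoids this issue entirely and is the content of the step you would need to supply.
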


Assuming the existence of error-reduction gagdet through the above lemma, we first finish the proof of \Cref{lem:error-reduction-subroutine}.  We prove \Cref{lem:error-reduction-gadget} subsequently.

The idea is as follows. In base algorithm, we use the error-reduction gadget to correct the polynomial at a \emph{random point} $\mathbf{a}\in \{0,1\}^n$. This process is likely to give the right answer except with probability $q\gamma$ since, after shifting, each query is now \emph{uniformly} distributed, and hence the chance that any of the queried points is an error point of $g$ is at most $\gamma$. We reduce the error by repeating this process three times and taking a majority vote. To analyze this algorithm, we need to understand the probability that two iterations of this process both evaluate $g$ at an error point. We do this using hypercontractivity (more specifically \Cref{thm:hypercontractivity}).

\begin{proof}[Proof of \Cref{lem:error-reduction-subroutine}]
    Let $\mathcal{D}$ be a $(1/10,q)$-error-reduction gadget as given by \Cref{lem:error-reduction-gadget}. The algorithm $\mathcal{B}$, given oracle access to $g:\{0,1\}^n\rightarrow G$ and $\mathbf{a}\in \{0,1\}^n$, does the following.

\begin{itemize}
    \item Repeat the following three times independently. Sample $(\mathbf{y}^{(1)}, \ldots, \mathbf{y}^{(q)})$ from $\mathcal{D}$ and compute 
    \[
    c_1 g(\mathbf{a} \oplus \mathbf{y}^{(1)}) +  \cdots + c_qg(\mathbf{a} \oplus \mathbf{y}^{(q)})
    \]
    where $c_1,\ldots, c_q$ are the coefficients corresponding to the error-reduction gadget.
    \item Output the plurality among the three group elements $b_1,b_2,b_3$ computed above.
\end{itemize}

The number of queries made by the algorithm is $K = O(q) = 2^{O(d)}$ as claimed. So it only remains to analyze $\delta(\mathcal{B}^g,P)$. From now on, let $\mathbf{a}$ be a uniformly random input in $\{0,1\}^n.$

For $i\in \{1,2,3\},$ let $\mathcal{E}_i$ denote the event that $b_i \neq P(\mathbf{a}).$ We have
\[
\E[\delta(\mathcal{B}^g,P)] \; = \; \Pr[\mathcal{B}^g(\mathbf{a})\neq g(\mathbf{a})] \; \leq \; \Pr[\mathcal{E}_1 \wedge \mathcal{E}_2] \, + \, \Pr[\mathcal{E}_2 \wedge \mathcal{E}_3] \, + \, \Pr[\mathcal{E}_1 \wedge \mathcal{E}_3],
\]
where the above probability is over the randomness of $\mathcal{B}$. It therefore suffices to show that each of the three terms in the final expression above is at most $O(q^2)\cdot \gamma^{1.5}.$

Without loss of generality, consider the event $\mathcal{E}_1\wedge \mathcal{E}_2$. Let $(\mathbf{y}^{(1)}, \ldots, \mathbf{y}^{(q)})$ and $(\mathbf{z}^{(1)}, \ldots, \mathbf{z}^{(q)})$ be the two independent samples from $\mathcal{D}$ in the two corresponding iterations.

Let $T$ denote the set of points where $g$ and $P$ disagree. It follows from \Cref{eq:error-reduction-gadget} that the algorithm correctly computes $P(\mathbf{a})$ in the first iteration as long as none of the queried points lie in the set $T$. A similar statement also holds for the second iteration. This reasoning implies that
\begin{equation}
\label{eq:E1E2}
\Pr[\mathcal{E}_1 \wedge \mathcal{E}_2]\leq \sum_{i,j = 1}^q \Pr[\underbrace{\mathbf{a} \oplus \mathbf{y}^{(i)}}_{\mathbf{u}^{(i)}}\in T \; \wedge \; \underbrace{\mathbf{a}\oplus \mathbf{z}^{(j)}}_{\mathbf{v}^{(j)}}\in T].
\end{equation}

We bound each term in the above sum using hypercontractivity, \Cref{thm:hypercontractivity}. 

Fix $i,j\in [q]$. Note that for every fixing of $\mathbf{y}^{(i)}$, the vector $\mathbf{u}^{(i)}$ is distributed uniformly over $\{0,1\}^n$ (because $\mathbf{a}$ is uniform over $\{0,1\}^n$). In particular, this implies the following:
\begin{itemize}
    \item The random variable $\mathbf{u}^{(i)}$ is uniformly distributed.
    \item The random variables $\mathbf{u}^{(i)}$ and $\mathbf{y}^{(i)}$ are independent.
\end{itemize}
This means that $\mathbf{v}^{(j)}$ which is equal to $(\mathbf{u}^{(i)} \,\oplus \, \mathbf{y}^{(i)}) \, \oplus \, \mathbf{z}^{(j)}$ is drawn from the noise distribution $\mathcal{N}_{\rho}(\mathbf{u}^{(i)})$. Further, the parameter $\rho \leq 1/100$ since the co-ordinates of $\mathbf{y}^{(i)}$ and $\mathbf{z}^{(j)}$ are i.i.d. Bernoulli random variables that are each $1/10$-close to uniform.

Using \Cref{thm:hypercontractivity}, we have
\[
\prob{}{\mathbf{u}^{(i)}\in T \wedge \mathbf{v}^{(j)}\in T} \leq \gamma^{2/1+|\rho|}\leq \gamma^{1.5}.
\]
Plugging this into \Cref{eq:E1E2} implies that $\Pr[\mathcal{E}_{1} \wedge \mathcal{E}_{2}] \leq \bigO(q^{2}) \cdot \gamma^{1.5}$ (union bound over all pairs $(i,j) \in [q] \times [q]$). Therefore, $\Pr[\mathcal{B}^{g}(\mathbf{a}) \neq g(\mathbf{a})] \leq \bigO(q^{2}) \cdot \gamma^{1.5}$ and this concludes the analysis of $\mathcal{B}.$
\end{proof}

So far we have shown that if we have an error-reduction gadget, then we can use it to construct a base algorithm $\mathcal{B}$, which in turn can be used recursively to construct an error-reduction algorithm for small constant error to sub-constant error. We now show how to construct the error-reduction gadget and prove \Cref{lem:error-reduction-gadget}. This requires the following standard claim (implied e.g. by M\"{o}bius inversion) that shows that any degree-$d$ polynomial over $\{0,1\}^n$ (even with group coefficients) can be interpolated from its values on a Hamming ball of radius $d.$ For completeness, we give a short proof.

\begin{lemma}\label{lem:mobius}
Fix $d \in \mathbb{N}$. For any natural number $m \geq d$ and any Hamming ball $B$ of radius $d$,
\[
P(0^m) = \sum_{\mathbf{b}\in B} \alpha_{\mathbf{b}} P(\mathbf{b})
\]
where the $\alpha_{\mathbf{b}}$ are integer coefficients.
\end{lemma}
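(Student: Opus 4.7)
The plan is to reduce the general statement to the special case where $B$ is the Hamming ball of radius $d$ centered at $0^m$ itself, in which case the claim follows directly from the M\"obius inversion statement in \Cref{thm:basic}.

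First I would let $\mathbf{c} \in \Boo^m$ denote the center of $B$, so that $B = \{\mathbf{c} \oplus \mathbf{v} : \mathbf{v} \in \Boo^m,\ |\mathbf{v}| \le d\}$, and consider the shifted polynomial $Q(\mathbf{x}) := P(\mathbf{c} \oplus \mathbf{x})$. The key observation is that $Q$ is again a multilinear polynomial of degree at most $d$: each variable $x_i$ is substituted by $c_i \oplus x_i$, which equals $x_i$ when $c_i = 0$ and $1 - x_i$ when $c_i = 1$; this affine substitution preserves both multilinearity and the degree. Since $P(0^m) = Q(\mathbf{c})$, it suffices to express $Q(\mathbf{c})$ as an integer combination of the values $\{Q(1_J) : |J| \le d\}$, as these are precisely the values of $P$ on $B$.

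Next I would apply the M\"obius inversion formula from \Cref{thm:basic} to $Q$. Since $\deg Q \le d$, the coefficients $c_I$ in the expansion $Q(\mathbf{x}) = \sum_{I} c_I \prod_{i \in I} x_i$ vanish for $|I| > d$, and each remaining coefficient satisfies $c_I = \sum_{J \subseteq I} (-1)^{|I \setminus J|} Q(1_J)$, which is an integer combination of values $Q(1_J)$ with $|J| \le |I| \le d$. Evaluating the multilinear expansion at $\mathbf{c} \in \Boo^m$ then yields $Q(\mathbf{c}) = \sum_{I \subseteq \mathrm{supp}(\mathbf{c}),\ |I| \le d} c_I$, which is itself an integer combination of the $c_I$'s because the products $\prod_{i \in I} c_i$ are $\{0,1\}$-valued.

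Chaining the two integer expansions gives $P(0^m) = Q(\mathbf{c})$ as an integer combination of the values $Q(1_J) = P(\mathbf{c} \oplus 1_J)$ over $|J| \le d$, and the arguments $\mathbf{c} \oplus 1_J$ are exactly the elements of $B$, proving the identity. I do not anticipate any real obstacle: the only point that needs explicit justification is that XOR-shifting by $\mathbf{c}$ preserves multilinearity and degree, which is immediate from the substitution $x_i \mapsto c_i + x_i - 2 c_i x_i$. Everything else is a bookkeeping chase through M\"obius inversion.
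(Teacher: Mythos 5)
Your proof is correct and takes essentially the same route as the paper: reduce to the ball around $0^m$ by shifting the polynomial by the center $\mathbf{c}$ (noting this preserves multilinearity and degree), then invoke M\"{o}bius inversion to express the target value as an integer combination of the values on the ball. The only cosmetic difference is that the paper first derives the universal identity $P(\mathbf{x}) = \sum_{|\mathbf{b}|\le d}\alpha'_{\mathbf{b},\mathbf{x}}P(\mathbf{b})$ and then shifts and specializes $\mathbf{x}=\mathbf{c}$, whereas you shift first and then apply M\"{o}bius inversion directly at $\mathbf{c}$.
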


\begin{proof}
    Assume that 
    \[
    P(\mathbf{x}) = \sum_{I\subseteq [n]: |I|\leq d} c_I \prod_{i\in I} x_i.
    \]
    By M\"{o}bius inversion (see item 1 of \Cref{thm:basic}), we know that
    \[
    c_I = \sum_{J\subseteq I} (-1)^{|I\setminus J|} P(1_J)
    \]
    where $1_J\in \{0,1\}^m$ denotes the indicator vector of set $J.$ Putting the above equalities together gives us 
    \[
    P(\mathbf{x}) = \sum_{|\mathbf{b}|\leq d} \alpha_{\mathbf{b},\mathbf{x}}' P(\mathbf{b})
    \]
    for suitable integer coefficients $\alpha'_{\mathbf{b},\mathbf{x}}.$ 
    
    Now, assume $B$ is the Hamming ball of radius $d$ around the point $\mathbf{c}\in \{0,1\}^m$. Replacing $\mathbf{x}$ by $\mathbf{x}\oplus \mathbf{c}$ in $P$ does not increase the degree of the polynomial (since this only involves negating a subset of the variables). Applying this substitution above yields
    \[
    P(\mathbf{x}\oplus \mathbf{c}) = \sum_{|\mathbf{b}|\leq d} \alpha_{\mathbf{b},\mathbf{x}}' P(\mathbf{b}\oplus \mathbf{c}) = \sum_{\mathbf{b}\in B}\alpha_{\mathbf{b},\mathbf{x}} P(\mathbf{b}).
    \]
    Setting $\mathbf{x} = \mathbf{c}$ yields the statement of the lemma.
\end{proof}

We end this section by completing the proof of \Cref{lem:error-reduction-gadget}.

\begin{proof}[Proof of \Cref{lem:error-reduction-gadget}]
    The idea is to apply \Cref{lem:mobius} on a random subcube, as defined in \Cref{defn:random-embedding}.

    More precisely, for an even integer $k > 2d$ that we will fix below, let $\mathbf{a}\in \{0,1\}^n$ be arbitrary and let $h:[n]\rightarrow [k]$ be chosen uniformly at random. Let $C = C_{\mathbf{a},h}$ be the corresponding subcube of $\{0,1\}^n$. Let $Q(y_1,\ldots,y_k)$ denote $P|_C$, the restriction of $P$ to this subcube.

    Fix a Hamming ball $B$ of radius $d$ in $\{0,1\}^k$ centred at a point $\mathbf{c}$ of weight exactly $k/2.$

    Since $Q$ is a polynomial of degree at most $d$, applying \Cref{lem:mobius} to $Q$ and the ball $B$ yields an equality
    \[
    Q(0^k) = \sum_{\mathbf{b}\in \mathbf{B}} \alpha_{\mathbf{b}} Q(\mathbf{b}).
    \]
    Since $Q$ is a restriction of $P$, the above equality can be rephrased in terms of $P$ as
    \[
    P(x(0^k)) = \sum_{\mathbf{b}\in \mathbf{B}} \alpha_{\mathbf{b}} P(x(\mathbf{b})).
    \]
    From the definition of the cube $C$, it follows that $x(0^k) = \mathbf{a}$ and thus the above gives us an equality of the type desired in an error-reduction gadget (\Cref{eq:error-reduction-gadget}). To finish the proof, we only need to argue that each $x(\mathbf{b})$ has the required distribution.

    Note that for each $\mathbf{b}\in B$, we have
    \[
    x(\mathbf{b}) =  \mathbf{a} \oplus \mathbf{b}_h
    \]
    where $\mathbf{b}_h$ is the random vector in $\{0,1\}^n$ that at co-ordinate $i$ takes the random value $b_{h(i)}$. Since $h$
    is chosen uniformly at random, it follows that the entries of $\mathbf{b}_h$ are independent and the $i$th co-ordinate is a Bernoulli random variable that takes the value $1$ with probability equal to the relative Hamming weight of $\mathbf{b}.$ 

    To conclude the argument, note that $\mathbf{b}$ is at Hamming distance at most $d$ from $\mathbf{c}$, implying that it has relative Hamming weight in the range
    \[
    \left[\frac{1}{2}-\frac{2d}{k},\frac{1}{2}+\frac{2d}{k}\right].
    \]
    Setting $k$ larger than $4d/\rho$ gives us the desired value for the parameter of the Bernoulli distribution.
    
    Finally, the number of queries $q$ made by the error-reduction gadget is dictated by the size of a Hamming ball in $k = O(d/\rho)$ dimensions. Since this is at most $2^k$, it follows that we have a $(\rho,2^{O(d/\rho)})$-error-reduction gadget.
\end{proof}

Summarising the proof of \Cref{lem:error-reduction-small-constant-main} - We first show that given a base algorithm $\mathcal{B}$ (see \Cref{lem:error-reduction-subroutine}), we can use it recursively to construct an error reduction algorithm (see the algorithm in the proof of \Cref{lem:error-reduction-small-constant-main}). Then we show that using an error-reduction gadget, we can design a base algorithm $\mathcal{B}$, where we use hypercontractivity to bound the error of the base algorithm. Finally, we use M\"{o}bius inversion (see \Cref{lem:mobius}) to construct an error-reduction gadget.

\subsection{Error Close to Half the Minimum Distance (Proof of \Cref{thm:uniquedeg1})}\label{subsec:error-reduction-close-radius}
In this subsection, we explain the third step towards proving \Cref{thm:uniquedeg1}. We will show that there is a randomized algorithm $\mathcal{A}^{f}$ that given oracle access to any function $f$ that is $\delta$-close to a low-degree polynomial $P$ (think of $\delta$ to be very close to half the minimum distance, i.e. $1/2^{d+1} - \varepsilon$ for degree $d$ polynomials), has the following property: with high probability over the internal randomness of $\mathcal{A}$, $\mathcal{A}^{f}$ is $\eta$-close to $P$, where $\eta < \delta$. We state it formally below.\\

\begin{lemma}\label{lem:error-reduction-main}
Fix any Abelian group $G$ and a positive integer $d$. For any $\eta, \delta$, where $\eta < \delta$ and $\delta < 1/2^{d+1} - \varepsilon$ for $\varepsilon > 0$, there exists a randomized algorithm $\mathcal{A}$ with the following properties:\newline
Let $f: \Boo^{n} \to G$ be a function and let $P: \Boo^{n} \to G$ be a degree $d$ polynomial such that $\delta(f,P) \leq \delta$, and let $\mathcal{A}^{f}$ denotes that $\mathcal{A}$ has oracle access to $f$,  then
\begin{align*}
    \Pr[\delta(\mathcal{A}^{f}, P) > \eta] < 1/10,
\end{align*}
where the above probability is over the internal randomness of $\mathcal{A}$, and for every $\mathbf{x} \in \Boo^{n}$, $\mathcal{A}^{f}$ makes $2^{k}$ queries to $f$, where $k = \poly(\frac{1}{\varepsilon},\frac{1}{\eta})$.
\end{lemma}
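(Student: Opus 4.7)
The plan is to adapt the Beaver--Feigenbaum/Lipton random-restriction strategy to the Boolean-cube setting, using a random \emph{subcube} through $\mathbf{a}$ in place of a random affine subspace. On input $\mathbf{a}\in\{0,1\}^n$, the algorithm $\mathcal{A}^f$ samples a uniformly random hash $h\colon[n]\to[k]$ for $k=\poly(1/\varepsilon,1/\eta)$ to be fixed below, forms the $k$-dimensional subcube $\mathsf{C}:=C_{\mathbf{a},h}$ as in \Cref{defn:random-embedding}, reads off $f|_\mathsf{C}$ by querying $f$ at all $2^k$ points of $\mathsf{C}$, brute-force searches over the finite list of $k$-variate polynomials $Q\in\mathcal{P}_d(\{0,1\}^k,G)$ for a (necessarily unique, if it exists) one with $\delta(Q,f|_\mathsf{C}) < 2^{-(d+1)}$, and returns $Q(0^k)$. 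In the good event $Q=P|_\mathsf{C}$, the output is $P|_\mathsf{C}(0^k)=P(\mathbf{a})$ by construction of the embedding.

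The correctness analysis reduces to showing that on most $(\mathbf{a},h)$, the restriction $f|_\mathsf{C}$ lies strictly within the unique-decoding ball of $P|_\mathsf{C}$. I would apply \Cref{lemma:sampling-subcube} to the error set $T:=\{\mathbf{x}\in\{0,1\}^n\colon f(\mathbf{x})\neq P(\mathbf{x})\}$ of density $\mu\leq\delta$, with sampling parameters $\varepsilon':=\varepsilon/2$ and $\eta':=\eta/20$, yielding
\[
    \Pr_{\mathbf{a},h}\!\left[\,\delta(f|_\mathsf{C},P|_\mathsf{C}) > \delta+\tfrac{\varepsilon}{2}\,\right] \;\leq\; \frac{\eta}{20}
\]
for $k$ at least the polynomial in $1/\varepsilon,1/\eta$ dictated by that lemma. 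On the good event, $\delta(f|_\mathsf{C},P|_\mathsf{C})\leq\delta+\varepsilon/2 < 2^{-(d+1)}-\varepsilon/2$, which is strictly below the unique-decoding radius of $\mathcal{P}_d(\{0,1\}^k,G)$: the DeMillo--Lipton--Schwartz--Zippel bound of \Cref{thm:basic} gives $\delta(Q,P|_\mathsf{C})\geq 2^{-d}$ for every $Q\neq P|_\mathsf{C}$, so the triangle inequality forces $\delta(Q,f|_\mathsf{C})>2^{-(d+1)}$. Hence the brute force returns exactly $P|_\mathsf{C}$ and $\mathcal{A}^f(\mathbf{a})=P(\mathbf{a})$.

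To recast this as the form stated in the lemma, let $\mathrm{err}(h):=\delta(\mathcal{A}^f(\cdot;h),P)$ denote the fraction of inputs on which the deterministic function computed for a fixed $h$ disagrees with $P$. The display above gives $\E_h[\mathrm{err}(h)]\leq \eta/20$, so Markov's inequality yields $\Pr_h[\mathrm{err}(h)>\eta]\leq 1/20<1/10$, matching the required form. The query complexity is $2^k=2^{\poly(1/\varepsilon,1/\eta)}$, as claimed.

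The main difficulty is already packaged into \Cref{lemma:sampling-subcube}: points of $\mathsf{C}$ are not pairwise independent but merely noisy copies of one another in the sense of \Cref{def:noisedistribution}, so a naive second-moment argument based on pairwise independence does not suffice; the variance estimate in the proof of that lemma instead uses hypercontractivity (\Cref{thm:hypercontractivity}) to treat typical pairs as approximately uncorrelated. Treating \Cref{lemma:sampling-subcube} as a black box, the argument collapses to a direct random-restriction-plus-brute-force scheme, with the hypothesis $\delta<2^{-(d+1)}-\varepsilon$ providing exactly the slack needed to stay strictly inside the unique-decoding radius after restriction.
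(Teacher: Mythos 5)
Your proposal is correct and matches the paper's proof essentially step for step: random subcube $C_{\mathbf{a},h}$ through $\mathbf{a}$, unique decoding of $f|_{\mathsf{C}}$ within radius $2^{-(d+1)}$, correctness conditioned on $\delta(f|_{\mathsf{C}},P|_{\mathsf{C}})<2^{-(d+1)}$ via the sampling lemma (\Cref{lemma:sampling-subcube}), and then Markov over the choice of $h$ to pass from $\E_h[\mathrm{err}(h)]\le\eta/20$ to $\Pr_h[\mathrm{err}(h)>\eta]\le 1/20<1/10$; the paper's \Cref{lemma:error-reduction-main} plays the role of your sampling-lemma invocation, with slightly different constants. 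One small imprecision: you describe the decoding step as ``brute-force searches over the finite list of $k$-variate polynomials $Q\in\mathcal{P}_d(\{0,1\}^k,G)$,'' but that set is infinite when $G$ is infinite (e.g.\ $G=\mathbb{R}$); the paper instead invokes the Reed-style majority-logic decoder (\Cref{thm:non-local-unique}), which reconstructs the unique close polynomial coefficient-by-coefficient via M\"obius inversion and works over any Abelian group. Since the lemma only bounds query complexity (still $2^k$) and your existence/uniqueness argument via DeMillo--Lipton--Schwartz--Zippel plus triangle inequality is correct, this is a cosmetic slip rather than a gap in the argument.
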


Putting this together with the unique correction algorithm for constant error (\Cref{thm:uniquedecoding-const-error}), we immediately get \Cref{thm:uniquedeg1}. Since the details are almost identical to the proof of \Cref{thm:uniquedecoding-const-error}, we omit the proof.

Now we state the algorithm $\mathcal{A}^{f}$.

\begin{algobox}
\begin{algorithm}[H]

\DontPrintSemicolon

\KwIn{$f$ and $\mathbf{a} \in \Boo^{n}$}

Choose $k = 1/(\varepsilon^{5} \eta^3)$ \;
Sample a uniformly random $h: [n] \to [k]$ \tcp*{$h$ is the internal randomness of $\mathcal{A}^{f}$}
Construct the cube $\mathsf{C} := C_{\mathbf{a},h}$ according to \Cref{defn:random-embedding} \;
Let $\Tilde{f} := f|_{\mathsf{C}}$ \tcp*{$f|_{\mathsf{C}}$ is the restriction of $f$ to the subcube $\mathsf{C}$} 
Query $\tilde{f}$ on all inputs in $\{0,1\}^k$ and use the algoritm from \Cref{thm:non-local-unique} to find the polynomial $\Tilde{P}$ on $\mathsf{C}$ such that $\delta(\Tilde{f}, \Tilde{P}) < 1/2^{d+1}$ \tcp*{$2^{k}$ queries to $f$}
\If{such a polynomial $\Tilde{P}$ is found}{
\Return{$\Tilde{P}(0^{k})$} \tcp*{$x(0^{k}) = \mathbf{a}$}
}
\Else{
\Return{0} \tcp*{An arbitrary value}
}

\caption{Error Reduction Algorithm $\mathcal{A}^{f}$}
\label{algo:error-reduction}
\end{algorithm}
\end{algobox}

We now analyze \Cref{algo:error-reduction} and prove \Cref{lem:error-reduction-main}.

\begin{proof}[Proof of \Cref{lem:error-reduction-main}]
Let $P$ be the degree $d$ polynomial such that $\delta(f,P) \leq 1/2^{d+1}$. The degree of $P$ is at most $d$ when $P$ is restricted to $\mathsf{C} = C_{\mathbf{a},h}$. If $\delta(P|_{\mathsf{C}}, \Tilde{f}) < 1/2^{d+1}$, then $\Tilde{P} = P|_{\mathsf{C}}$. In particular, $\Tilde{P}(x(0^{k})) = P(\mathbf{a})$, i.e. the output of the algorithm is correct.

Equivalently, $\mathcal{A}^{f}(\mathbf{a}) = P(\mathbf{a})$ unless $\delta(P|_{\mathsf{C}}, \Tilde{f}) \geq 1/2^{d+1}$. In the next lemma, we will show that with high probability over random $\mathbf{a}$ and $h$, $\delta((P|_{\mathsf{C}}, \Tilde{f}) < 1/2^{d+1}.$
\begin{lemma}\label{lemma:error-reduction-main}
Sample $\mathbf{a}$ and $h$ uniformly at random, and let $\mathsf{C} = C_{\mathbf{a}, h}$ be the subcube of dimension $k$ as described in \Cref{defn:random-embedding}. Then,
\begin{align*}
    \Pr_{\mathbf{a},h}[\delta(P|_{\mathsf{C}}, \Tilde{f}) \geq 1/2^{d+1}] < \eta/10
\end{align*}
\end{lemma}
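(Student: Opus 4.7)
The plan is to apply the sampling lemma for random subcubes (\Cref{lemma:sampling-subcube}) directly, with $T$ taken to be the \emph{error set} of $f$ with respect to $P$, namely $T = \{\mathbf{x} \in \{0,1\}^n : f(\mathbf{x}) \neq P(\mathbf{x})\}$. By hypothesis, $\mu := |T|/2^n = \delta(f,P) \leq \delta < 1/2^{d+1} - \varepsilon$. The key observation is that since $\tilde{f} = f|_{\mathsf{C}}$ and $P|_{\mathsf{C}}$ agree exactly on those points of $\mathsf{C}$ not in $T$, we have the identity
\[
\delta(P|_{\mathsf{C}}, \tilde{f}) \;=\; \frac{|T \cap \mathsf{C}|}{2^k}.
\]
Thus the event we wish to rule out, $\delta(P|_{\mathsf{C}}, \tilde{f}) \geq 1/2^{d+1}$, is exactly the event $|T \cap \mathsf{C}|/2^k \geq 1/2^{d+1} \geq \mu + \varepsilon$.

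I would then invoke \Cref{lemma:sampling-subcube} with parameters $\varepsilon' = \varepsilon/2$ and $\eta' = \eta/10$. The lemma requires $k \geq (A/(\varepsilon')^4 (\eta')^2) \log(1/(\varepsilon'\eta'))$ for some absolute constant $A$, which is comfortably satisfied by the choice $k = 1/(\varepsilon^5 \eta^3)$ in \Cref{algo:error-reduction} (for small enough $\varepsilon,\eta$; if needed we can absorb constant factors into $k$). Under this choice, with probability at least $1 - \eta/10$ over $\mathbf{a}$ and $h$,
\[
\left| \frac{|T \cap \mathsf{C}|}{2^k} - \mu \right| \;<\; \frac{\varepsilon}{2}.
\]
On this event, $|T \cap \mathsf{C}|/2^k < \mu + \varepsilon/2 \leq (1/2^{d+1} - \varepsilon) + \varepsilon/2 = 1/2^{d+1} - \varepsilon/2 < 1/2^{d+1}$, which is precisely the desired conclusion.

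There is essentially no hard step here once the sampling lemma is in hand: the lemma's proof already did the real work (second moment method combined with hypercontractivity to handle the fact that the points of $\mathsf{C}$ are only approximately pairwise independent). The only things to verify carefully are (i) the translation from $\delta(P|_{\mathsf{C}}, \tilde{f})$ to density of $T$ inside $\mathsf{C}$, which is immediate, and (ii) that the parameter choice $k = 1/(\varepsilon^5 \eta^3)$ in the algorithm dominates the bound $A \cdot \varepsilon^{-4}\eta^{-2} \log(1/(\varepsilon\eta))$ required by \Cref{lemma:sampling-subcube}. Once this lemma is established, it plugs back into the analysis of \Cref{algo:error-reduction} to conclude that the algorithm outputs $P(\mathbf{a})$ whenever the restricted received word is closer than the unique decoding radius to $P|_{\mathsf{C}}$, which by the guarantee of \Cref{thm:non-local-unique} (the non-local unique decoding algorithm) recovers $P|_{\mathsf{C}}$ and hence $P(\mathbf{a}) = \tilde{P}(0^k)$ correctly.
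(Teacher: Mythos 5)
Your proof is correct and takes exactly the same route as the paper's: both identify $T$ (the paper calls it $E$) as the disagreement set, note $|T|/2^n \leq 1/2^{d+1}-\varepsilon$, and apply the subcube sampling lemma (\Cref{lemma:sampling-subcube}) directly, using that the chosen $k = 1/(\varepsilon^5\eta^3)$ dominates the required bound for small enough $\varepsilon,\eta$. The paper's write-up is more terse (it simply cites the sampling lemma and asserts the conclusion), whereas you have spelled out the translation $\delta(P|_{\mathsf{C}},\tilde f)=|T\cap\mathsf{C}|/2^k$ and the explicit parameter choices $\varepsilon'=\varepsilon/2$, $\eta'=\eta/10$, but the underlying argument is identical.
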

We prove \Cref{lemma:error-reduction-main} below. For now, let us assume \Cref{lemma:error-reduction-main} and finish the proof of \Cref{lem:error-reduction-main}. We have,
\begin{gather*}
    \Pr_{\mathbf{a},h}[\delta(P|_{\mathsf{C}}, \Tilde{f}) \geq 1/2^{d+1}] <  \eta/10 \\
    \Rightarrow \mathbb{E}_{h} \, \Pr_{\mathbf{a}}[\delta(P|_{\mathsf{C}}, \Tilde{f}) \geq 1/2^{d+1}] <  \eta/10
\end{gather*}
Note that if we fix $h$, i.e. the internal randomness of $\mathcal{A}^{f}$, then $\delta(\mathcal{A}^{f}, f)$ is at most $\Pr_{\mathbf{a}}[\delta(P|_{\mathsf{C}}, \Tilde{f}) \geq  1/2^{d+1}]$, as the algorithm always outputs $P(\mathbf{a})$ correctly when $\delta(P|_{\mathsf{C}}, \Tilde{f}) < 1/2^{d+1}$ . Then from the above inequality, we have,
\begin{align*}
    \mathbb{E}_{h} \, [\delta(\mathcal{A}^{f}, f)] <  \eta/10 \\
    \Rightarrow \Pr_{h}[\delta(\mathcal{A}^{f}, f) > \eta] \leq 1/10 \tag*{(Markov's Inequality)}
\end{align*}
As commented in \Cref{algo:error-reduction}, for each $\mathbf{a} \in \Boo^{n}$, $\mathcal{A}^{f}$ makes $2^{k}$ queries to $f$.
\end{proof}
Now we give the proof of \Cref{lemma:error-reduction-main}.
\begin{proof}[Proof of \Cref{lemma:error-reduction-main}]
Let $E$ denote the subset of points in $\Boo^{n}$ where $P$ and $f$ disagree, i.e. $E := \setcond{\mathbf{x} \in \Boo^{k}}{f(\mathbf{x}) \neq P(\mathbf{x})}$. We know that $|E|/2^{n} \leq 1/2^{d+1} - \varepsilon$. Applying \Cref{lemma:sampling-subcube}, we get that for $k = \frac{1}{\varepsilon^5\eta^3}$ (we assume without loss of generality that $\varepsilon,\eta$ are small enough for $k$ to satisfy the hypothesis of \Cref{lemma:sampling-subcube})
\begin{align*}
    \Pr\Pr_{\mathbf{a},h}[\delta(P|_{\mathsf{C}}, \Tilde{f}) \geq 1/2^{d+1}] < \eta/10,
\end{align*}
and this completes the proof of \Cref{lemma:error-reduction-main}.
\end{proof}

\section{Combinatorial Bound for List Decoding Linear Polynomials}
In this section, we are going to prove \Cref{thm:comblistdeg1}. Let $G$ be any Abelian group and let $f: \Boo^{n} \to G$ be a polynomial such that $f$ is $(1/2-\varepsilon)$-close to $\mathcal{P}_{1}$. For small enough $\varepsilon$, $(1/2 - \varepsilon)$ is strictly more than the unique decoding radius of $\mathcal{P}_{1}$, which means that there can be several  polynomials in $\mathcal{P}_{1}$ that are $(1/2-\varepsilon)$-close to $f$. We denote the set of these polynomials by $\mathsf{List}_{\varepsilon}^{f}$, defined as follows:
\begin{align*}
    \mathsf{List}_{\varepsilon}^{f} := \setcond{P(\mathbf{x}) \in \mathcal{P}_{1}}{\delta(f,P) \leq 1/2 - \varepsilon}
\end{align*}

Let $L(\varepsilon) = |\mathsf{List}_{\varepsilon}^{f}|$.
In \Cref{thm:comblistdeg1} we show that $\mathsf{List}_{\varepsilon}^{f}$ is a \emph{small} list, i.e. $L(\varepsilon) =\poly(1/\varepsilon)$. We prove \Cref{thm:comblistdeg1} in the following steps:
\begin{itemize}
    \item Step 1: We prove that the list size is always a finite number, even though the underlying group $G$ is not finite (see \Cref{claim:finite-list}).
    \item Step 2: We show that to give an upper bound on $L(\varepsilon)$, we can assume without loss of generality the underlying group is finite (see \Cref{claim:G-finite}).
    \item Step 3: We decompose the group $G$ in two cases, depending on the order of the elements in $G$:
    \begin{itemize}
        \item Case 1: Every element has order a power of $q$ for a prime $q \in \set{2,3}$ (see \Cref{thm:comb-prod-2groups}).
        \item Case 2:  Every element has order a power of $p$ for a prime $p \geq 5$ (see \Cref{thm:comb-prod-pgroups}).
    \end{itemize}
\end{itemize}
We start by describing the first two steps.\\

\noindent
If $G$ is not finite, e.g. $G = \mathbb{R}$, then apriori it is not clear whether $L(\varepsilon)$ is even finite or not. As a warm-up, we first prove that $L(\varepsilon)$ is finite. This result will also be used later in our proofs.\\

\begin{claim}[The list is finite]\label{claim:finite-list}
Let $f: \Boo^{N} \to G$ be a polynomial which is $(1/2-\varepsilon)$-close to $\mathcal{P}_{1}$, for $\varepsilon > 0$. Then, $|\mathsf{List}_{\varepsilon}^{f}| \leq 2^{2^{N}}$.
\end{claim}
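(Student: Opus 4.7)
My plan is to show that the map $P \mapsto A_P := \{x \in \{0,1\}^N : P(x) = f(x)\}$ is injective on $\mathsf{List}_\varepsilon^f$, which immediately gives the bound $|\mathsf{List}_\varepsilon^f| \leq 2^{2^N}$ since there are only $2^{2^N}$ subsets of $\{0,1\}^N$.

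To prove injectivity, the key fact I will use is that $\delta(\mathcal{P}_1) = 1/2$ \emph{exactly} (not merely $\geq 1/2$ as given by \Cref{thm:basic}). Equivalently: any nonzero $Q \in \mathcal{P}_1$ vanishes on at most $2^{N-1}$ points of $\{0,1\}^N$. This is easy to verify directly: if $Q = \alpha_0 + \sum_{i \in S} \alpha_i x_i$ with some $\alpha_{i_0} \neq 0$ for $i_0 \in S$, then pairing up points by flipping coordinate $i_0$ gives pairs $(x, x')$ with $Q(x) - Q(x') = \pm \alpha_{i_0} \neq 0$, so at most one element of each pair can be a zero of $Q$; and if $S = \emptyset$, then $Q = \alpha_0 \neq 0$ has no zeros.

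Now suppose $P_1, P_2 \in \mathsf{List}_\varepsilon^f$ satisfy $A_{P_1} = A_{P_2} = A$. Since $f$ and $P_1$ agree on at most a $(1/2 - \varepsilon)$-fraction of disagreements, we have $|A| \geq (1/2 + \varepsilon) \cdot 2^N > 2^{N-1}$. On $A$, both $P_1$ and $P_2$ equal $f$, so $Q := P_1 - P_2 \in \mathcal{P}_1$ vanishes on a set of size strictly more than $2^{N-1}$. By the preceding paragraph, $Q$ must be identically zero, i.e.\ $P_1 = P_2$. This establishes that $P \mapsto A_P$ is injective, and hence $|\mathsf{List}_\varepsilon^f| \leq 2^{2^N}$, as claimed.

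There is no real obstacle here; the main point is simply recognizing that exact minimum distance $1/2$ lets one recover a linear polynomial from its agreement set with $f$ once that set has size more than half. The bound $2^{2^N}$ is of course extremely weak (exponential in the code length), but that is exactly what the claim asks for, and the subsequent proof of \Cref{thm:comblistdeg1} will use this finiteness as a preliminary reduction before bringing the list size down to $\poly(1/\varepsilon)$.
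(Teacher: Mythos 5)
Your proof is correct and takes essentially the same approach as the paper: each $P\in\mathsf{List}_\varepsilon^f$ is determined by its agreement set $A_P$ with $f$, since $|A_P|>2^{N-1}$ and two distinct linear polynomials cannot agree on more than $2^{N-1}$ points, so $P\mapsto A_P$ is injective into the power set of $\{0,1\}^N$. One small correction: you say you need $\delta(\mathcal{P}_1)=1/2$ \emph{exactly}, but the statement you call ``equivalent'' --- that any nonzero $Q\in\mathcal{P}_1$ vanishes on at most $2^{N-1}$ points --- is in fact equivalent to $\delta(\mathcal{P}_1)\geq 1/2$, which is exactly what \Cref{thm:basic} already supplies, so your re-derivation via the coordinate-pairing argument, while fine, is not needed.
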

\begin{proof}
For every $P \in \mathsf{List}_{\varepsilon}^{f}$, there is a subset of size at least $(1/2 + \varepsilon) \cdot 2^{N}$ on which $P$ and $f$ agree. Two distinct polynomials in $\mathsf{List}_{\varepsilon}$ cannot agree on more than $1/2 \cdot 2^{N}$ points (as $\delta(\mathcal{P}_1) \geq 1/2$). Thus for every subset of size at least $(1/2 + \varepsilon) \cdot 2^{N}$, there exists at most one polynomial $P$ in $\mathsf{List}_{\varepsilon}^{f}$ such that $f$ and $P$ agree on that subset. Hence the number of polynomials in $\mathsf{List}_{\varepsilon}^{f}$ is at most the number of subsets of $2^{N}$ of size $(1/2 + \varepsilon) \cdot 2^{N}$, and the claim follows.
\end{proof}
Thus \Cref{claim:finite-list} shows that $\mathsf{List}_{\varepsilon}^{f}$ is finite, although the upper bound on $L(\varepsilon)$ is doubly-exponential in $n$. In \Cref{thm:comblistdeg1}, we will prove that $L(\varepsilon)$ is independent of $n$, and is a polynomial of $1/\varepsilon$.\newline
Next, we show that

\paragraph{\underline{The underlying group is finite}}We will simplify our situation by showing that we can assume without loss of generality that $G$ is a \textit{finite} Abelian group. This will allow us to decompose $G$ as a finite product of cyclic groups of prime order and argue about combinatorial bound by considering the projection on each of these groups.\\

\begin{claim}\label{claim:G-finite} 

Let $f:\{0,1\}^N \to G$ be a function which is $(1/2-\varepsilon)$-close to $\mathcal{P}_1$, for $\varepsilon>0$. Then there exists a finite group $G'$ and a function $f':\{0,1\}^N \to G'$ such that $|\mathsf{List}^f_\varepsilon|\le |\mathsf{List}^{f'}_\varepsilon|$.
\end{claim}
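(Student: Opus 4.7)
The plan is to pass from $G$ to a cleverly chosen finite quotient into which the given list embeds injectively. Since \Cref{claim:finite-list} guarantees that $L := |\mathsf{List}^f_\varepsilon|$ is finite, only finitely many group elements of $G$ are actually involved in the data of $f$ together with all polynomials in the list. Concretely, I would let $H \leq G$ be the subgroup generated by the (finitely many) values $\{f(\mathbf{a}) : \mathbf{a} \in \{0,1\}^N\}$ together with the $L \cdot (N+1)$ coefficients of the polynomials $P \in \mathsf{List}^f_\varepsilon$. Then $H$ is a finitely generated abelian group, so by the structure theorem it decomposes as $H \cong \mathbb{Z}^r \oplus T$ for some $r \geq 0$ and some finite torsion part $T$.

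The next observation is that for any quotient map $\pi : H \to H/N$, the induced function $f' = \pi \circ f$ and induced polynomials $\pi \circ P$ satisfy $\delta(\pi \circ P, f') \leq \delta(P, f) \leq \tfrac{1}{2}-\varepsilon$, since points of agreement can only increase after taking a quotient; hence every element of $\mathsf{List}^f_\varepsilon$ projects into $\mathsf{List}^{f'}_\varepsilon$. The only risk is that two distinct polynomials $P \neq Q$ from the list might collapse after projection. By the M\"obius inversion part of \Cref{thm:basic}, two elements of $\mathcal{P}_1$ are distinct as polynomials iff they are distinct as functions, so collapse occurs iff $(P-Q)(\mathbf{a}) \in N$ for every $\mathbf{a} \in \{0,1\}^N$. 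For each ordered pair with $P \neq Q$, fix one witness $\mathbf{a}_{P,Q}$ with $s_{P,Q} := (P-Q)(\mathbf{a}_{P,Q}) \neq 0$ (which exists by the Schwartz--Zippel part of \Cref{thm:basic}, since $\delta(\mathcal{P}_1) \geq 1/2$). This yields a finite set $S_0 \subseteq H \setminus \{0\}$, and it suffices to choose $N \leq H$ with $N \cap S_0 = \emptyset$ and $H/N$ finite.

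For the construction of $N$, I would use the decomposition $H \cong \mathbb{Z}^r \oplus T$ and set $N := M\mathbb{Z}^r \oplus \{0\}$ for an integer $M$ to be chosen. An element $s = (s_{\mathbb{Z}}, s_T) \in H$ lies in $N$ iff $s_T = 0$ and every coordinate of $s_{\mathbb{Z}}$ is divisible by $M$. Since $S_0$ is finite, there is a uniform bound $M_0$ on the absolute values of the nonzero coordinates appearing in the free parts of its elements; any $M > M_0$ then ensures that every $s \in S_0$ with $s_T = 0$ and $s_{\mathbb{Z}} \neq 0$ fails to lie in $N$, while $s_T \neq 0$ automatically excludes $s$. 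The quotient $G' := H/N \cong (\mathbb{Z}/M\mathbb{Z})^r \oplus T$ is finite, as required.

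Finally, I would conclude by verifying the claim with $f' := \pi \circ f : \{0,1\}^N \to G'$. Each $P \in \mathsf{List}^f_\varepsilon$ pushes forward coefficient-wise to a degree-$1$ polynomial $\pi \circ P$ over $G'$; by the distance observation above these all lie in $\mathsf{List}^{f'}_\varepsilon$; and the assignment $P \mapsto \pi \circ P$ is injective by our choice of $N$. Hence $|\mathsf{List}^f_\varepsilon| \leq |\mathsf{List}^{f'}_\varepsilon|$, proving the claim. I expect the main delicate step to be the injectivity argument, which forces us to work with the finitely generated subgroup $H$ rather than $G$ itself: the free rank and torsion structure of $G$ may be uncontrollable (for example in $G = \mathbb{R}$), but those of $H$ are manageable via the structure theorem.
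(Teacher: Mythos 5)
Your proof is correct and follows essentially the same approach as the paper's: both pass to the finitely generated subgroup generated by the relevant evaluations/coefficients, apply the structure theorem to write it as $\mathbb{Z}^r \oplus T$, and reduce the free part modulo a large enough integer $M$ while checking that closeness is preserved and the list injects. The only cosmetic difference is in how $M$ is chosen to guarantee injectivity — you fix a witness $\mathbf{a}_{P,Q}$ for each distinct pair and ensure the nonzero differences $s_{P,Q}$ avoid the kernel $M\mathbb{Z}^r \oplus \{0\}$, whereas the paper bounds the magnitudes of all evaluations at once — but these are two phrasings of the same argument.
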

The idea of the proof is as follows. We use~\Cref{claim:finite-list} to first argue that there exists a \textit{finitely generated} subgroup of $G$ such that all the coefficients of the polynomials in $\textsf{List}^f_\varepsilon$ are in this subgroup. Then to go from a finitely generated subgroup to a finite group, we simply ``truncate'' the group elements by going modulo a large enough number.
\begin{proof}
    We will first prove the above claim for a {\em finitely generated} subgroup $G'' \subseteq G$ and then describe how to find a finite group $G'$ (not necessarily a subgroup) that still meets the above conditions. We define $G''$ as the subgroup generated by the evaluations of polynomials in $\mathsf{List}_{\varepsilon}^{f}$ and $f$, i.e.,
    $$G'':=\langle \{P(\mathbf{x}):\mathbf{x}\in \{0,1\}^N \text{ and } P\in \mathsf{List}^f_\varepsilon \} \cup \{f(\mathbf{x}):\mathbf{x}\in \{0,1\}^N\}\rangle,$$ where $\langle S \rangle$ denotes the subgroup generated by the elements of a subset $S \subseteq G$. We define $f'':\{0,1\}^N \to G''$ as $f''(\mathbf{x}) = f(\mathbf{x})$.\newline
    Let $\mathsf{List}^f_\varepsilon=\{P_1,\dots,P_t\}$ for some integer $t$; here we are using~\Cref{claim:finite-list} which says that the list is of finite size (even for infinite groups). We define $P''_i:\{0,1\}^N \to G''$ as $P''_i(\mathbf{x})=L_i(\mathbf{x})$ for each $P_i \in \mathsf{List}^f_\varepsilon$. Since $P_i(\mathbf{x}) \in G''$ for all $\mathbf{x}\in \{0,1\}^N$, we observe that all the coefficients of $P_i$ are in $G''$. Hence, $P''_i$ is a linear polynomial in $G''$, whose distance from $f''$ is $(1/2-\varepsilon)$, i.e., $P''_i \in \mathsf{List}^{f''}_\varepsilon$. Moreover, $P''_i$ for $i\in [t]$ are all distinct functions. Hence, $|\mathsf{List}^f_\varepsilon|\le |\mathsf{List}^{f''}_\varepsilon|$. 

    Now by the classification of finitely generated Abelian groups, $G'' = \mathbb{Z}^r \times \mathbb{Z}_{r_1} \times \mathbb{Z}_{r_2} \times \dots \times \mathbb{Z}_{r_k}$ for some integers $r, k\ge 0$ and $r_1,\dots,r_k \ge 2$. If $r=0$, we can take $G'=G''$ and that finishes the proof. Otherwise, we let $$M:=2\cdot\max\{\{|P''_i(\mathbf{x})_j|:i\in [t], j\in [r],\mathbf{x}\in\{0,1\}^N\}\cup \{|f''(\mathbf{x})_j|:j\in [r], \mathbf{x}\in \{0,1\}^N\}\} + 1$$ where $a_j\in \mathbb{Z}$ denotes the $j$-th coordinate of $a$, for $a\in G''$ and $j\in [r]$. This choice of $M$ is to ensure that no two distinct elements among the evaluations of $P''_i$'s and $f''$ are equal modulo $M$. We take $G' = \mathbb{Z}_M^r \times \mathbb{Z}_{r_1} \times \mathbb{Z}_{r_2} \times \dots \times \mathbb{Z}_{r_k}$ and define a homomorphism $\phi:G'' \to G'$ by applying the map $x\mapsto x\mod M$ to the first $r$ coordinates of the input and the identity map on the remaining coordinates. 
    Let $f':\{0,1\}^N \to G'$ be defined as $f'(\mathbf{x}) = \phi(f''(\mathbf{x}))$. For $i\in [t]$, if $P''_i(\mathbf{x})=a^{(i)}_{0} + a^{(i)}_{1}x_1 + \ldots + a^{(i)}_{N} x_N$ we define $P_i':\{0,1\}^N \to G'$ as $P_i'(\mathbf{x}) = \phi(a_0^{(i)}) + \phi(a_1^{(1)}) x_1 + \phi(a_2^{(i)}) x_2 + \dots + \phi(a_N^{(i)}) x_N$. As for $\mathbf{x}\in \{0,1\}^N$, $f''(\mathbf{x})=P''_i(\mathbf{x})$ implies that $f'(\mathbf{x})=P_i'(\mathbf{x})$ and since $P'_i$ is a linear polynomials over $G'$, we have that $P_i'\in \mathsf{List}^{f'}_\varepsilon$. Further, since all the initial $r$ coordinates of the coefficients of $P''_i$ are at most $M$ in absolute value to begin with, we get that $P_i' \ne P_j'$ for $i\ne j$. That is, $|\mathsf{List}^{f'}_\varepsilon| \ge t = |\mathsf{List}^{f}_\varepsilon|$.
    
\end{proof}

Hence to obtain an upper bound on $|{\mathsf{List}^f_\varepsilon}|$, it suffices to upper bound $|{\mathsf{List}^{f'}_\varepsilon}|$. Therefore, without loss of generality, for the rest of the proof we will assume that $G$ is a finite Abelian group.\\
Now we describe the third step towards proving \Cref{thm:comblistdeg1}. Using the structure theorem for finite Abelian groups, we know that $G$ can be written as a product of finitely many cyclic $p$-groups\footnote{For a prime $p$, a $p$-group is a group in which every element has order a power of $p$. For a cyclic group, this is just $\mathbb{Z}_{p^k}$ for some non-negative integer $k$.}. We decompose $G$ as follows:
\begin{align*}
    G = G_{1} \times G_{2} \times G_{3},
\end{align*}
where $G_{1}$ is product of $2$-groups, $G_{2}$ is a product of $3$-groups and $G_{3}$ is a product of $p$-groups for $p \geq 5$. Let $f: \Boo^{n} \to G$ be a polynomial, then $f = (f_{1}, f_{2}, f_{3})$, where $f_{i}: \Boo^{n} \to G_{i}$. We will prove a combinatorial list decoding bound for each $f_{i}$, and then the product of these bounds will be an upper bound on $|\mathsf{List}_{\varepsilon}^{f}|$. We have two cases - in the first case, we provide an upper bound for $G_{1}$ and $G_{2}$, and in the second case, we provide an upper bound for $G_{3}$. We state the upper bounds formally below.\\

\begin{theorem}[Combinatorial bound for product of $2$ and $3$-groups]\label{thm:comb-prod-2groups}
Let $G$ be a product of finitely many $q$-groups, where $q \in \set{2,3}$  and let $f: \Boo^{n} \to G$ be any function. Then, $|\mathsf{List}_{\varepsilon}^{f}| \leq \mathrm{poly}(1/\varepsilon)$.
\end{theorem}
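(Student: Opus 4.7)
The plan is to exploit the decomposition of $G$ coming from the coprime orders of its $2$-part and $3$-part. Since every element of $G$ has order a power of $2$ or $3$, the structure theorem gives $G \cong G_{(2)} \times G_{(3)}$, where $G_{(2)}$ is the product of the $2$-group cyclic factors and $G_{(3)}$ the product of the $3$-group cyclic factors. For any $P \in \mathsf{List}_{\varepsilon}^{f}$, its projections $P_{(q)} := \pi_q \circ P$ are linear polynomials into $G_{(q)}$, and since $\Pr[f = P] \geq 1/2 + \varepsilon$ forces $\Pr[f_{(q)} = P_{(q)}] \geq 1/2 + \varepsilon$ for each $q$, we get $P_{(q)} \in \mathsf{List}_\varepsilon^{f_{(q)}}$. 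As $P$ is determined by the pair $(P_{(2)}, P_{(3)})$, we conclude $|\mathsf{List}_{\varepsilon}^{f}| \leq |\mathsf{List}_{\varepsilon}^{f_{(2)}}| \cdot |\mathsf{List}_{\varepsilon}^{f_{(3)}}|$. It therefore suffices to prove the $\poly(1/\varepsilon)$ bound separately when $G$ is a $2$-group and when $G$ is a $3$-group.

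For the $2$-group case, I would adapt the Fourier-analytic approach of Dinur, Grigorescu, Kopparty, and Sudan (DGKS), as flagged in the overview. The key observation is that for any order-$2$ character $\chi$ of $G$ and any linear polynomial $P(\mathbf{x}) = a_0 + \sum_i a_i x_i$, the composition $\chi \circ P$ equals $\chi(a_0) \cdot \chi_S(\mathbf{x})$ where $S = \{i : \chi(a_i) = -1\}$ and $\chi_S(\mathbf{x}) = (-1)^{\sum_{i \in S} x_i}$. Thus $\chi \circ P$ is, up to a sign, a Fourier character of $\{0,1\}^n$. The $(1/2 - \varepsilon)$-closeness of $P$ to $f$ yields $|\langle \chi \circ f, \chi \circ P \rangle| \geq 2\varepsilon$, so Parseval bounds the number of such characters by $O(1/\varepsilon^2)$. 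I would then iterate this over a basis of order-$2$ characters (equivalently, of the characters of the $\mathbb{F}_2$-vector space $G/2G$) to control the residue class of $P$ modulo $2G$, and recurse on the smaller $2$-group $2G$, where the residues live.

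For the $3$-group case, a similar character-based argument applies, but now the characters of $G$ take values in powers of a cube root of unity rather than $\pm 1$, so $\chi \circ P$ is no longer a character of the Boolean hypercube and Parseval is harder to apply directly. I would combine the Fourier technique with the structural ideas suggested for $G_2$ in the overview of Theorem~\ref{thm:comblistdeg1} — specifically, interpolating between the two extremes: either all list polynomials depend on the same constant-size set of relevant variables (giving a crude bound of $2^{2^\ell}$ by counting the possible agreement patterns with $f$ on those $\ell$ variables) or they depend on essentially disjoint variable sets (giving a $\widetilde{O}(1/\varepsilon^2)$ bound via a Chernoff argument on the independent random values that such polynomials take at a uniformly random input).

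The main obstacle is obtaining a bound $\poly(1/\varepsilon)$ whose polynomial degree does not depend on $|G|$. The naive recursion on $2G$ in the $2$-group case yields $(1/\varepsilon)^{O(\log|G|)}$, which is polynomial for each fixed $G$ but not uniform; since Claim~\ref{claim:G-finite} reduces infinite groups such as $\mathbb{R}$ to finite groups of potentially unbounded size, uniformity in $|G|$ is essential for the overall argument. Overcoming this likely requires a more global analysis — most plausibly, reducing first to a sublist of polynomials whose coefficient vectors lie in a constant-radius Hamming ball (as sketched in the overview), and then bounding this sublist combinatorially in a single shot rather than by an iterative character-by-character argument.
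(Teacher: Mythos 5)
Your reduction to the cases of $2$-groups and $3$-groups via $|\mathsf{List}_{\varepsilon}^{f}| \leq |\mathsf{List}_{\varepsilon}^{f_{(2)}}| \cdot |\mathsf{List}_{\varepsilon}^{f_{(3)}}|$ is exactly what the paper does, and your choice of base-case tools (binary Johnson bound for $\mathbb{Z}_2$, a character/inner-product argument for $\mathbb{Z}_3$) also matches the paper. You have also correctly identified the crux of the matter: lifting from $\mathbb{Z}_q$ to a general $q$-group by iteratively projecting to $G/qG$ and recursing naively on $qG$ multiplies list bounds at each level, producing a $(1/\varepsilon)^{O(\log|G|)}$ bound that is useless after the reduction from infinite $G$ in Claim~\ref{claim:G-finite}. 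This is indeed the essential obstacle.

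However, the fix you propose --- reducing first to a sublist whose coefficient vectors lie in a constant-radius Hamming ball, then bounding ``in one shot'' --- does not work for $q$-groups with $q \in \{2,3\}$, and that is precisely why the paper handles them separately from the case $p \geq 5$. The sparsity reduction rests on the anti-concentration Lemma~\ref{lemma:sparse-pgroups}: two linear polynomials over a product of $p$-groups ($p\ge 5$) that agree on nearly a $1/4$-fraction must have a difference supported on $O(1)$ variables. This fails already over $\mathbb{Z}_2$: taking $P_1 = 0$ and $P_2 = x_1 + \cdots + x_n$, their difference is supported on all $n$ variables yet they agree on exactly half the cube. The same collapse occurs inside larger $2$-groups, e.g.\ $P_1 = 0$ and $P_2 = 2(x_1 + \cdots + x_n)$ over $\mathbb{Z}_4$. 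Anti-concentration genuinely requires that every non-zero element of $G$ have order at least $5$ (the bound $1/5$ on the probability that a random character is ``bad'' in the proof of Lemma~\ref{lemma:sparse-pgroups}), so there is no ``constant-radius Hamming ball'' sublist to reduce to when $q\in\{2,3\}$.

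What the paper does instead to escape the $\log|G|$ blowup is an analytic reformulation of the list bound: it proves an extended Johnson bound (Lemma~\ref{lemma:johnson-3}) in the form $\sum_i \alpha_i^C \leq 1$, where $\alpha_i$ is the agreement advantage of the $i$-th listed polynomial. Lemma~\ref{lemma:induction-lemma-2-groups} then propagates this potential through the induction on $|G|$ using the special intersecting set system machinery of DGKS (Definition~\ref{def:special-intersection-sets} and Lemma~\ref{lemma:special-intersection-bound}): one shows that for each polynomial $P_i$ in the mod-$H$ list, the agreement sets of its lifts $Q_1,\ldots,Q_\ell$ form a $(1/2,1/4,C_q)$-special intersecting family contained in the agreement set of $P_i$, which yields the telescoping inequality $\sum_j \alpha_j^D \le \beta_i^D$. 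Summing over $i$ and applying the inductive hypothesis $\sum_i \beta_i^D \le 1$ gives a bound whose exponent $D$ is a fixed absolute constant, completely independent of the number of inductive steps and hence of $|G|$. This potential-function/telescoping structure, rather than a sparsity reduction, is the missing idea in your proposal.
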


\begin{theorem}[Combinatorial Bound for Product of $p$-groups ($p\geq 5$)]\label{thm:comb-prod-pgroups}
Let $G$ be a product of finitely many $p$-groups, where $p \geq 5$ and let $f: \Boo^{n} \to G$ be any function. Then, $|\mathsf{List}_{\varepsilon}^{f}| \leq \mathrm{poly}(1/\varepsilon)$.\\
\end{theorem}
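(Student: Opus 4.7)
The plan is to follow the ``large $p$'' strategy outlined in the introduction, adapted to cover every prime power $p^k$ with $p\geq 5$ uniformly. The key analytic ingredient is an anti-concentration bound for linear polynomials over $G$: for every constant $\eta>0$ I would prove that there is a constant $C_0=C_0(\eta)$ such that any linear polynomial $Q(\mathbf{x})=c_0+\sum_i c_i x_i$ over $G$ with $|\mathrm{vars}(Q)|>C_0$ satisfies $\max_{g\in G}\Pr_{\mathbf{x}}[Q(\mathbf{x})=g]<\eta$. Writing $G$ as a product of cyclic $p$-groups with $p\geq 5$, it suffices to prove the bound for each cyclic factor $\mathbb{Z}_{p^k}$, and there it follows from a character/Fourier argument whose quantitative strength depends only on $p$ (hence is at least as strong as the bound for $p=5$), not on $k$. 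This matches the ``with some care'' remark for primes $\geq 5$ from the overview.

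With anti-concentration in hand, the argument proceeds by a structural reduction. For any $P_i,P_j\in\mathsf{List}_\varepsilon^f$, inclusion–exclusion gives $\Pr_{\mathbf{x}}[P_i(\mathbf{x})=P_j(\mathbf{x})]\geq 2\varepsilon$, so applying anti-concentration to $Q=P_i-P_j$ with $\eta=2\varepsilon$ forces $|\mathrm{vars}(P_i-P_j)|\leq C_0(\varepsilon)$. A combinatorial extraction (e.g.\ via the sunflower lemma applied to the constant-size supports $\{\mathrm{vars}(P_i-P_1)\}$) then yields a sub-list of size $L^{\Omega(1)}$ on which, after shifting by a common $P_1$, all polynomials have support at most $C_0$ with a controlled intersection pattern. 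It now suffices to bound the size of this shifted sub-list by $\poly(1/\varepsilon)$.

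The shifted sub-list is bounded by interpolating between two extremes. If all remaining polynomials $P_i'$ depend on the same set $S$ of $\leq C_0$ variables, then the minimum distance $\delta(\mathcal{P}_1)=1/2$ forces each $P_i'$ to be uniquely determined by the subset of $\{0,1\}^S$ on which it agrees with the appropriate projection of $f'$, giving a bound of $2^{2^{C_0}}$. If instead the supports are pairwise disjoint, then on a uniformly random input the values $\{P_i'(\mathbf{x})\}$ are mutually independent; applying a Chernoff / second-moment bound to $N(\mathbf{x}):=|\{i:P_i'(\mathbf{x})=f'(\mathbf{x})\}|$ together with the lower bound $\E[N]\geq L'(\tfrac{1}{2}+\varepsilon)$ forces $L'\leq \widetilde{O}(1/\varepsilon^2)$. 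For intermediate structure, the sunflower splits the list into a bounded number of ``kernel configurations'', each of which is handled by one of the two extremes, yielding the overall $\poly(1/\varepsilon)$ bound on $L$.

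The main obstacle I expect is ensuring the anti-concentration bound is uniform in the prime power: for $\mathbb{Z}_{p^k}$ with large $k$, naive character bounds can degrade with $k$, and one must work with the $p$-adic valuations of the coefficients $\{c_i\}$ in order to project onto a cyclic quotient of order exactly $p$ where standard anti-concentration over $\mathbb{Z}_p$ applies cleanly. A secondary subtlety is that the sunflower step gives only a sub-list of size $L^{\Omega(1)}$, so the polynomial bound on the shifted sub-list must be strong enough (in particular at most $\poly(1/\varepsilon)$) for the conclusion $L\leq \poly(1/\varepsilon)$ to follow; this in turn is what dictates that the kernel–petal decomposition be handled carefully enough that each kernel configuration only contributes a $\poly_{C_0}(1/\varepsilon)$ factor.
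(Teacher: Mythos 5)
Your high-level plan matches the paper (anti-concentration to force sparse differences, shift by a common polynomial, then a two-extreme analysis based on whether the supports coincide or are disjoint), but the first step has a genuine gap that the rest of the argument cannot recover from.

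The anti-concentration statement you propose --- for every $\eta>0$ there is a constant $C_0(\eta)$ so that $|\mathrm{vars}(Q)|>C_0(\eta)$ implies $\max_g \Pr_{\mathbf{x}}[Q(\mathbf{x})=g]<\eta$ --- is simply false once $\eta < 1/p$. Over $\mathbb{Z}_p$ (or $\mathbb{Z}_{p^k}$ after projecting via the $p$-adic valuation, as you suggest), the polynomial $Q=x_1+\cdots+x_m$ has $\Pr[Q=0] \to 1/p$ as $m\to\infty$; more generally the character-sum argument only yields something of the shape $\Pr[Q=0]\lesssim 1/p + c^{|\mathrm{vars}(Q)|}$. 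So with $\eta=2\varepsilon$ and $\varepsilon$ small, the hypothesis $\Pr[P_i=P_j]\ge 2\varepsilon$ is vacuously compatible with arbitrarily many non-zero coefficients, and no bound $C_0(\varepsilon)$ exists. Even in a hypothetical world where such a $C_0(\varepsilon)$ did exist, it would necessarily grow as $\varepsilon\to 0$, and your "same-support" extreme would then give a list bound of $2^{2^{C_0(\varepsilon)}}$, far beyond $\poly(1/\varepsilon)$. The target bound requires $C_0$ to be an \emph{absolute} constant.

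The paper's missing ingredient, which your proposal skips, is a preliminary combinatorial step (\Cref{lemma:clique} via the pairwise-intersection bound \Cref{lemma: pw intersection}): among any constant number of list elements, whose agreement sets with $f$ each have density $\geq 1/2$, some pair's agreement sets must intersect in more than a $(1/4-0.001)$ fraction of $\Boo^n$. That pair therefore agrees on $>1/4-0.001$ of inputs, which is a \emph{fixed} threshold bounded away from $1/p\le 1/5$, so \Cref{lemma:sparse-pgroups} yields an absolute constant $C_0$. This is what lets the paper build a graph in which edges are difference-supports of size $\le C_0$ and extract a vertex of degree $\Omega(t)$ with a universal $C_0$. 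Your inclusion--exclusion bound of $2\varepsilon$ is correct but too weak by itself; replacing it with the intersecting-set-family argument is essential before the remainder of your plan (sunflower or degree extraction, then the disjoint-vs-shared support dichotomy) can deliver the claimed $\poly(1/\varepsilon)$ list size.
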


Assuming \Cref{thm:comb-prod-2groups} and \Cref{thm:comb-prod-pgroups}, we immediately get \Cref{thm:comblistdeg1} because if $P = (P_{1},P_{2},P_{3}) \in \mathsf{List_{\varepsilon}^{f}}$, then for each $i \in [3]$, $P_{i}$ must be in $\mathsf{List}_{\varepsilon}^{f_{i}}$. In the next subsection, we prove \Cref{thm:comb-prod-pgroups}  and in the subsection after that, we prove \Cref{thm:comb-prod-2groups}.

\subsection{Combinatorial Bound for Product of $p$-groups ($p\geq 5$)}\label{subsec:5-groups}
In this subsection, we prove a particular case for the third step towards proving \Cref{thm:comblistdeg1}. We will prove \Cref{thm:comb-prod-pgroups}. We start by proving a result on the sparsity of a polynomial, using an anti-concentration lemma (see \Cref{lemma:sparse-pgroups}).

\subsubsection{Sparsity and Anti-concentration}
Recall that a character of a finite Abelian group $G$ is a homomorphism from $G$ to $\mathbb{C^*}$. For a subgroup $H$ of $G$, the characters of $H$ can be extended to obtain characters of $G$ (we refer to \cite{Conrad} for details). 
The following theorem is a well-known fact about the extensions of characters of a subgroup to characters of the group (see \cite[Theorem 3.4]{Conrad}).

\begin{theorem}
\label{thm: extension of characters}
    Let $G$ be a finite Abelian group and $H$ be a subgroup of $G$. Then each character of $H$ can be extended to a character of $G$ in $|G|/|H|$ ways.
\end{theorem}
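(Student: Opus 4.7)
The plan is to analyze the restriction homomorphism $r \colon \widehat{G} \to \widehat{H}$ defined by $r(\chi) = \chi|_H$, where $\widehat{K} := \mathrm{Hom}(K, \mathbb{C}^\ast)$ is the character group. The theorem reduces to showing that $r$ is surjective and that $|\ker r| = |G|/|H|$, since then the fiber $r^{-1}(\psi)$ over any $\psi \in \widehat{H}$ is a coset of $\ker r$ of size $|G|/|H|$, i.e.\ exactly the number of extensions of $\psi$.

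First I would recall the basic fact $|\widehat{K}| = |K|$ for any finite Abelian $K$. This follows from the structure theorem $K \cong \mathbb{Z}_{n_1} \times \cdots \times \mathbb{Z}_{n_t}$ together with $\widehat{\mathbb{Z}_n} \cong \mathbb{Z}_n$ (a generator must be sent to some $n$-th root of unity) and the fact that dualization commutes with finite direct products. Once this is in hand, the counting step is automatic from the first isomorphism theorem: granted surjectivity, $|\widehat{G}|/|\ker r| = |\widehat{H}| = |H|$, so $|\ker r| = |G|/|H|$.

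Second, I would prove surjectivity of $r$ by induction on the index $[G:H]$. The base case $H = G$ is trivial. For the inductive step, pick $g \in G \setminus H$ and let $k$ be the least positive integer with $g^k \in H$ (equivalently, the order of $gH$ in $G/H$); set $H' = \langle H, g \rangle$, which by minimality of $k$ is the disjoint union $\bigsqcup_{0 \le i < k} H g^i$ and has order $k|H|$. Given $\chi \in \widehat{H}$, I would choose any $\zeta \in \mathbb{C}^\ast$ with $\zeta^k = \chi(g^k)$ (such $\zeta$ exists because $\mathbb{C}^\ast$ is divisible) and define
\[
\chi'(h g^i) := \chi(h)\,\zeta^i \qquad (h \in H,\ 0 \le i < k).
\]
This is well-defined by the disjoint-union description of $H'$, and multiplicativity on $H'$ is a short check: when $i_1 + i_2 \ge k$, one rewrites $h_1 h_2 g^{i_1+i_2} = (h_1 h_2 g^k)\,g^{i_1+i_2-k}$ with $h_1 h_2 g^k \in H$, and the identity $\zeta^k = \chi(g^k)$ makes the two expressions agree. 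Applying the inductive hypothesis to the strictly smaller index $[G:H'] = [G:H]/k$ then extends $\chi'$ to a character of $G$.

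The main subtlety will be the well-definedness and multiplicativity of $\chi'$ in the inductive step; both hinge crucially on $k$ being the \emph{least} positive integer with $g^k \in H$, so that the decomposition $H' = \bigsqcup_{0 \le i < k} Hg^i$ is disjoint. One might try a shorter one-shot argument using a direct-sum decomposition $G = H \oplus C$, but an arbitrary subgroup of a finite Abelian group is not a direct summand (e.g.\ $2\mathbb{Z}_4 \subset \mathbb{Z}_4$), so the cyclic-extension induction above appears to be the cleanest route.
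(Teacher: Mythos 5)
The paper does not supply its own proof of this fact; it cites \cite[Theorem 3.4]{Conrad}, where the statement is established by exactly the argument you outline: one shows the restriction map $\widehat{G}\to\widehat{H}$ is surjective by extending a character one cyclic step at a time (from $H$ to $\langle H,g\rangle$, choosing a $k$-th root $\zeta$ of $\chi(g^k)$ where $k$ is the order of $gH$ in $G/H$), and then uses $|\widehat{K}|=|K|$ together with the first isomorphism theorem to conclude each fiber has size $|G|/|H|$. Your proof is correct, the details you flag (well-definedness from minimality of $k$, the rewriting $h_1h_2g^{i_1+i_2}=(h_1h_2g^k)g^{i_1+i_2-k}$ in the multiplicativity check, and the caution that $H$ need not be a direct summand) are precisely the right ones, and it matches the reference the paper relies on.
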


An immediate corollary of the above theorem is the following.

\begin{corollary}
\label{coro: random character eval}
    Let $G$ be a finite Abelian group and $a \in G$ such that $\text{order}(a) = r \geq 1$. Let $\chi$ be a randomly chosen character of $G$, we have
    \begin{align*}
        \Pr_{\chi}[\chi(a) = e^{\frac{2\pi k i }{r}}] = \frac{1}{r}
    \end{align*}
    for all $k \in \{0, \dots, (r-1)\}$.
\end{corollary}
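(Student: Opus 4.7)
The plan is to apply Theorem~\ref{thm: extension of characters} to the cyclic subgroup $H = \langle a \rangle \subseteq G$, which has order exactly $r$. First, I would observe that any character $\psi$ of the cyclic group $H$ is determined by its value on the generator $a$, and because $r \cdot a = 0$ (the identity in $G$), the image $\psi(a)$ must be an $r$-th root of unity. Conversely, each of the $r$ assignments $a \mapsto e^{2\pi k i / r}$ for $k \in \{0,\ldots,r-1\}$ extends multiplicatively to a well-defined character of $H$. Thus $H$ has exactly $r$ distinct characters, and they are in bijection with the $r$-th roots of unity via evaluation at $a$.

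Next, I would invoke Theorem~\ref{thm: extension of characters} to conclude that each character of $H$ extends to exactly $|G|/|H| = |G|/r$ characters of $G$, and conversely every character of $G$ restricts to some character of $H$. In particular, for each fixed $k \in \{0,\ldots,r-1\}$, the number of characters $\chi$ of $G$ satisfying $\chi(a) = e^{2\pi k i /r}$ is precisely $|G|/r$.

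Finally, I would use the standard fact that a finite Abelian group $G$ has exactly $|G|$ distinct characters (its Pontryagin dual $\widehat{G}$ satisfies $|\widehat{G}| = |G|$). Since $\chi$ is drawn uniformly from these $|G|$ characters, the probability in question is
\[
\Pr_{\chi}\!\left[\chi(a) = e^{2\pi k i / r}\right] \;=\; \frac{|G|/r}{|G|} \;=\; \frac{1}{r},
\]
as claimed. There is essentially no obstacle here: the only point requiring a line of justification is the bijection between characters of the cyclic group $H$ and $r$-th roots of unity, which is immediate from the universal property of cyclic groups. The rest is bookkeeping on top of the already-stated Theorem~\ref{thm: extension of characters}.
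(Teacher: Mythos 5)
Your proof is correct and follows the same route as the paper: restrict to the cyclic subgroup $H=\langle a\rangle$, identify its $r$ characters with $r$-th roots of unity via evaluation at $a$, apply Theorem~\ref{thm: extension of characters} to count $|G|/r$ extensions of each, and divide by $|\widehat{G}|=|G|$. You spell out a couple of intermediate steps that the paper leaves implicit, but there is no substantive difference.
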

\begin{proof}
    Let $H = \langle a \rangle$, be the cyclic group generated by $a$. The characters of $H$ are $e^{\frac{2\pi k i}{r}}$ for $k \in \{0, \dots, (r-1)\}$. By Theorem~\ref{thm: extension of characters}, each of these characters has exactly $|G|/r$ many extensions to characters of $G$. The corollary follows since $G$ has $|G|$ many characters.
\end{proof}

Next, we prove an important result regarding linear polynomials over groups that are a product of $p$-groups for $p \geq 5$. This lemma, which reflects certain anti-concentration properties of linear polynomials over such groups, is the only part of the proof of \Cref{thm:comb-prod-pgroups} that uses something about the structure of the group. The following lemma says that if two distinct linear polynomials agree on a large fraction of points, then their respective coefficient vector must be quite similar.\newline

Given a polynomial $P\in \mathcal{P}_1,$ we use $\mathrm{vars}(P)$ to denote the set of variables with non-zero coefficient in $P.$\\

\begin{lemma}[Anti-concentration lemma]\label{lemma:sparse-pgroups}
Let $G$ be a product of finitely many $p$-groups where $p \geq 5$ and let $P_{i}, P_{j} : \Boo^{n} \to G$ be two distinct linear polynomials. If $P_{i}$ and $P_{j}$ agree on at least $(1/4 - 0.001)$-fraction of $\Boo^{n}$, then $|\mathrm{vars}(P_{i}-P_{j})| \leq C_0$, where $C_0 \geq 4500$ is a constant.
\end{lemma}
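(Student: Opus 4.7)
By \Cref{claim:G-finite}, I may assume without loss of generality that $G$ is a \emph{finite} Abelian group; by hypothesis it is a product of cyclic $p$-groups with $p \ge 5$, so every nonzero element of $G$ has order at least $5$. Let $Q := P_i - P_j$ and write $Q(\mathbf{x}) = c_0 + \sum_{r \in S} c_r x_r$ with $S = \mathrm{vars}(Q)$ and each $c_r \ne 0$. I will argue the contrapositive: for a sufficiently large absolute constant $C_0$, if $|S| > C_0$ then $\Pr_{\mathbf{x}}[Q(\mathbf{x}) = 0] < 1/4 - 0.001$, which contradicts the agreement hypothesis. The guiding intuition is an anti-concentration statement: because every $c_r$ has order $\ge 5$, the random element $\sum_{r \in S} c_r x_r \in G$ ought to be sufficiently spread out that no value (in particular $-c_0$) is attained with probability noticeably above $1/5$ once $|S|$ is large. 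The slack $1/4 - 1/5 - 0.001 = 0.049$ is precisely the reason the lemma is stated for $p \ge 5$ and would fail for $p \in \{2,3\}$.

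The natural quantitative tool is Fourier analysis on the finite Abelian group $G$. By character orthogonality and the independence of the bits $x_r$,
\[
\Pr_{\mathbf{x}}[Q(\mathbf{x}) = 0] \;=\; \frac{1}{|G|} \sum_{\chi \in \widehat{G}} \chi(c_0) \prod_{r \in S} \frac{1+\chi(c_r)}{2} \;\le\; \E_\chi \prod_{r \in S} \frac{|1+\chi(c_r)|}{2},
\]
where the expectation is over a uniform random $\chi \in \widehat{G}$. By \Cref{coro: random character eval} (applied to the extension of characters from $\langle c_r\rangle$ to $G$), the random variable $\chi(c_r)$ is uniform over the $\mathrm{ord}(c_r)$-th roots of unity, so $\Pr_\chi[\chi(c_r)=1] \le 1/5$ and each individual factor $|1+\chi(c_r)|/2$ has mean strictly less than $1$.

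The main obstacle is that these factors are \emph{not} independent as $\chi$ varies: different characters are coupled through the group law of $G$ (e.g.\ $\chi(c_r + c_{r'}) = \chi(c_r)\chi(c_{r'})$), and a naive per-character estimate like $\cos(\pi/p)^{|S|}$, coming from a single character of prime order $p$, degrades as $p$ grows. To circumvent this I would apply Cauchy--Schwarz, use the identity $(1+\mathrm{Re}\,\chi(c_r))/2 = (2+\chi(c_r)+\chi(-c_r))/4$, and then invoke character orthogonality a second time, obtaining
\[
\left(\E_\chi \prod_{r \in S} \frac{|1+\chi(c_r)|}{2}\right)^{2} \;\le\; \E_\chi \prod_{r \in S} \frac{1+\mathrm{Re}\,\chi(c_r)}{2} \;=\; \frac{1}{4^{|S|}} \sum_{\substack{\varepsilon \in \{-1,0,+1\}^{S} \\ \sum_{r} \varepsilon_r c_r = 0}} 2^{|\{r \,:\, \varepsilon_r = 0\}|}.
\]
This recasts the bound as a Littlewood--Offord-style count of vanishing signed $\{-1,0,+1\}$-combinations of the $c_r$'s in $G$.

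The remaining and most delicate step is a uniform anti-concentration estimate for such signed sums, valid over any product of $p$-groups with $p\ge 5$. I expect a classical Erd\H{o}s--Littlewood--Offord / local CLT argument, transplanted from $\mathbb Z$ to Abelian groups via the Fourier identity above, to yield $\Pr_{\varepsilon \in \{-1,+1\}^T}\!\bigl[\sum_{r \in T} \varepsilon_r c_r = g\bigr] \le 1/5 + O(1/\sqrt{|T|})$ for every $g \in G$ and every $T \subseteq S$, the hypothesis ``order $\ge 5$'' being precisely what delivers the leading constant $1/5$ rather than $1/2$ or $1/3$. Chaining this estimate back through the two displays gives $\Pr_{\mathbf{x}}[Q(\mathbf{x})=0] \le 1/5 + O(1/\sqrt{|S|})$, which is strictly less than $1/4 - 0.001$ once $|S| > C_0$ for a sufficiently large absolute constant $C_0$; the value $4500$ quoted in the statement is crude but comfortably larger than what this argument requires.
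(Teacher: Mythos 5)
Your proposal agrees with the paper up to the passage to characters: both write $\Pr_{\mathbf{x}}[Q(\mathbf{x})=0] \le \E_{\chi}\prod_{r\in S}\abs{\tfrac{1+\chi(c_r)}{2}}$ and both exploit (via \Cref{coro: random character eval}) that $\chi(c_r)$ is uniform over $\mathrm{ord}(c_r)$-th roots of unity. After that you diverge, and the divergence introduces a gap that cannot be repaired.

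The problem is the Cauchy--Schwarz step. Your two displays, taken together, give
\[
\Pr_{\mathbf{x}}[Q(\mathbf{x})=0]^{2} \;\le\; \E_\chi \prod_{r\in S}\frac{1+\Real\chi(c_r)}{2} \;=\; \Pr_{\varepsilon}\Big[\sum_{r\in S}\varepsilon_r c_r = 0\Big],
\]
where the $\varepsilon_r$ are i.i.d.\ with $\Pr[\varepsilon_r=\pm 1]=1/4$ and $\Pr[\varepsilon_r=0]=1/2$. In the worst case (all $c_r$ equal to a single element of order $5$), the right side tends to $1/5$ by the local CLT, and it cannot go below $1/5$ since the sum takes at most $5$ values. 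So the best this route can give is $\Pr_{\mathbf{x}}[Q(\mathbf{x})=0] \le \sqrt{1/5}+o(1) \approx 0.447$, which is \emph{larger} than $1/4-0.001 = 0.249$: the square root lost in Cauchy--Schwarz is fatal. Your final line, ``Chaining this estimate back through the two displays gives $\Pr_{\mathbf{x}}[Q(\mathbf{x})=0]\le 1/5 + O(1/\sqrt{\abs{S}})$,'' silently drops that square root; the chain actually bounds $\Pr[Q=0]^2$, not $\Pr[Q=0]$. (Separately, the Littlewood--Offord estimate itself is only asserted, not proven, but that is a secondary issue.)

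The paper avoids squaring altogether. It keeps $\E_\chi\prod_r\abs{\tfrac{1+\chi(c_r)}{2}}$ and observes that a random $\chi$ makes $\chi(c_r)$ ``bad'' (close to $1$, specifically $\abs{\arg\chi(c_r)}<2\pi/20$) with probability at most $1/5$; Markov then gives that with probability at least $1-\tfrac{1}{5(1-\beta)}$, at least $\beta k$ of the factors are $\le \cos(\pi/20)<1$, so the product is exponentially small. The resulting bound $\cos(\pi/20)^{\beta k} + \tfrac{1}{5(1-\beta)}$ tends to a constant just above $1/5$ as $k\to\infty$ (and $\beta\to 0$ slowly), squeaking under $1/4-0.001$ — confirming that your intuition about the correct asymptotic constant is right, but showing why a first-moment argument on the product, rather than a second-moment (Cauchy--Schwarz) one, is needed to extract it.
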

\begin{proof}
Let $\Tilde{P}(\x) = P_i(\x) - P_j(\x)$ for all $\x \in \{0,1\}^n$. Thus $\Pr_{\x \in \{0,1\}^n} [\Tilde{P}(x) = 0] = 1 - \delta(P_i, P_j) \geq 1/4 - 0.001$. Let $\Tilde{P}(\x) = \sum_{i = 1}^k a_i x_{j_i} + a_0$ where, for all $i \in \{0,\ldots, k\}$, $a_i \in G$ are non-zero elements and $j_i \in [n]$. Our goal is to upper bound $k$ using the hypothesis of the lemma. 

Let $\widehat{G}$ be the group of characters of $G$. Recall~\cite[Theorem 4.1]{Conrad} that $\Tilde{P}(\x) = 0$ then 
\[
\mathbb{E}_{\chi \in \widehat{G}} [\chi(\Tilde{P}(x))] = \left\{\begin{array}{cc}
1 & \text{if $\Tilde{P}(\mathbf{x})=0$,}\\
0 & \text{otherwise.}
\end{array}
\right.
\]
Using this, we have
\begin{align}
    \frac{1}{4} - 0.001 
    &\leq \Pr_{\x \in \{0,1\}^n} [\Tilde{P}(\x) = 0] \nonumber\\
    &= \left| \mathbb{E}_{\chi \in \widehat{G}} \mathbb{E}_{\x \in \{0,1\}^n} \chi(\Tilde{P}(x)) \right| \nonumber \\
    &\leq \mathbb{E}_{\chi \in \widehat{G}} \left| \mathbb{E}_{\x \in \{0,1\}^n}  \chi(\Tilde{P}(x)) \right| \tag{by triangle inequality} \\
    &= \mathbb{E}_{\chi \in \widehat{G}} \left| \mathbb{E}_{\x \in \{0,1\}^n}  \prod_{i=1}^k \chi(a_i x_{j_i}) \right|  \tag{using multiplicativity of $\chi$ and $|\chi(a_0)|=1$} \\
    &= \mathbb{E}_{\chi \in \widehat{G}} \left| \prod_{i=1}^k \left( \frac{1 + \chi(a_i)}{2} \right) \right|  \nonumber \\
    &= \mathbb{E}_{\chi \in \widehat{G}} \left[ \prod_{i=1}^k \left| \frac{1 + \chi(a_i)}{2} \right| \right]  \label{eq: for contradiction p large}.
\end{align}


The argument of a complex number $z \in \mathbb{C}$ of absolute value $|z|=1$, denoted by $\text{arg}(z)$,  is the unique real number $\alpha \in (-\pi,\pi]$ such that $z = e^{i \alpha}$.


We say that $a$ is ``bad'' for $\chi$ if $|\text{arg}(\chi(a))| < (2 \pi)/20$. Thus, if $\text{order}(a) = r \geq 5$, then from \Cref{coro: random character eval} the probability that a random $\chi$ is bad for $a$ is at most $1/5$. 
Since $\Tilde{P}(\x) = \sum_{i=1}^k a_i x_{j_i}+a_0$ where each $a_i \in G$ ($i\in [k]$) has order at least $5$, the expected number of $a_i$'s that are bad for a randomly chosen $\chi$ is at most $k/5$. 

Let $E$ be the event that there are at least $(\beta k)$ $a_i$'s that are not bad for a randomly chosen $\chi$, where $\beta$ is a suitable constant to be fixed later. By Markov's inequality, $\Pr[\overline{E}] \leq 1/(5 (1-\beta))$. We have 
\begin{align*}
    \mathbb{E}_{\chi \in \widehat{G}} \left[ \prod_{i=1}^k \left| \frac{1 + \chi(a_i)}{2} \right| \right]
    &= \mathbb{E}
    \left[ \left( \prod_{i=1}^k \left| \frac{1 + \chi(a_i)}{2} \right| \right) \mid E \right] \Pr[E] + \mathbb{E}
    \left[ \left( \prod_{i=1}^k \left| \frac{1 + \chi(a_i)}{2} \right| \right) \mid \overline{E} \right] \Pr[\overline{E}] \\
    &\leq \mathbb{E}
    \left[ \left( \prod_{i=1}^k \left| \frac{1 + \chi(a_i)}{2} \right| \right) \mid E \right] \Pr[E] + 
    \Pr[\overline{E}] \\
    &\leq \mathbb{E}
    \left[ \left( \prod_{i=1}^k \left| \frac{1 + \chi(a_i)}{2} \right| \right) \mid E \right] +  \frac{1}{5(1 - \beta)},
\end{align*}
where the first inequality uses the fact that $|(1 + \chi(a)/2| \leq 1$ for all $\chi \in \widehat{G}$ and $a \in G$. 
Conditioned on $E$, at least $(\beta k)$ many $a_i$'s satisfy $|\text{arg}(\chi(a_i))| \geq (2\pi )/20$. For any such $a_i$,
\begin{align*}
    \left| \frac{1 + \chi(a_i)}{2} \right| = 1/2 \cdot \sqrt{ \left( 1 + \cos(\text{arg}(a_i)) \right)^2 + \sin(\text{arg}(a_i))^2} = |\cos(\text{arg}(a_i)/2)| \leq \cos(\pi/20).
\end{align*}
Combining with \Cref{eq: for contradiction p large} we have
\begin{align*}
    \frac{1}{4} - 0.001 \leq \mathbb{E}_{\chi \in \widehat{G}} \left[ \prod_{i=1}^k \left| \frac{1 + \chi(a_i)}{2} \right| \right]
    \leq \cos(\pi/20)^{\beta k} + \frac{1}{5(1 - \beta)}.
\end{align*}
Choosing $\beta$ to be $1/9$, we have a contradiction if $k$ is larger than $C_0$.
\end{proof}

In \Cref{lemma:sparse-pgroups}, we proved that if two distinct linear polynomials agree on nearly $1/4$-fraction of the Boolean cube, then their difference is a sparse polynomial. Intuitively, two linear polynomials in $\mathsf{List}_{\varepsilon}^{f}$ also agree on a large fraction because both of them agree with $f$ on nearly $1/2$-fraction of the Boolean cube. We will use this observation along with \Cref{lemma:sparse-pgroups} to reduce to sparse polynomials.

\paragraph{\underline{Many of the polynomials are sparse}}We will now reduce our problem of proving an upper bound on $|\mathsf{List}_{\varepsilon}^{f}|$ to the setting of proving a combinatorial bound when it is also known that the polynomials in the list are `sparse', in the sense that they only have a few non-zero coefficients.\\

\noindent
Suppose $\mathsf{List}_{\varepsilon}^{f} = \set{P_{1}, \ldots, P_{t}}$. We define a graph $\mathcal{G} = (V,E)$ with $|V| = t$. The vertices in $V$ represents the codewords in $\mathsf{List}_{\varepsilon}^{f}$ -- in particular $i^{th}$ vertex corresponds to $P_i$.
Edge $(i,j)$ exists if and only if
$|\mathrm{vars}(P_{i}-P_{j})| \leq C_0$, where $C_0$ is the absolute constant from \Cref{lemma:sparse-pgroups}.
 
In other words, two linear polynomials $P_{i}$ and $P_{j}$ are related via an edge if $P_{i}-P_{j}$ is supported on at a most constant, $C_0$, many variables.

We will show that $\mathcal{G}$ has a vertex of large degree (in terms of the number of vertices $t$ of $\mathcal{G}$) and then instead of upper bounding $t=|\mathsf{List}_{\varepsilon}^{f}|$, we will upper bound the largest degree of $\mathcal{G}$. To be a bit more precise, in \Cref{lemma:clique}, we will show that $\mathcal{G}$ has a vertex of degree $\Omega(t)$. It will then suffice to show that the number of polynomials neighbouring any vertex in $\mathcal{G}$ is at most $\mathrm{poly}(1/\varepsilon)$.

To prove \Cref{lemma:clique}, we will use the following lemma which can be proved by forming the independent set greedily.
\begin{lemma}
\label{lem: large graph degree}
    Any undirected graph on $t$ vertices contains either an independent set of size $t/(\Delta+1)$ if all the vertices are of degree at most $\Delta$.
\end{lemma}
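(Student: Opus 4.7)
The plan is to prove this by the standard greedy independent set construction, as hinted by the authors. Iteratively build an independent set $I$ as follows: start with $I = \emptyset$ and a working set $W = V$. At each step, while $W \neq \emptyset$, pick an arbitrary vertex $v \in W$, add $v$ to $I$, and remove both $v$ and all of its neighbors (in the original graph) from $W$. Continue until $W$ is empty.

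First I would verify that the resulting $I$ is independent: by construction, whenever a vertex $v$ is added to $I$, every vertex adjacent to $v$ is simultaneously removed from $W$, so no subsequent vertex added to $I$ can be a neighbor of $v$. Hence no two vertices of $I$ share an edge. Next I would bound $|I|$ from below: at each iteration, we remove from $W$ the chosen vertex $v$ together with its neighbors in the induced subgraph on $W$, and since every vertex has degree at most $\Delta$ in the original graph, at most $\Delta+1$ vertices are deleted from $W$ per iteration. After $|I|$ iterations the set $W$ is empty, so we must have $|I| \cdot (\Delta+1) \geq t$, giving $|I| \geq t/(\Delta+1)$ as claimed.

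There is no real obstacle here; the argument is a one-line greedy construction. The only minor point to be careful about is the bookkeeping in the induction/loop invariant: when we pick $v$, it is $v$'s neighbors in $G$ (not merely those still in $W$) that we delete, and since this set has size at most $\Delta$, the total number of vertices removed in that iteration is at most $\Delta+1$, regardless of how $W$ has shrunk. This suffices to conclude the stated bound $|I| \geq \lceil t/(\Delta+1)\rceil$.
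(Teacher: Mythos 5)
Your proof is correct and is exactly the greedy construction the paper alludes to (the paper only remarks that the lemma ``can be proved by forming the independent set greedily'' without spelling it out). Nothing further to add.
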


We also need the following lemma, a proof of which can found in~\cite[Lemma 2.1]{jukna2011extremal}.

\begin{lemma}
\label{lemma: pw intersection}
Let $A_1, \dots, A_k$ be sets of cardinality $r$ and let $X = \cup_{i=1}^k A_i$. If $|A_i \cap A_j| \leq t$ for all $i \neq j \in [k]$, then $|X| \geq (r^2 k)/(r + (k-1)t)$.
\end{lemma}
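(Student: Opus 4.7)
The plan is to prove this by a standard double counting combined with Cauchy--Schwarz, in the spirit of classical incidence arguments in extremal set theory. For each element $x \in X$, define its multiplicity $d(x) = |\{i \in [k] : x \in A_i\}|$. I would carry out two counting identities and then a single application of Cauchy--Schwarz.

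First, by counting incidences $\{(x,i) : x \in A_i\}$ in two ways, $\sum_{x \in X} d(x) = \sum_{i=1}^k |A_i| = rk$. Second, by counting pairs of sets that share a common element, namely triples $\{(x,i,j) : i<j,\ x \in A_i \cap A_j\}$, in two ways,
\[
\sum_{x \in X} \binom{d(x)}{2} \;=\; \sum_{1 \le i < j \le k} |A_i \cap A_j| \;\le\; \binom{k}{2} t.
\]
Rewriting this as a bound on the second moment gives $\sum_{x \in X} d(x)^2 \le k(k-1)t + rk$.

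Now I would apply Cauchy--Schwarz to the sequence $(d(x))_{x \in X}$:
\[
(rk)^2 \;=\; \Bigl(\sum_{x \in X} d(x)\Bigr)^2 \;\le\; |X| \cdot \sum_{x \in X} d(x)^2 \;\le\; |X| \cdot \bigl(k(k-1)t + rk\bigr).
\]
Dividing through by $k\bigl((k-1)t + r\bigr)$ yields $|X| \ge r^2 k / \bigl((k-1)t + r\bigr)$, matching the statement exactly.

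There is essentially no obstacle here since the argument is a textbook double-counting proof (indeed, the lemma is quoted from Jukna's \emph{Extremal Combinatorics}). The only thing to double-check is the algebraic simplification in the final step, but it is immediate.
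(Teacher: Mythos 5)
Your proof is correct. The paper does not supply its own argument for this lemma—it simply cites~\cite[Lemma 2.1]{jukna2011extremal}—and the double-counting plus Cauchy--Schwarz argument you give is precisely the standard one found there, with all the bookkeeping ($\sum_x d(x) = rk$, $\sum_x \binom{d(x)}{2} \le \binom{k}{2}t$, hence $\sum_x d(x)^2 \le rk + k(k-1)t$, then Cauchy--Schwarz) carried out accurately.
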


Now we will prove that the graph $\mathcal{G}$ as defined above, has a vertex of degree at least $\Omega(t)$.
\begin{lemma}[$\mathcal{G}$ has a vertex of large degree]\label{lemma:clique}
Let $\mathcal{G} = (V,E)$ be the graph as defined above on $t$ vertices. Then there exists a vertex $v \in V$ of degree $\Omega(t)$.
\end{lemma}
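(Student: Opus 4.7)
The plan is to show that $\mathcal{G}$ has $\Omega(t^2)$ edges; a standard averaging over degrees then yields a vertex of degree $\Omega(t)$. The leverage is that every pair $P_i, P_j \in \mathsf{List}^f_\varepsilon$ must be highly correlated with $f$, which forces substantial agreement between $P_i$ and $P_j$ themselves on a large fraction of pairs, so that \Cref{lemma:sparse-pgroups} can be invoked to certify those pairs as edges of $\mathcal{G}$.

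Concretely, for each $i \in [t]$ let $A_i := \{\mathbf{x} \in \{0,1\}^n : P_i(\mathbf{x}) = f(\mathbf{x})\}$ and let $k(\mathbf{x}) := |\{i : P_i(\mathbf{x}) = f(\mathbf{x})\}|$, so that $|A_i|/2^n \geq 1/2 + \varepsilon$ and $\mathbb{E}_{\mathbf{x}}[k(\mathbf{x})] \geq (1/2+\varepsilon) t$. A double-count combined with Jensen's inequality gives
\[
\sum_{i,j \in [t]} \frac{|A_i \cap A_j|}{2^n} \;=\; \mathbb{E}_{\mathbf{x}}[k(\mathbf{x})^2] \;\geq\; (1/2+\varepsilon)^2\, t^2.
\]
Discarding the diagonal terms (which contribute at most $t$ in total), the average of $|A_i \cap A_j|/2^n$ over ordered pairs $i \neq j$ is at least $(1/2+\varepsilon)^2 - O(1/t)$, which exceeds $1/4 + \varepsilon/2$ once $t$ is above an absolute constant.

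On the other hand, $A_i \cap A_j \subseteq \{\mathbf{x} : P_i(\mathbf{x}) = P_j(\mathbf{x})\}$, and distinct polynomials in $\mathcal{P}_1$ agree on at most a $1/2$-fraction of inputs (since $\delta(\mathcal{P}_1) \geq 1/2$), giving the uniform ceiling $|A_i \cap A_j|/2^n \leq 1/2$. Combining this ceiling with the average lower bound through a Markov-type estimate forces a constant fraction (bounded below by an absolute constant independent of $\varepsilon$) of ordered pairs to satisfy $|A_i \cap A_j|/2^n \geq 1/4 - 0.001$. For each such pair, $P_i$ and $P_j$ agree on at least a $(1/4 - 0.001)$-fraction of $\{0,1\}^n$, so \Cref{lemma:sparse-pgroups} applies and gives $|\mathrm{vars}(P_i - P_j)| \leq C_0$; that is, $(i,j)$ is an edge of $\mathcal{G}$. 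Hence $|E(\mathcal{G})| = \Omega(t^2)$, and averaging over degrees yields a vertex of degree $\Omega(t)$.

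The only real subtlety is the tight arithmetic: the ceiling ($1/2$) and the sparsity threshold ($1/4 - 0.001$) leave a gap of $1/4 + 0.001$, while the average ($\approx 1/4 + \varepsilon$) leaves slack $\varepsilon + 0.001$ above the threshold, so the ratio produces an absolute (rather than $\varepsilon$-dependent) constant fraction of good pairs. For very small $t$, the $O(1/t)$ correction could swamp the $\varepsilon$ slack, but in that regime $t$ is bounded by an absolute constant and the lemma's conclusion holds vacuously (or by adjusting the hidden constant in $\Omega(t)$).
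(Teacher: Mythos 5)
Your proof is correct and takes a genuinely different route from the paper's. The paper shows that $\mathcal{G}$ has no independent set of size larger than an absolute constant by applying Jukna's set-intersection lemma (\Cref{lemma: pw intersection}) to agreement sets truncated to density exactly $1/2$: if all pairwise intersections had density $\leq 1/4 - 0.001$, the union of a constant number of them would already exceed $2^n$. It then converts ``no large independent set'' into ``some vertex has degree $\Omega(t)$'' via the greedy bound \Cref{lem: large graph degree}. You bypass the independent-set machinery entirely: the identity $\sum_{i,j}|A_i\cap A_j|/2^n = \E_{\mathbf{x}}[k(\mathbf{x})^2]$ together with Jensen gives an average pairwise intersection density $\geq 1/4 - O(1/t)$, the ceiling $1/2$ for distinct linear polynomials gives a Markov-type estimate, and a constant fraction of pairs clear the $1/4 - 0.001$ threshold of \Cref{lemma:sparse-pgroups}, yielding $\Omega(t^2)$ edges and hence a vertex of degree $\Omega(t)$. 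Both arguments exploit exactly the same tension (DeMillo--Lipton--Schwartz--Zippel caps intersections at $1/2$, while high agreement with $f$ forces them to average $\approx 1/4$), but the direct edge-count is arguably the cleaner packaging.

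One small clarification worth making explicit: the reason the fraction of good pairs is an absolute constant is the $0.001$ slack baked into \Cref{lemma:sparse-pgroups}, not the $\varepsilon$ in the Jensen bound. Even after discarding the $\varepsilon$ contribution entirely, the average is $\geq 1/4 - O(1/t)$ while the threshold is $1/4 - 0.001$, so the gap is $\geq 0.001 - O(1/t)$, which is bounded below by a positive absolute constant once $t$ exceeds an absolute constant (roughly $t\gtrsim 10^3$). Your final remark phrases the problematic regime as ``when $O(1/t)$ swamps the $\varepsilon$ slack,'' which would be an $\varepsilon$-dependent range $t \lesssim 1/\varepsilon$; the relevant regime is actually $t$ below an absolute constant, so the vacuous-case escape hatch is legitimate as stated.
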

\begin{proof}
We will show that there does not exist an independent set of size $k$ in $\mathcal{G}$, where $k$ is a large enough constant. Then applying \Cref{lem: large graph degree}, we get that there is a vertex of degree at least $(t/k)-1 \ge \Omega(t)$.\newline
Let $v_1, \dots ,v_k \in V$ be any $k$ distinct vertices. We will show that $\set{v_{1},\dots,v_{k}}$ do not form an independent set for a large enough constant $k$. 
Recall that the vertex $v_i$ corresponds to some linear function $P_{j_i} \in \mathsf{List}_{\varepsilon}^{f}$. Let $A'_i = \{x : f(\x) = P_{j_i}(\x)\}$ and thus $|A'_i| \geq (1/2 \cdot 2^n)$. Also, let $A_i \subset A'_i$ such that $|A_i| = (1/2 \cdot 2^n)$. Now we use~\Cref{lemma: pw intersection} to upper bound $k$ assuming for all $i_1, i_2 \in [k]$, $|A_{i_1} \cap A_{i_2}| \leq ((1/4 - 0.001) \cdot 2^n)$. Since $| \cup_{i=1}^k A_i| \leq 2^n$, we have
\begin{align*}
    2^n \geq \frac{(1/2 \cdot 2^n)^2 \cdot k}{(1/2 \cdot 2^n) + (k-1) \cdot ((1/4 - 0.001) \cdot 2^n)}
\end{align*}
which implies that
\begin{align*}
    1/4 + 0.001 \geq k \cdot 0.001.
\end{align*}
Thus choosing $k$ to be constant greater than $4002$ implies that there exists $i_1, i_2 \in [k]$ such that $|A_{i_1} \cap A_{i_2}| > (1/4 - 0.001) \cdot 2^n$.

Since $\delta(P_{j_{i_1}}, P_{j_{i_2}}) \geq 1/2^n \cdot |A_{i_1} \cap A_{i_2}|$ for all $i_1, i_2 \in [k]$, using \Cref{lemma:sparse-pgroups} and the choice of $C_0$ as discussed before, we get 
$|\mathrm{vars}(P_{j_{i_1}}-P_{j_{i_2}})| \leq C_0$. In particular, the corresponding vertices are adjacent in $\mathcal{G}$, contradicting the assumption that $\{v_1,\ldots, v_k\}$ is an independent set.

Thus we have shown no subset of $k$ vertices in $\mathcal{G}$ forms an independent set and this concludes the proof.
\end{proof}

Let $v \in V$ be the vertex $\mathcal{G}$ given by \Cref{lemma:clique} with degree $m \ge \Omega(t)$. As discussed above, to prove a $\mathrm{poly}(1/\varepsilon)$ upper bound on $t = |\mathsf{List}_{\varepsilon}^{f}|$, it suffices to prove a $\mathrm{poly}(1/\varepsilon)$ upper bound on $m$. Let $P_0$ be the linear polynomial corresponding to the vertex $v$ and let $P_1,P_2,\dots,P_m$ be the linear polynomials in $\mathsf{List}_{\varepsilon}^{f}$ that are adjacent to $P_0$ in $\mathcal{G}$. For every $i \in [m]$, we define a degree $1$ polynomial $\Tilde{P}_{i} := P_{i} - P_{1}$, and we define $\Tilde{f} := f - P_{0}$. Note that $\delta(\Tilde{P}_{i}, \Tilde{f}) = \delta(P_{i}, f)$, i.e., $\Tilde{P}_i \in\mathsf{List}_{\varepsilon}^{\Tilde{f}} $. Moreover, by the definition of the graph $\mathcal{G}$, we have that $|\mathrm{vars}(\Tilde{P}_{i})| \leq C_0$ for all $i \in [m]$.

\noindent
Let $\ell = |\bigcup_{i = 1}^{m} \mathrm{vars}(\Tilde{P}_{i})|$. We will say $\ell$ is \textit{small} if $\ell \leq (1/\varepsilon)^{K}$, where $K = 20$, and $\ell$ is \textit{large} otherwise. Depending on which case we are in, we use \Cref{lemma:ell-small} or \Cref{lemma:ell-large} below to show that $m\leq \poly(1/\varepsilon).$ This completes the proof of \Cref{thm:comb-prod-pgroups} as we have already established that $m\ge \Omega(t)$.

\subsubsection{Union of Supports is Small}
In this subsection, we prove that if the number of variables in the support of polynomials in $\Tilde{V}$ is small, then $m = |\Tilde{V}|$ is small too. We prove the following lemma.\\

\begin{lemma}[$\ell$ is small]\label{lemma:ell-small}
Let $f : \Boo^{n} \to G$ be a polynomial and let $\mathcal{A} \subseteq  \mathsf{List}_{\varepsilon}^{f}$ be a set of degree-$1$ polynomials satisfying the following property: There exists an absolute constant $C$ such that for every $P \in \mathcal{A}$, $|\mathrm{vars}(P)| \leq C$.\newline
Let $\ell = |\bigcup_{P \in \mathcal{A}} \mathrm{vars}(P)|$. 
 Then we have $|\mathcal{A}| \leq O(\ell^C\cdot (1/\varepsilon))$.
\end{lemma}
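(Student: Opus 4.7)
The plan is to prove the bound by a two-step decomposition: partition $\mathcal{A}$ by variable support, and bound each piece separately. For each subset $S \subseteq [\ell]$ with $|S| \leq C$, let $\mathcal{A}_S = \{P \in \mathcal{A} : \mathrm{vars}(P) = S\}$. These classes partition $\mathcal{A}$, and the number of such subsets is at most $\binom{\ell}{\leq C} \leq O(\ell^C)$. It therefore suffices to prove the per-support bound $|\mathcal{A}_S| \leq O(1/\varepsilon)$, since summing over the $O(\ell^C)$ possible supports yields $|\mathcal{A}| \leq O(\ell^C/\varepsilon)$.

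For the per-support bound, fix $S$ with $|S| = s \leq C$ and view each $P \in \mathcal{A}_S$ as a linear function $\{0,1\}^S \to G$ extended to $\{0,1\}^n$ trivially. For each $\mathbf{y} \in \{0,1\}^S$, define the conditional distribution $\tilde{p}_\mathbf{y}$ on $G$ by $\tilde{p}_\mathbf{y}(v) = \Pr_{\mathbf{z} \sim \{0,1\}^{[n]\setminus S}}[f(\mathbf{y},\mathbf{z}) = v]$. A direct computation shows that $\delta(f,P) \leq 1/2 - \varepsilon$ translates exactly to $\mathbb{E}_{\mathbf{y} \in \{0,1\}^S}[\tilde{p}_\mathbf{y}(P(\mathbf{y}))] \geq 1/2 + \varepsilon$. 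Since a linear polynomial with support $\subseteq S$ is determined by its values at $|S|+1$ affinely independent points (such as $0^S$ together with the unit vectors $e_i$ for $i \in S$), bounding $|\mathcal{A}_S|$ reduces to counting the tuples $(v_\mathbf{y})_{\mathbf{y} \in \{0,1\}^S}$ that (i) lie in the $(|S|+1)$-dimensional affine subspace of $G^{2^{|S|}}$ determined by linearity and (ii) satisfy the aggregate weight constraint $\sum_\mathbf{y} \tilde{p}_\mathbf{y}(v_\mathbf{y}) \geq 2^{|S|}(1/2 + \varepsilon)$.

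The counting itself proceeds via a case analysis on the ``concentration profile'' of the conditional distributions $\tilde{p}_\mathbf{y}$. The key observation is a Markov-type constraint: for any $\mathbf{y}$ and any threshold $\tau > 0$, the number of values $v \in G$ with $\tilde{p}_\mathbf{y}(v) \geq \tau$ is at most $1/\tau$. Splitting slots $\mathbf{y}$ into those at which $P(\mathbf{y})$ equals a ``dominant'' value $v^*(\mathbf{y}) = \arg\max_v \tilde{p}_\mathbf{y}(v)$ versus those at which it does not, and using the aggregate lower bound, one shows (as in the $|S|=1$ warm-up, which can be carried out directly by pairing the values of $P$ at $\mathbf{y}=0$ and $\mathbf{y}=1$) that the number of allowed tuples is at most $O(1/\varepsilon)$, with the hidden constant depending only on $s \leq C$.

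The main obstacle is precisely this per-support counting. A naive Plotkin / second-moment argument on pairwise agreement within $\mathcal{A}_S$ yields only the weaker bound $O(1/\varepsilon^2)$: one has $\mathbb{E}[N] \geq |\mathcal{A}_S|(1/2+\varepsilon)$ and $\mathbb{E}[N^2] \leq |\mathcal{A}_S|(|\mathcal{A}_S|+1)/2$ for $N(\mathbf{x}) = |\{P \in \mathcal{A}_S : P(\mathbf{x}) = f(\mathbf{x})\}|$, and Cauchy--Schwarz gives no nontrivial bound for $\varepsilon < 1/4$. Sharpening this to $O(1/\varepsilon)$ requires combining the Markov-type constraint on $\tilde{p}_\mathbf{y}$ at a set of base evaluation points with the linearity constraint (which forces the values at the remaining $2^{|S|} - (|S|+1)$ points to be determined by the base values), so that the per-slot $1/\tau$ bound propagates into the total count without compounding.
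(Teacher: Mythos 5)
Your outer decomposition matches the paper's (the paper sums over sets $T$ of size exactly $C$ and counts polynomials supported inside $T$, whereas you partition by exact support; the resulting $O(\ell^C)$ factor is the same either way). The real work, as you correctly identify, is the per-support bound $|\mathcal{A}_S| \leq O_C(1/\varepsilon)$, and there your proposal has a genuine gap: you never actually prove it. You set up the conditional-probability framework with $\tilde p_{\mathbf{y}}$, state a Markov-type counting observation, and then write ``one shows \dots that the number of allowed tuples is at most $O(1/\varepsilon)$,'' explicitly acknowledging that ``the main obstacle is precisely this per-support counting'' and that sharpening from $O(1/\varepsilon^2)$ to $O(1/\varepsilon)$ ``requires combining'' several constraints --- but the combination is never carried out. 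A correct proof cannot leave that as a stated intention.

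The paper's argument for the per-support bound is actually much more elementary than the route you sketch, and it is worth internalizing because it sidesteps the distributional analysis entirely. Fix $T$ with $|T|=C$ and split the coordinates as $(\mathbf{y},\mathbf{z})$ with $\mathbf{y}\in\{0,1\}^T$. For each $P\in\mathcal{A}_T$ one has $\E_{\mathbf{b}}[\delta(P,f|_{\mathbf{z}=\mathbf{b}})] = \delta(P,f) \leq 1/2-\varepsilon$, where $f|_{\mathbf{z}=\mathbf{b}}$ is a function on just $\{0,1\}^T$ (recall $P$ is unchanged by the restriction). Markov's inequality then gives $\Pr_{\mathbf{b}}[\delta(P, f|_{\mathbf{z}=\mathbf{b}}) \leq 1/2-\varepsilon/10] \geq \varepsilon/10$. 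Averaging over $P\in\mathcal{A}_T$, there exists a single assignment $\mathbf{b}^\star$ for which at least an $\varepsilon/10$ fraction of $\mathcal{A}_T$ is $(1/2-\varepsilon/10)$-close to $f|_{\mathbf{z}=\mathbf{b}^\star}$. But $f|_{\mathbf{z}=\mathbf{b}^\star}$ is a fixed function on $\{0,1\}^C$, and by the trivial list bound (\Cref{claim:finite-list}), any function on $C$ variables has at most $2^{2^C}$ distinct linear polynomials within distance strictly less than $1/2$. Hence $(\varepsilon/10)\cdot|\mathcal{A}_T| \leq 2^{2^C}$, which gives $|\mathcal{A}_T| = O_C(1/\varepsilon)$. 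Notice that this argument never reasons about the full profile of conditional distributions $\tilde p_{\mathbf{y}}$; the single averaging step over $\mathbf{b}$ produces the factor $\varepsilon$, and the constant factor comes from a doubly-exponential-in-$C$ (but absolute) list bound. If you wish to salvage your own route, the missing lemma --- counting tuples $(v_\mathbf{y})$ subject to the aggregate weight and linearity constraints --- is exactly what the Markov-and-fix-$\mathbf{b}$ step replaces, and you would need to actually perform the case analysis you alluded to; as written it is an assertion, not a proof.
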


\begin{proof}[Proof of \Cref{lemma:ell-small}]
Let $\mathcal{A} = \set{P_{1},\ldots,P_{m}}$. Choose any set $T \subseteq \bigcup_{i=1}^{m} \mathrm{vars}(P_{i})$ of size $C$ and let $\mathcal{A}_{T} \subseteq \mathcal{A}$ denote the subset of polynomials that are supported only on (a subset of the) variables in $T$. We have,
\begin{align}
    |\mathcal{A}| \leq \sum_{\substack{T\subseteq \cup_{i=1}^{m} \mathrm{vars}(P_{i}):\\ |T| = C}} |\mathcal{A}_{T}|.
\end{align}
Let $M$ be an upper bound on all $|\mathcal{A}_{T}|$'s, thus
\begin{align}
    |\mathcal{A}| \leq \binom{\ell}{C} M = O(\ell^C \cdot M). \label{eq: ub small eq1}
\end{align}
Fix an arbitrary $T \subseteq \bigcup_{i=1}^{m} \mathrm{vars}(P_{i})$ of size $C$. Label the variables indexed by $T$ as $\set{y_{1},\ldots,y_{|T|}}$ and the remaining variables as $\set{z_{1},\ldots,z_{n-|T|}}$. \Cref{claim:Random-subcube} shows that there exists an assignment $\mathbf{z} = \mathbf{b}$ such that $(\varepsilon/10)$-fraction of $\mathcal{A}|_{\mathbf{z} = \mathbf{b}}$ are $\approx (1/2 - \varepsilon)$-close to $f|_{\mathbf{z} = \mathbf{b}}$, where $\mathcal{A}|_{\mathbf{z}}$ denotes that every polynomial in $\mathcal{A}$ is restricted according to $\mathbf{z}$.


\begin{claim}\label{claim:Random-subcube}
There exists an assignment $\mathbf{b} \in \Boo^{n-|T|}$ to the $\mathbf{z}$ such that there is a subset $\mathcal{B} \subseteq \mathcal{A}_{T}$ satisfying the following properties:
\begin{itemize}
    \item For each $P \in \mathcal{B}$, $\delta(P|_{\mathbf{z} = \mathbf{b}}, f|_{\mathbf{z} = \mathbf{b}}) \leq (1/2 - \varepsilon/10)$,
    \item $|\mathcal{B}|$ is at least $\varepsilon/10 \cdot |\mathcal{A}_{T}|$.
\end{itemize}
\end{claim}
Let us defer the proof of \Cref{claim:Random-subcube} for a while and see how to use \Cref{claim:Random-subcube} to finish the proof of \Cref{lemma:ell-small}. Observe that for any polynomial in $\Tilde{P} \in \mathcal{B}$, $\delta(\Tilde{P}, f|_{\mathbf{z} = \mathbf{b}}) \leq (1/2 - \varepsilon/10)$ and $\Tilde{P}, f|_{\mathbf{z} = \mathbf{b}}$ depend only on $C$ variables. Thus by applying \Cref{claim:finite-list} on $f|_{\mathbf{z} = \mathbf{b}}$, we get the following: 
\begin{align*}
    (\varepsilon/10) \cdot |\mathcal{A}_{T}| \leq 2^{2^{C}} \; \Rightarrow \; M = \bigO(1/\varepsilon).
\end{align*}
Along with Equation~\ref{eq: ub small eq1}, this completes the proof of \Cref{lemma:ell-small}.
\end{proof}
Now we give the proof of \Cref{claim:Random-subcube} and this will complete the proof of the case when $\ell$ is small. 
\begin{proof}[Proof of \Cref{claim:Random-subcube}]
Fix any degree-$1$ polynomial $P \in \mathcal{A}_{T}$. Since $\delta(P,f) \leq (1/2 - \varepsilon)$, we have
\begin{align*}
    \Pr_{\mathbf{z}} \, \Pr_{\mathbf{y}}[f(\mathbf{y}, \mathbf{z}) \neq P(\mathbf{y}, \mathbf{z})] \leq \frac{1}{2} - \varepsilon,
\end{align*}
For an assignment $\mathbf{b} \in \Boo^{n - |T|}$ of $\mathbf{z}$, let $\rho_{\mathbf{b}}^{P}$ denote the distance between $P|_{\mathbf{z} = \mathbf{b}}$ and $f|_{\mathbf{z} = \mathbf{b}}$, i.e.
\begin{align*}
    \rho_{\mathbf{b}}^{P} = \Pr_{\mathbf{y}}[\Tilde{f}(\mathbf{y}, \mathbf{b}), \neq P(\mathbf{y}, \mathbf{b})]
\end{align*}
So we get,
\begin{align*}
    \mathbb{E}_{\mathbf{b}}[\rho_{\mathbf{b}}^{P}] \leq \frac{1}{2} - \varepsilon \\
    \Rightarrow \Pr_{\mathbf{b}}[\rho_{\mathbf{b}}^{P} > (1/2 - \varepsilon/10)] \leq \frac{1/2 - \varepsilon}{1/2 - \varepsilon/10} \tag*{(Markov's Inequality)} \\
    \Rightarrow \Pr_{\mathbf{b}}[\rho_{\mathbf{b}}^{P} \leq (1/2 - \varepsilon/10)] \geq \varepsilon/10
\end{align*}
This implies that for $\mathbf{b} \sim U_{n - |T|}$, in expectation at least $\varepsilon/10$ fraction of polynomials in $\mathcal{A}_{T}$ are $(1/2 - \varepsilon/10)$ close to $f|_{\mathbf{z} = \mathbf{b}}$. This implies the existence of an assignment $\mathbf{b} \in \Boo^{n-|T|}$ and a subset $\mathcal{B}$ as claimed.
\end{proof}

\subsubsection{Union of Supports is Large}
In this subsection, we prove the complementary case to \Cref{lemma:ell-small}.\\

\begin{lemma}[$\ell$ is large]\label{lemma:ell-large} 
Let $f : \Boo^{n} \to G$ be a polynomial and let $\mathcal{A} \subseteq  \mathsf{List}_{\varepsilon}^{f}$ be a set of degree $1$ polynomials such that $\ell = |\bigcup_{P\in \mathcal{A}} \mathrm{vars}(P)|$. If $|\mathrm{vars}(P)| \leq C_0$ for all $P \in \mathcal{A}$ and $\ell \geq (1/\varepsilon)^{20}$ (i.e. $\ell$ is large), then $|\mathcal{A}| = O((1/\varepsilon)^{20 C_0})$.
\end{lemma}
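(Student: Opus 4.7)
The plan is to show that, were $|\mathcal{A}|$ to exceed $C_0!\cdot(1/\varepsilon)^{20 C_0}$, one could extract a large sub-collection of polynomials with pairwise disjoint supports all close to a common target function, and then derive a contradiction by combining the second-moment method with the anti-concentration bound of \Cref{lemma:sparse-pgroups}.

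First, I would apply the Erd\H{o}s--Ko--Rado sunflower lemma to the family of supports $\{\mathrm{vars}(P): P\in \mathcal{A}\}$, each of size at most $C_0$. Having $|\mathcal{A}| > C_0!\cdot (k^\star-1)^{C_0}$ produces a sunflower of size $k^\star=(1/\varepsilon)^{20}$: polynomials $P_1,\ldots,P_{k^\star}$ whose supports take the form $\mathrm{vars}(P_i) = Y \cup T_i$, with $Y$ a common core satisfying $|Y|\le C_0$ and with pairwise disjoint petals $T_1,\ldots,T_{k^\star}$. The role of the hypothesis $\ell \ge (1/\varepsilon)^{20}$ is to rule out the trivial alternative in which all supports coincide: since $\ell$ outstrips $C_0$, the sunflower that emerges must genuinely have many nonempty petals.

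Next, I would fix the core variables via Markov-style averaging. Since each $P_i$ is $(1/2 - \varepsilon)$-close to $f$, applying Markov's inequality over a uniformly random assignment $\mathbf{y}^\star\in \{0,1\}^Y$ and a uniform index $i$ yields an assignment $\mathbf{y}^\star$ for which at least $k'' = \Omega(\varepsilon k^\star) = \Omega((1/\varepsilon)^{19})$ of the restrictions $Q_i := P_i|_{Y = \mathbf{y}^\star}$ remain $(1/2 - \varepsilon/2)$-close to $g := f|_{Y = \mathbf{y}^\star}$, and these $Q_i$ still have pairwise disjoint supports $T_i$. Setting $N(\mathbf{x}) = |\{i : Q_i(\mathbf{x}) = g(\mathbf{x})\}|$, I would have $\mathbb{E}[N] \ge k''(1/2 + \varepsilon/2)$, while the second moment expands as $\mathbb{E}[N^2] \le k'' + \sum_{i \ne j}\Pr_{\mathbf{x}}[Q_i(\mathbf{x}) = Q_j(\mathbf{x})]$. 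For each pair, the polynomial $Q_i - Q_j$ has support $T_i \cup T_j$ by disjointness; so long as $|T_i|+|T_j| > C_0$, the contrapositive of \Cref{lemma:sparse-pgroups} gives the crucial estimate $\Pr[Q_i = Q_j] < 1/4 - 0.001$, strictly improving on the generic $1/2$ coming from $\delta(\mathcal{P}_1) = 1/2$. Substituting this into the constraint $\mathbb{E}[N^2] \ge \mathbb{E}[N]^2$ and simplifying forces $k''$ to be an absolute constant independent of $\varepsilon$, contradicting $k'' = \Omega((1/\varepsilon)^{19})$ for small $\varepsilon$.

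The main obstacle is the degenerate regime where most petals satisfy $|T_i| \le C_0/2$, so that pairwise petal sums never exceed $C_0$ and the improved anti-concentration of \Cref{lemma:sparse-pgroups} cannot be invoked directly. In this case, the restricted family $\{Q_i\}$ has support size strictly smaller than $C_0$, and I would iterate the sunflower-plus-averaging argument on it, losing another factor in $\varepsilon$ per level. When the recursion bottoms out in the extreme where the core accounts for essentially the entire support, the number of distinct linear polynomials with that fixed support appearing in the list is at most $2^{2^{|Y|}} \le 2^{2^{C_0}}$ by \Cref{claim:finite-list} applied to the restriction, which is an absolute constant. Tracking the $\varepsilon$-losses across the averaging step, across the bounded number ($\le C_0$) of recursive levels, and across the Erd\H{o}s--Ko--Rado overhead at each level, is what accounts for the final exponent $20 C_0$ in the bound $|\mathcal{A}| = O((1/\varepsilon)^{20 C_0})$.
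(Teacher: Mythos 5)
Your proposal departs substantially from the paper's proof, and the route you chose does not go through.

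The paper's argument never touches the sunflower lemma or the anti-concentration bound inside this step. It defines the set $B$ of ``frequent'' variables (those appearing in $\mathrm{vars}(P_j)$ for at least an $\varepsilon^{10}$-fraction of the $P_j$'s), notes $|B|\leq C_0/\varepsilon^{10}$, bounds separately the polynomials supported entirely inside $B$ via the $\ell$-small lemma, greedily extracts $r=\Omega(1/\varepsilon^{10})$ polynomials whose supports \emph{outside} $B$ are pairwise disjoint, and then, for each ``hub'' index $j$, applies a Chernoff bound (conditioning on $\mathbf{x}|_B$ and on $Q_j(\mathbf{x})$, under which the indicators $\mathbf{1}[Q_i(\mathbf{x})=Q_j(\mathbf{x})]$ become independent with success probability $\leq 1/2$ by Schwartz--Zippel). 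The exponentially small tail then survives the union bound over the $r$ choices of hub, contradicting the $\geq\varepsilon/10$ lower bound from averaging.

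Your proposal replaces this with a sunflower extraction followed by a second-moment argument on $N(\mathbf{x})=|\{i:Q_i(\mathbf{x})=g(\mathbf{x})\}|$. The structural difficulty, which you correctly flag, is that for the inequality $\mathbb{E}[N]^2\leq\mathbb{E}[N^2]$ to force $k''=O(1)$, the per-pair bound $\Pr[Q_i=Q_j]<1/4-0.001$ from \Cref{lemma:sparse-pgroups} is essential; the generic Schwartz--Zippel bound of $1/2$ gives no contradiction because $1/4+\varepsilon/2<1/2$. But \Cref{lemma:sparse-pgroups} only kicks in when $|\mathrm{vars}(Q_i-Q_j)|>C_0$, i.e.\ when the two petals are jointly large. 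Your proposed recursive fix for the small-petal regime does not make progress: once a sunflower has been extracted and the core $Y$ fixed, the supports of the restricted $Q_i$'s are exactly the petals $T_i$, which are already pairwise disjoint. A second application of the sunflower lemma to this family necessarily produces an empty core, so there is nothing further to restrict, and you are stuck exactly where you left off, with petals still possibly of size $\le C_0/2$. The base case you invoke (\Cref{claim:finite-list} on a constant-size support) only becomes reachable if the recursion reduces support sizes, which it does not.

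A secondary, but still real, issue: the sunflower lemma (Erd\H{o}s--Rado, not Erd\H{o}s--Ko--Rado) is applied to the family of supports $\{\mathrm{vars}(P):P\in\mathcal{A}\}$, which may have many repetitions when several $P$'s share a support; you would first need to deduplicate (\Cref{claim:finite-list} bounds the multiplicity by $2^{2^{C_0}}$) before the sunflower count is available. Relatedly, the hypothesis $\ell\geq(1/\varepsilon)^{20}$ does not by itself guarantee the sunflower you extract has nonempty petals, since many supports could coincide while a few stragglers inflate $\ell$. The paper's ``frequent variables'' notion is precisely engineered to avoid both of these pitfalls, and the Chernoff-on-a-hub argument is what lets it dispense with anti-concentration entirely in this step, so that the only place anti-concentration is used in Section 4.1 is in \Cref{lemma:clique} (via the constant $C_0$), not here.
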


\begin{proof}
Let $\mathcal{A} = \set{P_{1},\ldots,P_{m}}$ where for contradiction we assume that $m \geq (C_0/\varepsilon)^{20C_0}$. Let $B$ denote the following subset of variables:
\begin{align*}
    B := \setcond{i \in \bigcup_{j=1}^{m} \mathrm{vars}(P_{j}) \; }{\; \# j \, \text{ such that } \, \mathrm{vars}(P_{j}) \ni i \text{~is at least~} \varepsilon^{10} \cdot m}.
\end{align*} 
Since $|\mathrm{vars}(P_{j})| \leq C_0$ for all $j \in [m]$, $|B|$ has at most $C_0/\varepsilon^{10}$ variables. 
We call a polynomial $P_j$ an \emph{ignore polynomial} if $\mathrm{vars}(P_j) \subset B$. The number of ignore polynomials is at most $O((C_0/\varepsilon^{10})^{C_0} \cdot (1/\varepsilon))$ using \Cref{lemma:ell-small}. 
Let $\mathcal{A}_{0}$ denote the set $\mathcal{A}$ obtained after removing the ignore polynomials, and we have $|\mathcal{A}_{0}| 
\geq m/2$.
 \\

To prove the claim, we construct a set $\mathcal{A}' \subseteq \mathcal{A}_{0}$ satisfying the following properties:
\begin{itemize}
    \item Each polynomial in $\mathcal{A}'$ depends on a variable outside $B$, i.e. for any $Q \in \mathcal{A}'$, $\mathrm{vars}(Q)$ is not a proper subset of $B$.
    \item For any two distinct polynomials in $\mathcal{A}'$, their pairwise intersection lies inside $B$, i.e. if $Q_{i}, Q_{j} \in \mathcal{A}'$ are two distinct polynomials, then,
    \begin{align}
        (\mathrm{vars}(Q_{i}) \setminus B) \cap (\mathrm{vars}(Q_{j}) \setminus B) = \emptyset. \label{eqn: fix vars in bad set}
    \end{align}
    In other words, if the variables in $B$ are fixed to an assignment, then the resulting polynomials in $\mathcal{A}'$ are supported on pairwise disjoint sets of variables.
    \item $|\mathcal{A}'| = \Omega(1/\varepsilon^{10})$.
\end{itemize}
We will construct $\mathcal{A}'$ iteratively, initially $\mathcal{A}' = \emptyset$. Consider any polynomial $Q_{1} \in \mathcal{A}_{0}$ and let $B_{1}$ denote the set $\mathrm{vars}(Q_{1}) \setminus B$. By definition of $B$, each variable in $B_{1}$ occurs in variable sets of at most $\varepsilon^{10} \cdot m$ polynomials and $|B_{1}| \leq C_0$. Thus there are at most $(\varepsilon^{10} \cdot m) \cdot C_0$ many polynomials in $\mathcal{A}_{0}$ containing a variable in $B_1$; remove the polynomials from $\mathcal{A}_0$ and also update $\mathcal{A}' = \mathcal{A}' \cup \{Q_1\}$. Observe that the size of the resulting $\mathcal{A}_{0}$ is at least $m/2 - (C_0 \cdot \varepsilon^{10} \cdot m)$. Thus, in a similar manner and using the assumption that $m \geq \poly(1/\varepsilon)$ for suitably large polynomial of $(1/\varepsilon)$, we can choose $Q_2$ from $\mathcal{A}_0$ in the next iteration, and so on to obtain $\mathcal{A}' = \{Q_1, Q_2, \dots, Q_r\}$ where $r = \Omega(1/\varepsilon^{10})$.

Now choose $\mathbf{x} \in \Boo^n$ uniformly at random and consider
\begin{align}
    \Pr_{\x}\left[\exists i^{\star} \in [r] : \frac{\sum_{i \neq i^{\star}, i=1}^{r} 1[Q_{i^{\star}}(\x) = Q_i(\x)]}{r} \geq (1/2 + \varepsilon/10 -1/r) \right]. \label{eqn: not many agree on a random point}
\end{align}
The above equation denotes the probability that at a random $\x$, some $Q_{i^{\star}}$ agrees with a large fraction of other $Q(i)$'s. 
Going forward, the proof strategy is to upper bound and lower bound the above probability to get a contradiction.

First for the upper bound, observe that for a random $\mathbf{x} \in \Boo^{n}$ conditioned on $\x|_B = \mathbf{a} \in \Boo^{|B|}$, by \Cref{eqn: fix vars in bad set}, $Q_{i}(\x)$'s are independent random variables as they depend on disjoint set of variables. In particular, for a fixed $j \in [r]$ and $i \neq j \in [r]$, $\Pr_{\x}[Q_i(\x) = Q_j(\x) \mid \x|_{B = a}] \leq 1/2$ (by the Schwartz-Zippel Lemma) for all $a \in \Boo^{|B|}$, and thus by Chernoff bound,
\begin{align*}
    \Pr_{\x}\left[\frac{\sum_{i \neq j, i=1}^{r}1[Q_j{(\x)} = Q_i(\x)]}{r} \geq (1/2 + \varepsilon/10 - 1/r) \right] \leq \exp(-\varepsilon^2 \cdot r).
\end{align*}
Thus by union bound, the probability in \Cref{eqn: not many agree on a random point} is upper bounded by $(r \cdot \exp(-\varepsilon^2 \cdot r)) = O(\exp(-(\varepsilon^2 \cdot r)/2)) = O(\exp(-1/\varepsilon^8))$.

Next, we give a lower bound for the probability in \Cref{eqn: not many agree on a random point}.
Since each $Q_i \in \mathcal{A}'$, for $i \in [r]$, agrees with $f$ on at least $(1/2 + \varepsilon)$ fraction of $\Boo^n$, we have
\begin{align*}
    \Pr_{i \in [r], \x \in \Boo^{n}} [Q_i(\x) \neq f(\x)] \leq (1/2 - \varepsilon).
\end{align*}
From an argument similar to that of \Cref{claim:Random-subcube} we have that on at least $(\varepsilon/10)$ fraction of $\x \in \Boo^{n}$, $\Pr_i[Q_i(\x) \neq f(\x)] \leq (1/2 - \varepsilon/10)$. On such an $\x$, there exists an $i^{\star} \in [r]$ such that $Q_{i^{\star}}$ agrees with at least $(1/2 + \varepsilon/10)r - 1$ of $Q_1, \dots, Q_r$. In other words,
\begin{align*}
    \Pr_{\x}\left[\exists i^{\star} \in [r] : \frac{\sum_{i \neq i^{\star}, i=1}^{r}1[Q_{i^{\star}}(\x) = Q_i(\x)]}{r} \geq (1/2 + \varepsilon/10 - 1/r) \right] \geq \varepsilon/10,
\end{align*}
which contradicts the upper bound.
\end{proof}

Summarizing the proof of \Cref{thm:comb-prod-pgroups} - We reduced our problem to upper bounding the list size of sparse polynomials (see \Cref{lemma:clique}). We then have two cases, depending on the size of the union of variables in the support. In the case where this is small (see \Cref{lemma:ell-small}), we upper bound the list size by fixing some variables and using a union bound. For the second case (see \Cref{lemma:ell-large}), we show that we can treat the polynomials as independent random variables and then use concentration inequalities.

\subsection{Combinatorial Bound for $2$ and $3$-groups}\label{subsec:comb-prod-2}
In this subsection, we will prove a particular case for the third step towards proving \Cref{thm:comblistdeg1}. We will prove \Cref{thm:comb-prod-2groups}. The proof uses similar techniques as used in \cite{DinurGKS-ECCC} to prove combinatorial bound for group homomorphisms. We start by first recalling a definition from \cite{DinurGKS-ECCC} of a set system with some nice properties.

\paragraph{Special intersecting set systems.} The next definition is about \textit{special intersecting sets}. Let $S_{1},\ldots,S_{t}$ be subsets of universe $X$. For a set $S \subseteq X$, let $\mu(S)$ denote the density of $S$, i.e. $\mu(S) := |S|/|X|$. For a subset of indices $I \subseteq [t]$, let $S_{I}$ denote the common intersection of subsets corresponding to indices in $I$, i.e. $S_{I} := \bigcap_{i \in I} S_{i}$.
\begin{definition}[Special intersecting sets \cite{DinurGKS-ECCC}]\label{def:special-intersection-sets}
Let $\rho, \tau \in (0,1]$ be two numbers such that $\tau \leq \rho$ and $c$ is a constant. Sets $S_{1},\ldots,S_{t}$ are said to be $(\rho, \tau, c)$-special intersecting sets if they satisfy the following conditions:
\begin{enumerate}
    \item (Each subset is dense) For every subset $S_{i}$, $\mu(S_{i})$ is at least $\rho$.
    \item (The pairwise intersection is small) For any two distinct subsets $S_{i}$ and $S_{j}$, $\mu(S_{i} \cap S_{j})$ is at most $\rho$.
    \item Let $\mu(S_{i}) = \rho + \alpha_{i}$. Then, $\sum_{i = 1}^{t} \alpha_{i}^{c} \leq 1$.
    \item (Sunflower-structure) For subsets $I,J \in [t]$ where $J \subseteq I$ and $|J| \geq 2$, if $\mu(S_{I})$ is strictly more than the threshold $\tau$, then the common intersection $S_{I}$ is equal to $S_{J}$.
\end{enumerate}
\end{definition}

The following lemma is a crucial lemma in our technical results, which essentially says that if we define a ``potential'' function on the density of subsets and the subsets form certain special intersection sets, then we can give an upper bound on the potential of the union of subsets. In particular, if the density of each subset is ``large'', then we can give an upper bound on the number of subsets.\\

\begin{lemma}\label{lemma:special-intersection-bound}
(Theorem 3.2 of \cite{DinurGKS-ECCC}). Fix any constant $C$. Then there exists a constant $D$ (depending on $C$) satisfying the following: Suppose $S_{1},\ldots,S_{t}$ are $(\rho, \rho^{2}, C)$-special intersection sets for $\rho > 0$ and $\mu(S_{i}) = \rho + \alpha_{i}$. Let $\mu \paren{\bigcup_{i} S_{i}} = \rho + \alpha$. Then,
\begin{align*}
    \sum_{i=1}^{t} \alpha_{i}^{D} \leq \alpha^{D} \\
\end{align*}
\end{lemma}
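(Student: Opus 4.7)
The plan is to exploit the sunflower-like structure imposed by condition 4 of \Cref{def:special-intersection-sets}, together with the pairwise bound in condition 2 and the regularity condition 3, to decompose each $S_i$ into a common ``core'' plus a nearly disjoint ``petal''. For each $i$, let $T_i := S_i \setminus \bigcup_{j \neq i}(S_i \cap S_j)$ denote the private portion of $S_i$; these petals are pairwise disjoint by construction, and the first goal is to show $\mu(T_i) \geq \alpha_i - O(\rho^2)$. A naive union bound on $S_i \setminus T_i$ would be lossy since pairwise intersections can overlap as $j$ varies. Here condition 4 becomes essential: whenever a subcollection of $S_i$'s has intersection density above $\tau = \rho^2$, its intersection is pinned to the pairwise intersection, so the dense portions of the various $S_i \cap S_j$ all coincide inside a single core $K_i \subseteq S_i$ of density essentially $\rho$. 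Everything left over after removing this core from $S_i$ has density at most $\rho^2$ in any single pairwise intersection, so it can be charged to the petal $T_i$.

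Second, I would combine the petal bounds with the regularity condition to lower-bound $\sum_i \mu(T_i)$. Since $\bigcup_i T_i \subseteq \bigcup_i S_i$ and the petals are disjoint, $\sum_i \mu(T_i) \leq \alpha + \rho - \mu(\text{common core})$, which after cancellation yields an inequality of the shape $\sum_i \alpha_i \leq \alpha + (\text{error})$. The error is controlled by condition 3: $\sum_i \alpha_i^C \leq 1$ bounds the number of indices with $\alpha_i$ above any given threshold, limiting how much $O(\rho^2)$ slack can accumulate across petals.

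Third, to upgrade the linear bound $\sum_i \alpha_i \lesssim \alpha$ to the claimed $\sum_i \alpha_i^D \leq \alpha^D$, I would split the indices dyadically by the magnitude of $\alpha_i$: for each band $\alpha_i \in [2^{-k-1}, 2^{-k})$, condition 3 bounds the cardinality of the band by $2^{kC}$, so the contribution of that band to $\sum_i \alpha_i^D$ is at most $2^{kC}\cdot 2^{-kD} = 2^{-k(D-C)}$. Taking $D$ sufficiently larger than $C$ makes the resulting series geometric, and one can absorb it into $\alpha^D$ using the Step 2 inequality together with the observation that $\alpha_i \leq \alpha$ for every $i$ (since $S_i \subseteq \bigcup_j S_j$).

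The main obstacle is Step 1: converting the \emph{conditional} sunflower structure (which only governs intersections whose density exceeds $\rho^2$) into a clean decomposition $S_i = K_i \cup T_i$ where the cores $K_i$ are nearly identical across $i$, rather than contributing additively to the union. This requires an inductive refinement of the common core across growing subcollections, iteratively invoking condition 4 at each stage and using condition 3 to show the refinement terminates quickly. This is precisely the argument in Theorem 3.2 of \cite{DinurGKS-ECCC} that the present lemma quotes, and adapting it to our setting is straightforward once the decomposition is set up as above.
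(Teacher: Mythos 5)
The paper does not actually prove this lemma; it is cited verbatim as Theorem 3.2 of \cite{DinurGKS-ECCC} and used as a black box, so there is no in-paper proof to compare against. Evaluating your sketch on its own terms, there is a genuine gap centered on Step 2. You plan to derive an intermediate linear bound of the shape $\sum_i \alpha_i \leq \alpha + (\text{error})$ and then upgrade to $\sum_i \alpha_i^D \leq \alpha^D$. But the linear bound is strictly stronger than the conclusion (if $\sum_i \alpha_i \leq \alpha$ then $\sum_i \alpha_i^D \leq (\sum_i \alpha_i)^D \leq \alpha^D$ for any $D \geq 1$), and in fact it is false under the hypotheses of the lemma. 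Consider a Majority-type configuration: $\rho = 1/2$, $t = \Theta(1/\varepsilon^2)$ sets $S_i = \{\mathbf{x} : x_i = \mathrm{Maj}(x_1,\ldots,x_t)\}$ with $\alpha_i = \Theta(\varepsilon)$, whose union is the whole cube so that $\alpha = 1/2$. With $\tau = \rho^2 = 1/4$, pairwise intersections have density $\approx 1/4$, triple intersections have density $\approx 1/8 < \tau$ so condition 4 is vacuous, and $\sum_i \alpha_i^C \leq 1$ for $C \geq 2$; all four conditions hold. Yet $\sum_i \alpha_i = \Theta(1/\varepsilon)$ is unbounded while $\alpha \leq 1/2$. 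Your petal decomposition is consistent with this picture, but not in a useful way: here the private sets $T_i$ are essentially empty, the claimed bound $\mu(T_i) \geq \alpha_i - O(\rho^2)$ holds only because $\alpha_i \ll \rho^2$, and summing the per-index slack over $t$ petals produces an error of order $\rho^2 t$, which is useless.

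Step 3 does not close the gap either, even granting its inputs. Using only $\alpha_i \leq \alpha$ and $\sum_i \alpha_i^C \leq 1$ gives $\sum_i \alpha_i^D = \sum_i \alpha_i^C\,\alpha_i^{D-C} \leq \alpha^{D-C}$, which is weaker than $\alpha^D$ for $\alpha < 1$, and the dyadic slicing yields the same exponent; to improve $\alpha^{D-C}$ to $\alpha^D$ the sunflower structure must be used in a way that does not pass through a linear intermediate. You flag the ``inductive refinement of the common core'' as the hard step and defer entirely to the citation for it, so the proposal effectively reproduces the citation rather than a proof, and the scaffolding you build around it (linear bound, then dyadic upgrade) does not carry over to the actual statement.
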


Recall that the Johnson bound (see e.g.\cite[Chapter 7]{GRS-codingbook}) allows us to bound the list-decodability of codes based on their distance. In what follows, we will need an analytic extension of this bound over $\mathbb{Z}_2$ and a similar (but incomparable) statement over $\mathbb{Z}_3$.\\

\begin{lemma}[Extended Johnson bound]\label{lemma:johnson-3}
Let $q \in \set{2,3}$. There exists an absolute constant $C > 0$ so that the following holds. Let $f:\{0,1\}^n\rightarrow \mathbb{Z}_q$ be any function and let $\Phi_{1},\ldots, \Phi_{t}: \Boo^{n} \to \mathbb{Z}_{q}$ be distinct degree-$1$ polynomials such that for every $i \in [t]$,  $f$ and $\Phi_{i}$ agree on at least $1/2 + \alpha_{i}$ fraction of $\Boo^{n}$. Then $\sum_{i=1}^{t} \alpha_{i}^{C} \leq 1$.\\
\end{lemma}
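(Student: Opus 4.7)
The plan is to prove the two cases $q=2$ and $q=3$ separately, using distinctly different techniques.

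For $q=2$, the proof is a direct application of Parseval's identity. Map $f$ to $F(x) = (-1)^{f(x)}$, and observe that any $\mathbb{Z}_2$-affine $\Phi_i$ becomes $(-1)^{\Phi_i} = \epsilon_i \chi_{S_i}$ for some sign $\epsilon_i \in \{\pm 1\}$ and $S_i \subseteq [n]$, where $\chi_{S_i}$ is the Boolean character. The agreement condition $\Pr[f = \Phi_i] \geq 1/2 + \alpha_i$ rewrites as $\epsilon_i \widehat{F}(S_i) \geq 2\alpha_i$, giving $|\widehat{F}(S_i)| \geq 2\alpha_i$. Distinct $\Phi_i, \Phi_j$ with $\alpha_i, \alpha_j > 0$ must have $S_i \ne S_j$: otherwise the characters would coincide up to a sign and the two agreements would sum to $1$, contradicting both exceeding $1/2$. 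Parseval's identity $\sum_S |\widehat{F}(S)|^2 = 1$ then yields $\sum_i 4\alpha_i^2 \leq 1$, so $C = 2$ suffices.

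For $q=3$, the analogous characters $V_i(x) = \omega^{\Phi_i(x)}$ with $\omega = e^{2\pi i/3}$ are unit vectors but are \emph{not} orthogonal: a direct computation gives $|\langle V_i, V_j\rangle| = (1/2)^{d_H(c_i, c_j)}$, where $d_H$ is the Hamming distance between the non-constant parts of the coefficient vectors of $\Phi_i$ and $\Phi_j$. The agreement condition now reads $\Real\langle U, V_i\rangle \geq 1/4 + (3/2)\alpha_i$, where $U = \omega^f$. The plan is to invoke the DGKS framework of \Cref{def:special-intersection-sets} and \Cref{lemma:special-intersection-bound}, applied to $S_i := \{x : f(x) = \Phi_i(x)\}$ with $\rho = 1/2$. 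Two of the four conditions of \Cref{def:special-intersection-sets} are immediate: $\mu(S_i) = 1/2 + \alpha_i \geq 1/2$, and $\mu(S_i \cap S_j) \leq \Pr[\Phi_i = \Phi_j] \leq 1/2$ by $\delta(\mathcal{P}_1) = 1/2$. The heart of the proof is the sunflower condition: for a suitable threshold $\tau \leq 1/4$, whenever $\mu(S_I) > \tau$ every pairwise difference $\Phi_i - \Phi_j$ for $i,j \in I$ must vanish on a set of measure greater than $\tau$; by anti-concentration of linear polynomials over $\mathbb{Z}_3$ (a linear polynomial whose non-constant support exceeds a fixed constant cannot vanish on much more than $1/3$ of $\{0,1\}^n$), this forces the supports of these differences to be bounded, which then propagates to yield $S_I = S_J$ for every $J \subseteq I$ with $|J| \geq 2$.

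The main obstacle is two-fold. First, calibrating the sunflower argument is delicate: unlike the $\mathbb{Z}_2$ setting where pairwise agreements of distinct linear polynomials are sharply $0$ or $1/2$, over $\mathbb{Z}_3$ they take a richer discrete range (values such as $0$, $1/4$, $3/8$, $1/2$, depending on the Hamming structure of the difference), requiring a careful case analysis on $d_H(c_i, c_j)$. Second, condition~(3) of \Cref{def:special-intersection-sets} is itself a bound of the form $\sum_i \alpha_i^{C_0} \leq 1$ that must be established before \Cref{lemma:special-intersection-bound} can be invoked. To resolve this, I plan to first establish a crude dimension-independent bound by partitioning the list into polynomials of small versus large coefficient support and applying Parseval-type bounds together with anti-concentration in each regime, and then use \Cref{lemma:special-intersection-bound} to boost this to the clean statement $\sum_i \alpha_i^C \leq 1$ for an absolute constant $C$.
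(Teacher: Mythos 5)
Your $q=2$ argument via Parseval is correct and matches the paper's reference to the binary Johnson bound. The $q=3$ plan, however, has a fatal circularity. You propose to invoke \Cref{lemma:special-intersection-bound}, but that lemma applies only to families satisfying all four conditions of \Cref{def:special-intersection-sets}, and condition~(3) is precisely a bound of the form $\sum_i \alpha_i^{c} \leq 1$ --- the very statement you are trying to prove. You notice this and propose to "first establish a crude dimension-independent bound... and then use \Cref{lemma:special-intersection-bound} to boost this." But this plan is self-undermining: if the "crude bound" you establish is itself of the form $\sum_i \alpha_i^{C_0} \le 1$ for an absolute constant $C_0$, you are already done (take $C=C_0$) and no boosting is needed; if instead the crude bound is merely a list-size bound $t \le \mathrm{poly}(1/\varepsilon)$, then condition~(3) remains unverified and \Cref{lemma:special-intersection-bound} cannot be invoked. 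Moreover, \Cref{lemma:special-intersection-bound} does not "improve the constant" anyway --- in the paper its role is entirely different: it is used in \Cref{lemma:induction-lemma-2-groups} to lift the bound from $\mathbb{Z}_q$ to general $q$-groups by induction on $|G|$, with \Cref{lemma:johnson-3} already available as an input at the base.

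The paper's actual route for $q=3$ avoids the DGKS machinery entirely. It first proves a pure list-size bound, \Cref{claim:johnson-3-count}, via a geometric inner-product argument over $\mathbb{C}^{2^n}$ (\Cref{lemma:for-johnson-3}, which shows that distinct linear polynomials with pairwise difference supported on $\ge 6$ variables are almost orthogonal, bounding such a family by $31$), combined with the graph/sparse-polynomial case analysis of \Cref{lemma:clique}, \Cref{lemma:ell-small}, and \Cref{lemma:ell-large}. It then converts the counting bound to the analytic statement $\sum_i \alpha_i^C \le 1$ by a dyadic bucketing on the agreement levels $\alpha_i$ --- a short, self-contained summation argument that does not touch the special-intersecting-sets framework at all. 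To repair your proposal, replace the appeal to \Cref{lemma:special-intersection-bound} with this dyadic bucketing, and make your "crude bound" a concrete list-size bound established along the lines of the paper's \Cref{lemma:for-johnson-3} plus the sparse-support case analysis; the near-orthogonality computation you sketch ($|\langle V_i, V_j\rangle|$ decaying in the Hamming distance of coefficient vectors) is indeed the right engine, but it needs to feed into a counting bound and then a dyadic sum, not into the sunflower framework.
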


Note that the above statement immediately implies that the space $\mathcal{P}_1$ is $(\frac{1}{2}-\varepsilon, \poly(1/\varepsilon))$-list decodable over $\mathbb{Z}_2$ and $\mathbb{Z}_3$ since the number of indices $i$ for which $\alpha_i \ge \varepsilon$ can be at most $(1/\varepsilon)^C$ as the sum $\sum_{i=1}^t \alpha_i^C$ has to be at most 1. As we will see below, \Cref{lemma:johnson-3} is essentially equivalent to a statement bounding the number of polynomials with agreement at least $1/2+\varepsilon$. We will need the analytic formulation, however, to apply the proof ideas of \cite{DinurGKS-ECCC}. In the next subsection we will prove \Cref{lemma:johnson-3} and in the subsequent section, we will use \Cref{lemma:johnson-3} and special intersecting set systems to prove \Cref{thm:comb-prod-2groups}.

\subsubsection{Proof of \Cref{lemma:johnson-3} (Extended Johnson Bound)}
Here we prove \Cref{lemma:johnson-3}. When $q = 2$, this follows immediately from the standard ``binary Johnson bound'' (see, e.g.~\cite[Appendix A.1]{DinurGKS-ECCC}). In order to prove it for $q = 3$, we will first prove a combinatorial bound in the special case that the underlying group is $\mathbb{Z}_{3}$ and then \Cref{lemma:johnson-3} will follow from it.\\
We first show that for any $\varepsilon> 0 $, the number of $\Phi_i$'s for which $\alpha_i \ge \varepsilon$ is bounded by $\poly(1/\varepsilon)$.\\

\begin{claim}[Combinatorial bound for $\mathbb{Z}_{3}$]\label{claim:johnson-3-count}
    Let $\varepsilon > 0$ and $f: \Boo^{n} \to \mathbb{Z}_{3}$ be any function. Then, $|\mathsf{List}_{\varepsilon}^{f}| \le \mathrm{poly}(1/\varepsilon)$.\\
\end{claim}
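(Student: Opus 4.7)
The plan is to combine Boolean Fourier analysis (using the $\mathbb{Z}_3$-character) with the sparsity-based case analysis already developed in \Cref{subsec:5-groups}. First I will establish a structural observation that is specific to $\mathbb{Z}_3$: any two polynomials in $\mathsf{List}_\varepsilon^f$ must have distinct linear parts. Indeed, if $\Phi$ and $\Phi'$ share a linear part, then $\Phi - \Phi' = c$ for some $c \in \mathbb{Z}_3$; if $c \neq 0$ then $\Phi(x) \neq \Phi'(x)$ at every $x$, so their agreement sets with $f$ are disjoint, and since each has density greater than $1/3$ (because $1/2 + \varepsilon > 1/3$), at most one polynomial per linear part can lie in the list.

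Next I will set up the Fourier framework. For any $g: \{0,1\}^n \to \mathbb{Z}_3$ define $g_\omega(x) := \omega^{g(x)}$ with $\omega = e^{2\pi i/3}$; this is a unit vector in $L^2(\{0,1\}^n, \mathbb{C})$. A direct computation shows $\Pr[\Phi = f] = \frac{1}{3} + \frac{2}{3}\mathrm{Re}\langle f_\omega, \Phi_\omega\rangle$, so the hypothesis $\Pr[\Phi = f] \geq 1/2 + \varepsilon$ translates to $|\langle f_\omega, \Phi_\omega\rangle|^2 \geq (1/4 + 3\varepsilon/2)^2 \geq 1/16$. A factorization argument also shows that if two linear polynomials have linear parts differing in exactly $k$ coordinates, then $|\langle \Phi_\omega, \Phi'_\omega\rangle| = 2^{-k}$.

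Following the strategy of \Cref{subsec:5-groups}, I will then form a graph $\mathcal{G}$ on $\mathsf{List}_\varepsilon^f$ whose edges join polynomials whose linear parts differ in at most an absolute constant $K$ coordinates (one may take $K = 10$). An independent set in $\mathcal{G}$ corresponds to a family of unit vectors $\{\Phi_\omega\}$ with pairwise inner products at most $2^{-K}$ in absolute value, each correlated with $f_\omega$ at level at least $1/4$; a standard Bessel-type bound, obtained by upper bounding $\lambda_{\max}$ of the associated Gram matrix via Gershgorin, then forces the size of such a system to be at most $(1-2^{-K})/(1/16 - 2^{-K}) = O(1)$. Thus the independence number of $\mathcal{G}$ is an absolute constant $N_0$, so by a greedy degree argument analogous to \Cref{lemma:clique} there exists a vertex $\Phi_0$ whose closed neighborhood $\mathcal{C}$ in $\mathcal{G}$ has size at least $|\mathsf{List}_\varepsilon^f|/N_0$.

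To finish, I will shift by $\Phi_0$: the polynomials of $\mathcal{C}$ then become a set $\mathcal{A} \subseteq \mathsf{List}_\varepsilon^{f-\Phi_0}$ of linear polynomials each with at most $K$ nonzero coefficients. The case analysis of \Cref{subsec:5-groups} now applies: letting $\ell := |\bigcup_{P \in \mathcal{A}}\mathrm{vars}(P)|$, I will invoke \Cref{lemma:ell-small} when $\ell$ is small and \Cref{lemma:ell-large} when $\ell$ is large, noting that the proofs of those lemmas use only the bound $\delta(\mathcal{P}_1) \geq 1/2$, which holds for any Abelian group including $\mathbb{Z}_3$. Both cases yield $|\mathcal{A}| \leq \poly(1/\varepsilon)$, and combining gives $|\mathsf{List}_\varepsilon^f| \leq N_0 \cdot \poly(1/\varepsilon) = \poly(1/\varepsilon)$. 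The main obstacle will be the Bessel-type step: the correlation $\geq 1/4$ available here (much weaker than the $\geq 2\varepsilon$ correlation used in the $\mathbb{Z}_2$ Goldreich-Levin argument) forces $K$ to be chosen large enough that $2^{-K}$ is comfortably below $1/16$, and care is needed to ensure that the constants in the analog of \Cref{lemma:ell-large} are not sensitive to the primality assumption used for $p \geq 5$.
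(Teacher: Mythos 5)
Your proposal is correct and follows essentially the same route as the paper: pass to complex characters $\omega^{(\cdot)}$, use a Johnson/Bessel-type bound to cap the independence number of the ``few-variable-difference'' graph by an absolute constant, extract a large-degree vertex via \Cref{lem: large graph degree}, shift by it to obtain sparse polynomials, and invoke \Cref{lemma:ell-small} and \Cref{lemma:ell-large}. The only presentational difference is in the Bessel step: the paper's \Cref{lemma:for-johnson-3} subtracts $\mathbf{u}/4$ from each $\mathbf{v}^{(i)}$ and expands $\|\sum_i \mathbf{w}^{(i)}\|^2 \geq 0$, whereas you bound $\lambda_{\max}$ of the Gram matrix via Gershgorin and apply the operator-norm form of Bessel's inequality; these are interchangeable and yield the same $O(1)$ independence number.
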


Before proving \Cref{claim:johnson-3-count}, let's see how \Cref{claim:johnson-3-count} implies \Cref{lemma:johnson-3}.
\begin{proof}[Proof of \Cref{lemma:johnson-3}]
For any $j \in \mathbb{N}$, let $B_{j}$ represent the following subset of $\mathsf{List}_{\varepsilon}^{f}$: 
\begin{align*}
    B_j = \setcond{i\in [t]}{\frac{1}{2^{j+1}} < \alpha_i \le \frac{1}{2^{j}}},
\end{align*}
i.e. we partition $\mathsf{List}_{\varepsilon}^{f}$ depending on how high the agreement is with $f$. Let $c\in \mathbb{N}$ be a constant such that $t\le\paren{1/\varepsilon}^c$ in the conclusion of \Cref{claim:johnson-3-count}. Then taking $\varepsilon = 1/2^{j+1}$ and applying \Cref{claim:johnson-3-count}, we get that the size of $B_j$ is at most $\paren{2^{j+1}}^c$ for all $j\in\mathbb{N}$. Therefore,
    \begin{align*}
        \sum_{i=1}^t \alpha_i^{3c}  = \sum_{j\ge 1} \sum_{i\in B_j} \alpha_i^{3c}
         \le \sum_{j\ge 1} \abs{B_j} \cdot \paren{\frac{1}{2^{j}}}^{3c}
         \le \sum_{j\ge 1} \paren{\dfrac{1}{2^{j}}}^{c}  \le 1.
    \end{align*}
Setting $C = 3c$, we get that $\sum_{i=1}^{t} \alpha_{i}^{C} \leq 1$, and this completes the proof of \Cref{lemma:johnson-3} (assuming \Cref{claim:johnson-3-count}).
\end{proof}

\paragraph{}In the rest of the section, we will prove \Cref{claim:johnson-3-count}. We will first prove the following lemma, which is based on the proof of the binary Johnson bound mentioned above, but also uses some anti-concentration properties over $\mathbb{Z}_3.$

\begin{lemma}
\label{lemma:for-johnson-3}
    Let $f:\{0,1\}^n \to \mathbb{Z}_3$ be an arbitrary function and $\Phi_1,\Phi_2,\dots, \Phi_t:\Boo^n \to \mathbb{Z}_3$ be distinct linear polynomials satisfying the following properties:
    \begin{enumerate}
        \item For all $i\in [t]$, $\delta(f,\Phi_i) \le 1/2$.
        \item For all $i\ne j\in [t]$, $|\mathrm{vars}(\Phi_i-\Phi_j)| \ge 6$.
    \end{enumerate}
    Then $t\le 31$ i.e., a constant.
\end{lemma}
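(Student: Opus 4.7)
The plan is to combine Fourier analysis over $\mathbb{Z}_3$ with a Bessel-type (Gram matrix) inequality. The setup: let $\omega := e^{2\pi i/3}$ and translate the $\mathbb{Z}_3$-valued functions to unit-length $\mathbb{C}$-valued functions by setting $u := \omega^f$ and $v_i := \omega^{\Phi_i}$ in $L^2(\{0,1\}^n, \mathbb{C})$, equipped with the Hermitian inner product $\langle g, h\rangle := \mathbb{E}_{\mathbf{x}}[g(\mathbf{x})\overline{h(\mathbf{x})}]$. Both $u$ and each $v_i$ have unit $L^2$-norm. The strategy is to show that under the two hypotheses, the $v_i$'s are ``nearly orthogonal'' while each $v_i$ has ``significant inner product'' with $u$, which by a Bessel-type inequality bounds $t$ by an absolute constant.

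The first step is near-orthogonality. Writing $\Phi_i - \Phi_j = b_{ij} + \sum_k a_k^{(ij)} x_k$ with $a_k^{(ij)} \in \mathbb{Z}_3$ and using independence of the coordinates of $\mathbf{x}$, I would compute
\[
\langle v_i, v_j\rangle \;=\; \mathbb{E}_{\mathbf{x}}\bigl[\omega^{(\Phi_i - \Phi_j)(\mathbf{x})}\bigr] \;=\; \omega^{b_{ij}} \prod_{k} \frac{1 + \omega^{a_k^{(ij)}}}{2}.
\]
Since $|(1+\omega^a)/2| = \tfrac{1}{2}$ for $a \in \{1, 2\}$, hypothesis~2 yields $|\langle v_i, v_j\rangle| \le 2^{-|\mathrm{vars}(\Phi_i - \Phi_j)|} \le 2^{-6} = 1/64 =: \varepsilon$. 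The second step is to lower-bound $|\langle u, v_i\rangle|$. Using $\mathrm{Re}(\omega) = \mathrm{Re}(\omega^2) = -1/2$ and writing $p_0 := \Pr_{\mathbf{x}}[f(\mathbf{x}) = \Phi_i(\mathbf{x})] \ge 1/2$, I get $\mathrm{Re}(\langle u, v_i\rangle) = p_0 - \tfrac{1}{2}(1 - p_0) = \tfrac{3p_0 - 1}{2} \ge \tfrac{1}{4}$, hence $|\langle u, v_i\rangle| \ge 1/4$.

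The final step is a standard almost-orthogonality calculation. Let $a_i := \langle u, v_i\rangle$, $S := \sum_i |a_i|^2 \ge t/16$, and $w := \sum_i a_i v_i$. Expanding $\|w\|^2$ and using Cauchy--Schwarz on $\sum_{i \ne j}|a_i||a_j|$,
\[
\|w\|^2 \;=\; S + \sum_{i \ne j} a_i \overline{a_j}\, \langle v_i, v_j\rangle \;\le\; S + \varepsilon\bigl(\textstyle\sum_i |a_i|\bigr)^2 \;\le\; S(1 + \varepsilon t).
\]
Since $\langle u, w\rangle = \sum_i |a_i|^2 = S$ and $\|u\| = 1$, Cauchy--Schwarz gives $S^2 \le \|u\|^2 \|w\|^2 \le S(1 + \varepsilon t)$, so $S \le 1 + \varepsilon t$. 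Combined with $S \ge t/16$ and $\varepsilon = 1/64$, this becomes $t/16 \le 1 + t/64$, i.e.\ $3t/64 \le 1$, hence $t \le 64/3 < 22 \le 31$.

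The main obstacle is the tight interplay of constants: the argument yields a bound only when the pairwise-inner-product bound $\varepsilon$ is strictly below the squared lower bound $|\langle u, v_i\rangle|^2 \ge 1/16$. The threshold $|\mathrm{vars}(\Phi_i - \Phi_j)| \ge 6$ in the hypothesis is precisely what makes $\varepsilon = 1/64 < 1/16$, leaving enough slack ($3/64 > 0$) for the final rearrangement to produce an $n$-independent bound; weakening the hypothesis to support at least $4$ would make $\varepsilon = 1/16$ and the argument would collapse.
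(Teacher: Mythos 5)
Your proof is correct and takes essentially the same route as the paper: both lift $\mathbb{Z}_3$-valued functions to complex unit vectors via cube roots of unity, use the hypothesis $|\mathrm{vars}(\Phi_i-\Phi_j)|\ge 6$ to show the $v_i$'s are pairwise nearly orthogonal, use $\delta(f,\Phi_i)\le 1/2$ to show each $v_i$ has real correlation at least $1/4$ with $u$, and conclude with a norm-positivity (Gram matrix) inequality. The only difference is the packaging of the last step: the paper performs a Plotkin-style shift, defining $w^{(i)}:=v^{(i)}-u/4$, showing $\mathrm{Re}\langle w^{(i)},w^{(j)}\rangle\le -1/32$, and bounding $t$ from $0\le\|\sum_i w^{(i)}\|^2$, whereas you use the Bessel-style variant, weighting each $v_i$ by $a_i:=\langle u,v_i\rangle$ and applying Cauchy--Schwarz to $\langle u,\sum_i a_i v_i\rangle$. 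Both variants have the same strength; your sharper constant $t\le 21$ (versus the paper's $31$) comes from keeping the pairwise inner-product bound at $2^{-6}=1/64$ rather than loosening it to $2^{-5}=1/32$ as the paper does, not from the difference in the final inequality.
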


\begin{proof}
    Let $\omega \in \mathbb{C}$ be a primitive cube root of unity. Let $\mathbf{u} \in \mathbb{C}^{2^n}$ be defined as $u_\mathbf{x} = \omega^{f(\mathbf{x})}$ and for $i \in [t]$, define $\mathbf{v}^{(i)} \in \mathbb{C}^{2^n}$ as $v^{(i)}_\mathbf{x} = \omega^{\Phi_i(\mathbf{x})}$. As $f$ and $\Phi_i$ agree on at least $1/2$ fraction of points, we have
    \begin{align*}
        \Real\paren{\innerprod{\mathbf{u}, \mathbf{v}^{(i)}}}
         & = \sum_{\mathbf{x}} \Real \paren{\omega^{f(\mathbf{x})-\Phi_i(\mathbf{x})}}\\
        & \ge 2^{n-1} \cdot {\paren{1-1/2}} = 2^{n-2},
    \end{align*}
    where the last inequality uses the fact that $f(\mathbf{x})-\Phi_i(\mathbf{x})=0$ for at least $2^{n-1}$ choices of $\mathbf{x}$.
    For arbitrary $i\ne j\in [t]$, let $\Psi(\mathbf{x}):=\Phi_i(\mathbf{x}) - \Phi_j(\mathbf{x})$ be equal to $a_1 x_1 + a_2 x_2 + \dots + a_n x_n + a_0$ where $a_i \in \mathbb{Z}_3$ for $i\in[n]$ and the number of indices $i$ with $a_i\ne 0$ is at least $6$, by assumption. We now show that the vectors $\mathbf{v}^{(i)}$ and $\mathbf{v}^{(j)}$ are ``almost'' orthogonal:
    \begin{align*}
        \abs{\innerprod{\mathbf{v}^{(i)}, \mathbf{v}^{(j)}}} & = \abs{\sum_{\mathbf{x}} \omega^{\Phi_i(\mathbf{x})} \cdot {\omega^{-\Phi_j(\mathbf{x})}}}\\
        & = 2^n \cdot \abs{\E_{\mathbf{x}}[\omega^{a_1x_1 +a_2x_2 + \dots + a_nx_n + a_0}]}\\
        & = 2^n \cdot \prod_{i\in [n]} \abs{\E_{x_i}[\omega^{a_i x_i}]}\\
        & = 2^n \cdot \prod_{i\in [n]} \abs{\frac{1+\omega^{a_i}}{2}}\\
        & = 2^n \cdot \paren{\frac{1}{2}}^{\abs{\{i\in[n]:a_i \ne 0\}}} \le 2^{n-5}.
    \end{align*}
    Combined with the fact that $\mathbf{u}$ has a ``large'' component along $\mathbf{v}^{(i)}$ for every $i\in [t]$ i.e., $\Real\paren{\innerprod{\mathbf{u}, \mathbf{v}^{(i)}}} \ge 2^{n-2}$, this leads us to conclude that $t$ is a constant. To see this, let $\mathbf{w}^{(i)}:=\frac{1}{\sqrt{2^n}}\paren{\mathbf{v}^{(i)} - \mathbf{u}/4}$ for $i\in [t]$. Then we have
    \begin{align*}
    \innerprod{\mathbf{w}^{(i)},\mathbf{w}^{(i)}} = \frac{1}{2^n}\paren{2^n + 2^n/16 - \Real\paren{\innerprod{\mathbf{u}, \mathbf{v}^{(i)}}+\innerprod{\mathbf{v}^{(i)},\mathbf{u}}}/4} \le 1+1/16-1/8=15/16, 
    \end{align*} and
    \begin{align*}
        \Real\paren{\innerprod{\mathbf{w}^{(i)}, \mathbf{w}^{(j)}}} & = \frac{1}{2^n}\paren{\Real\paren{\innerprod{\mathbf{v}^{(i)}, \mathbf{v}^{(j)}}} + 2^n/16 - \Real\paren{\innerprod{\mathbf{u}, \mathbf{v}^{(i)}}+\innerprod{\mathbf{v}^{(i)}, \mathbf{u}}}/4}\\
        & \le 1/32 + 1/16 - 1/8 = -1/32.
    \end{align*}

    Thus, we have
    $$0 \le \innerprod{ \sum_{i=1}^t \mathbf{w}^{(i)}, \sum_{i=1}^t \mathbf{w}^{(i)}} = \sum_{i=1}^t \innerprod{\mathbf{w}^{(i)}, \mathbf{w}^{(i)}} + \sum_{i\ne j} \Real\paren{\innerprod{\mathbf{w}^{(i)}, \mathbf{w}^{(j)}}} \le 15t/16 - (t^2-t)/32.$$ Therefore $t\le 31$.
    
\end{proof}

Now we finish the proof of \Cref{claim:johnson-3-count} using \Cref{lemma:for-johnson-3}.
\begin{proof}[Proof of \Cref{claim:johnson-3-count}]
The proof idea is to follow the same initial strategy as for groups with order at least 5 from \Cref{subsec:5-groups}. To recall the setup, we have $\mathsf{List}_{\varepsilon}^{f} = \set{L_{1}, \ldots, L_{t}}$. We construct an undirected graph $\mathcal{G} = (V,E)$ with $|V| = t$: the $i^{th}$ vertex in $V$ corresponds to the polynomial $L_{i}$. We add an edge $(i,j)$ in $E$
iff $|\mathrm{vars}{(L_i - L_j)}| \leq 5$.

Note that \Cref{lemma:for-johnson-3} implies that there is no independent set size greater than $31$ in $\mathcal{G}$. Hence, by applying~\Cref{lem: large graph degree}, we conclude that there exists a vertex of degree at least $\Omega(t)$ in $\mathcal{G}$. Once we have such a vertex, we proceed in the same manner as in the proof of \Cref{thm:comb-prod-pgroups} to finally get an upper bound of $t\le \poly(1/\varepsilon)$ on the number of linear polynomials in $\mathsf{List}^{f}_\varepsilon$. 
\end{proof}

\subsubsection{Proof of Combinatorial Bound for a Product of $2$ and $3$-groups}
In this subsection, we will prove \Cref{thm:comb-prod-2groups}. We will use $q$ for $2$ or $3$ in the rest of the subsection. To prove \Cref{thm:comb-prod-2groups}, we need the following lemma, which is crucial to upper bound the list size using special-intersecting sets.\\

\begin{lemma}\label{lemma:induction-lemma-2-groups}
The following holds true for any finite $q$-group $G.$ 
 Let $f: \Boo^{n} \to G$ be any function. Let $\{P_1,\ldots, P_t\}$ be the set of polynomials in $\mathcal{P}_{1}(\Boo^{n}, G)$ that are $1/2$-close to $f.$ Then, if $\delta(f,P_i) = \frac{1}{2} - \alpha_i$ for each $i\in [t]$, we have
 \[
 \sum_{i=1}^t \alpha_i^D \leq 1
 \]
 for some absolute constant $D > 0.$ In particular, the number of $i$ such that $\delta(f,P_i)\leq \frac{1}{2}-\varepsilon$ is at most $(1/\varepsilon)^D.$\\
\end{lemma}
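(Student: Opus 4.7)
I will proceed by strong induction on $|G|$. The base case $|G|=q$ gives $G\cong\mathbb{Z}_q$, where the extended Johnson bound (\Cref{lemma:johnson-3}) yields the desired bound directly with $D=C$.

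For the inductive step, I exploit the $q$-group structure: since $|G|>q$, $G$ has an index-$q$ subgroup $H$. Let $\pi:G\to G/H\cong\mathbb{Z}_q$ be the quotient and set $\bar f:=\pi\circ f$ and $\bar P_i:=\pi\circ P_i$. Partition $[t]$ into classes $T_1,\ldots,T_s$ so that all $i\in T_j$ share the same projected polynomial $Q_j:\{0,1\}^n\to\mathbb{Z}_q$, and define $\beta_j:=\mu(\{\bar f=Q_j\})-1/2$. Since the agreement set $\{f=P_i\}$ is contained in $\{\bar f=Q_j\}$ for $i\in T_j$, we have $\beta_j\ge \alpha_i$. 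Applying \Cref{lemma:johnson-3} to $\bar f$ and the $Q_j$'s gives $\sum_j\beta_j^C\le 1$. Within each class $T_j$, fixing any $i_0\in T_j$, the functions $\hat P_i:=P_i-P_{i_0}$ (for $i\in T_j$) are $H$-valued linear polynomials, and defining $g_j:\{0,1\}^n\to H$ to agree with $f-P_{i_0}$ on $A_j:=\{\bar f=Q_j\}$ (and arbitrarily elsewhere) yields $\delta(g_j,\hat P_i)\le 1/2-\alpha_i$. The inductive hypothesis applied to the smaller $q$-group $H$ then gives $\sum_{i\in T_j}\alpha_i^D\le 1$. Combining using $\alpha_i\le \beta_{j(i)}$,
\[
\sum_i\alpha_i^{D+C}=\sum_j\sum_{i\in T_j}\alpha_i^D\cdot\alpha_i^C\le\sum_j\beta_j^C\sum_{i\in T_j}\alpha_i^D\le\sum_j\beta_j^C\le 1.
\]

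\textbf{Main obstacle.} The naive induction above accumulates an additive $C$ in the exponent per level, producing an exponent scaling with $\log_q|G|$ rather than the required absolute constant (and $|G|$ is not a priori bounded in terms of the ambient parameters, cf.\ \Cref{claim:G-finite}). To keep $D$ absolute, I will instead cast the problem in the language of special intersecting sets: verify that $\{S_i:=\{f=P_i\}\}$ forms a $(1/2,1/4,c)$-special intersecting set system (in the sense of \Cref{def:special-intersection-sets}) for an absolute $c$, and then invoke \Cref{lemma:special-intersection-bound} to conclude $\sum\alpha_i^D\le\alpha^D\le 1$ with absolute $D$, where $\alpha:=\mu(\bigcup_i S_i)-1/2\le 1/2$. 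The density and pairwise-intersection conditions follow from $\delta(\mathcal{P}_1)=1/2$ and from the fact that a nonzero $G$-valued linear polynomial vanishes on at most half of $\{0,1\}^n$ (by conditioning on any variable it truly depends on). The most delicate task is verifying the sunflower property (condition 4): whenever $\mu(S_I)>1/4$ for some $|I|\ge 2$, the intersection $S_I$ must coincide with $S_J$ for every $J\subseteq I$ with $|J|\ge 2$. This requires a careful structural analysis of families of $G$-valued linear polynomials whose common zero sets have density exceeding $1/4$, and is where the adaptation of the \cite{DinurGKS-ECCC} technique will concentrate; I expect this to be the hardest step of the proof.
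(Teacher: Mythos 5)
Your diagnosis of the "naive induction" is spot-on: quotienting by an index-$q$ subgroup and chaining the Johnson bound with the inductive hypothesis does accumulate an additive $C$ in the exponent per level, and since $|G|$ is not bounded in terms of the ambient parameters, this ruins the absolute-constant requirement. You also correctly identified the two ingredients the paper actually uses: induction on $|G|$ and the special-intersecting-sets machinery of \cite{DinurGKS-ECCC} via \Cref{lemma:special-intersection-bound}.

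However, the proposed fix has a genuine gap. You suggest applying \Cref{lemma:special-intersection-bound} \emph{globally} to the whole family $\{S_i\}_{i\in[t]}$ of agreement sets, but this route is circular and structurally unsupported. Condition~3 of \Cref{def:special-intersection-sets} demands precisely a bound of the form $\sum_i \alpha_i^c \le 1$ for the family you feed in --- which is the statement you are trying to prove --- and you do not indicate how to verify it (your paragraph only addresses conditions 1, 2, and 4). For the global family there is no direct way to get it: the differences $P_i - P_j$ range over all of $\mathcal{P}_1(\Boo^n,G)$, so the Johnson bound over $\mathbb{Z}_q$ cannot be invoked, and the $\mathbb{Z}_q$-projections of the $P_i$'s need not be distinct. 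The sunflower condition (4), which you flag as hard, is similarly problematic globally: the paper's verification of it relies essentially on the fact that within a fixed ``extension class'' the coset-representative parts $\tilde Q_j$ coincide, so $Q_{j}(\mathbf{x}) = Q_{k}(\mathbf{x}) \Leftrightarrow Q'_j(\mathbf{x})=Q'_k(\mathbf{x})$ as polynomials over $\mathbb{Z}_q$, at which point the dimension argument of \Cref{claim:soln-dim-linear} applies. That structure disappears for the global family.

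The paper's actual resolution is to \emph{nest} the special-intersecting-sets argument inside the induction rather than use it in place of the induction. It quotients $G$ by a \emph{size}-$q$ subgroup $H=\langle h\rangle$ (not index-$q$), inducts on $G/H$ to get $\sum_i \beta_i^D \le 1$ for the projected polynomials, and then for each projected polynomial $P_i$ applies \Cref{lemma:special-intersection-bound} only to the agreement sets of the polynomials over $G$ that \emph{extend} $P_i$. Within an extension class the differences $Q_{j}-Q_{k}$ are $H$-valued, so both condition~3 (via the extended Johnson bound over $\mathbb{Z}_q$) and condition~4 (via the $\mathbb{Z}_q$ dimension argument) become verifiable; the lemma then yields $\sum_{j}\alpha_{j}^D \le \beta_i^D$ with the \emph{same} exponent $D$, and summing over $i$ closes the induction without any exponent growth. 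So the special-intersecting-sets lemma is not a substitute for the induction --- it is exactly the mechanism that lets the induction carry a fixed $D$.
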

Note that \Cref{thm:comb-prod-2groups} follows immediately from \Cref{lemma:induction-lemma-2-groups} because a finite product of $q$-groups is a $q$-group. In the remaining part of this subsection, we are going to prove \Cref{lemma:induction-lemma-2-groups}.
\begin{proof}[Proof of \Cref{lemma:induction-lemma-2-groups}]
We will prove it via induction on the size of $G$. The constant $D$ is chosen as follows. Let $C_q$ be the constant in the extended Johnson bound (\Cref{lemma:johnson-3}), and let $D$ be the constant obtained from \Cref{lemma:special-intersection-bound} in the case of $(1/2,1/4, C_q)$-intersecting sets. 

\textbf{Base Case ($|G|=1$):} In this case, the lemma follows trivially.

\textbf{Induction Step:} Let $h \in G$ be an element of order $q$ in $G$ (the existence of such an element is guaranteed by Cauchy's Theorem for finite Abelian groups\footnote{Cauchy's theorem states that in a finite Abelian group $G$ where $|G|$ is divisible by a prime $p$, there exists an element of order $p$.}), and let $H = \langle h \rangle$ denote the subgroup generated by $h$.

Let $f':\{0,1\}^n\rightarrow G/H$ defined by
\[
f'(\mathbf{x}) = f(\mathbf{x})\pmod{H}
\]
Let $\{P_{1},\ldots, P_{t}\}$ be the set of linear polynomials in $\mathcal{P}_{1}(\Boo^{n}, G/H)$ that are $1/2$-close to $f'$. Assuming that $\delta(P_i,f') = \frac{1}{2}-\beta_i,$ we have by the induction hypothesis
\begin{equation}
\label{eq:indDGKS}
\sum_{i=1}^t \beta_i^D \leq 1.
\end{equation}

We now consider the polynomials in $\mathcal{P}_{1}(\Boo^{n}, G)$ that are $1/2$-close to $f.$ Given such a polynomial $Q$, we say that $Q$ \emph{extends} $P_i$ if $P_i = Q \pmod{H}$. Each such $Q$ extends a \emph{unique} $P_i$ ($i\in [t]$). 

Fix $P_i$ and assume that $Q_1,\ldots, Q_\ell$ are the polynomials in $\mathcal{P}_{1}(\Boo^{n}, G)$ that are $1/2$-close to $f$ and extend $P_i.$ Assuming that $\delta(f,Q_j) = \frac{1}{2}-\alpha_j$, we show that
\begin{equation}
\label{eq:indDGKS2}
\sum_{j=1}^\ell \alpha_j^D \leq \beta_i^D.
\end{equation}
Assuming \Cref{eq:indDGKS2}, we sum over all $i\in [t]$, and then using \Cref{eq:indDGKS}, we get the inductive statement.

So it suffices to prove \Cref{eq:indDGKS2}, and we will do this using properties of special intersection sets, in particular, \Cref{lemma:special-intersection-bound}. Fix $P_i$ and $Q_1,\ldots, Q_\ell$ as above for the rest of the proof. 

Let $S$ denote the agreement set of $P_i$ and $f'$, i.e.
\begin{align*}
    S := \setcond{\mathbf{x} \in \Boo^{n}}{ P_i(\mathbf{x}) = f'(\mathbf{x}) },
\end{align*}
where $\mu(S) = (1/2 + \beta_i)$. Similarly, let $S_{j}$ ($j\in [\ell]$) denote the agreement set of $Q_{j}$ and $f$, where $\mu(S) = (1/2 + \alpha_{j})$. Note that for every $j \in [\ell]$, $S_{j} \subseteq S$ since $Q_{j}$ extends $P_i$, which implies that $\cup_{j \in [\ell]} S_{j} \subseteq S$.

The core for the proof of \Cref{eq:indDGKS2} is to show that the sets $S_{1},\ldots, S_\ell$ form a special intersecting set family inside the universe $X = \{0,1\}^n$. We then use \Cref{lemma:special-intersection-bound}, and we upper bound the number of extensions. In particular, we prove the following claim.

\begin{claim}[Agreement sets are special intersecting sets]\label{claim:extensions-special-intersecting}
 The sets $S_{1},\ldots,S_{\ell}$ as defined above form a $(1/2,1/4,C_q)$-special intersecting sets, where $C_q$ is the constant from \Cref{lemma:johnson-3}. 
\end{claim}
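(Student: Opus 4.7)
The plan is to verify each of the four defining properties of a $(1/2, 1/4, C_q)$-special intersecting set family for the sets $S_1, \ldots, S_\ell$. Properties 1 and 2 are essentially algebraic: Property 1 is immediate from the fact that each $Q_j$ in the list has $\delta(f, Q_j) \leq 1/2$, giving $\mu(S_j) = 1/2 + \alpha_j \geq 1/2$. For Property 2, on $S_j \cap S_k$ the polynomial $Q_j - Q_k$ vanishes; since $Q_j$ and $Q_k$ are distinct extensions of $P_i$, the difference is a non-zero polynomial in $\mathcal{P}_1(\{0,1\}^n, H)$, and the Schwartz-Zippel bound in \Cref{thm:basic} immediately gives $\mu(S_j \cap S_k) \leq 1/2$.

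For Property 3 the plan is to reduce to the extended Johnson bound (\Cref{lemma:johnson-3}). Define $R_j := Q_j - Q_1 \in \mathcal{P}_1(\{0,1\}^n, H)$; these are $\ell$ distinct $\mathbb{Z}_q$-valued linear polynomials. Next define $\tilde{g}: \{0,1\}^n \to H$ by $\tilde{g}(\mathbf{x}) = f(\mathbf{x}) - Q_1(\mathbf{x})$ for $\mathbf{x} \in S$ (well-defined in $H$ precisely because $\mathbf{x}\in S$ means $f(\mathbf{x}) \equiv Q_1(\mathbf{x}) \pmod H$), extended arbitrarily off $S$. Since $S_j \subseteq S$ and $R_j(\mathbf{x}) = \tilde{g}(\mathbf{x})$ for all $\mathbf{x} \in S_j$, the agreement of $R_j$ with $\tilde{g}$ over $\{0,1\}^n$ is at least $\mu(S_j) = 1/2 + \alpha_j$. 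Applying \Cref{lemma:johnson-3} to $\tilde{g}$ and $R_1, \ldots, R_\ell$ yields $\sum_j \alpha_j^{C_q} \leq 1$.

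The main technical obstacle is Property 4, the sunflower structure. For $q = 2$ the argument is clean: if $\mu(S_I) > 1/4$ and $|I| \geq 3$, pick distinct $j_1, j_2, j_3 \in I$; then $R_{j_2} - R_{j_1}$ and $R_{j_3} - R_{j_1}$ are distinct non-zero $\mathbb{Z}_2$-valued linear polynomials whose common zero set contains $S_I$. Any two such distinct polynomials have zero sets intersecting in density at most $1/4$ over $\{0,1\}^n$ (they are either parallel distinct affine hyperplanes and hence disjoint, or in general position with intersection of density exactly $1/4$), so $\mu(S_I) \leq 1/4$, contradicting the assumption. Hence $|I| \leq 2$ whenever $\mu(S_I) > 1/4$, making the sunflower condition vacuous since $J \subseteq I$ with $|J| \geq 2$ forces $J = I$. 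For $q = 3$ the argument is more delicate and follows the DGKS approach: I would exploit the affine-closure property of $\mathcal{P}_1(\{0,1\}^n, H)$, namely that if $R_{j_1}, R_{j_2}, R_{j_3}$ all agree with $\tilde{g}$ on $S_I$ then so does $R_{j_1} + R_{j_3} - R_{j_2}$. Combined with the distance $1/2$ of $\mathbb{Z}_3$-valued linear polynomials and a careful pigeonhole on the coset structure of the $R_j$'s, this rigidity should force $S_J = S_I$ for all $J \subseteq I$ with $|J| \geq 2$ whenever $\mu(S_I) > 1/4$. I expect the bulk of the technical work to lie in this $q = 3$ case.
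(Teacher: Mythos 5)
Your treatment of Properties 1–3 is correct, and for Property 3 the ``shift by $Q_1$'' trick (defining $R_j := Q_j - Q_1$ and $\tilde g := f - Q_1$ on $S$) is essentially the same content as the paper's section-based decomposition $Q_j = \tilde Q_j + Q_j'\cdot h$, only phrased a little more cleanly (and in fact dodges the subtle point that $\tilde Q_j(\mathbf{x})$ need not be a coset representative even when each coefficient is). Your $q=2$ argument for Property 4 is also valid: over $\mathbb{Z}_2$ two distinct non-zero affine forms on $\mathbb{F}_2^n$ have a common zero set of density at most $1/4$, so $\mu(S_I) > 1/4$ is impossible for $|I| \ge 3$ and the sunflower condition is vacuous there (and trivial for $|I|=2$).

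The genuine gap is the $q=3$ case of Property 4, which you only sketch. Crucially, the contradiction strategy that worked for $q=2$ cannot work for $q=3$: for example with $|I|=3$ and $Q_{j_2}'-Q_{j_1}'=x_1$, $Q_{j_3}'-Q_{j_1}'=2x_1$ over $\mathbb{Z}_3$, the set $T_I$ (and potentially $S_I$) can have density $1/2 > 1/4$. So the sunflower condition is \emph{not} vacuous over $\mathbb{Z}_3$, and one really must show $S_I = S_J$ rather than rule out the hypothesis. The ``affine-closure plus pigeonhole'' idea you gesture at is not enough as stated; what is needed (and what the paper does, uniformly for $q\in\{2,3\}$) is a rank/dimension argument. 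Concretely: define $T_I := \{\mathbf{x} : Q_j(\mathbf{x})=Q_k(\mathbf{x})\ \forall j,k\in I\} \supseteq S_I$, observe $T_I$ is the solution set in $\{0,1\}^n$ of the $\mathbb{Z}_q$-linear system $\{Q_j'-Q_k'=0\}_{j,k\in I}$, and invoke Claim~\ref{claim:soln-dim-linear} (a row-reduction argument) to conclude from $\mu(T_I)>1/4 = 1/2^2$ that this system has dimension $\le 1$. Since the $Q_j$ are distinct extensions of $P_i$, all the $Q_j'-Q_k'$ are \emph{non-zero} multiples of a single linear polynomial $R$ over the field $\mathbb{Z}_q$. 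Then for any $J=\{j,k\}\subseteq I$ and $\mathbf{x}\in S_J$, one has $Q_j'(\mathbf{x})-Q_k'(\mathbf{x})=0$, hence $R(\mathbf{x})=0$, hence \emph{all} the $Q_r$ with $r\in I$ agree at $\mathbf{x}$ (and equal $f(\mathbf{x})$ since $\mathbf{x}\in S_J$), giving $\mathbf{x}\in S_I$ and thus $S_J=S_I$. This single argument replaces your $q$-dependent case split and completes the missing $q=3$ case.
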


Once we prove \Cref{claim:extensions-special-intersecting}, we can then apply \Cref{lemma:special-intersection-bound} on the sets $S_{1},\ldots,S_{\ell}$. Using the observation that $\cup_{j \in [\ell]} S_{j} \subseteq S$, we immediately get \Cref{eq:indDGKS2}, which finishes the proof of \Cref{lemma:induction-lemma-2-groups}.
\end{proof}
\begin{proof}[Proof of \Cref{claim:extensions-special-intersecting}]
We verify the four properties from the definition of special intersecting sets.

\begin{enumerate}
    \item For each $i \in [\ell]$, $\mu(S_{i}) \geq 1/2$. This is  true since each $Q_i$ is $1/2$-close to $f.$
    
    \item For any two distinct $i,j \in [\ell]$, $\mu(S_{i}, S_{j}) \leq 1/2$. This follows from the Schwartz-Zippel Lemma (\Cref{thm:basic}).
    
    \item Note that $H = \{0,h,\ldots, (q-1)h\}.$ We choose a set of coset representatives $c_1,\ldots, c_M$ ($M = |G|/|H|$) for $H$ in $G$. Now, given any $g\in G,$ we can write it uniquely as $c_p + s\cdot h$ where $p\in [M]$ and $s\in \{0,\ldots, q-1\}.$ 
    
    In particular, using this decomposition at each input $\mathbf{x}\in \{0,1\}^n$, we can write 
    \[
    f(\mathbf{x}) = \Tilde{f}(\mathbf{x}) + f'(\mathbf{x})\cdot h
    \]
    where $\Tilde{f}(\mathbf{x})$ is a coset representative and $f'(\mathbf{x})\in \{0,1,\ldots,q-1\}$ which we identify with $\mathbb{Z}_q.$

    Similarly, given an polynomial $Q(\mathbf{x}) = a_0 + \sum_{k=1}^n a_kx_k\in \mathcal{P}_1(\{0,1\}^n,G)$, we apply the above decomposition to each of its coefficients to write
    \[
    Q(\mathbf{x}) = \underbrace{\left(c_{p_0} + \sum_{i=1}^n c_{p_i}x_i\right)}_{\Tilde{Q}(\mathbf{x})} + \underbrace{\left(s_0 + \sum_{i=1}^n s_i x_i\right)}_{Q'(\mathbf{x})}\cdot h
    \]
    We treat the polynomial $Q'(\mathbf{x})$ as a polynomial over the group $\mathbb{Z}_q$. (This makes sense as the order of $h$ is $q$.)

    Returning to the polynomials $Q_1,\ldots, Q_\ell$, we note that if $Q_j(\mathbf{x}) = f(\mathbf{x}),$ then it must be true that $Q_j'(\mathbf{x}) = f'(\mathbf{x})$, implying that each $S_j$ is contained in $S_j' := \{\mathbf{x}\in \{0,1\}^n\ |\ Q_j'(\mathbf{x}) = f'(\mathbf{x})\}$. If we assume that $|S_j'| = \frac{1}{2} + \alpha_j'$, then we have
    \[
    \sum_{j=1}^\ell \alpha_j^{C_q}\leq \sum_{j=1}^\ell (\alpha_j')^{C_q} \leq 1
    \]
    where the final inequality is the extended Johnson bound (\Cref{lemma:johnson-3}).

    \item Let $I \subseteq [\ell]$ be a subset with $|I| \geq 3$ such that $\mu(S_{I}) > 1/4$. Let $T_{I}$ denote the following set
    \begin{align*}
        T_{I} = \setcond{\mathbf{x} \in \Boo^{n}}{Q_{j}(\mathbf{x}) = Q_{k}(\mathbf{x})\ \forall j,k\in I}
    \end{align*}
Observe that $S_{I} \subseteq T_{I}$, and hence we have $\mu(T_{I}) > 1/4$.
    
    We now note that the polynomials $Q_1,\ldots, Q_\ell$ are all equal modulo $H,$ implying that $\Tilde{Q}_1 = \cdots = \Tilde{Q}_\ell.$ In particular, we see that for $j\neq k\in I,$
    \[
    Q_j(\mathbf{x}) = Q_k(\mathbf{x}) \Longleftrightarrow Q_j'(\mathbf{x}) = Q_k'(\mathbf{x})
    \]
    where the latter equality is an equality of polynomials over $\mathbb{Z}_q.$ In other words, $T_{I}$ is the solution set in $\Boo^{n}$ to the following system of linear equations over $\mathbb{Z}_{q}$:
    \begin{align*}
        Q'_{jk}(\mathbf{x}) := Q_j'(\mathbf{x}) - Q_k'(\mathbf{x}) = 0, \quad \text{ for all } \ j,k \in I.
    \end{align*}
    Applying \Cref{claim:soln-dim-linear}, we see that the set of polynomials $\setcond{Q'_{jk}}{j,k\in I}$ are all integer multiples of a single linear polynomial. Call this polynomial $R(\mathbf{x})$.

    We are now ready to prove property 4. Fix any $J = \{j,k\}\subseteq I$. If $\mathbf{x}\in S_J,$ then $Q'_{jk}(\mathbf{x})=0$, implying that $R(\mathbf{x}) = 0.$ This implies that all the polynomials $Q_{r}$ ($r\in I$) take the same value at this point $\mathbf{x}.$ Moreover, since $\mathbf{x}\in S_J$, we have $Q_j(\mathbf{x}) = f(\mathbf{x})$, implying that $Q_r(\mathbf{x}) = f(\mathbf{x})$ for each $r\in I.$ This shows that $S_J\subseteq S_I.$ Since $S_I\subseteq S_J$ trivially, we have $S_I = S_J$ and we have thus shown property 4. 
\end{enumerate}
This shows that $S_{1},\ldots,S_{\ell}$ form a $(1/2,1/4,C_q)$-special intersecting sets.
\end{proof}
Now all that remains is to prove the following claim.
\begin{claim}[Dimension of system of linear equations]\label{claim:soln-dim-linear}
Let $q$ be a prime and let $\set{L_{i}(\mathbf{x}) = 0}_{i = 1}^{m}$ be a set of linear constraints over $\mathbb{Z}_{q}$. If the fraction of solutions of this set in $\Boo^{n}$ is $> 1/2^{r}$ i.e.
\begin{align*}
    |\setcond{\mathbf{x} \in \Boo^{n}}{L_{i}(\mathbf{x}) = 0, \text{ for all } \; i \in [m]}| > \dfrac{1}{2^{r}} \cdot 2^{n},
\end{align*}
then the dimension\footnote{Since $q$ is prime, we can measure the dimension of this set in the standard linear-algebraic sense.} of $\set{L_{i}(\mathbf{x})}_{i=1}^{m}$ is $< r$.
\end{claim}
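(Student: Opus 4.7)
The plan is to prove the contrapositive: if the dimension of the span of $\{L_i\}_{i=1}^m$ (as linear forms over $\mathbb{Z}_q$, including the constant term) is at least $r$, then the number of Boolean solutions is at most $2^{n-r}$. Since the common zero set of $\{L_i\}_{i=1}^m$ equals the common zero set of any basis of their $\mathbb{Z}_q$-span, I may assume without loss of generality that $m = r$ and that $L_1, \ldots, L_r$ are $\mathbb{Z}_q$-linearly independent.

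Next, because $q$ is prime, $\mathbb{Z}_q$ is a field, and I can perform Gaussian elimination on the coefficient matrix of $(L_1,\ldots,L_r)$ (viewed as $r$ rows in an $r \times (n+1)$ matrix, where the extra column encodes the constant terms). Row operations do not change the common zero set of the system. After reduction, there exist $r$ distinct pivot variables $x_{i_1},\ldots, x_{i_r} \in \{x_1,\ldots,x_n\}$ such that, after reordering, each $L_j$ has the form
\[
L_j(\mathbf{x}) = x_{i_j} + M_j(\mathbf{x}'),
\]
where $\mathbf{x}' \in \{0,1\}^{n-r}$ denotes the tuple of non-pivot variables and $M_j$ is an affine form over $\mathbb{Z}_q$ in those variables.

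Now I count Boolean solutions to $L_1 = \cdots = L_r = 0$. For each of the $2^{n-r}$ choices of $\mathbf{x}' \in \{0,1\}^{n-r}$, the constraints uniquely determine the pivot values as $x_{i_j} = -M_j(\mathbf{x}') \in \mathbb{Z}_q$. The assignment extends to a Boolean solution if and only if every such determined value lies in $\{0,1\} \subseteq \mathbb{Z}_q$; otherwise, this choice of $\mathbf{x}'$ contributes no solution. In either case, each $\mathbf{x}'$ yields at most one solution, so the total number of Boolean solutions is at most $2^{n-r}$, i.e., the density is at most $2^{-r}$. This contradicts the hypothesis that the density is strictly greater than $2^{-r}$, completing the proof.

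I do not expect any real obstacle here: the only subtlety is ensuring that Gaussian elimination over $\mathbb{Z}_q$ is legitimate (which uses primality of $q$ so that $\mathbb{Z}_q$ is a field) and that we keep track of constant terms as an extra column so the procedure applies to affine and not just linear forms.
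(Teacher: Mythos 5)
Your approach is essentially the paper's: reduce to a linearly independent subsystem, Gaussian-eliminate over the field $\mathbb{Z}_q$, and observe that each assignment to the non-pivot Boolean variables extends to at most one Boolean solution, giving a count of at most $2^{n-r}$. However, there is one unaddressed case. You write that ``after reduction, there exist $r$ distinct pivot variables $x_{i_1},\ldots, x_{i_r} \in \{x_1,\ldots,x_n\}$,'' but this is not guaranteed: since you are row-reducing an $r\times(n+1)$ matrix that includes the constant-term column, a pivot may land on that last column. Concretely, the span of $\{L_1,\ldots,L_r\}$ as affine forms can have dimension $r$ while the span of their \emph{linear parts} (the coefficient matrix on $x_1,\ldots,x_n$) has rank only $r-1$; for example $L_1 = x_1$, $L_2 = 1$. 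In that situation you do not get $r$ pivot variables, and the displayed form $L_j = x_{i_j} + M_j(\mathbf{x}')$ is unavailable. The fix is short: in that case, row reduction produces a row of the form $0 = c$ with $c\ne 0$, so the system is inconsistent, the solution count is $0 \le 2^{n-r}$, and the conclusion holds trivially. The paper handles this explicitly by splitting on whether the rank of the coefficient matrix $M$ is $r-1$ (inconsistent, trivial) or $r$ (your main argument). With that one additional case your proof matches the paper's.
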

\begin{proof}
Suppose $\dim\paren{\set{L_{i}(\mathbf{x})}_{i=1}^{m}} = r$. Let $M \in \mathbb{Z}_{q}^{m \times n}$ denote the coefficient matrix of the above system of equations, i.e. the $i^{th}$ row of $M$ denotes the coefficients of the variables in $L_{i}(\mathbf{x})$. Since the dimension is $r$, we know that $\dim(M)$ is either $r-1$ or $r$. 

If $\dim(M) = r-1$, then the system of equations $L_i(\mathbf{x}) = 0$ ($i\in [m]$) has no solution, implying that the claim is trivially true. So we assume that $\dim(M) = r.$

By doing elementary row operations, we can assume without loss of generality that the top leftmost $r \times r$ sub-matrix of $M$ is the $r$-dimensional identity matrix $I_{r}$. Let the new linear polynomials (after the elementary row transformations) be $L_{1}',\ldots, L_{m}'$. We then have the following property: For $i \in [r]$, $x_{i} \in \mathrm{vars}(L_{i}')$ and $x_{i} \notin \mathrm{vars}(L_{j}')$ for all $j \in [r]$ and $j \neq i$.

Now for any assignment of $x_{r+1},\ldots,x_{n} \in \Boo^{n-r}$, there exists at most one assignment of $x_{1},\ldots,x_{r} \in \Boo^{r}$ that satisfies the constraints. Thus the number of solutions in $\Boo^{n}$ is at most $2^{n-r}$. The claim follows.
\end{proof}

\section{Local List Correction Algorithm}
In this section, we prove \Cref{thm:listdecoding}, i.e. we construct a local list correction algorithm for $\mathcal{P}_{1}$. Our algorithm first constructs a list of deterministic oracles that are close to the polynomials in the list and then uses our local correction algorithm (see \Cref{thm:uniquedeg1}) on those oracles.

Let $G$ be an Abelian group. Let $f: \Boo^{n} \to G$ be any function. Let $\mathsf{List}_{\varepsilon}^{f}$ denote the set of degree $1$ polynomials that are $(1/2 - \varepsilon)$-close to $f$, and let $L(\varepsilon) = |\mathsf{List}_{\varepsilon}^{f}|$. Recall from \Cref{thm:comblistdeg1} that $L(\varepsilon) = \poly(1/\varepsilon) = \bigO_\varepsilon(1).$

We state the main construction of our local list-correction algorithm.

\begin{theorem}[Approximating oracles]\label{thm:approx-oracles-list-decoding}
Fix $n \in \mathbb{N}$ and $\varepsilon > 0$. There exists an algorithm $\mathcal{A}_{1}^{f}$ that makes at most $\bigO_{\varepsilon}(1)$ oracle queries and outputs deterministic algorithms $\psi_{1},\ldots, \psi_{L'}$ satisfying the following property: with probability at least $3/4$, for every polynomial $P \in \mathsf{List}_{\varepsilon}^{f}$, there exists a $j \in [L']$ such that $\delta(\psi_{j}, P) \leq 1/100$, and moreover, for every $\mathbf{x} \in \Boo^{n}$, $\psi_{j}$ computes $P(\mathbf{x})$ by making at most $\bigO_{\varepsilon}(1)$ oracle queries to $f$. Here $L' = \bigO(L(\varepsilon/2)\log L(\varepsilon/2)) = \bigO_\varepsilon(1)$.
\end{theorem}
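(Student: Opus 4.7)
The plan is to implement the local list-correction template of Sudan--Trevisan--Vadhan \cite{STV-list-decoding}, substituting random subcubes for random lines. Let $L := L(\varepsilon/2)$ denote the combinatorial bound from \Cref{thm:comblistdeg1}, and fix an integer $k = k(\varepsilon)$ (subcube dimension) and $t = \bigO(\log L)$ (number of iterations), both large enough for the calculations below. The top-level algorithm $\mathcal{A}_1^f$ independently runs $t$ probes: in probe $i$, it samples $\mathbf{a}_i \sim U_n$ and a uniform $h_i : [n] \to [k]$, forms the subcube $\mathsf{C}_i := C_{\mathbf{a}_i, h_i}$ of \Cref{defn:random-embedding}, queries $f$ at all $2^k$ points of $\mathsf{C}_i$, and by brute force enumerates the set $\tilde{S}_i$ of $k$-variate degree-$1$ polynomials over $G$ that are $(1/2 - \varepsilon/2)$-close to $f|_{\mathsf{C}_i}$; by \Cref{thm:comblistdeg1}, $|\tilde{S}_i| \le L$. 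Its output is the collection $\{\psi_{i, Q} : i \in [t], Q \in \tilde{S}_i\}$, of total size $L' \le tL = \bigO(L \log L)$. Each $\psi_{i, Q}^f$ carries $(\mathbf{a}_i, h_i, Q)$ as advice; on input $\mathbf{b} \in \Boo^n$ it constructs the smallest subcube $\mathsf{C}_i'$ containing $\mathsf{C}_i \cup \{\mathbf{b}\}$ (of dimension at most $2k$, obtained by splitting each bucket $h_i^{-1}(j)$ based on where $\mathbf{b}$ agrees versus disagrees with $\mathbf{a}_i$), queries $f$ on $\mathsf{C}_i'$, enumerates the list $S_i'$ of $2k$-variate degree-$1$ polynomials $(1/2 - \varepsilon/2)$-close to $f|_{\mathsf{C}_i'}$, and returns $Q'(\mathbf{b})$ for the unique $Q' \in S_i'$ with $Q'|_{\mathsf{C}_i} = Q$ (or a default otherwise). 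Both $\mathcal{A}_1^f$ and each $\psi_{i, Q}^f$ make $\bigO_\varepsilon(1)$ queries.

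For the analysis, fix $P \in \mathsf{List}_\varepsilon^f$ and call an iteration $i$ ``$P$-good'' if (a) $\delta(P|_{\mathsf{C}_i}, f|_{\mathsf{C}_i}) \le 1/2 - \varepsilon/2$, so that $P|_{\mathsf{C}_i} \in \tilde{S}_i$, and (b) for every $P' \in \mathsf{List}_\varepsilon^f \setminus \{P\}$, $P'|_{\mathsf{C}_i} \ne P|_{\mathsf{C}_i}$. Condition (a) is a direct application of the sampling lemma \Cref{lemma:sampling-subcube} to the error set of $P$ (density $\le 1/2 - \varepsilon$), failing with probability at most $1/(100 L)$ for $k$ large enough in terms of $\varepsilon$ and $L$. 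Condition (b) reduces, for each fixed $(P, P')$, to a short anti-concentration calculation for the non-zero linear polynomial $d := P - P'$ under the random bucketing $h_i$ and shift $\mathbf{a}_i$; a union bound over the $\le L^2$ pairs yields (b) with probability $\ge 99/100$. Conditioned on iteration $i$ being $P$-good, we show that $\psi_{i, P|_{\mathsf{C}_i}}^f$ is $1/100$-close to $P$: another application of \Cref{lemma:sampling-subcube} over the randomness of $\mathbf{b}$ (with $\mathsf{C}_i$ fixed) gives $P|_{\mathsf{C}_i'} \in S_i'$ for $\ge 99/100$ fraction of $\mathbf{b}$, and the uniqueness argument described below singles $P|_{\mathsf{C}_i'}$ out of $S_i'$. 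Since a single iteration is $P$-good with probability $\ge 99/100$ and we take $t = \bigO(\log L)$ independent iterations, the probability that no iteration is $P$-good drops below $1/(100 L)$ per $P$; a union bound over $\mathsf{List}_\varepsilon^f$ then gives overall success probability $\ge 3/4$.

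The main obstacle is the uniqueness step in the preceding paragraph. Any competitor $Q' \in S_i'$ with $Q' \ne P|_{\mathsf{C}_i'}$ and $Q'|_{\mathsf{C}_i} = P|_{\mathsf{C}_i}$ would differ from $P|_{\mathsf{C}_i'}$ by a non-zero linear polynomial $R$ on $\mathsf{C}_i'$ vanishing identically on $\mathsf{C}_i$. In the coordinates $\{y_j^A, y_j^B\}_{j \in [k]}$ of $\mathsf{C}_i'$ arising from splitting the buckets of $h_i$, this vanishing forces $R(\mathbf{y}) = \sum_j r_j (y_j^A - y_j^B)$ for some $r_j \in G$ not all zero. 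Following the proof overview, the argument is that the pairing $\{(y_j^A, y_j^B)\}_{j \in [k]}$ collapsing $\mathsf{C}_i'$ down to $\mathsf{C}_i$ is sufficiently random over the joint distribution of $h_i$ and $\mathbf{b}$: for any fixed pair of distinct $2k$-variate linear polynomials, their restrictions under a random such pairing coincide with probability exponentially small in $k$, so taking $k = \Omega(\log L)$ and union-bounding over the $\le L^2$ pairs in $S_i'$ suffices. The subtlety (flagged explicitly in the proof overview) is that $S_i'$ itself depends on the random $\mathsf{C}_i'$, so a direct union bound over $S_i' \times S_i'$ is not immediate; we handle this by separately treating those elements of $S_i'$ which arise as restrictions of polynomials in $\mathsf{List}_\varepsilon^f$ (controlled by an extension of condition (b) to $\mathsf{C}_i'$) and any extraneous elements, the latter via the Schwartz--Zippel-type constraint that $R$ vanishes on exactly half of $\mathsf{C}_i'$ together with hypercontractive control on the error density of $P$ on $\mathsf{C}_i' \setminus \mathsf{C}_i$.
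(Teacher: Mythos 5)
Your overall plan follows the same STV-on-subcubes template as the paper --- $O(\log L)$ independent probes on random $k$-dimensional subcubes $\mathsf{C}_i$, brute-force advice lists on each, and a filtering step on the larger subcube $\mathsf{C}_i'$ spanned by $\mathsf{C}_i$ and the input $\mathbf{b}$ --- and the query accounting and use of \Cref{lemma:sampling-subcube} are both right. But there is a genuine gap in the filtering step, which is the technical heart of the proof.

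You assert that for any fixed non-zero $2k$-variate linear polynomial $R$, the probability that $R$ vanishes after a random pairing of the $2k$ variables is exponentially small in $k$, and accordingly take $k = \Omega(\log L)$. This is wrong twice over. Quantitatively: for $R = \alpha y_1 - \alpha y_2$, $R$ vanishes on $\mathsf{C}_i$ iff $y_1$ is paired with $y_2$, which happens with probability $\Theta(1/k)$; the paper's \Cref{claim:E3P} gets a bound of only $O(\log k/k)$, and the union bound over $L^2$ pairs then forces $k$ to be polynomial in $L(\varepsilon/2)$ and $1/\varepsilon$, not $\Theta(\log L)$. More seriously: if $R = \alpha(y_1 + \cdots + y_{2k})$ with $\alpha \in G$ of order $2$, then $R$ vanishes on $\mathsf{C}_i$ under \emph{every} pairing, since each identified pair contributes $2\alpha = 0$ --- no anti-concentration statement can make this small, because the event has probability one. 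This degenerate case has to be handled structurally. The paper does so by (a) controlling only the event $\mathcal{E}_{3,P}$ in which the two colliding candidates $R_{i_1}, R_{i_2}$ also \emph{disagree} at the point $\mathbf{w}\in\mathsf{C}_i'$ mapping to $\mathbf{b}$ (which has Hamming weight exactly $k$), and (b) choosing $k$ even, so that for the bad polynomial $R(\mathbf{w}) = k\alpha = 0$: the collision is then harmless because both candidates output the same value at $\mathbf{b}$. Your proposal never distinguishes harmful from harmless collisions, so this ingredient is missing.

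Secondarily, the paper defines $\mathsf{C}_i'$ via an extra uniformly random permutation $\sigma$ of $[2k]$ (\Cref{def:bigger-subcube}, \Cref{obs:Cb-is-random}) precisely so that, after conditioning on the labeled subcube $\mathsf{C}_i'$ (and hence fixing the list $S_i'$), the residual pairing is \emph{uniformly random}. Your $\mathsf{C}_i'$ is built canonically from $h_i$ and $\mathbf{b}$, so conditioning on $\mathsf{C}_i'$ fixes the pairing and leaves no randomness to exploit. Your proposed workaround --- splitting $S_i'$ into restrictions of $\mathsf{List}_\varepsilon^f$ versus ``extraneous'' polynomials and handling the latter via Schwartz--Zippel and hypercontractivity --- is sketched too briefly to verify as stated, and in any case does not touch the order-$2$ degeneracy above.
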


Let us first see why the construction of approximating oracles is enough to prove \Cref{thm:listdecoding}.
\begin{proof}[Proof of \Cref{thm:listdecoding}]
We first run the algorithm given by \Cref{thm:approx-oracles-list-decoding} and it outputs $\psi_{1},\ldots, \psi_{L'}$. Next we run our local correction algorithm for $\mathcal{P}_{1}$ (see \Cref{thm:uniquedeg1} and \Cref{sec:deg-1-decoding}) with $\psi_{1}, \ldots, \psi_{L'}$ as oracles, and these algorithms will be $\phi_{1}, \ldots, \phi_{L'}$. This completes the description of the local list correction algorithm $\mathcal{A}^{f}$ for $\mathcal{P}_{1}$, and the bound on correctness probability follows from the correctness probability of \Cref{thm:uniquedeg1} and \Cref{thm:approx-oracles-list-decoding}.\newline
The algorithm $\mathcal{A}_{1}$ makes $\bigO_{\varepsilon}(1)$ queries to $f$ as stated in \Cref{thm:approx-oracles-list-decoding}, and then each $\phi_{i}$ makes $\bigO_{\varepsilon}(1) \cdot \Tilde{\bigO}(\log n) = \Tilde{\bigO}_\varepsilon(\log n)$ queries to $f$.
\end{proof}
In the rest of the section, we prove \Cref{thm:approx-oracles-list-decoding}.

\subsection{Overview of the Algorithms}
In this section, we give a bird's-eye view of the algorithms $\mathcal{A}_{1}$ and $\psi_{1},\ldots, \psi_{L'}$. As discussed in the proof overview, our algorithm is inspired by the list decoding algorithm for multivariate low-degree polynomials of \cite{STV-list-decoding}. We expand the discussion from the proof overview below. To use the same notation from the proof overview, let $S := \mathsf{List}_{\varepsilon}^{f}$.
\begin{enumerate}
    \item \textbf{Getting the advice :} This is a pre-processing step for the local list correction algorithm. The algorithm $\mathcal{A}_{1}^{f}$ queries $f$ on a random subcube $\mathsf{C}$ (see \Cref{defn:random-embedding}) of dimension $k = \bigO_\varepsilon(1)$. $\mathcal{A}_{1}^{f}$ will query on all of $\mathsf{C}$ and return all degree $d$ polynomials that are almost $(1/2 - \varepsilon)$-close to $f$ on $\mathsf{C}$. Denote the set of polynomials by $\Tilde{S} := \set{Q_{1},\ldots,Q_{L'}}$. Using a consequence of hypercontractivity and Chebyshev's inequality (see \Cref{lemma:sampling-subcube}), we will show that for any $P \in \mathsf{List}_{\varepsilon}^{f}$, with high probability (over the randomness of $\mathsf{C}$), there exists a $j \in [L']$ such that $P|_{\mathsf{C}} = Q_{j}$. Then $\psi_{j}$ will use $Q_{j}$ as advice. The details are described in \Cref{algo:list-decoding}.

    \item The algorithm $\psi_j$ works as follows.
    \begin{enumerate}
        \item \textbf{Computing a list of values:} For any input $\mathbf{b}$, $\psi_j$ will construct a subcube $\mathsf{C'}$ of dimension $2k$ containing $\mathsf{C}$ and $\mathbf{b}$. $\psi_{j}$ will query $f$ on $\mathsf{C'}$ and find all polynomials that are almost $(1/2 - \varepsilon/2)$-close to $f$ on $\mathsf{C'}$. Denote the set of polynomials by $S' := \set{R_{1},\ldots, R_{L'}}$. When the input $\mathbf{b}$ is random, we can show as in the previous step that, with high probability, for any $P \in \mathsf{List}_{\varepsilon}^{f}$, there exists a $i \in [L']$ such that $P|_{\mathsf{C'}} = R_{i}$.

        \item \textbf{Filtering the correct value using the advice:} $\psi_{j}$ has a list of polynomials $S'$, and it has to decide which of the polynomials in $S$ is equal to the restriction $P|_\mathsf{C'}$. To find this polynomial, $\psi_{j}$ will use the advice from Step 1 and check which of the polynomials from Step 2 is equal to the advice. Having found the correct polynomial $R_j$, the algorithm outputs the value of $R_j$ at the point $\mathbf{b}.$ The details of $\psi_j$ are described in \Cref{algo:high-agreement}.
        
        There is a subtlety here: To check whether a polynomial from $S'$ is equal to the advice or not, $\psi_{j}$ will restrict the polynomials in $S'$ to $\mathsf{C}$. Because of the underlying random process, this involves partitioning the variables $y_{1},\ldots,y_{2k}$ uniformly and randomly into pairs and identifying them. Under this random pairing, a polynomial $R_{j'} \in S'$ which is not equal to $P|_{\mathsf{C'}}$ may become equal to the advice. We will carefully upper bound the probability of this event in most cases by a combinatorial argument. The only bad case for this argument is when the difference polynomial $D:= R_{j'}-P|_{\mathsf{C}'}$ is of the form $\alpha\cdot (y_1+\cdots+y_{2k})$ where $\alpha\in G$ is an element of order $2.$ By setting $k$ to be even, we ensure that in this case $R_{j'}$ and $P|_{\mathsf{C}'}$ evaluate to the same value at $\mathbf{b}$ and hence it does not matter which of the polynomials is chosen to obtain the final output. 
    \end{enumerate}
\end{enumerate}

\noindent
Before we describe our algorithms, we need to describe a sub-routine - given an embedding of a subcube $\mathsf{C}$ and a point $\mathbf{b}$, we would like to find a small random subcube $\mathsf{C}'$ such that $\mathsf{C}$ is contained in $\mathsf{C}'$ and $\mathsf{C}'$ also contains $\mathbf{b}$.

\begin{definition}[Subcube spanned by $\mathsf{C}$ and $\mathbf{a}$]\label{def:bigger-subcube}
    Let $\mathsf{C} = C_{\mathbf{a}, h}$ be an embedding of a subcube of dimension $k$ (see \Cref{defn:random-embedding}). For any point $\mathbf{b} \in \Boo^{n}$, let $\mathbf{v} := \mathbf{a} \oplus \mathbf{b}$. Pick a uniformly random permutation $\sigma:[2k]\rightarrow [2k]$. Define a hash function $h' : [n] \to [2k]$ as follows: For all $i \in [n]$,
\begin{align*}
    h'(i) = \begin{cases}
        \sigma(j), & \text{if } \, h(i) = j \, \text{ and } \, v_{i} = 0 \\
        \sigma(j+k), & \text{if } \, h(i) = j \, \text{ and } \, v_{i} = 1.
    \end{cases}
\end{align*}
In other words, the partition of $[n]$ given by $h'$ is a refinement of the partition given by $h$, where the refinement depends on whether $\mathbf{b}$ and $\mathbf{a}$ agree on a coordinate.\newline
For every $\mathbf{z} \in \Boo^{2k}$, $x(\mathbf{z})$ is defined as follows:
\begin{align*}
    x(\mathbf{z})_{i} = z_{h'(i)} \oplus a_{i}.
\end{align*}
$\mathsf{C}^{\mathbf{b}}$ is the set of points $x(\mathbf{z})$ for all $\mathbf{z} \in \Boo^{2k}$, i.e. $\mathsf{C}^{\mathbf{b}} := \setcond{x(\mathbf{z})}{\mathbf{z} \in \Boo^{2k}}$.
\end{definition}
It is easy to verify from the definition that $\mathsf{C} \subset \mathsf{C}^{\mathbf{b}}$, and also $\mathbf{b} \in \mathsf{C}^{\mathbf{b}}$. In particular, say we define $\mathbf{w} \in \Boo^{2k}$ as follows: for $j \in [k]$, $w_{\sigma(j)} = 0$ and $w_{\sigma(j+k)} = 1$. Then $x(\mathbf{w}) = \mathbf{b}$.

\begin{observation}\label{obs:Cb-is-random}
Let $h$ be a random hash function from $[n]$ to $[k]$ and $\mathbf{a} \sim \Boo^{n}$. Then for a random $\mathbf{b} \sim \Boo^{n}$, $h'$ as defined above is a random hash function from $[n]$ to $[2k]$ - this follows because for a random $\mathbf{b} \sim \Boo^{n}$, $\mathbf{v}$ is uniformly distributed in $\Boo^{n}$ and independent of $\mathbf{a}$. This means that $\mathsf{C}^{\mathbf{b}}$ as defined above is a random embedding of a subcube of dimension $2k$ (see \Cref{sec:prelims} just after \Cref{defn:random-embedding}).

Furthermore, conditioned on the choice of $\mathsf{C}'$ (i.e. the choice of $\mathbf{a},h'$), the subcube $\mathsf{C}$ may be described as follows: we partition the variables $z_1,\ldots, z_{2k}$ into pairs uniformly at random and identify the variables in each pair. 

Finally, note that $\mathbf{b} = x(\mathbf{w})$ for some $\mathbf{w}$ of Hamming weight exactly $k$. 
\end{observation}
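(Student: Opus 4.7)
The observation bundles three separate assertions: (a) the hash $h'$ is uniformly distributed over all functions $[n] \to [2k]$, so $\mathsf{C}^{\mathbf{b}}$ has the same distribution as a random $2k$-dimensional subcube in the sense of \Cref{defn:random-embedding}; (b) conditioned on $\mathsf{C}'$ (i.e.\ on $(\mathbf{a}, h')$), the subcube $\mathsf{C}$ is obtained from $\mathsf{C}'$ by a uniformly random pairing of the $2k$ coordinates $z_1, \ldots, z_{2k}$ followed by variable identification; and (c) $\mathbf{b} = x(\mathbf{w})$ for a specific $\mathbf{w}\in\{0,1\}^{2k}$ of weight exactly $k$. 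The plan is to verify these directly from the definitions, using the elementary fact that since $\mathbf{a}$ and $\mathbf{b}$ are independent and uniform over $\{0,1\}^n$, the ``difference'' $\mathbf{v} = \mathbf{a}\oplus \mathbf{b}$ is uniform over $\{0,1\}^n$ and independent of $\mathbf{a}$.

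For (a), I will first show that the pre-permutation map $\tilde{h}(i) := h(i) + k\cdot v_i$, viewed as a function from $[n]$ to $\{1,\ldots,2k\}$, is uniformly distributed. Indeed, for each coordinate $i$ the pair $(h(i), v_i)$ is uniform on $[k]\times \{0,1\}$ and these pairs are mutually independent across $i$, so the bijection $[k]\times \{0,1\}\leftrightarrow [2k]$ makes $\tilde{h}$ a uniformly random function. Since $h'(i) = \sigma(\tilde{h}(i))$ and post-composition by any fixed permutation $\sigma$ is a bijection on $[2k]^{[n]}$, the distribution of $h'$ remains uniform conditional on $\sigma$, and hence unconditionally as well. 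Combined with $\mathbf{a}$ being uniform and independent of $h'$, the pair $(\mathbf{a},h')$ has the distribution of a random subcube embedding of dimension $2k$.

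For (b), the random permutation $\sigma$ induces a perfect matching $M_{\sigma} = \{\{\sigma(j),\sigma(j+k)\}: j\in [k]\}$ on $[2k]$, and once this matching is exposed, the partition of $[n]$ encoded by $h$ is recovered as: $i$ and $i'$ are in the same block exactly when $\{h'(i), h'(i')\}$ lie in the same block of $M_\sigma$. Thus $\mathsf{C}$ is determined by $(\mathsf{C}', M_\sigma)$ up to a relabeling of the $k$ resulting variables (which is irrelevant to $\mathsf{C}$ as a set). A standard count shows that each perfect matching of $[2k]$ arises from exactly $k!\cdot 2^k$ permutations in $S_{2k}$, so a uniform $\sigma$ yields a uniform matching. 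To check that $\sigma$ is still uniform conditioned on $(\mathbf{a},h')$: the joint law of $(h,\mathbf{v},\sigma)$ is uniform on $[k]^{[n]}\times \{0,1\}^n\times S_{2k}$, and for each fixed $\sigma$ the map $(h,\mathbf{v})\mapsto h'$ is a bijection onto $[2k]^{[n]}$; since $\mathbf{a}$ is independent of $(h,\mathbf{v},\sigma)$, conditioning on $(\mathbf{a},h')$ leaves $\sigma$ uniform on $S_{2k}$, which delivers the claimed uniform random pairing.

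For (c), define $\mathbf{w}\in\{0,1\}^{2k}$ by $w_{\sigma(j)}=0$ and $w_{\sigma(j+k)}=1$ for $j\in[k]$, so $|\mathbf{w}|=k$. I will verify $x(\mathbf{w})=\mathbf{b}$ coordinate by coordinate. If $v_i=0$ then $h'(i)=\sigma(h(i))$, giving $w_{h'(i)}=0$ and $x(\mathbf{w})_i=a_i=b_i$; if $v_i=1$ then $h'(i)=\sigma(h(i)+k)$, giving $w_{h'(i)}=1$ and $x(\mathbf{w})_i=1\oplus a_i = b_i$. I do not anticipate any real obstacle; the only step requiring modest care is verifying in (b) that the conditioning on $(\mathbf{a},h')$ does not destroy the uniformity of $\sigma$, and this is handled by the bijection argument above.
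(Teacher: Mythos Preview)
Your proposal is correct and follows the same approach as the paper, which essentially asserts this observation inline with only the sentence ``this follows because for a random $\mathbf{b}\sim\{0,1\}^n$, $\mathbf{v}$ is uniformly distributed in $\{0,1\}^n$ and independent of $\mathbf{a}$'' as justification. Your treatment is more thorough than the paper's: in particular, your bijection argument showing that $\sigma$ remains uniform on $S_{2k}$ after conditioning on $(\mathbf{a},h')$ fills in exactly the step the paper leaves implicit, and your verification of part (c) matches the paper's brief computation immediately preceding the observation.
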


\subsection{The Algorithms}

Let $\mathsf{C}$ be a subcube of dimension $k$ and $Q: \{0,1\}^k \to G$  a degree-$1$ polynomial, which we consider to be a function on $\mathsf{C}$. We will use $\psi_{\mathsf{C},\sigma, Q}$ to denote a \emph{deterministic}  algorithm that has the description of $\mathsf{C}$, a permutation $\sigma:[2k]\rightarrow [2k],$ and evaluation of $Q$ on $\mathsf{C}$ hardwired inside it\footnote{In the final algorithm, $\mathsf{C}$ will be a random subcube of dimension $\mathrm{poly}(1/\varepsilon)$ and $Q$ with high probability be equal to $P|_{\mathsf{C}}$, for some $P \in \mathsf{List}_{\varepsilon}^{f}$}. The description of algorithm $\psi_{\mathsf{C},\sigma, Q}$ follows.

\begin{algobox}
\begin{algorithm}[H]
\caption{Approximating algorithm $\psi_{\mathsf{C},\sigma,Q}$}
\label{algo:high-agreement}
\DontPrintSemicolon

\KwIn{Oracle access to $f$, $\mathbf{b} \in \Boo^{n}$}

Let $\mathsf{C}'$ be a bigger subcube containing $\mathsf{C}$ and $\mathbf{b}$ constructed using $\sigma$ as in \Cref{def:bigger-subcube} \;
Let $\mathbf{w} \in \Boo^{2k}$ such that $x(\mathbf{w}) = \mathbf{b}$ \tcp*{$|\mathbf{w}| = k$}
Query $f$ on $\mathsf{C}'$ \tcp*{Number of queries is $2^{2k}$}
Use the algorithm in \Cref{thm:non-local-list} to find all polynomials $R_{1},\ldots, R_{L''} \in \mathcal{P}_{1}(\Boo^{2k}, G)$ that are $\paren{\frac{1}{2} - \frac{\varepsilon}{2}}$-close to $f|_{\mathsf{C'}}$ \tcp*{$L'' \leq L(\varepsilon/2)$} \label{line:brute-force}
\If{there exists an $i \in [L'']$ such that $R_{i}|_{\mathsf{C}} = Q$}{
pick any such $i$ and \Return{$R_{i}(\mathbf{w})$}
}
\Else{
\Return{$0$}\tcp*{An arbitrary value}
}

\end{algorithm}
\end{algobox}

Now we can state the algorithm $\mathcal{A}_{1}$.

\begin{algobox}
\begin{algorithm}[H]
\caption{Algorithm $\mathcal{A}_{1}$}
\label{algo:list-decoding}

\DontPrintSemicolon

\KwIn{Oracle access to $f$}

Choose $k \leftarrow 2\cdot L(\varepsilon/2)^3\cdot \lceil 1/\varepsilon^5\rceil$\tcp*{$k$ is even (will be crucial later)}
Set $\ell \leftarrow \log L(\varepsilon)$\;
$T \leftarrow \emptyset$\;
\Repeat{$\ell$ times}{Sample random $\mathbf{a} \sim U_{n}$ and $h$ as a random hash function from $[n]$ to $[k]$ \;
Construct the subcube $\mathsf{C} := C_{\mathbf{a}, h}$ according to \Cref{defn:random-embedding}\;
Query $f$ on $\mathsf{C}$ \tcp*{Number of queries is $2^{k}$}
Use the algorithm in \Cref{thm:non-local-list} to find all polynomials $Q_{1},\ldots,Q_{L'} \in \mathcal{P}_{1}(\Boo^{k}, G)$ that are $\paren{\frac{1}{2} - \frac{\varepsilon}{2}}$-close to $f|_{\mathsf{C}}$ \tcp*{$L' \leq L(\varepsilon/2)$} \label{line:brute-force-2}
$T \leftarrow T \cup \set{Q_{1},\ldots,Q_{L'}}$
}
Pick a uniformly random permutation $\sigma:[2k]\rightarrow [2k]$\;
\Return{$\psi_{\mathsf{C},\sigma, Q_{1}},\ldots, \psi_{\mathsf{C},\sigma,Q_{t}}$ for all $Q_{i} \in T$} 

\end{algorithm}
\end{algobox}

\subsection{Analysis of the Algorithms}
In this section, we will prove \Cref{thm:approx-oracles-list-decoding} by analyzing the query complexity and the error probability.
\paragraph{Query complexity:}The algorithm $\mathcal{A}_{1}$ makes $2^{k} = \bigO_\varepsilon(1)$ queries to $f$ to output approximating oracles $\psi_{1},\ldots,\psi_{t}$. Each approximating oracle $\psi_{i}$ makes $2^{2k} = \bigO_\varepsilon(1)$ queries to return the evaluation at a point $\mathbf{b}$. 

\paragraph{Correctness:}We want to show that with probability $\geq 3/4$, for every $P \in \mathsf{List}_{\varepsilon}^{f}$, there exists an oracle $\psi_{\mathsf{C}, \sigma,Q_{j}}$ such that $\delta(\psi_{\mathsf{C}, \sigma,Q_{j}}, P) \leq 1/100$. We first show that in a single iteration for \Cref{algo:list-decoding} the following holds: for every polynomial $P \in \mathsf{List}_{\varepsilon}^{f}$, with probability at least $9/10$, there exists a $1/100$-close approximating oracle $\psi_{j}$. We prove this is in \Cref{lemma:list-correction-error}. Since we repeat this $\ell$ times, the probability that there is no $1/100$-close approximating oracle for $P$ is at most $1/10^{\ell}$. By a union bound for all polynomials $P \in \mathsf{List}_{\varepsilon}^{f}$, we get the desired correctness probability in \Cref{thm:approx-oracles-list-decoding}. Since each iteration produces a list of size at most $L(\varepsilon/2),$ overall we obtain a list of size $\bigO(L(\varepsilon/2)\cdot \log L(\varepsilon))$ as claimed.

\begin{lemma}[Correctness of Local List Correction]\label{lemma:list-correction-error}
Fix any polynomial $P \in \mathsf{List}_{\varepsilon}^{f}$.  Then the probability (over the randomness of 
\Cref{algo:list-decoding}) that there does not exist a $j$ such that $\delta(\psi_{\mathsf{C},\sigma, Q_{j}}, P) \leq 1/100$ is at most $1/10$.
\end{lemma}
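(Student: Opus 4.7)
The plan is to decompose the failure event into two complementary parts and bound each by roughly $1/20$. Let (A) be the event that $P|_{\mathsf{C}}$ is not $(\tfrac{1}{2}-\tfrac{\varepsilon}{2})$-close to $f|_{\mathsf{C}}$ (hence fails to appear as some $Q_j$), and (B) the event that, conditioned on some $j^*$ with $Q_{j^*}=P|_{\mathsf{C}}$, the deterministic function $\psi_{\mathsf{C},\sigma,Q_{j^*}}$ differs from $P$ on more than a $1/100$ fraction of inputs. I will handle (A) directly by the Sampling Lemma (\Cref{lemma:sampling-subcube}) applied to $T=\{\mathbf{x}:f(\mathbf{x})\neq P(\mathbf{x})\}$, whose measure is at most $1/2-\varepsilon$; for our choice $k=2L(\varepsilon/2)^3\lceil 1/\varepsilon^5\rceil$, which is well above the threshold of \Cref{lemma:sampling-subcube}, this event has probability at most $1/20$.

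For (B), I will use Markov's inequality: writing $\delta(\psi_{\mathsf{C},\sigma,Q_{j^*}},P)=\Pr_{\mathbf{b}}[\psi(\mathbf{b})\neq P(\mathbf{b})]$, it suffices to bound the joint probability $\Pr_{\sigma,\mathbf{b}}[\psi(\mathbf{b})\neq P(\mathbf{b})]\le 1/2000$ to get $\Pr_{\sigma}[\delta>1/100]\le 1/20$. The algorithm errs at $(\sigma,\mathbf{b})$ only if either (B1) $P|_{\mathsf{C}'}$ is not $(\tfrac{1}{2}-\tfrac{\varepsilon}{2})$-close to $f|_{\mathsf{C}'}$ (so $P|_{\mathsf{C}'}\notin S'$), or (B2) $P|_{\mathsf{C}'}\in S'$ but the algorithm picks some $R\in S'\setminus\{P|_{\mathsf{C}'}\}$ with $R|_{\mathsf{C}}=Q_{j^*}$ and $R(\mathbf{w})\neq P(\mathbf{b})$. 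For (B1), \Cref{obs:Cb-is-random} tells us that over random $\mathbf{b}$ the subcube $\mathsf{C}'$ has the distribution of a uniformly random $2k$-dimensional subcube, so another application of \Cref{lemma:sampling-subcube} bounds $\Pr[B_1]\le 1/4000$.

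The substantive step is controlling (B2). Condition on $\mathsf{C}'$; by \Cref{obs:Cb-is-random}, the identification of the variables of $\mathsf{C}'$ that yields $\mathsf{C}$ is a uniformly random perfect matching $\pi$ of $[2k]$, and $\mathbf{w}$ independently selects one endpoint of each pair uniformly at random. Fix $R\in S'\setminus\{P|_{\mathsf{C}'}\}$ and let $D=R-P|_{\mathsf{C}'}=\alpha_0+\sum_{i=1}^{2k}\alpha_i z_i\neq 0$; then $R|_{\mathsf{C}}=Q_{j^*}$ is equivalent to $D|_{\mathsf{C}}=0$, which forces $\alpha_0=0$ and $\alpha_i=-\alpha_j$ for every pair $\{i,j\}\in\pi$. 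I will prove the key combinatorial dichotomy: (i) if $D=\alpha(z_1+\cdots+z_{2k})$ for some $\alpha\in G$ with $2\alpha=0$, then $D|_{\mathsf{C}}\equiv 0$ but $D(\mathbf{w})=\alpha\cdot|\mathbf{w}|=\alpha\cdot k=0$ since $k$ is even, so picking $R$ returns the correct value $P(\mathbf{b})$; (ii) otherwise, $\Pr_{\pi}[D|_{\mathsf{C}}=0]\le 1/(2k-1)$. Part (ii) is a case analysis on the multiset of coefficients $\{\alpha_i\}$: the good pairings must match $\alpha_i$ with $-\alpha_i$ within non-$2$-torsion classes (a bijection contributing $m_g!$ matchings each) and pair up $2$-torsion coefficients among themselves (contributing $(m_g-1)!!$); one verifies that in every case except the excluded special form, the total number of good matchings is bounded by $(2k-3)!!=(2k-1)!!/(2k-1)$.

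A union bound over $|S'\setminus\{P|_{\mathsf{C}'}\}|\le L(\varepsilon/2)$ then yields $\Pr[B_2\mid\neg B_1]\le L(\varepsilon/2)/(2k-1)\le 1/4000$ for our choice of $k$, giving $\Pr[B_1]+\Pr[B_2\mid\neg B_1]\le 1/2000$ as required for the Markov bound, and thus total failure probability at most $1/20+1/20\le 1/10$. The main obstacle is the combinatorial claim (ii) together with the identification of the unavoidable ``special form'' obstruction, which is precisely what motivates the algorithm's choice of even $k$; once this dichotomy is set up, the remaining analysis is essentially bookkeeping via the sampling lemma and Markov's inequality.
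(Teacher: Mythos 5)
Your high-level plan mirrors the paper's: both decompose the failure event into (1) $P|_{\mathsf{C}}$ missing from the list on $\mathsf{C}$, (2) $P|_{\mathsf{C}'}$ missing from the list on $\mathsf{C}'$, and (3) a wrong $R\in S'$ being picked because it collides with $Q_{j^*}$ after identification, and both correctly observe that the only unavoidable collision $D=\alpha(z_1+\cdots+z_{2k})$ with $2\alpha=0$ is harmless because $k$ is even. The sampling-lemma bounds for (1) and (2), and the use of \Cref{obs:Cb-is-random} to view $\mathsf{C}\subset\mathsf{C}'$ as a uniformly random perfect matching, are also the same.

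The gap is in your "key combinatorial claim (ii)": you assert that, excluding the special form, the number of good pairings is always at most $(2k-3)!!$ (so $\Pr_\pi[D|_{\mathsf{C}}=0]\le 1/(2k-1)$), and dismiss the case analysis with "one verifies that\ldots". This statement is in fact \emph{false for small $k$}: take $D$ with coefficient multiset consisting of $k$ copies of $\alpha$ and $k$ copies of $-\alpha$ where $2\alpha\neq 0$; the good pairings are exactly the bijections between the two halves, so there are $k!$ of them, and $k! > (2k-3)!!$ for $k\in\{2,3,4,5\}$ (e.g.\ $5!=120>7!!=105$), while this $D$ is not of your excluded special form. The bound does appear to hold once $k$ is large enough (and the algorithm's $k$ is much larger than $6$), but establishing it requires comparing $\prod_g m_g!\cdot\prod_h (s_h-1)!!$ against $(2k-3)!!$ over all coefficient multisets, which is not routine bookkeeping and should not be asserted without proof. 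The paper avoids this entirely by proving the far weaker bound $\Pr_\pi[D|_{\mathsf{C}}=0]\le\log k/k$ via a short case analysis on the plurality coefficient (\Cref{claim:E3P}); that bound is still $O_\varepsilon(1/k)$, plenty for the union bound given $k=\Omega(L(\varepsilon/2)^3/\varepsilon^5)$, and is cleanly verifiable in a few lines. You should either settle for the $\log k/k$-style bound, or state your $1/(2k-1)$ claim with an explicit $k\ge k_0$ threshold and actually carry out the worst-case comparison.

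One smaller point: your two-level Markov (first bounding $\Pr_{\sigma,\mathbf{b}}$ to bound $\Pr_\sigma[\delta>1/100]$ for each $(\mathbf{a},h)$, then union-bounding with $\Pr[A]$) implicitly conditions the distribution of $\mathsf{C}'$ on $\neg(A)$, so the "$\mathsf{C}'$ is a uniformly random $2k$-cube" statement from \Cref{obs:Cb-is-random} does not apply verbatim under that conditioning. The cleaner fix (which the paper uses) is to bound $\E_{\mathbf{a},h,\sigma}\bigl[\mathbf{1}_{\neg\mathcal{E}_{1,P}}\cdot\Pr_{\mathbf{b}}[\text{error}]\bigr]$ directly by the unconditional $\Pr_{\mathbf{a},h,\sigma,\mathbf{b}}[\mathcal{E}_{2,P}\vee\mathcal{E}_{3,P}]$ and apply Markov once at the end; your estimate still works out because $\Pr[\neg A]\ge 19/20$, but the argument as written is not quite airtight.
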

\begin{proof}
Let $\mathcal{E}_{P}$ denote the event that there does not exist a $j$ such that $\delta(\psi_{\mathsf{C},\sigma, Q_{j}}, P) \leq 1/100$. We want to bound the probability of event $\mathcal{E}_P.$ We will show that 
\begin{equation}
\label{eq:EP}
    \E_{\mathbf{a},h,\sigma}[\min_{j}\delta(\psi_{\mathsf{C},\sigma,Q_j}, P)] = \E_{\mathbf{a},h,\sigma}[\min_{j}\Pr_{\mathbf{b}}[\psi_{\mathsf{C},\sigma,Q_j}(\mathbf{b})\neq P(\mathbf{b})]] \leq 1/1000
\end{equation}
from which the lemma follows via an application of Markov's inequality.

Define the following auxiliary events, depending on the choice of $\mathbf{a},h$ and $\sigma$, along with the choice of a random input $\mathbf{b}.$
\begin{enumerate}
    \item Event $\mathcal{E}_{1,P}$ (only depends on $\mathbf{a},h$): In \Cref{algo:list-decoding}, there does not exist a polynomial $Q_{j}$ such that $Q_{j} = P|_{\mathsf{C}}$.

    \item Event $\mathcal{E}_{2,P}$: In \Cref{algo:high-agreement}, there does not exist a polynomial $R_{i}$ such that $R_{i} = P|_{\mathsf{C'}}$.

    \item Event $\mathcal{E}_{3,P}$: In \Cref{algo:high-agreement}, there exist two polynomials $R_{i_1}$ and $R_{i_2}$ such that  $R_{i_1}(\mathbf{w})\neq R_{i_2}(\mathbf{w})$ but $R_{i_1}|_\mathsf{C} = R_{i_2}|_\mathsf{C}$. Here $\mathbf{w}$ is, as defined in \Cref{algo:high-agreement}, the point in $\{0,1\}^{2k}$ of Hamming weight $k$ such that $x(\mathbf{w}) = \mathbf{b}.$
\end{enumerate}

To see how these events are useful in analyzing \Cref{eq:EP}, we proceed as follows. For $\mathbf{a}, h$ such that the event $\mathcal{E}_{1,P}$ does not occur, we can fix a $j^*\leq L'$ such that $P|_{\mathsf{C}} = Q_{j^*}.$  Thus, we have
\begin{equation}
\label{eq:E1P}
\E_{\mathbf{a},h,\sigma}[\min_{j}\Pr_{\mathbf{b}}[\psi_{\mathsf{C},\sigma,Q_j}(\mathbf{b})\neq P(\mathbf{b})]\leq \Pr_{\mathbf{a},h}[\mathcal{E}_{1,P}] + \E_{\mathbf{a},h,\sigma}[\mathbf{1}_{\neg \mathcal{E}_{1,P}}\cdot \Pr_{\mathbf{b}}[\psi_{\mathsf{C},\sigma,Q_{j^*}}(\mathbf{b})\neq P(\mathbf{b}) ]]
\end{equation}
Fix any $\mathbf{a}, h$ such that the event $\mathcal{E}_{1,P}$ does not occur. Further, if the event $\mathcal{E}_{2,P}$ does not occur, then there is an $i^*\leq L''$ such that $P|_{\mathsf{C}'} = R_{i^*}.$ In particular, $R_{i^*}|_\mathsf{C} = P|_{\mathsf{C}} = Q_{j^*}.$ 

Finally, if event $\mathcal{E}_{3,P}$ also does not occur, then there is no $i\neq i^*$ such that $R_{i^*}(\mathbf{w}) \neq R_i(\mathbf{w})$ but $R_{i}|_{\mathsf{C}} = R_{i^*}|_\mathsf{C}.$ In particular, the only possible output of the algorithm $\psi_{\mathsf{C},\sigma,Q_{j^*}}$ on input $\mathbf{w}$ is $R_{i^*}(\mathbf{w}) = P(x(\mathbf{w})) = P(\mathbf{b}).$

We have thus shown that 
\[
\E_{\mathbf{a},h,\sigma}[\mathbf{1}_{\neg \mathcal{E}_{1,P}}\cdot \Pr_{\mathbf{b}}[\psi_{\mathsf{C},\sigma,Q_{j^*}}(\mathbf{b})\neq P(\mathbf{b}) ]] \leq 
\Pr_{\mathbf{a},h,\sigma,\mathbf{b}}[\mathcal{E}_{2,P}\vee \mathcal{E}_{3,P}] \leq \Pr_{\mathbf{a},h,\sigma,\mathbf{b}}[\mathcal{E}_{2,P}] + \Pr_{\mathbf{a},h,\sigma,\mathbf{b}}[\mathcal{E}_{3,P}].
\]
Plugging the above into \Cref{eq:E1P}, we get
\begin{equation}
    \label{eq:E1P2P3P}
    \E_{\mathbf{a},h,\sigma}[\min_{j}\Pr_{\mathbf{b}}[\psi_{\mathsf{C},\sigma,Q_j}(\mathbf{b})\neq P(\mathbf{b})]\leq \Pr_{\mathbf{a},h}[\mathcal{E}_{1,P}]+\Pr_{\mathbf{a},h,\sigma,\mathbf{b}}[\mathcal{E}_{2,P}] + \Pr_{\mathbf{a},h,\sigma,\mathbf{b}}[\mathcal{E}_{3,P}].
\end{equation}
So it now suffices to bound the probabilities of the events $\mathcal{E}_{1,P}, \mathcal{E}_{2,P}$ and $\mathcal{E}_{3,P}$, which we do in the following two claims. 

\begin{claim}\label{claim:l-restriction-far}
 $\Pr_{\mathbf{a},h}[\mathcal{E}_{1,P}], \Pr_{\mathbf{a},h,\sigma,\mathbf{b}}[\mathcal{E}_{2,P}] \leq 1/10000.$
\end{claim}

\begin{claim}\label{claim:E3P}
$\Pr_{\mathbf{a},h,\sigma, \mathbf{b}}[\mathcal{E}_{3,P}]\leq 1/10000.$
\end{claim}

Substituting the above bounds into \Cref{eq:E1P2P3P}, we get \Cref{eq:EP}, implying the statement of the lemma. So it suffices to prove \Cref{claim:l-restriction-far} and \Cref{claim:E3P}.

\begin{proof}[Proof of \Cref{claim:l-restriction-far}]
Recall that $\delta(P,f) \leq (1/2 - \varepsilon).$ Equivalently, the set of points $T$ where $f$ and $P$ differ has density at most $(1/2-\varepsilon)$ in $\{0,1\}^n.$ For a cube $\mathsf{C}$, the non-existence of $Q_{j}$ such that $Q_{j} = P|_{\mathsf{C}}$ is equivalent to $\delta(P|_{\mathsf{C}}, f|_{\mathsf{C}}) > (1/2 - \varepsilon/2)$.

For a random subcube $\mathsf{C}$ with this distribution, using \Cref{lemma:sampling-subcube} for $T$ as above, we get that for $k\geq 1/\varepsilon^5$,
\begin{align*}
    \Pr_{\mathsf{C}}\brac{\delta(P|_{\mathsf{C}}, f|_{\mathsf{C}}) > \frac{1}{2}-\frac{\varepsilon}{2}} \leq 1/10000.
\end{align*}
(Here, we are assuming, without loss of generality that $\varepsilon$ is less than or equal to a small enough constant so that any $k \geq 1/\varepsilon^5$ satisfies the hypothesis of \Cref{lemma:sampling-subcube}.) Hence $\Pr[\mathcal{E}_{1,P}] \leq 1/10000$.

Using \Cref{obs:Cb-is-random}, we know that $\mathsf{C'}$ is also a random subcube of dimension $2k$ (drawn from a similar distribution). Proceeding as above, we get the stated upper bound on $\Pr[\mathcal{E}_{2,P}]$. 
\end{proof}

\begin{proof}[Proof of \Cref{claim:E3P}]
To bound this probability, we first note that the polynomials $R_1,\ldots, R_{L''}$ are determined by the subcube $\mathsf{C}'.$ We condition on a fixed choice of $\mathsf{C}'.$ Recall that, as noted in \Cref{obs:Cb-is-random}, the subcube $\mathsf{C}$ is obtained from $\mathsf{C}'$ by partitioning the $2k$ variables in $\mathsf{C}'$ into $k$ pairs uniformly at random and identifying the variables in each pair. 

We denote by $y_1,\ldots, y_{2k}$ the variables of a polynomial $R$ defined on $\mathsf{C}'.$

Fix any pair of degree-$1$ polynomials $R_{i_1}$ and $R_{i_2}$ that are found in \Cref{algo:high-agreement} such that $R_{i_1}(\mathbf{w}) \neq R_{i_2}(\mathbf{w}).$ Let $D_{i_1,i_2} = R_{i_1}-R_{i_2}$, which evaluates to a  non-zero value at the point $\mathbf{w}.$ We consider the event  $D_{i_1,i_2}|_\mathsf{C} = 0.$ We will show that
\begin{equation}
    \label{eq:Di1i2}
    \Pr[D_{i_1,i_2}|_\mathsf{C} = 0] \leq \frac{\log k}{k}.
\end{equation}
Assuming the above, we can use a union bound over all pairs $i_1,i_2\in [L'']$ such that $R_{i_1}(\mathbf{w})\neq R_{i_2}(\mathbf{w})$ to get
\[
\prob{}{\mathcal{E}_{3,P}}\leq (L'')^2 \cdot \frac{\log k}{k}\leq L(\varepsilon/2)^2\cdot \frac{\log k}{k} \leq \frac{1}{10000}
\]
where the last inequality follows from our choice of $k$ (we assume without loss of generality that $\varepsilon$ is less than a small enough absolute constant). This shows that \Cref{eq:Di1i2} implies the claim.

We now show \Cref{eq:Di1i2}. Assume that
\begin{align*}
    D_{i_1,i_2}(y_{1}, \ldots, y_{2k}) = \alpha_{0} + \sum_{i=1}^{2k} \alpha_{i} y_{i}
\end{align*}
where $\alpha_0,\ldots, \alpha_{2k}\in G.$ 

If $\alpha_{0} \neq 0$, then clearly $D_{i_1,i_2}|_{\mathsf{C}} \neq 0$ since pairing and identifying variables does not change the constant term. So we can assume without loss of generality that $\alpha_{0} = 0$.

 Let $\alpha$ denote the plurality of the coefficients $\alpha_{1},\ldots,\alpha_{2k}$ of $D_{i_1,i_2}$ (breaking ties arbitrarily). Let $W \subseteq [2k]$ index the subset of coefficients that are $\alpha$, i.e. $W := \setcond{j \in [2k]}{\alpha_{j} = \alpha}$. We have the following two cases depending on the order of $\alpha$.
\begin{enumerate}
    \item $\alpha\neq -\alpha$: We have two further cases depending on the size of $W$.
    \begin{itemize}
        \item $|W| \geq \log k$: Note that for the polynomial $D_{i_1,i_2}|_{\mathsf{C}}$ to vanish, it must be the case that each of the variables indexed by $W$ is mapped to a variable with coefficient $-\alpha \neq \alpha.$ Since $-\alpha$ is \emph{not} the plurality, it follows that at most half the variables have coefficient $-\alpha.$ Hence, we have
        \begin{align*}
            \Pr[D_{i_1,i_2}|_\mathsf{C} = 0]&\leq \Pr[\forall \; j \in W:\  \text{$y_j$ is paired with a variable with coefficient $-\alpha$}]\\ 
            &\leq (1/2)^{\log k} = 1/k,
        \end{align*}
        which implies \Cref{eq:Di1i2} in this case.

        \item $|W| < \log k$: Since $\alpha$ is the most frequently appearing coefficient, this implies that no coefficient appears more than $\log k$ times. Specifically, this also applies to $-\alpha$. Hence, for any $j\in W$, the probability that $y_j$ is mapped to a variable with coefficient $-\alpha$ is at most $(\log k)/k,$ implying \Cref{eq:Di1i2} in this case.
    \end{itemize}

    \item $\alpha = -\alpha$ (i.e. $2\alpha = 0$): We have two further sub-cases depending on the size of $\overline{W}$, the complement of $W.$
    \begin{itemize}
        \item $|\overline{W}| \geq \log k$: For the polynomial $D_{i_1,i_2}|_{\mathsf{C}}$ to vanish, each variable indexed by a $j\in \overline{W}$ with coefficient $\alpha_j\neq \alpha$ must be paired with another variable with coefficient $-\alpha_j$. Since $\alpha_j\neq \alpha,$ we also see that $-\alpha_j\neq \alpha$ and hence $-\alpha_j$ is the coefficient of at most half the variables. Thus, we can upper bound the probability that $D_{i_1,i_2}|_\mathsf{C}$ vanishes as follows.
        \begin{align*}
            \Pr_{\sigma}[\forall \; i \in \overline{W} \, : \text{$y_j$ is paired with a variable with coefficient $-\alpha_j$} ] \leq (1/2)^{\log k} = 1/k.
        \end{align*}

        \item $|\overline{W}| < \log k$: In this case, we note that $\overline{W}\neq \emptyset$, and the reason is as follows. $D_{i_1,i_2}(\mathbf{w})\neq 0$ by assumption. On the other hand, we know that $|\mathbf{w}| = k$ and we have chosen $k$ to be even (see \Cref{algo:list-decoding}). Thus $\alpha\cdot (w_1+\cdots + w_{2k}) = 0$. Hence, we cannot have $D_{i_1,i_2} = \alpha\cdot (y_1+\cdots+ y_{2k}).$ Thus, there must be some variable $y_j$ whose coefficient is not $\alpha.$ 
        
        Fix any such $j\in \overline{W}$. For the polynomial $D_{i_1,i_2}|_\mathsf{C}$ to vanish, the variable $y_j$ must be paired with another variable with coefficient $-\alpha_j$, and any such variable is also indexed by an element of $\overline{W}.$ Since $|\overline{W}| < \log k,$ the probability of this is at most $(\log k)/k,$ implying the claimed probability bound.
    \end{itemize}
\end{enumerate}
We have thus shown \Cref{eq:Di1i2}, implying the proof of the claim.
\end{proof}
As discussed above, we have proved \Cref{claim:l-restriction-far} and \Cref{claim:E3P} and substituting them in \Cref{eq:E1P2P3P}, we get the desired bound, and this concludes the correctness of the local list correction algorithm.
\end{proof}

\bibliographystyle{alpha}
\bibliography{references}

\appendix

\section{Non-Local Algorithms for Decoding Low-Degree Polynomials}
\label{app:non-local}

In this section, we prove the following results regarding unique and list decoding algorithms for $\mathcal{P}_d$ over an arbitrary Abelian group $G.$ We assume throughout that group operations (addition, inverse etc.) and comparing group elements can be done in constant time.

\begin{theorem}[essentially due to Reed~\cite{Reed}]
\label{thm:non-local-unique}
Fix any Abelian group $G.$ There is a $\poly(2^n)$-time algorithm that, given oracle access to a function $f:\{0,1\}^n\rightarrow G$ produces the unique polynomial $P\in \mathcal{P}_d$ such that $\delta(f,P) < 1/2^{d+1}$, assuming that such a $P$ exists.
\end{theorem}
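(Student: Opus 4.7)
The plan is to implement Reed's classical majority-logic decoder for Reed--Muller codes in a form that makes sense over an arbitrary Abelian group. The algorithm is top-down by degree: it first recovers the coefficients of $P$ of degree exactly $d$, subtracts off the resulting degree-$d$ part of the oracle, and then recurses on the degree-$(d-1)$ problem for $\mathcal{P}_{d-1}$.

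For a fixed $I \subseteq [n]$ with $|I| = d$, I would partition $\{0,1\}^n$ into $2^{n-d}$ axis-aligned subcubes of dimension $d$, one for each assignment $\mathbf{a} \in \{0,1\}^{[n]\setminus I}$ to the coordinates outside $I$. Restricted to such a subcube, $P$ becomes a $d$-variate polynomial of degree at most $d$ whose top-degree coefficient (of $\prod_{i \in I} x_i$) must coincide with $c_I$, since no lower-degree monomial of $P$ can contribute to this monomial after restriction. Applying M\"obius inversion (item 1 of \Cref{thm:basic}) on the subcube yields the identity
\[
c_I \;=\; \sum_{J \subseteq I} (-1)^{|I \setminus J|}\, P(\mathbf{x}^{(\mathbf{a},J)}),
\]
where $\mathbf{x}^{(\mathbf{a},J)} \in \{0,1\}^n$ is the point equal to $\mathbf{a}$ on $[n]\setminus I$ and equal to $1_J$ on $I$; here the sign $(-1)^{|I\setminus J|}$ is interpreted in the group as either the identity map or group inversion.

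Plugging the oracle $f$ into this identity yields, for every $\mathbf{a}\in\{0,1\}^{[n]\setminus I}$, an estimate
\[
c_I^{(\mathbf{a})} \;:=\; \sum_{J\subseteq I}(-1)^{|I\setminus J|} f(\mathbf{x}^{(\mathbf{a},J)}),
\]
which agrees with $c_I$ whenever $f$ and $P$ coincide on all $2^d$ points of the subcube indexed by $\mathbf{a}$. Because the $2^{n-d}$ subcubes partition $\{0,1\}^n$ and $\delta(f,P)\cdot 2^n < 2^n/2^{d+1} = 2^{n-d-1}$, strictly fewer than half of the subcubes can contain any error point. Hence strictly more than half of the estimates $c_I^{(\mathbf{a})}$ are equal to $c_I$, so $c_I$ is the unique strict plurality of the multiset $\{c_I^{(\mathbf{a})}\}_{\mathbf{a}}$. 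This plurality is computable using only group equality tests, so it is consistent with the computational model of \Cref{rem:algos}. Performing this for every $I$ with $|I|=d$ determines all degree-$d$ coefficients of $P$.

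To close the recursion, I would define $f'(\mathbf{x}) := f(\mathbf{x}) - \sum_{|I|=d} c_I \prod_{i\in I} x_i$, so that $f'$ is $\delta$-close to the degree-$(d-1)$ polynomial $P' := P - \sum_{|I|=d} c_I \prod_{i\in I} x_i$ (the fractional Hamming distance is preserved since the same group element is subtracted from both sides at every point). Since $\delta < 1/2^{d+1} < 1/2^d$ remains strictly inside the unique decoding radius of $\mathcal{P}_{d-1}$, the recursion is well-defined and terminates after $d+1$ rounds with the full polynomial $P$ in hand. The total running time is $\poly(2^n)$: each round performs at most $\binom{n}{\leq d}\cdot 2^{n-d}$ M\"obius sums, each of which involves $2^d$ group operations and a plurality computation over $2^{n-d}$ group elements. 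There is no genuine obstacle beyond carefully bookkeeping signs in the group; crucially the argument never invokes division, inverses beyond negation, or any field structure, so it applies verbatim to every Abelian group $G$.
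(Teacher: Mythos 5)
Your proposal reproduces the paper's argument essentially verbatim: the same M\"obius-inversion estimates $c_I^{(\mathbf{a})}$ over the $2^{n-d}$ axis-aligned subcubes, the same counting argument that fewer than half the subcubes can touch an error point so the plurality (in fact strict majority) recovers $c_I$, and the same subtract-and-recurse step down through the degrees. The approach and the details match, and the argument is correct.
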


\begin{theorem}
\label{thm:non-local-list}
Fix any Abelian group $G$ and degree parameter $d.$ There is a $\poly(2^{n^{d+1}})$-time algorithm that, given oracle access to a function $f:\{0,1\}^n\rightarrow G$ produces a list of all polynomials $P\in \mathcal{P}_d$ such that $\delta(f,P) < 1/2^{d}$.
\end{theorem}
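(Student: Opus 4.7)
The plan is to extend Reed's classical majority-logic decoder (used in Theorem~\ref{thm:non-local-unique}) to list decoding. Standard majority-logic decoding recovers each top-degree coefficient by taking the majority of an ensemble of M\"{o}bius-inversion candidates, thereby correcting errors up to half the minimum distance. For list-decoding up to the full minimum distance, I will instead enumerate \emph{all} candidate values at each step and branch into a recursion on $d-1$; the branching factor of the recursion will deliver the final $\poly(2^{n^{d+1}})$ running time.

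Fix $S \subseteq [n]$ with $|S|=d$. For $\mathbf{a} \in \{0,1\}^{[n]\setminus S}$, let $\mathbf{a}\cdot 1_J$ denote the point in $\{0,1\}^n$ that agrees with $\mathbf{a}$ outside $S$ and equals $1_J$ on $S$, so the $2^d$-point cosets $\{\mathbf{a}\cdot 1_J : J \subseteq S\}$ partition $\{0,1\}^n$ into $2^{n-d}$ pieces. A direct M\"{o}bius calculation (iterated discrete differentiation) shows that for every $P \in \mathcal{P}_d$ and every $\mathbf{a}$,
\[
\sum_{J \subseteq S} (-1)^{|S\setminus J|}\, P(\mathbf{a}\cdot 1_J) \;=\; \alpha_S(P),
\]
where $\alpha_S(P)$ is the coefficient of $\prod_{i\in S}x_i$ in $P$. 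Substituting $f$ for $P$ yields a candidate $\alpha_S^{(\mathbf{a})} \in G$, and this candidate equals the true $\alpha_S(P)$ whenever $f$ and $P$ agree on the entire coset indexed by $\mathbf{a}$. If $\delta(f,P) < 1/2^d$, the number of errors is strictly less than $2^{n-d}$, so at least one coset is entirely error-free; hence the correct $\alpha_S(P)$ always appears among the at most $2^{n-d}$ candidates $\{\alpha_S^{(\mathbf{a})}\}_{\mathbf{a}}$.

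The algorithm now enumerates every tuple $(\alpha_S^\star)_{|S|=d}$ of candidate top-degree coefficients (at most $(2^{n-d})^{\binom{n}{d}}$ tuples in all); for each such tuple it forms $Q = \sum_{|S|=d}\alpha_S^\star\prod_{i\in S}x_i$, presents the oracle $f' := f-Q$ to a recursive call that list-decodes $\mathcal{P}_{d-1}$ up to radius $1/2^{d-1}$, and adds every candidate $P = Q + P'$ to a running list. The base case $d=0$ simply returns the (at most $2^n$) constants $c \in G$ for which $f$ takes the value $c$ at some point. Once all branches complete, the algorithm filters the running list, keeping only those $P$ with $\delta(f,P) < 1/2^d$.

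Correctness follows by induction on $d$: if $P \in \mathcal{P}_d$ and $\delta(f,P) < 1/2^d$, then the degree-exactly-$d$ part $P_d$ of $P$ is selected by some branch (by the candidate argument above), and for that branch, $f - P_d$ lies at distance strictly less than $1/2^d \leq 1/2^{d-1}$ from $P - P_d \in \mathcal{P}_{d-1}$, so the recursive call returns $P - P_d$ and $P$ is enumerated overall. For complexity, the total number of leaves in the recursion tree is at most $\prod_{d'=0}^{d}(2^{n-d'})^{\binom{n}{d'}} \leq 2^{n\sum_{d'=0}^d \binom{n}{d'}} \leq 2^{(d+1)n^{d+1}} = \poly(2^{n^{d+1}})$, and each leaf requires $O(2^n)$ work to evaluate the distance predicate. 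The only mild obstacle is that $G$ may be infinite: this is handled transparently, since every candidate coefficient is an integer-linear combination of values of $f$ and therefore lies in $G$, so the entire enumeration is well-defined and the final filtering step produces exactly the desired list.
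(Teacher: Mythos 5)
Your proof is correct, and the top-level mechanism (M\"{o}bius inversion over cosets to extract candidate degree-$d$ coefficients, then enumeration over all candidate tuples) is the same as the paper's. Where you diverge is in how the lower-degree part is recovered: you recurse the full list-decoder on $\mathcal{P}_{d-1}$ at radius $1/2^{d-1}$, branching again at every degree level. The paper instead notices that after subtracting the guessed degree-$d$ homogeneous component, the residual $f - P_d$ sits at distance $< 1/2^d$ from $P - P_d$, which is exactly the \emph{unique}-decoding radius $1/2^{(d-1)+1}$ of $\mathcal{P}_{d-1}$; it therefore invokes the unique decoder of \Cref{thm:non-local-unique} once per branch, with no further branching. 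Both routes land inside the stated $\poly(2^{n^{d+1}})$ budget (your recursion tree has at most $2^{n\sum_{d'\leq d}\binom{n}{d'}} \leq 2^{(d+1)n^{d+1}}$ leaves versus the paper's $2^{(n-d)\binom{n}{d}}$ branches), so the complexity claim survives, though the paper's bound is tighter by a factor of $d+1$ in the exponent. Your observation that every candidate coefficient is an integer-linear combination of oracle values of $f$, and hence automatically in $G$, is also the right way to handle infinite groups; the paper handles this the same way implicitly via \Cref{lem:mobius}. In short: same idea, a slightly heavier recursion where a single unique-decoding call would suffice.
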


\begin{remark}
We will use \Cref{thm:non-local-list} in the setting for $d = 1$ and output polynomials $P$ such that when $\delta(f, P) > (1/2 - \varepsilon)$, for some $\varepsilon > 0$. Our algorithm will output a list $\mathcal{L} \subset \mathcal{P}_{1}$ of size $2^{\bigO(n^{2})}$. $\mathcal{L}$ may contain polynomials that are not $(1/2 - \varepsilon)$-close to $f$ and we would prune $\mathcal{L}$ to remove these polynomials. To do this, we simply compute $\delta(f,R)$ for every $R \in \mathcal{L}$, and remove $R$ if $\delta(f,R) > (1/2-\varepsilon)$. This can be done in time $\bigO(2^{n} \cdot |\mathcal{L}|) = \bigO(2^{\bigO(n^{2})})$, and this adds up to the time stated in \Cref{thm:non-local-list}.
\end{remark}

\subsection{Proof Sketch of \Cref{thm:non-local-unique}}

We only give a sketch here, because the algorithm and proof of correctness are almost identical to the Majority-logic decoding algorithm of Reed~\cite{Reed} (see also~\cite[Chapter 14]{GRS-codingbook}).

We need the following lemma, which follows immediately from M\"{o}bius Inversion (item 1 in \Cref{thm:basic}). 

\begin{lemma}
    \label{lem:find-coeff}
    Fix $P\in \mathcal{P}_d(\{0,1\}^n,G)$ and $I\subseteq [n]$ of size $[d].$ Let $\mathbf{a}\in \{0,1\}^{n-d}$ be any assignment to the variables outside $I.$ Then, the coefficient $c_I$ of $\prod_{i\in I}x_i$ in $P$ is given by 
    \begin{equation}
    \label{eq:reed-mobius}
    c_I = \sum_{J\subseteq I} (-1)^{|I\setminus J|} P(1_J \circ \mathbf{a})
    \end{equation}
    where $1_J \circ \mathbf{a}$ denotes the input $\mathbf{b}\in \{0,1\}^n$ that agrees with the indicator vector of $J$ on co-ordinates inside $I$ and the fixed input $\mathbf{a}$ on co-ordinates outside $I.$
\end{lemma}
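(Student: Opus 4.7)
The plan is to reduce this to the M\"{o}bius inversion statement (item 1 of \Cref{thm:basic}) by passing to a suitable restriction of $P$. First I would fix the variables outside $I$ to the value $\mathbf{a}$, obtaining the restricted polynomial
\[
Q(y_1,\ldots,y_d) := P(\mathbf{b}(y, \mathbf{a}))
\]
where $\mathbf{b}(y,\mathbf{a})$ agrees with $y$ on the $d$ coordinates indexed by $I$ (identified in some fixed order with $[d]$) and with $\mathbf{a}$ on the remaining coordinates. Since $Q$ is obtained from the multilinear polynomial $P$ by substituting group-valued constants for a subset of the variables, it is again multilinear of degree at most $d$, and hence $Q \in \mathcal{P}_d(\{0,1\}^d, G)$.

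The first key observation is that the coefficient of the top monomial $y_1 y_2\cdots y_d$ in $Q$ is exactly $c_I$. To see this, expand $P$ as a sum of monomials $c_S \prod_{i\in S} x_i$ for $S\subseteq [n]$ with $|S|\leq d$, and substitute $x_j = a_j$ for $j\notin I$ into each monomial. A monomial indexed by $S$ contributes to $y_1\cdots y_d$ only if $S\supseteq I$, and since $|S|\leq d = |I|$ this forces $S = I$; the contribution is then precisely $c_I$ (the substituted constants $a_j$ do not appear since the product over $S\setminus I = \emptyset$ is the empty product, i.e., the identity for multiplication of the integer-valued indicator variables). So the coefficient of $y_1\cdots y_d$ in $Q$ is $c_I$.

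The second step is just to apply the M\"{o}bius inversion formula from \Cref{thm:basic} to the polynomial $Q$ on the cube $\{0,1\}^d$ and read off the coefficient of $y_1 y_2 \cdots y_d$, which corresponds to the index set $[d]$. The formula gives
\[
c_I \;=\; \sum_{J\subseteq [d]} (-1)^{|[d]\setminus J|} Q(1_J) \;=\; \sum_{J\subseteq I} (-1)^{|I\setminus J|} P(1_J \circ \mathbf{a}),
\]
where in the second equality we translate back from the identification of $[d]$ with $I$, and use that evaluating $Q$ at the indicator of $J\subseteq [d]$ is by definition evaluating $P$ at the point $1_J\circ \mathbf{a}$. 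This is the desired identity.

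I do not anticipate a genuine obstacle here; this is just a careful bookkeeping argument. The only place one has to be slightly attentive is in the first observation above, to confirm that no monomial of $P$ other than $c_I \prod_{i\in I}x_i$ can contribute to the top coefficient of the restriction $Q$ after substitution. This uses the degree bound $|S|\leq d$ together with $|I|=d$ in an essential way; without the degree bound, higher-degree monomials containing $I$ properly would also contribute, and the identity would fail.
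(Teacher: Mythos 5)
Your proof is correct and follows essentially the same route as the paper: restrict $P$ by fixing the variables outside $I$ to $\mathbf{a}$, observe that the coefficient of the top monomial $\prod_{i\in I}x_i$ is unchanged by this restriction, and then apply the M\"obius inversion formula of \Cref{thm:basic} to the restriction. The one thing you spell out more explicitly than the paper does is \emph{why} the top coefficient is preserved, namely that the degree bound $|S|\le d = |I|$ forces $S = I$ to be the only monomial that can contribute to $\prod_{i\in I}x_i$ after substitution; the paper simply asserts the preservation without flagging that this is exactly where $|I|=d$ is used. This is a worthwhile observation but does not change the substance of the argument.
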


\begin{proof}
    This follows directly from item 1 of \Cref{thm:basic} applied to the restriction of $P$ obtained by setting the variables outside $I$ according to the values assigned by $\mathbf{a}.$ Note that this restriction does not change the coefficient of the monomial $\prod_{i\in I}x_i$ (though it can change other coefficients). Hence, the coefficient of the restricted polynomial is equal to $c_I.$
\end{proof}

We can now sketch Reed's Majority-logic algorithm in this setting. Assume that we are given $f$ such that $\delta(f,P) < 1/2^{d+1}$. For each $I\subseteq [n]$ of size at most $d$, let $c_I$ denote the coefficient of the monomial $\prod_{i\in I}x_i$ in $P.$

\paragraph{Finding $c_I$ for $|I| = d$.} Fix any $I$ of size $[d].$ For each setting $\mathbf{a}\in \{0,1\}^{n-d}$, compute 
\begin{equation}
\label{eq:reed-find-coeff}
c_{I,\mathbf{a}} = \sum_{J\subseteq I} (-1)^{|I\setminus J|} f(1_J \circ \mathbf{a})
\end{equation}
where $1_J\circ \mathbf{a}$ is as defined in the statement of \Cref{lem:find-coeff}. Among these $2^{n-d}$ many group elements, output the most commonly occurring one.

\paragraph{Correctness.} Since $\delta(f,P) < 1/2^{d+1},$ it follows that for strictly more than half the possibilities for $\mathbf{a}\in \{0,1\}^{n-d},$ the function $f$ agrees with $P$ on all inputs in the subcube $C_\mathbf{a}$ obtained by setting variables outside $I$ according to $\mathbf{a}$  (not to be confused with the kinds of subcubes defined in \Cref{defn:random-embedding}). This implies that $c_{I,\mathbf{a}} = c_I$ for all such $\mathbf{a}.$ Hence, the mostly commonly occurring value among the $c_{I,\mathbf{a}}$ ($\mathbf{a}\in \{0,1\}^{n-d}$) is the right coefficient $c_I$.

\paragraph{Recursion to find other coefficients.} After finding all the coefficients of monomials of degree $d,$ we replace the function $f$ with $f'$ where
\[
f'(\mathbf{x}) = f(\mathbf{x}) - \sum_{|I| = d} c_I \prod_{i\in I} x_i.
\]
We then apply a recursive procedure to $f'$ with $d$ replaced by $d-1$. 

\paragraph{Correctness.} Define analogously a  $P'\in \mathcal{P}_{d-1}(\{0,1\}^n,G)$ by
\[
P'(\mathbf{x}) = P(\mathbf{x}) - \sum_{|I| = d} c_I \prod_{i\in I} x_i = \sum_{|I| < d} c_I \prod_{i\in I} x_i.
\]
Note that $\delta(f',P') = \delta(f,P) < 1/2^{d+1} < 1/2^d.$ Hence, by induction, the recursive procedure correctly finds $c_I$ for $|I| < d.$

\paragraph{Running time.} The running time is easily analyzed to be $(d+1)\cdot 2^{O(n)} = 2^{O(n)}.$

\subsection{Proof Sketch of \Cref{thm:non-local-list}}

The algorithm is similar to the Majority-logic algorithm above, except that in the first step, we now find a large list of polynomials.

Assume that we are given $f:\{0,1\}^n\rightarrow G$. Below, $P$ will denote any degree $d$ polynomial such that $\delta(f,P) < 1/2^{d}$. The coefficients $c_I$ of $P$ are as defined in the previous section.

We describe the algorithm in analogy with the algorithm from the previous section. The difference is that in the first step, we find a \emph{list} of homogeneous polynomials of degree $d$ such that one of them is exactly the homogeneous component of the polynomial $P.$

\paragraph{Finding $c_I$ for $|I| = d$.} For $I$ of size $d$ and $\mathbf{a}\in \{0,1\}^{n-d}$, define $c_{I,\mathbf{a}}$ as in \Cref{eq:reed-find-coeff}. 

Now, define $N := \binom{n}{d}$ and for each tuple $\overline{\mathbf{a}}=(\mathbf{a}^{(1)},\ldots, \mathbf{a}^{(N)})\in (\{0,1\}^{n-d})^N$, compute the polynomial
\[
P_{\overline{\mathbf{a}}}(\mathbf{x}) = 
\sum_{j=1}^N c_{I_j,\mathbf{a}^{(j)}} \prod_{i\in I_j} x_i
\]
where $I_1,\ldots, I_N$ is some ordering of all subsets of $[n]$ of size $d$.

Let $\mathcal{L}_d$ denote the set of all such polynomials.

\paragraph{Correctness.} Since $\delta(f,P) < 1/2^d$, it follows that for each $I$ such that $|I|=d$, there is at least one $\mathbf{a}\in \{0,1\}^{n-d}$ such that $f$ agrees with $P$ on all inputs in the subcube $C_{\mathbf{a}}$ (as defined above). This implies that there is at least one choice for the tuple $\overline{\mathbf{a}}$ such that $P_{\overline{\mathbf{a}}}$ is exactly the homogeneous component of $P$ of degree $d$.

\paragraph{Coefficients of smaller degree.} For each polynomial $P_{\overline{\mathbf{a}}}\in \mathcal{L}_d$, we define $f_{\overline{\mathbf{a}}}$ by
\[
f_{\overline{\mathbf{a}}}(\mathbf{x}) = f(\mathbf{x}) - P_{\overline{\mathbf{a}}}(\mathbf{x})
\]
We now use the algorithm from \Cref{thm:non-local-unique} on $f_{\overline{\mathbf{a}}}$ to find the unique polynomial $Q_{\overline{\mathbf{a}}}$ of degree at most $d-1$ that is at distance less than $1/2^{d}$ from $f_{\overline{\mathbf{a}}}$. Finally, we output the list of polynomials $\mathcal{L}$ where 
\[
\mathcal{L} = \{P_{\overline{\mathbf{a}}} + Q_{\overline{\mathbf{a}}}\ |\ \overline{\mathbf{a}}\}.\]

\paragraph{Correctness.} Define $P'$ of degree $d-1$ as in the previous section. For any $\overline{\mathbf{a}}$ such that $P_{\overline{\mathbf{a}}}$ is equal to the homogeneous component of degree $d$ in $P,$ we see that $\delta(f_{\overline{\mathbf{a}}}, P') = \delta(f,P) < 1/2^{d}.$ By \Cref{thm:non-local-unique}, we see that the polynomial $Q_{\overline{\mathbf{a}}}$ is equal to $P'.$ This implies that the list $\mathcal{L}$ contains the polynomial $P.$ Since $P$ was an arbitrary degree $d$ polynomial such that $\delta(f,P) < 1/2^{d},$ the list $\mathcal{L}$ contains all such polynomials.

\paragraph{Running time.} Note that the number of $\overline{\mathbf{a}}$ is at most $(2^n)^N\leq 2^{n^{d+1}}$. As the above algorithm runs in time $2^{O(n)}$ for each choice of $\overline{\mathbf{a}}$, the overall running time is $2^{O(n^{d+1})}.$

\section{Improved Local Correction over Reals}
\label{app:improved-uniquedeg1-smallerror}

In this section, we improve~\Cref{thm:uniquedeg1-smallerror} by presenting an upper bound of $q=\bigO(\log n/\log\log n)$ on the number of queries to correct degree 1 polynomials from $\{0,1\}^n$ to $\mathbb{R}$ when the relative error is $\bigO(\log\log n/\log n)$. 

\begin{theorem}
    \label{thm:improved-uniquedeg1-smallerror}
    Let $\mathcal{P}_{1}$ be the set of degree $1$ polynomials from $\Boo^{n}$ to $\mathbb{R}$. Then for any $\delta < \bigO(\log\log n/\log n)$, $\mathcal{P}_{1}$ has a $(\delta, q)$-local correcting algorithm where $q =\bigO(\log n/\log\log n)$.
\end{theorem}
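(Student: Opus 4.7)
The plan is to replace the matrix construction from \Cref{lemma:matrix-construction} with a quantitatively sharper one, using ideas from H\r{a}stad~\cite{Hastad} and Alon and Vu~\cite{AlonVu}. Specifically, for each $q$, I aim to construct an $M \times q$ matrix $A$ with $\{0,1\}$ entries and a vector $\mathbf{c} \in \mathbb{R}^q$ such that $A\mathbf{c} = 1^M$, every column of $A$ has relative Hamming weight $\frac{1}{2} \pm 2^{-\Omega(q)}$, the all-ones vector $1^q$ appears as a row of $A$ (so that $\sum_j c_j = 1$), and $M = 2^{\Omega(q\log q)}$. This matches the Bafna--Srinivasan--Sudan lower bound~\cite{bafna2017local} up to constant factors.

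Given such a matrix, the rest of the proof follows exactly the template of \Cref{thm:uniquedeg1-smallerror}: I would sample $n$ rows of $A$ uniformly and independently with replacement to obtain a distribution on $(\{0,1\}^n)^q$ whose marginals are, by \Cref{fact:close-to-uniform}, within statistical distance $O(2^{-\Omega(q)}\sqrt{n})$ of uniform. Combined with the integer relation $A\mathbf{c} = 1^M$, this yields an $(\varepsilon, q)$-correction gadget at $1^n$ (in the sense of \Cref{defn:decoding-gadget}) with $\varepsilon = O(1/\mathrm{poly}(n))$, and the affine-shift argument from the proof of \Cref{thm:uniquedeg1-smallerror} extends the gadget to any target point $\mathbf{a} \in \{0,1\}^n$. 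Applying \Cref{claim:gadget-to-algo} then converts the gadget into a $(\delta, q)$-local correcting algorithm whenever $q(\delta + \varepsilon) < 1/4$. Setting $q = \Theta(\log n/\log\log n)$ and $\varepsilon$ to be a sufficiently small polynomial in $1/n$ permits $\delta = O(\log\log n/\log n)$, which is the claimed bound.

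The main obstacle is the construction of $A$ itself. The H\r{a}stad--Alon-Vu results produce affine hyperplanes $\langle \mathbf{c}, \mathbf{y}\rangle = t$ in $\mathbb{R}^q$ containing $2^{\Omega(q\log q)}$ Boolean points; the coefficients $c_j$ necessarily have magnitude $2^{\Omega(q\log q)}$ in any integer representation, which is precisely why this improvement is restricted to the reals and does not extend to general Abelian groups. The more delicate half of the construction is to ensure that the columns of the resulting incidence matrix can be made nearly balanced while retaining $1^q$ as a row. I expect this to be achievable by a column-flipping reduction analogous to the one used in the proof of \Cref{lemma:matrix-construction}: for each column whose weight exceeds $M/2$, complement it and negate the corresponding entry of $\mathbf{c}$, then absorb the resulting shift in the right-hand side into a constant coefficient or a global renormalization. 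Verifying that such a symmetrization preserves both the balance condition and the containment of $1^q$, while losing at most a constant factor in $M$, is the technically hardest step; the remaining ingredients are routine adaptations of the argument already carried out for \Cref{thm:uniquedeg1-smallerror}.
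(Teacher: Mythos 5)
Your overall direction is right --- use the H\r{a}stad/Alon--Vu ill-conditioned matrices and the same gadget/shift framework from \Cref{thm:uniquedeg1-smallerror} --- but the proposal has two concrete gaps, one quantitative and one conceptual.

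\textbf{The balance parameter is too weak.} You require the columns of $A$ to have relative Hamming weight $\tfrac12 \pm 2^{-\Omega(q)}$. For $q = \Theta(\log n/\log\log n)$ one has $2^{-\Omega(q)} = n^{-O(1/\log\log n)} = n^{-o(1)}$, so by \Cref{fact:close-to-uniform} the marginals are only $n^{-o(1)}\sqrt n$-close to uniform, which is \emph{larger} than $1$; the gadget argument in \Cref{claim:gadget-to-algo} then gives nothing. What you actually need is balance $\tfrac12 \pm q^{-\Omega(q)} = \tfrac12 \pm 2^{-\Omega(q\log q)}$, matching the scale of the row count $M$. The original \Cref{lemma:decode-1n} only achieves $2^{-\Omega(q)}$ because there $q = O(\log n)$, and carrying that rate over here is a genuine error, not a notational slip.

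\textbf{Column-flipping does not produce the required balance.} Complementing any column of weight $> M/2$ just forces every column weight to be $\le M/2$; it does not bring any column to within $q^{-\Omega(q)}$ of $M/2$, and nothing in the Alon--Vu construction guarantees that the incidence matrix of Boolean points on the near-singular hyperplane is anywhere near column-balanced. In fact the paper's proof sidesteps this entirely: it does \emph{not} build a large nearly-balanced matrix. Instead it keeps the $q\times q$ Alon--Vu matrix $N$ and samples rows according to a \emph{non-uniform} distribution $p$, taken to be (after suitable sign-flips and normalization) the left singular vector $v$ achieving $\|N^\top v\|_2 \le q^{-\Omega(q)}\|v\|_2$. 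The near-balance of the columns then holds \emph{in expectation under $p$}: for each $j$, $\bigl|\E_{j'\sim p}[M_{j'j}] - \tfrac12\bigr| = \tfrac12|\innerprod{N^{(j)},v}| \le q^{-\Omega(q)}$, which is exactly what \Cref{fact:close-to-uniform} requires. This is the missing idea in your proposal: the balance is not a combinatorial property of the Boolean rows you pick, but an analytic consequence of the small singular value, accessed by weighting the rows. If you insist on a uniform-sampling formulation you would have to discretize $p$ to precision $q^{-\Omega(q)}$, replicating row $j$ about $p_j M$ times with $M = q^{\Theta(q)}$, but that is an after-the-fact reformulation of the singular-vector argument, not a substitute for it.

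The remaining ingredients in your sketch (enforcing $1^q$ as a row so that $\sum_j c_j = 1$, the affine shift to target arbitrary $\mathbf a$, the restriction to $G = \mathbb R$ because the $c_j$ are not integral) are handled correctly and match the paper. Once you replace the balance claim with the singular-vector argument and fix the $2^{-\Omega(q)}$ vs.\ $q^{-\Omega(q)}$ exponent, the proof goes through.
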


\begin{proof}
    The proof approach is similar to that of~\Cref{thm:uniquedeg1-smallerror}, except that we use a different correction gadget in~\Cref{lemma:decode-1n}: we will show that we can find $c_1,\dots,c_q \in \mathbb{R}$ and a distribution $\mathcal{D}$ over $(\{0,1\}^n)^q$ such that 
    \begin{itemize}
        \item $c_1 + \dots + c_q = 1$ and for all $i\in [n]$ and any sample $({\bf{y}}^{(1)},\dots,{\bf{y}}^{(q)})$ in the support of $\mathcal{D}$, we have $c_1 y^{(1)}_i + \dots + c_q y_i^{(q)}=1$.
        \item For each $j\in [q]$, ${\bf{y}}^{(j)}$ is $\varepsilon$-close to $U_n$ for some $\varepsilon= \sqrt{n}/q^{\Omega{(q)}}$ (Note that $\varepsilon$ was $\sqrt{n}/2^{\Omega{(q)}}$ in~\Cref{lemma:decode-1n}).
    \end{itemize}
    Once we prove the existence of such $c_j$'s and $\mathcal{D}$, arguing along the same lines as in~\Cref{thm:uniquedeg1-smallerror} allows us to conclude that there is a $(\delta,q)$-local correcting algorithm as long as $\varepsilon=\sqrt{n}/q^{\Omega(q)} \le 1/n$. Here we note that although the argument in~\Cref{thm:uniquedeg1-smallerror} was assuming $c_j$'s were integers, it still continues to hold when they are arbitrary real numbers as we assume that the underlying group is $\mathbb{R}$ in this section. Hence, we can take $q=\bigO(\log n/ \log\log n)$ with a sufficiently large constant factor.

    Returning to proving the existence of $\mathcal{D}$ (and $c_j$'s), we define the distribution $\mathcal{D}$ over $\{0,1\}^{n\times q}$ based on a probability distribution $p$ over $[q]$ (we let $p_j := \Pr[p=j]$ for $j\in [q]$) and a matrix $M \in \{0,1\}^{q\times q}$ (we denote the $i$-th row of $M$ by $M_i$, the $j$-th column by $M^{(j)}$ and the $(i,j)$-th entry by $M_{ij}$) as follows:
    \begin{itemize}
        \item For any $i\in [n]$, the $i$-th row of a sample from $\mathcal{D}$ is taken to be $M_j$, where $j\sim p$ is chosen independently across different $i$.
    \end{itemize}
    With this definition of $\mathcal{D}$, we note that it suffices if $M_1 = (1,\dots,1)$ and $\innerprod{M_j,{\bf c}}=1$ for all $j\in [q]$ in order to satisfy Item 1, where ${\bf c} := (c_1,\dots,c_q)$. Item 2 shall be satisfied with a careful choice of $M$ and $p$.  It will be more convenient to work in $\{\pm 1\}$ notation instead of $\{0,1\}$: let $N_{ij}:=1-2M_{ij} \in \{\pm 1\}$ for $i,j\in[q]$. Then the above conditions can be equivalently written as $N_1 = (-1,\dots,-1)$ and $\innerprod{N_j,{\bf c}}= \sum_j c_j - 2\innerprod{M_j,{\bf c}}=-1$ for all $j\in [q]$. Note that a solution for ${\bf c}$ satisfying Item 1 always exists if $N$ is non-singular as it simply amounts to solving $q$ linearly independent equations over $q$ variables. 
    
    To satisfy Item 2, we take $N$ to be an {\em anti-Hadamard} or {\em ill-conditioned} matrix as constructed by Alon and Vu~\cite{AlonVu} with some additional properties (Items 3 and 4 below):
    \begin{enumerate}
        \item $N$ is non-singular,
        \item the least singular value of $N$ is at most $1/q^{\Omega{(q)}}$, i.e., there exists a non-zero vector $v\in \mathbb{R}^q$ such that $\|N^T v\|_2/\|v\|_2 \le 1/q^{\Omega{(q)}}$,
        \item the vector $v$ above is a probability distribution, i.e., $v_j \ge 0$ for all $j$ and $\sum_{j=1}^q v_j =1$, and 
        \item the first row of $N$ is $N_1=(-1,\dots,-1)$.
    \end{enumerate}
    Although the construction in~\cite{AlonVu}
does not directly guarantee Items 3 and 4, we can easily achieve them by flipping the sign of appropriate columns and rows of $N$ and then by flipping and scaling the entries of $v$ (these changes do not affect the least singular value). We now define the probability distribution $p$ over $[q]$ to simply be $p_j=v_j$. This finishes the description of the matrix $N$ and the distribution $\mathcal{D}$. It remains to be shown for $({\bf y}^{(1)},\dots,{\bf y}^{(q)}) \sim \mathcal{D}$ we have that ${\bf y}^{(j)}$ is $\varepsilon$-close to $U_n$ for some $\varepsilon=\sqrt{n}/q^{\Omega(q)}$. Since the entries of ${\bf y}^{(j)}$ are mutually independent and the rows of $\mathcal{D}$ are sampled from the rows of $M$, by~\Cref{fact:close-to-uniform} it suffices to show that for all $j\in [q]$, we have  $\abs{\E_{j'\sim p}[M_{j'j}]-1/2} \le 1/q^{\Omega(q)}$. We prove this below:
\begin{align*}
    \paren{\E_{j'\sim p}[M_{j'j}]-1/2}^2 & = {\E_{j'\sim p}[N_{j'j}]}^2/4\tag{by definition of $N$}\\
    & = \paren{\sum_{j'=1}^q p_{j'}N_{j'j}}^2/4
     = {\innerprod{N^{(j)},v}}^2/4\tag{as $p_{j'}=v_{j'}$ by definition}\\
    & \le \sum_{j''=1}^q \innerprod{N^{(j'')},v}^2/4
     = v^\top NN^\top v/4
     = \|N^\top v\|_2^2/4\\
    & \le \|v\|_2^2/q^{\Omega{(q)}} 
     \le 1/q^{\Omega{(q)}}.\tag{using Item 2 and $\|v\|_2^2 \le q$}
\end{align*}
\end{proof}

\addtocontents{toc}{\protect\setcounter{tocdepth}{1}}
	
\end{document}